\documentclass[12pt]{article}
\usepackage[left=1in,top=1in,right=1in,bottom=1in,head=.1in,nofoot]{geometry}

\setlength{\footskip}{24pt} %
\usepackage{setspace,url,bm,amsmath} %

\usepackage[small]{titlesec}
\titlelabel{\thetitle.\quad}
\titleformat*{\section}{\bf\large\center}
\titlespacing*{\section}
{0pt}{0ex}{0ex}
\titlespacing*{\subsection}
{0pt}{0ex}{0ex}
\titlespacing*{\subsubsection}
{0pt}{0ex}{0ex}

\usepackage{graphicx} %
\usepackage{bbm}
\usepackage{latexsym}
\usepackage{caption}
\usepackage[margin=20pt]{subcaption}
\usepackage{hyperref}
\usepackage{booktabs}
\usepackage{multirow}
\usepackage{rotating}

\usepackage[table]{xcolor}

\newcommand{\GG}[1]{}

\usepackage{amsthm}
\usepackage{amssymb}
\usepackage{amsmath}
\usepackage{color}
\usepackage{mathrsfs}

\usepackage{enumitem}

\usepackage{comment}
\theoremstyle{definition}

\newtheorem{assumption}{Assumption}
\newtheorem*{theorem*}{Theorem}
\newtheorem{theorem}{Theorem}
\newtheorem*{rmk*}{Remark}

\newtheorem{proposition}{Proposition}
\newtheorem{lemma}{Lemma}

\newtheorem{condition}{Condition}

\newtheorem{remark}{Remark}
\newtheorem{corollary}{Corollary}
\newtheorem*{corollary*}{Corollary}

\def\bstrap{\mathsf{b}}

\def\bmX{\bm{X}}
\def\bmU{\bm{U}}
\def\bmO{\bm{O}}
\def\bmW{\bm{W}}
\def\bmtheta{\bm{\theta}}
\def\bmrho{\bm{\rho}}

\usepackage{natbib} %
\bibpunct{(}{)}{;}{a}{}{,} %

\usepackage{etoolbox} %
\apptocmd{\sloppy}{\hbadness 10000\relax}{}{} %

\usepackage{color}
\usepackage{listings}

\DeclareMathOperator*{\argmax}{arg\,max}

\def\ind{\begin{picture}(9,8)
         \put(0,0){\line(1,0){9}}
         \put(3,0){\line(0,1){8}}
         \put(6,0){\line(0,1){8}}
         \end{picture}
        }

\def\Pr{\mathbb{P}}

\def\Cov{\text{Cov}}
\def\sumn{\sum_{i=1}^n}

\def\converged{\rightsquigarrow}

\def\I{\mathbbm{1}}
\def\E{\mathbb{E}}
\def\logit{\text{logit}}

\def\ow{\textup{o}}
\def\low{\text{low}}
\def\up{\text{up}}
\def\m{\text{m}}

\def\OR{\textup{OR}}
\def\odd{\text{odds}}

\def\Bin{\text{Binomial}}

\def\tp{\star}
\def\com{\textup{c}}

\def\LL{\textup{L}}
\def\UU{\textup{U}}

\def\treat{\textup{t}}
\def\control{\textup{c}}

\def\limsup{\overline{\lim}}
\def\liminf{\underline{\lim}}

\RequirePackage[normalem]{ulem}

\usepackage{soul}

\usepackage{subcaption}
\usepackage[textsize=tiny, textwidth = 2cm, shadow]{todonotes}

\usepackage{appendix}

\allowdisplaybreaks

\begin{document}

\onehalfspacing

\title{\bf \Large
Robust Sensitivity Analysis via Augmented Percentile Bootstrap under Simultaneous Violations of Unconfoundedness and Overlap
}
\author{
Han Cui and Xinran Li
\footnote{
Han Cui is Doctoral Candidate, Department of Statistics, University of Illinois at Urbana-Champaign, Champaign, IL, USA (e-mail: \href{mailto:hancui5@illinois.edu}{hancui5@illinois.edu}). 
Xinran Li is Assistant Professor, Department of Statistics, University of Chicago, Chicago, IL, USA (e-mail: \href{mailto:xinranli@uchicago.edu}{xinranli@uchicago.edu}). 
Li was partly supported by the U.S. National Science Foundation (DMS-2400961).
\\
\indent Address for correspondence: Xinran Li, Department of Statistics, University of Chicago, Chicago, IL, 60637, USA. Email: \href{mailto:xinranli@uchicago.edu}{xinranli@uchicago.edu}.
}
}
\date{}
\maketitle

\begin{abstract}
The identification of causal effects in observational studies typically relies on two standard assumptions: unconfoundedness and overlap.  However, both assumptions are often questionable in practice: unconfoundedness is inherently untestable, and overlap may fail in the presence of extreme unmeasured confounding. While various approaches have been developed to address unmeasured confounding and extreme propensity scores separately, few methods accommodate simultaneous violations of both assumptions. In this paper, we propose a sensitivity analysis framework that relaxes both unconfoundedness and overlap, building upon the marginal sensitivity model. Specifically, we allow the bound on unmeasured confounding to hold for only a subset of the population, thereby accommodating heterogeneity in confounding and allowing treatment probabilities to be zero or one. Moreover, unlike prior work, our approach does not require bounded outcomes and focuses on overlap-weighted average treatment effects, which are both practically meaningful and robust to non-overlap. We develop computationally efficient methods to obtain worst-case bounds via linear programming, and introduce a novel augmented percentile bootstrap procedure for statistical inference. This bootstrap method handles parameters defined through over-identified estimating equations involving unobserved variables and may be of independent interest. Our work provides a unified and flexible framework for sensitivity analysis under violations of both unconfoundedness and overlap.    
\end{abstract}

{\bf Keywords}: 
unconfoundedness; overlap; extreme confounding; overlap weights; parameter augmentation

\newpage

\onehalfspacing

\section{Introduction}

Causal inference from observational data is essential in many scientific fields, especially when randomized experiments are infeasible, costly, or time-consuming. The seminal work of \citet{Rubin1983} establishes that, for a binary treatment (active treatment versus control), the average treatment effect can be identified under two key conditions: (i) treatment assignment is conditionally independent of the potential outcomes given observed covariates (commonly referred to as ignorability, unconfoundedness, or selection on observables), and (ii) the conditional probability of receiving the active treatment lies strictly between 0 and 1 (known as the positivity or overlap condition).
When unconfoundedness fails, even after adjusting for observed covariates, differences in outcomes between treated and control groups may reflect not only treatment effects but also the influence of unmeasured confounders. When overlap fails, some units receive treatment (or control) with probability one, making their counterfactual outcomes unidentifiable without additional assumptions. 
These two conditions are standard and, to some extent, necessary for consistent estimation of causal effects in observational studies. 

Despite their importance, both assumptions are often questionable in practice. Unconfoundedness is generally untestable using observed data alone, and violations of this assumption can lead to misleading causal conclusions, as exemplified by Simpson’s paradox. While overlap can be assessed empirically, it becomes increasingly fragile as the number of covariates increases \citep{DAmour2021}. Moreover, in the presence of unmeasured confounding, overlap with respect to the ideal propensity score (conditional on both observed and unobserved covariates) cannot be verified from observed data, and may be violated under extreme unmeasured confounding.
In such cases, we need to rely on extrapolation to estimate counterfactuals for units with extreme propensity scores, making results sensitive to model misspecification.

These challenges have motivated a rich literature on sensitivity analysis, which evaluates the robustness of causal conclusions to potential unmeasured confounding. The seminal work of \citet{cornfield1959smoking} assumes that unconfoundedness would hold if we could observe some additional latent confounders, and quantifies how strong such confounding would need to be to overturn a given conclusion. Subsequent work has proposed various frameworks for quantifying confounding strength. 
For example, \citet{Rosenbaum02a} used the odds ratio of propensity scores between two matched units to quantify confounding strength in matched observational studies, 
and \citet{tan2006} extended it to a superpopulation setting; they both focus on the impact of unmeasured confounding on the treatment assignment. 
\citet{robins2000sen} and \citet{Franks2020} used the contrast between observed outcomes and counterfactuals for units under each treatment group to measure the impact of unmeasured confounding on the potential outcomes. 
\citet{Vanderweele2014} and \citet{ding2016sensitivity} extended Cornfield's inequality by incorporating both the confounder-treatment and confounder-outcome associations, 
and \citet{rubinsen1983}, \citet{Imbens2003sen} and \citet{Zhang2022semi} considered modeling the relation between unmeasured confounding and potential outcomes as well as treatment assignment.

In parallel, a growing literature focuses on inference under violations of the overlap assumption, assuming that unconfoundedness holds.
\citet{crump2009dealing}, \citet{Yang2018trim} and \citet{LiBalancingCov2018} proposed trimming or reweighting to improve overlap. \citet{Hong2019limited} and \citet{Ma2020robust} developed robust inference for the average treatment effect when propensity scores can approach the boundaries. \citet{lee2021bounding} provided bounds for the average treatment effect when overlap fails and outcomes are bounded. However, fewer works address the simultaneous violation of both unconfoundedness and overlap. One exception is \citet{Kennedy2022sen}, who considered situations where a subset of units may be arbitrarily confounded, while the remaining satisfy unconfoundedness.

In this paper, we develop a sensitivity analysis framework that relaxes both the unconfoundedness and overlap assumptions. 
Building upon the marginal sensitivity model proposed by \citet{tan2006}, we assume that the bound on the strength of unmeasured confounding  holds only for a certain proportion of the population, thereby better accommodating the heterogeneity of unmeasured confounding. 
The remaining units may be arbitrarily confounded and may have treatment probabilities of 0 or 1 conditional on both observed and unobserved covariates, which explicitly allows for overlap violations. Our model includes both the marginal sensitivity model and the partially confounded model in \citet{Kennedy2022sen} as special cases. Importantly, we do not require bounded outcomes, an assumption often invoked in prior work on non-overlap \citep[e.g.,][]{lee2021bounding, Kennedy2022sen}. Instead, we focus on overlap-weighted average treatment effects, as recommended in \citet{LiBalancingCov2018}, which are both practically relevant and robust even when overlap fails.

We propose novel statistical methods to implement this sensitivity analysis. The methods are both computationally efficient and statistically valid. Specifically, we show that bounds on treatment effects can be computed via mixed-integer linear programming, or conservatively, via linear programming. For inference, we develop an augmented percentile bootstrap, a general framework for inferring parameters defined by over-identified estimating equations that involve unobserved variables. This approach extends the percentile bootstrap proposed by \citet{zhaosenIPW2019} and requires conceptual augmentation of the data and parameters, which is key to establishing its validity. The proposed augmented percentile bootstrap may be of independent interest for a broader class of inference problems.

Our extension to the marginal sensitivity model is conceptually similar to the recent work of \citet{WL23}, who generalize Rosenbaum’s sensitivity model in matched observational studies to accommodate extreme unmeasured confounding. 
It is worth noting that extreme unmeasured confounding can render both the marginal and Rosenbaum's sensitivity analyses unduly conservative \citep{HS2017, zhang2022}.
However, our work differs substantially from \citet{WL23} in both 
computational and inferential techniques.

\section{Framework and notation}

\subsection{Potential outcome, covariates, and treatment assignment}

We consider an observational study for evaluating the causal effect of a binary treatment, labeled as treatment and control.  
Let $Y(1)$ and $Y(0)$ denote the potential outcome under treatment and control \citep{Rubin:1974}, 
$\bmX$ denote the observed pretreatment covariate vector, 
and $Z$ denote the binary treatment assignment, 
where $Z=1$ if receiving treatment and $0$ otherwise. 
Let $\bmU$ denote some unobserved covariates or confounders, which can be either a scalar or vector. 
We use $(Y(1), Y(0), Z, \bmX, \bmU)$ to denote a representative sample from the superpopulation of interest, 
and assume that we have $n$ independent and identically distributed (i.i.d.) samples from this superpopulation, 
denoted by $\{(Y_i(1),Y_i(0), Z_i, \bmX_i, \bmU_i)\}_{i=1}^n$. 
We emphasize that the unmeasured confounders $\bmU_i$s are unobserved, 
and we can observe only one of the potential outcomes for each unit depending on its treatment assignment, which is 
$Y_i = Z_iY_i(1) + (1-Z_i) Y_i(0)$ for $1\le i \le n$.

From the seminal work of \citet{Rubin1983}, the identifiability of the average treatment effect $\E\{Y(1) - Y(0)\}$ requires the following crucial assumption.  

\begin{assumption}\label{asmp:usual}
\quad
\begin{tabular}[t]{@{}ll}
(i) & (Unconfoundedness or ignorability) $Z \ \ind \ (Y(1), Y(0)) \mid \bmX$. \\
     (ii) & (Overlap or positivity) $0 < \Pr(Z = 1 \mid \bmX) < 1$ almost surely.
\end{tabular}
\end{assumption}

Assumption \ref{asmp:usual} contains two important components: (i) unconfoundedness or ignorability, and (ii) overlap or positivity. 
These two components are both sufficient and necessary for the identification of the average treatment effect. 
However, in practice, both assumptions may fail and are often questioned. 
In such cases, a sensitivity analysis is often invoked to assess the robustness of causal conclusions to the violation of Assumption \ref{asmp:usual}(i).

Specifically, under a sensitivity analysis, we often assume that the unconfoundedness holds given additionally the unobserved confounder $\bmU$ \citep{rubinsen1983}. 
\begin{assumption}\label{asmp:ignorable}
    $Z \ \ind \ (Y(1), Y(0)) \mid \bmX, \bmU$. 
\end{assumption}
We emphasize that
Assumption \ref{asmp:ignorable} always holds when the unmeasured confounder $\bmU$ is set to be the potential outcomes $(Y(1), Y(0))$. 
In other words, it is not an assumption once we set $\bmU = (Y(1), Y(0))$. 
As discussed later, we will impose certain constraints on the strength of the unmeasured confounding $\bmU$, and investigate how sensitive the inference for treatment effect can be. 
Importantly, our sensitivity analysis will also allow for the violation of the overlap assumption, i.e., Assumption \ref{asmp:usual}(ii). 
Specifically, 
let
\begin{align*}
    e_{\tp}(\bm{x}, \bm{u}) \equiv \Pr(Z=1\mid \bmX = \bm{x}, \bmU = \bm{u}) 
\end{align*}
denote the oracle and generally unidentifiable propensity score. 
We will allow $e_{\tp}(\bmX, \bmU)$ to take
$0$ and $1$ values for a non-negligible proportion of units.

\subsection{Robust marginal sensitivity model}\label{sec:rsm}

We consider sensitivity analysis for violations of both unconfoundedness and overlap assumptions. 
Our approach builds upon the marginal sensitivity model proposed by \citet{tan2006}, which has recently received increasing attention \citep[see, e.g.,][]{zhaosenIPW2019, Dorn23, Dorn24, Tan24}; see also the related work by \citet{dan2023interpretable} and \citet{huang2024variance}. 
Specifically, 
the marginal sensitivity model assumes that, for some constant $\Lambda \ge 1$, 
\begin{align}\label{eq:msm}
    \Lambda^{-1}\leq \dfrac{e_{\tp}(\bmX,\bmU)/\{1-e_{\tp}(\bmX,\bmU)\}}{e_{\tp}(\bmX)/\{1-e_{\tp}(\bmX)\}}\leq\Lambda 
    \ \text{ almost surely}, 
\end{align}
where $e_{\tp}(\bm{x}) \equiv \Pr(Z = 1\mid \bmX=\bm{x})$ denotes the observable propensity score. 
That is, the odds ratio between the true and observable propensity scores is bounded between $\Lambda^{-1}$ and $\Lambda$ for all units. 
When $\Lambda = 1$, $e_{\tp}(\bmX,\bmU)$ equals $e_{\tp}(\bmX)$ almost surely, under which there is essentially no unmeasured confounding. 
When $\Lambda$ increases, $e_{\tp}(\bmX,\bmU)$ can deviate further from $e_{\tp}(\bmX)$, under which the treatment effect estimation based on the observed data can be more biased for the corresponding truth. 
In other words, 
the strength of unmeasured confounding is quantified by the odds ratio between the true and observable propensity scores, and the marginal sensitivity model in \eqref{eq:msm} imposes a uniform bound $\Lambda$ on the strength of the unmeasured confounding across the whole population.

In this paper we want to generalize the marginal sensitivity model, to overcome its potential limitations as detailed below. 
First, the marginal sensitivity model in \eqref{eq:msm} assumes that the odds ratio between the true and observable propensity scores are bounded by the same constant $\Lambda$ for all units in the population. 
This may not be realistic unless $\Lambda$ takes a very large value, since the strength of unmeasured confounding is likely to vary across units. 
Specifically, if an investigator suspects that some units may have extreme unmeasured confounding, then the odds ratio between true and observable propensity scores may take very large values for some units. 
This will render the marginal sensitivity model in \eqref{eq:msm} implausible for moderate value of $\Lambda$, and thus deteriorates the power of the sensitivity analysis.
Such an issue has also been noted in the literature. 
For example, 
in the related Rosenbaum's sensitivity analysis for matched observational studies \citep{Rosenbaum02a}, 
\citet{HS2017} commented that some units may be almost certain to smoke due to unmeasured confounding such as high peer pressure, and they propose sensitivity analysis based on the average strength of unmeasured confounding across matched sets; see also \citet{FH2017}. 
Recently, \citet{zhang2022} also extend the marginal sensitivity model in \eqref{eq:msm} to bound only the averages of the squared ratios between the observable and true propensity scores among treated and control units, respectively, i.e., for some $\Lambda \ge 1$, 
\begin{align}\label{eq:sen_l2}
    \E \Big[ \Big\{ \frac{e_{\tp}(\bmX)}{e_{\tp}(\bmX,\bmU)} \Big\}^2 \mid Z=1 \Big] \le \Lambda
    \ \text{ and } \ 
    \E \Big[ \Big\{ \frac{1-e_{\tp}(\bmX)}{1-e_{\tp}(\bmX,\bmU)} \Big\}^2 \mid Z=0 \Big] \le \Lambda. 
\end{align}

Second, 
the marginal sensitivity model in \eqref{eq:msm} does not allow violations of the overlap assumption. 
Specifically, the analyses under the marginal sensitivity model typically impose positivity assumption on the observable propensity scores, under which the true propensity score $e_{\tp}(\bmX, \bmU)$ is strictly bounded between $0$ and $1$ almost surely. 
This is also true for the sensitivity model in \eqref{eq:sen_l2}. 
However, 
as discussed before, the overlap assumption is often questioned in practice. 

To address the above two limitations on existing marginal sensitivity models, 
we propose the following robust marginal sensitivity model, which takes into account the potential heterogeneity of unmeasured confounding strength and allows the true propensity scores to take $0$ and $1$ values for a non-negligible proportion of units.

\begin{assumption}[Robust marginal sensitivity model]\label{asmp:robust_marg_sen} 
Recall the true propensity score $e_{\tp}(\bm{x}, \bm{u})$ and the observable propensity score $e_{\tp}(\bm{x})$. 
For given $\Lambda_1, \Lambda_0 \geq 1$ and $\delta_1, \delta_0 \in [0,1]$, we have 
\begin{align}\label{eq:robust_sen}
    \Pr \Big( \Lambda_z^{-1}\leq \dfrac{e_{\tp}(\bmX,\bmU)/\{1-e_{\tp}(\bmX,\bmU)\}}{e_{\tp}(\bmX)/\{1-e_{\tp}(\bmX)\}}\leq\Lambda_z \mid Z=z \Big) & \geq 1-\delta_z, \quad (z=0,1). 
\end{align}
\end{assumption}

In Assumption \ref{asmp:robust_marg_sen}, we allow different sensitivity parameters $(\Lambda_1, \delta_1)$ and  $(\Lambda_0, \delta_0)$ for treated and control subpopulations; in practice, we can set them to be the same. 
In the supplementary material, we also consider robust marginal sensitivity model that imposes the constraint as in \eqref{eq:robust_sen} but for the whole population including both treated and control units, under which we essentially need to consider a series of models in Assumption \ref{asmp:robust_marg_sen} and the corresponding sensitivity analysis will involve higher computational costs.

Assumption \ref{asmp:robust_marg_sen} is equivalent to the usual marginal sensitivity model in \eqref{eq:msm} when $\delta_1 = \delta_0 = 0$ and $\Lambda_1 = \Lambda_0$ equal a common $\Lambda$, 
but is in general weaker than \eqref{eq:msm}. 
Specifically, in Assumption \ref{asmp:robust_marg_sen}, 
we only assume that the strength of unmeasured confounding is bounded by a certain $\Lambda$ for a certain proportion, say $1-\delta$, of units within the treated or control subpopulations, whereas the remaining $\delta$ proportion of units may have arbitrary unmeasured confounding. 
In particular, for these $\delta$ proportion of units, their true propensity scores $e_{\tp}(\bmX, \bmU)$ are allowed to be either $0$ or $1$, under which the overlap is violated. 
In other words, the robust marginal sensitivity model in Assumption \ref{asmp:robust_marg_sen} allows for the potential violation of both unconfoundedness and overlap assumptions. 
Throughout the paper, we will focus on sensitivity analysis for treatment effects under the robust marginal sensitivity model in Assumption \ref{asmp:robust_marg_sen}.

Moreover, in practice, we suggest try multiple or all values of $\Lambda_1, \Lambda_0, \delta_1$ and $\delta_0$ for the sensitivity model in Assumption \ref{asmp:robust_marg_sen}. 
Similar to \citet{WL23}, we can show that these sensitivity analyses are simultaneously valid across all values of $(\Lambda_1, \Lambda_0, \delta_1, \delta_0)$ and can thus greatly enhance causal evidence. 
Due to length constraints, we briefly discuss this in Section \ref{sec:diss} and relegate the details, including the numerical illustration, to the supplementary material.

\section{Causal identification in the potential absence of overlap}\label{sec:estimand_no_overlap}

In this section, we first consider the scenario where the overlap assumption is violated, but the unconfoundedness assumption holds. 
Nevertheless, in the following discussion throughout this section, 
we will still involve the unmeasured confounder $\bmU$, since it will be useful for our later sensitivity analysis. 
For this section only, we will pretend that the unmeasured confounder $\bmU$ is observed, and discuss the identifiability of general weighted average treatment effects in the potential absence of overlap. This will motivate causal estimands in our later sensitivity analysis under the robust marginal sensitivity model.

Let $\tau(\bm{x}, \bm{u})=\E\{Y(1)-Y(0)|\bmX=\bm{x}, \bmU=\bm{u}\}$ denote the conditional average treatment effect given $(\bmX, \bmU) = (\bm{x}, \bm{u})$,  
and $f(\bm{x}, \bm{u})$ denote the marginal density of the observed and unobserved covariates $(\bmX, \bmU)$ with respect to a certain measure $\mu$. 
For a general weight function $h(\bm{x}, \bm{u}) \ge 0$, 
the corresponding weighted average treatment effect is 
\begin{align}\label{eq:tau_h}
    \tau_h\equiv \dfrac{\int \tau(\bm{x}, \bm{u}) f(\bm{x}, \bm{u}) h(\bm{x}, \bm{u}) d\mu}{\int f(\bm{x}, \bm{u}) h(\bm{x}, \bm{u}) d\mu}
    = \frac{\E\{ h(\bmX, \bmU) \tau(\bmX, \bmU) \}}{\E\{h(\bmX, \bmU)\}}. 
\end{align}
When $h$ is a constant function that takes a positive value, 
$\tau_h$ reduces to the common average treatment effect $\tau_{\com} \equiv \E\{Y(1) - Y(0)\}$. 
Generally, different choices of $h$ correspond to  different weighted populations, 
for which the average treatment effects are evaluated.

Pretending the unmeasured confounder $\bmU$ were observed, below we investigate the identifiability of the general weighted average treatment effect in \eqref{eq:tau_h} when overlap can be violated, i.e.,  the propensity score $e_{\tp}(\bmX,\bmU)$ can be $0$ or $1$ for a non-negligible proportion of units.

\begin{proposition}\label{prop:tau_h_identifiable}
    Suppose that Assumption \ref{asmp:ignorable} holds, $\bmU$ is observed, and the potential outcomes have finite second moments.  
    For any nonnegative square integrable function $h(\cdot)$, the weighted average treatment effect $\tau_h$ is identifiable if and only if $\E[h(\bmX, \bmU)\I\{e_{\tp}(\bmX,\bmU)=z\}]=0$ for $z=0,1$, 
    which is further equivalent to $h(\bmX, \bmU)=\tilde{h}(\bmX, \bmU)e_{\tp}(\bmX,\bmU)\{1-e_{\tp}(\bmX,\bmU)\}$ almost surely for some function $\tilde{h}(\cdot)$.     
\end{proposition}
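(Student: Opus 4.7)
The proof naturally splits into three parts: (i) the measure-theoretic equivalence of the two conditions on $h$, (ii) sufficiency (under the condition, $\tau_h$ is a functional of the observable law of $(Y,Z,\bmX,\bmU)$), and (iii) necessity (when the condition fails, two joint laws of $(Y(1),Y(0),Z,\bmX,\bmU)$ can induce the same observable law but different $\tau_h$). Throughout I treat $\bmU$ as observed, so the observable quantity is the joint distribution of $(Y,Z,\bmX,\bmU)$.

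\textbf{Step 1 (Equivalence of the two forms).} Let $A = \{(\bm{x},\bm{u}) : e_\tp(\bm{x},\bm{u}) \in \{0,1\}\}$. Since $h \ge 0$, the two equalities $\E[h(\bmX,\bmU)\I\{e_\tp(\bmX,\bmU)=z\}] = 0$ for $z = 0, 1$ together are equivalent to $h(\bmX,\bmU) = 0$ almost surely on $A$. Given this, define $\tilde{h}(\bm{x},\bm{u}) = h(\bm{x},\bm{u})/[e_\tp(\bm{x},\bm{u})\{1-e_\tp(\bm{x},\bm{u})\}]$ on $A^c$ (where the denominator is positive) and $\tilde{h} \equiv 0$ on $A$; then $h = \tilde{h} \cdot e_\tp(1-e_\tp)$ almost surely. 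Conversely, any $h$ of this product form vanishes on $A$.

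\textbf{Step 2 (Sufficiency).} Assume $h$ vanishes on $A$. By Assumption \ref{asmp:ignorable}, for $z \in \{0,1\}$ we have $\E\{Y(z)\mid \bmX,\bmU\} = \E\{Y \mid Z=z,\bmX,\bmU\}$ almost surely on $A^c$, and the right-hand side is a functional of the observable law of $(Y,Z,\bmX,\bmU)$. Because $h = h\I_{A^c}$ almost surely, I can write
\begin{align*}
    \E\{h(\bmX,\bmU)\tau(\bmX,\bmU)\}
    = \E\bigl[h(\bmX,\bmU)\I_{A^c}\bigl\{\E(Y\mid Z{=}1,\bmX,\bmU)-\E(Y\mid Z{=}0,\bmX,\bmU)\bigr\}\bigr],
\end{align*}
which depends only on the observable law (the finite-second-moment assumption ensures the integrals are well defined via Cauchy--Schwarz, since $h$ is square integrable). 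The denominator $\E\{h(\bmX,\bmU)\}$ depends only on the marginal of $(\bmX,\bmU)$, hence is also identified, proving identifiability of $\tau_h$.

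\textbf{Step 3 (Necessity).} Suppose $\E[h(\bmX,\bmU)\I\{e_\tp(\bmX,\bmU)=1\}] > 0$; the case $z=0$ is symmetric. I will exhibit two joint laws $P_1, P_2$ for $(Y(1),Y(0),Z,\bmX,\bmU)$ that both satisfy Assumption \ref{asmp:ignorable}, agree on the distribution of $(Y,Z,\bmX,\bmU)$, but produce different values of $\tau_h$. Let $P_1$ be the original law; construct $P_2$ by leaving everything unchanged except shifting $Y(0)$ by a nonzero constant $c$ on the event $\{e_\tp(\bmX,\bmU)=1\}$, i.e., replace $Y(0)$ with $Y(0) + c\,\I\{e_\tp(\bmX,\bmU)=1\}$. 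Since $Z = 1$ almost surely on this event, $Y(0)$ is never observed there, so $Y = ZY(1) + (1-Z)Y(0)$ has the same conditional law given $(\bmX,\bmU,Z)$ under $P_1$ and $P_2$; thus the observable law is preserved. Assumption \ref{asmp:ignorable} continues to hold under $P_2$ because on $\{e_\tp=1\}$ the assignment $Z$ is almost surely constant and hence trivially independent of $(Y(1),Y(0))$ given $(\bmX,\bmU)$, while elsewhere $Y(0)$ is unchanged. Finally, a direct computation gives
\begin{align*}
    \tau_{h,P_2} - \tau_{h,P_1}
    = -\,\frac{c\,\E[h(\bmX,\bmU)\I\{e_\tp(\bmX,\bmU)=1\}]}{\E\{h(\bmX,\bmU)\}} \neq 0,
\end{align*}
(assuming the denominator is positive, else $\tau_h$ is undefined and there is nothing to identify). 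The shift is bounded, so the second-moment condition on potential outcomes is preserved, and this contradicts identifiability.

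\textbf{Main obstacle.} The only delicate point is Step 3: one must verify that the perturbed distribution $P_2$ still lies in the class defined by Assumption \ref{asmp:ignorable} and preserves the observable law. The key observation that removes the difficulty is that on $\{e_\tp(\bmX,\bmU)\in\{0,1\}\}$ one of the potential outcomes is almost never observed, and $Z$ is almost surely deterministic given $(\bmX,\bmU)$, so ignorability is automatic there and arbitrary perturbations of the never-observed potential outcome are admissible.
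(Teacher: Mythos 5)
Your proof is correct, and while it establishes the same three components as the paper's proof (sufficiency, necessity, and the equivalence of the two conditions on $h$, which you handle essentially identically), the first two components take genuinely different routes. For sufficiency, the paper works with truncated inverse-probability weights $\xi_0(\bm{x},\bm{u})$ and $\xi_1(\bm{x},\bm{u})$ (set to zero where $e_\tp\in\{0,1\}$) and rewrites $\tau_h$ as a ratio of IPW-type functionals of the observable law; you instead identify $\tau(\bm{x},\bm{u})$ on $A^\c$ via outcome regression, $\E\{Y(z)\mid \bmX,\bmU\}=\E\{Y\mid Z=z,\bmX,\bmU\}$, and integrate against $h\I_{A^\c}$. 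Both are valid; the paper's IPW form has the side benefit of directly producing the estimating-equation representation exploited later (cf.\ the H\'ajek-type form of $\hat{\tau}_{\ow}$ in Section 4), while your regression form is arguably more elementary. For necessity, the paper argues somewhat informally that the conditional law of the never-observed potential outcome on $\mathcal{A}_0$ is a free parameter on which $\E\{h(\bmX,\bmU)Y(1)\}$ depends (illustrated by setting $Y(1)=c$ there); you render this rigorous in the textbook sense of non-identifiability by exhibiting two joint laws $P_1,P_2$ with the same observable law but different $\tau_h$, and — this is the added value — explicitly verifying that $P_2$ remains in the model class: ignorability is preserved (indeed, since $Y(0)+c\,\I\{e_\tp(\bmX,\bmU)=1\}$ is a measurable transformation of $(Y(0),\bmX,\bmU)$, conditional independence given $(\bmX,\bmU)$ is automatic even without your pointwise case analysis), the observable law is unchanged because $Z=1$ a.s.\ on the perturbed event, and the bounded shift preserves second moments. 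One trivial simplification: positivity of the denominator $\E\{h(\bmX,\bmU)\}$ need not be assumed in Step 3, since $h\ge 0$ and $\E[h(\bmX,\bmU)\I\{e_\tp(\bmX,\bmU)=1\}]>0$ already force $\E\{h(\bmX,\bmU)\}>0$.
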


From Proposition \ref{prop:tau_h_identifiable},
many weights proposed in the literature, including
the overlap weight $h(\bm{x}, \bm{u})=e_{\tp}(\bm{x}, \bm{u})\{1-e_{\tp}(\bm{x}, \bm{u})\}$ \citep{LiBalancingCov2018}, trimming weight $h(\bm{x}, \bm{u})=\I\{\alpha<e_{\tp}(\bm{x}, \bm{u})<1-\alpha\}$ \citep{crump2009dealing} and matching weight $h(\bm{x}, \bm{u})=\min\{e_{\tp}(\bm{x}, \bm{u}), 1-e_{\tp}(\bm{x}, \bm{u}) \}$ \citep{li2013weighting}, 
can lead to identifiable weighted average treatment effects even when the overlap assumption 
fails. 
Note that, when the overlap assumption fails and the weight function $h(\cdot)$ is nonzero for units with $0$ or $1$ propensity scores, 
the corresponding weighted average treatment effect is not only unidentifiable, but also can take any value on the whole real line when the potential outcomes are unbounded. 
This means that 
inference on an unidentifiable weighted average treatment effect 
may be 
uninformative. 
Therefore, 
when the overlap assumption may be violated, as in our robust sensitivity model in Assumption \ref{asmp:robust_marg_sen},  
we suggest to focus on identifiable weighted average treatment effects, such as the overlap-weighted average treatment effect, 
so that we can have more informative inference.

\section{Sensitivity analysis for overlap-weighted treatment  effect}\label{sec:point_sen_ana}

We now study sensitivity analysis under the robust marginal sensitivity model in Assumption \ref{asmp:robust_marg_sen}. 
Throughout the paper, we will focus on the overlap-weighted average treatment effect $\tau_\ow$ defined as in \eqref{eq:tau_h} with $h(\bm{x}, \bm{u}) = e_{\tp}(\bm{x}, \bm{u})\{1-e_{\tp}(\bm{x}, \bm{u})\}$:  
\begin{align}\label{eq:tau_ow}
    \tau_{\ow} & 
    =
    \frac{\E\{ h(\bmX, \bmU) \tau(\bmX, \bmU) \}}{\E\{h(\bmX, \bmU)\}}
    =\dfrac{\E[Z\{1-e_{\tp}(\bmX, \bmU)\}Y]}{\E[Z\{1-e_{\tp}(\bmX, \bmU)\}]}-\dfrac{\E\{(1-Z) e_{\tp}(\bmX, \bmU)Y\}}{\E\{(1-Z)e_{\tp}(\bmX, \bmU)\}}, 
\end{align}
where the equivalent form in \eqref{eq:tau_ow} indicates a natural way to estimate $\tau_{\ow}$ based on the observed data when the true propensity score is known; see, e.g., \citet{LiBalancingCov2018} for its derivation.
As discussed in Section \ref{sec:estimand_no_overlap}, we consider 
the overlap-weighted average treatment effect in \eqref{eq:tau_ow} because it is robust to the violation of overlap, allowing for more informative sensitivity analysis.  
In particular, we do not consider the usual average treatment effect  $\tau_{\com}=\E\{Y(1)-Y(0)\}$, because, with unbounded potential outcomes, it may not have informative bounds under our robust marginal sensitivity model with a positive $\delta_1$ or $\delta_0$.  
In addition, as discussed in \citet{LiBalancingCov2018}, the overlap-weighted average treatment effect $\tau_{\ow}$ can often be of practical interest, since it emphasizes units with substantial probabilities of receiving both treatments. 
For example, in medical studies, such units may represent patients for whom clinical consensus is ambiguous, and for whom research on treatment effects is most needed.

\subsection{Sensitivity bounds of the overlap-weighted average treatment effect}\label{subsec:sen_bound_ow}

In this section, we study how to estimate bounds on the overlap-weighted average treatment effect $\tau_{\ow}$ under our proposed robust marginal sensitivity model in \eqref{eq:robust_sen}.
From 
the equivalent forms of $\tau_{\ow}$ in 
\eqref{eq:tau_ow}, we consider the following H\'ajek-type ``estimator'' for $\tau_{\ow}$:
\begin{align}\label{eq:tau_ow_hat}
    \hat{\tau}_{\ow}=\dfrac{\sum_{i=1}^n Z_i \{1-e_{\tp}(\bmX_i, \bmU_i)\}Y_i}{\sum_{i=1}^n Z_i \{1-e_{\tp}(\bmX_i, \bmU_i)\}}-\dfrac{\sum_{i=1}^n (1-Z_i) e_{\tp}(\bmX_i, \bmU_i)Y_i}{\sum_{i=1}^n (1-Z_i) e_{\tp}(\bmX_i, \bmU_i)}.
\end{align}
Note that 
$\hat{\tau}_{\ow}$
in \eqref{eq:tau_ow_hat} is not computable based on the observed data, since it requires the knowledge of the true propensity scores, which in particular involve 
the unmeasured confounder $\bmU_i$s. 
In the following, we will search the lower and upper bounds of \eqref{eq:tau_ow_hat} under the proposed robust sensitivity model in \eqref{eq:robust_sen} for any prespecified $(\Lambda_1, \Lambda_0, \delta_1, \delta_0)$. 

To facilitate the discussion, we introduce $\psi_{\tp}(\bm{x}, \bm{u}) \equiv \logit\{e_{\tp}(\bm{x}, \bm{u})\}-\logit\{e_{\tp}(\bm{x})\}$ to denote the difference on the logit scale between the true and observable propensity scores. For any function $e(\bm{x})\in [0,1]$ and $\psi(\bm{x}, \bm{u})\in \mathbb{R}$, we define 
\begin{align}\label{eq:e_psi}
    e^{(\psi)}(\bm{x}, \bm{u}) = \logit^{-1}[ \psi(\bm{x}, \bm{u}) + \logit\{e(\bm{x})\} ]
=( 1+\exp[-\psi(\bm{x}, \bm{u}) - \logit\{e(\bm{x})\}] )^{-1}.
\end{align}
By definition, 
$e_{\tp}^{(\psi_{\tp})}(\bm{x}, \bm{u})$ equals the true propensity score $e_{\tp}(\bm{x}, \bm{u})$. 
Let $\hat{e}(\bm{x})$ be an estimated propensity score based on the observed data, 
and $\psi(\bm{x}, \bm{u})$ be a function that needs to be specified.
Similar to \eqref{eq:tau_ow_hat}, we consider the following H\'ajek-type estimator for $\tau_{\ow}$: 
\begin{align}\label{eq:tau_hat_psi}
    \hat{\tau}_{\ow}^{(\psi)}=\dfrac{\sum_{i=1}^n Z_i(1-\hat{e}^{(\psi_i)}_i)Y_i}{\sum_{i=1}^n Z_i (1-\hat{e}^{(\psi_i)}_i)}-\dfrac{\sum_{i=1}^n (1-Z_i) \hat{e}^{(\psi_i)}_i Y_i}{\sum_{i=1}^n (1-Z_i) \hat{e}^{(\psi_i)}_i},
\end{align}
where 
$\hat{e}_i = \hat{e}(\bmX_i)$, 
$\psi_i = \psi(\bmX_i, \bmU_i)$, 
and 
$\hat{e}^{(\psi_i)}_i \equiv \logit^{-1}\{ \psi_i + \logit(\hat{e}_i) \}$ equals $\hat{e}^{(\psi)}(\bmX_i, \bmU_i)$ defined as in \eqref{eq:e_psi}. 
Ideally, we want $\hat{e}(\bm{x})$ to be close to $e_{\tp}(\bm{x})$ and choose $\psi(\cdot)$ to be $\psi_{\tp}(\cdot)$, so that $\hat{e}^{(\psi)}(\bm{x}, \bm{u})$ can be close to the true propensity score $e_\tp(\bm{x}, \bm{u})$. 
However, $\psi_{\tp}(\cdot)$ is generally unknown and not estimable from the observed data. 
Hence, we will instead seek bounds of $\hat{\tau}_{\ow}^{(\psi)}$ over all possible $\psi(\cdot)$ that satisfies certain constraints informed by the true $\psi_{\tp}(\cdot)$. 
Specifically, we will consider two types of constraints: one is from the robust marginal sensitivity model, and the other is from the covariate balancing property of the true propensity score.

We consider first constraints from the robust marginal sensitivity model in Assumption \ref{asmp:robust_marg_sen}, 
which equivalently assumes $\Pr\{ |\psi_{\tp}(\bmX, \bmU)| \le \log(\Lambda_z) \mid Z=z \} \ge 1-\delta_z$ for $z=0,1$. 
This motivates the following constraints for 
the values of $\psi(\cdot)$ evaluated at the $n$ 
samples:
\begin{align}\label{eq:Lambda_delta_psi}
    \frac{1}{n_z}\sum_{i:Z_i=z} \I \{ |\psi_i | \leq \log(\Lambda_z) \} \geq 1-\delta_z,  
    \quad (z=0,1)
\end{align}
where $n_1$ and $n_0$ denote the numbers of treated and control units in the observed samples. 
In \eqref{eq:Lambda_delta_psi}, 
we essentially require that, 
for at least $1-\delta_z$ proportion of the observed treated or control units, the difference on the logit scale between the true and estimated propensity scores is bounded between $-\log(\Lambda_z)$ and $\log(\Lambda_z)$, for $z=0,1$.

We consider then constraints from the covariate balancing property of the true propensity score. For any prespecified scalar or vector function $g(\bm{x})$ of the covariates,  
we have 
\begin{align}\label{eq:cov_bal_property}
    \dfrac{\E[Z\{1-e_{\tp}(\bmX, \bmU)\}g(\bmX)]}{\E[Z\{1-e_{\tp}(\bmX, \bmU)\}]} = \dfrac{\E\{(1-Z) e_{\tp}(\bmX, \bmU)g(\bmX)\}}{\E\{(1-Z) e_{\tp}(\bmX, \bmU)\}}.
\end{align}
This is commonly known as the covariate balancing property, and can be derived by applying the law of total expectation through conditioning on $(\bmX, \bmU)$.
This then motivates 
the following constraint for the values of $\psi(\cdot)$ 
at the 
$n$ 
observed samples:
\begin{align}\label{eq:cov_bal_psi}
    \dfrac{\sum_{i=1}^n Z_i (1-\hat{e}^{(\psi_i)}_i) g(\bmX_i)}{\sum_{i=1}^n Z_i (1-\hat{e}^{(\psi_i)}_i)} = \dfrac{\sum_{i=1}^n (1-Z_i) \hat{e}^{(\psi_i)}_i g(\bmX_i)}{\sum_{i=1}^n (1-Z_i) \hat{e}^{(\psi_i)}_i }.
\end{align}

An estimator for bounds on the overlap-weighted average treatment effect $\tau_{\ow}$ under the proposed robust marginal sensitivity model in Assumption \ref{asmp:robust_marg_sen} is then achieved by optimizing the estimator $\hat{\tau}_{\ow}^{(\psi)}$ in \eqref{eq:tau_hat_psi} 
over all possible $\psi(\cdot)$
satisfying 
the constraints in \eqref{eq:Lambda_delta_psi} and \eqref{eq:cov_bal_psi}. 
If the model for the observable propensity score is correctly specified, then intuitively, as the sample size $n$ increases, the estimated propensity score $\hat{e}(\bm{x})$ will be close to the observable $e_{\tp}(\bm{x})$, and the true $\psi_{\tp}(\cdot)$ will approximately satisfy both constraints in \eqref{eq:Lambda_delta_psi} and \eqref{eq:cov_bal_psi}. Consequently, under correct model specification and as the sample size grows, 
the estimated bounds will approximately cover the true average treatment effect $\tau_{\ow}$. 
In the later Section \ref{sec:ci_for_overlap}, we will take into account the uncertainty in the estimated $\hat{e}(\cdot)$ and in the constraints \eqref{eq:Lambda_delta_psi} and \eqref{eq:cov_bal_psi}, and construct large-sample 
confidence bounds for the overlap-weighted average treatment effect.

\subsection{
Optimization 
for the sensitivity bounds}\label{subsec:double_frac_program}

We now study how to optimize $\hat{\tau}_{\ow}^{(\psi)}$ in 
\eqref{eq:tau_hat_psi} 
subject to the constraints
in \eqref{eq:Lambda_delta_psi} and \eqref{eq:cov_bal_psi}: 
\begin{align}\label{eq:opt_tau_hat_ow}
    \min 
    \text{ or } \max \quad 
    & \hat{\tau}_{\ow}^{(\psi)}=\dfrac{\sum_{i=1}^n Z_i (1-\hat{e}^{(\psi_i)}_i)Y_i}{\sum_{i=1}^n Z_i (1-\hat{e}^{(\psi_i)}_i ) }-\dfrac{\sum_{i=1}^n (1-Z_i) \hat{e}^{(\psi_i)}_i Y_i}{\sum_{i=1}^n (1-Z_i) \hat{e}^{(\psi_i)}_i},
    \\
    \text{subject to} \quad 	&
    \sum_{i:Z_i=z} \I \{ |\psi_i | \leq \log(\Lambda_z) \} \ge n_z
    (1-\delta_z) \  \text{ for } z=0,1, 
    \nonumber
    \\
    & \dfrac{\sum_{i=1}^n Z_i (1-\hat{e}^{(\psi_i)}_i) g(\bmX_i)}{\sum_{i=1}^n Z_i (1-\hat{e}^{(\psi_i)}_i)} = \dfrac{\sum_{i=1}^n (1-Z_i) \hat{e}^{(\psi_i)}_ig(\bmX_i)}{\sum_{i=1}^n (1-Z_i) \hat{e}^{(\psi_i)}_i}. 
    \nonumber
\end{align}
It is worth noting that, for the purpose of optimization, we do not need to treat $\psi(\cdot)$ as a function. 
Instead, we can simply view $\{\psi_i\}_{i=1}^n$, the function values at the observed samples, as $n$ variables that can take any real or infinite values, except that they need to satisfy the constraints in \eqref{eq:opt_tau_hat_ow}. 
This is due to the arbitrariness of the unmeasured confounder $\bmU_i$s. 

\begin{remark}
When $\delta_1 = \delta_0 = 0$ and $\Lambda_1 = \Lambda_0 =1$, there is no unmeasured confounding,  
the $\psi_s$s in \eqref{eq:opt_tau_hat_ow} all become zeros, and the adjusted propensity score $\hat{e}_i^{(\psi_i)}$ becomes the same as the estimated $\hat{e}_i$ for all $i$. 
In this case, however, the optimization in \eqref{eq:opt_tau_hat_ow} may not have a feasible solution, since the covariate balance constraint may not 
hold exactly 
for the estimated propensity scores $\hat{e}_i$s. 
To ensure the existence of a solution when $\delta_1 = \delta_0 = 0$ and $\Lambda_1 = \Lambda_0 =1$, 
we suggest to estimate the propensity score using logistic regression with predictors including $g(\bmX_i)$s. 
Under such a propensity score estimation, 
the covariate balance constraint in \eqref{eq:opt_tau_hat_ow} must hold when $\hat{e}_i^{(\psi_i)}=\hat{e}_i$ for all $i$, 
as shown in \citet[][Theorem 3]{LiBalancingCov2018}.
\end{remark}

Below we further simplify the optimization in \eqref{eq:opt_tau_hat_ow} by introducing additional notation. 
For $1\le i \le n$, 
let $\Delta_i = \I \{ |\psi_i | \leq \log(\Lambda_{Z_i}) \}$ be a dummy variable representing whether the strength of unmeasured confounding for unit $i$ can be bounded by $\log (\Lambda_{Z_i})$. 
We can verify that, 
if $\Delta_i = 1$, then $\psi_i \in [-\log(\Lambda_{Z_i}), \log(\Lambda_{Z_i})]$, 
and $\hat{e}^{(\psi_i)}_i$ can take values between 
\begin{align*}%
    \underline{e}_i = \logit^{-1}\{-\log(\Lambda_{Z_i}) + \logit(\hat{e}_i)\}
    \ \ \text{and} \ \ 
    \overline{e}_i = \logit^{-1}\{\log(\Lambda_{Z_i}) + \logit(\hat{e}_i)\};
\end{align*}
otherwise, 
$\hat{e}^{(\psi_i)}_i$ can take all values in $[0,1]$. 
Without loss of generality, we assume $Z_i=1$ for $1\le i \le n_1$ and $Z_i = 0$ for $n_1+1 \le i \le n$, 
and 
define further 
\begin{align}\label{eq:a_i_low_up}
    \begin{pmatrix}
        a_i^{\low} \\
        a_i^{\up} 
    \end{pmatrix}
    = 
    \begin{pmatrix}
        1 - \overline{e}_i\\
        1 - \underline{e}_i
    \end{pmatrix}
    \text{ for } 1\le i\le n_1, 
    \text{ and }
    \begin{pmatrix}
        a_i^{\low}\\
        a_i^{\up}
    \end{pmatrix}
    = 
    \begin{pmatrix}
        \underline{e}_i\\
        \overline{e}_i 
    \end{pmatrix}
    \text{ for } n_1 < i \le n.
\end{align}
We can then transform the optimization 
in \eqref{eq:opt_tau_hat_ow} into the following one over $\omega_i$s and $\Delta_i$s:  
\begin{align}\label{eq:frac_double}
    \min 
    \text{ or } \max \quad & \dfrac{\sum_{i=1}^{n_1}\omega_iY_i}{\sum_{i=1}^{n_1}\omega_i}-\dfrac{\sum_{i=n_1+1}^n\omega_iY_i}{\sum_{i=n_1+1}^n\omega_i} \\
    \text{subject to} \quad 	&\omega_i \in [\omega_i^{\low}, \omega_i^{\up}]\equiv [a_i^{\low}\Delta_i, 1-(1-a_i^{\up})\Delta_i] \ \text{ for all } 1\le i\le n,  
    \nonumber
    \\
	&\sum_{i=1}^{n_1} \Delta_i\geq  
    \lceil n_1 (1-\delta_1)\rceil, \quad 
    \sum_{i=n_1+1}^n \Delta_i\geq  
    \lceil n_0 (1-\delta_0)\rceil,
    \quad 
    \Delta_i = 0 \text{ or } 1 \text{ for all } i, 
    \nonumber
    \\
    & \dfrac{\sum_{i=1}^{n_1} \omega_i g(\bmX_i)}{\sum_{i=1}^{n_1}\omega_i} = \dfrac{\sum_{i=n_1+1}^{n} \omega_i g(\bmX_i)}{\sum_{i=n_1+1}^n\omega_i}, 
    \nonumber
\end{align}
where $\lceil x \rceil$ denotes the minimum integer that is no less than $x$, and $a_i^{\low}$s and $a_i^{\up}$s are defined in \eqref{eq:a_i_low_up}. 
In \eqref{eq:frac_double}, $\omega_i$ essentially represents $1-\hat{e}^{(\psi_i)}_i$ for a treated unit and $\hat{e}^{(\psi_i)}_i$ for a control unit. 
It is worth noting that \eqref{eq:frac_double} is not a linear fractional programming, even if we ignore the integer constraints on $\Delta_i$s. 
This is because the objective function involves the difference between two fractions, and the optimization for them cannot be separated due to the constraint on covariate balance.  
Fortunately, we can still transform 
\eqref{eq:frac_double} into a 
linear programming problem with integer-type constraints, 
utilizing the classical Charnes-Cooper transformation for linear fractional programming, as detailed in the next subsection.

\subsection{Linear programming for the sensitivity bounds}\label{sec:milp}

Motivated by the Charnes-Cooper transformation, 
we define
\begin{alignat*}{5}
   \bar{\omega}_{i}&=\dfrac{\omega_i}{\sum_{j:Z_j=Z_i}\omega_j},
    &\quad
    \bar{\Delta}_{i}=\frac{\Delta_i}{\sum_{j:Z_j=Z_i}\omega_j},
    \ \ \ 
    \text{ and }
    \ \ \ 
    t_z&=\dfrac{1}{\sum_{j:Z_j=z}\omega_j} \ \text{ for }
    z=0,1.
\end{alignat*}
Then 
\eqref{eq:frac_double} becomes equivalent to the following 
optimization over  $\bar{\omega}_{i}$s, $\bar{\Delta}_{i}$s, $t_1$ and $t_0$:
\begin{align}\label{eq:MILP_sep}
\min 
 \text{ or } \max \quad & \sum_{i=1}^{n_1} \bar{\omega}_{i} Y_i-\sum_{i=n_1+1}^n \bar{\omega}_{i} Y_i \\
\text{subject to} \quad 	&
a_i^{\low} \bar{\Delta}_{i} \leq \bar{\omega}_{i} \leq t_{Z_i}-(1-a_i^{\up})\bar{\Delta}_{i}   \text{ for } 1\le i\le n,
\nonumber
\\
	&\sum_{i=1}^{n_1} \bar{\Delta}_{i}\geq  
    \lceil n_1 (1-\delta_1)\rceil t_1, \quad 
    \sum_{i=n_1+1}^n \bar{\Delta}_{i}\geq  
    \lceil n_0 (1-\delta_0)\rceil t_0,
 \nonumber
 \\
 &
 \bar{\Delta}_{i}\in\{0, t_{Z_i}\} \text{ for } 1\le i \le n, 
 \nonumber
 \\
 & \sum_{i=1}^{n_1} \bar{\omega}_{i} g(\bmX_i) = \sum_{i=n_1+1}^n \bar{\omega}_{i} g(\bmX_i), 
 \nonumber
 \\
& \sum_{i=1}^{n_1} \bar{\omega}_{i}=1, \quad \sum_{i=n_1+1}^n \bar{\omega}_{i}=1,
\nonumber
\quad t_1\ge 0, \quad t_0\ge 0,
\nonumber
\end{align}
where $a_i^{\low}$s and $a_i^{\up}$s are defined as in \eqref{eq:a_i_low_up}. 

We can linearize the constraints that $\bar{\Delta}_{i} \in \{0, t_{Z_i}\}$ for all $i$
using \citet{GloverImproved:1975}’s linearization, so that the optimization in \eqref{eq:MILP_sep} becomes a standard mixed-integer linear programming problem; for conciseness, we relegate the details into the supplementary material. 
Alternatively, 
we can relax the integer-type constraints on 
$\Delta_{i}$s and instead require 
$0\le \Delta_{i}\le t_{Z_i}$ for all $1\le i\le n$. 
Obviously, after the relaxation, the optimization in \eqref{eq:MILP_sep} reduces to a standard linear programming problem, which can be solved in polynomial time. 
Moreover, 
the resulting range of objective values
will cover that from the original problem in \eqref{eq:MILP_sep} with integer-type constraints. 
This is because the variables in the optimization will have a larger feasible region after the relaxation of the integer-type constraints. 
Consequently, the relaxed version of \eqref{eq:MILP_sep} can still lead to valid sensitivity analysis, although being more conservative than the original one. 
Our numerical experiments in the supplementary material suggest that relaxing the integer-type constraints often introduces little conservativeness in the bounds.

\section{Augmented percentile bootstrap for inference under over-identified estimating equations}\label{sec:boot_over_iden}

In Section \ref{sec:point_sen_ana} we focused on estimation of bounds for the overlap-weighted average treatment effect under the robust marginal sensitivity model.
In the remaining of the paper, we will consider its uncertainty quantification. Specifically, we will construct large-sample confidence bounds for the true overlap-weighted average treatment effect. 
In this section, we will first discuss a general strategy for uncertainty quantification in Z-estimation with partial observations and over-identified estimating equations, extending \citet{zhaosenIPW2019}'s percentile bootstrap for the just-identified case. 
We will then apply this general strategy to our robust sensitivity analysis in Section \ref{sec:ci_for_overlap}. 

\subsection{Z-estimation with over-identification and partial observations}\label{eq:z_est_over_unmea}

Assume that we have i.i.d.~samples $\{\bmO_i, \bmW_i^\tp\}_{i=1}^n$ from a superpopulation, 
where, for each $i$, $\bmO_i$ is a vector of observed variables and $\bmW_i^\tp \in \mathbb{R}^{d_1}$ is a vector of unobserved variables. 
We use the $\tp$ in the supscript to emphasize the true values of the unobserved variables; later we will use $\bmW_i$s to denote general values of the unobserved variables that may be different from the truth.  
We are interested in the parameter ${\bmtheta}_0\in \mathbb{R}^{d_0}$ (or its certain transformations), which is defined to be the unique solution of the following set of estimating equations: 
\begin{align}\label{eq:gen_Z}
    \E\{p_{\bmtheta}(\bm\bmO,\bm\bmW^\tp)\}=\bm{0},
\end{align}
where $(\bm\bmO,\bm\bmW^\tp)$ denotes an identically distributed copy of the $(\bmO_i, \bmW_i^\tp)$s, 
and 
$p_{{\bmtheta}}(\cdot) \in \mathbb{R}^{d}$ with $d> d_0$.
Thus, the estimating equations in \eqref{eq:gen_Z} are over-identified, with more equations than the number of parameters. 
In addition, 
we have some constraints on the unobserved $\bmW_i^\tp$s. 
For clarity, we first assume that $\bmW^\tp \in \mathcal{W}_0$ almost surely for some set $\mathcal{W}_0 \subset \mathbb{R}^{d_1}$. 
In the later Section \ref{sec:pred_set_W}, we will extend this framework to accommodate general forms of constraints
from, say, prediction sets for the unobserved variables within the sample.

Pretending that $\bmW_i^\tp$s are known, it is natural to estimate the parameter ${\bmtheta}_0$ by solving the following empirical version of the estimating equations in \eqref{eq:gen_Z}: 
\begin{align}\label{eq:gen_Z_emp}
    \sum_{i=1}^n p_{\bmtheta}(\bmO_i,\bmW_i^\tp)=\bm{0}. 
\end{align}
Since there are more equations than the number of unknown parameters, \eqref{eq:gen_Z_emp} may not have a solution. 
A general way to solving this issue is to reduce the number of estimating equations in \eqref{eq:gen_Z_emp}. For example, we can instead minimize a certain norm of the estimating equations in \eqref{eq:gen_Z_emp}, which is equivalent to finding the root of the corresponding first-order derivative over ${\bmtheta}$, which then gives $d_0$ estimating equations. 
However, in our case of unknown $\bmW_i^\tp$s, we do not want to reduce the number of estimating equations,  
since these additional estimating equations can provide useful information about (or equivalently impose useful constraints on) the unknown $\bmW_i^\tp$s, 
which is helpful for sharpening our inference about ${\bmtheta}_0$. 
For example, \citet{Dorn23} show that, with additional constraints imposed by the balancing property of the true propensity score with respect to some conditional quantiles of the potential outcomes given observed covariates, the analysis under the usual marginal sensitivity model in \eqref{eq:msm}
can become sharp, 
thereby substantially improving the inference for treatment effects.

We will be particularly interested in a scalar transformation of the parameter ${\bmtheta}_0$, denoted by $H({\bmtheta}_0)$. 
An intuitive estimator for, say, an upper bound of $H({\bmtheta}_0)$ is 
\begin{align}\label{eq:z_est_point_upper_bound}
    \sup \quad & H({\bmtheta}) 
    \\
    \text{subject to} \quad & 
    \sum_{i=1}^n p_{\bmtheta}(\bmO_i, \bmW_i)=\bm{0},
    \nonumber
    \\
    & \bmW_i \in \mathcal{W}_0, \ i=1,2,\ldots, n,
    \nonumber
\end{align}
where we optimize over ${\bmtheta}$ and $\bmW_i$s. 
The above form of estimation is also what we considered in Section \ref{sec:point_sen_ana}; see the later Section \ref{sec:ci_for_overlap} for details. 
Below we focus on constructing large-sample valid confidence bounds for $H({\bmtheta}_0)$ with guaranteed coverage probabilities.

\subsection{Parameter and data augmentation, and regularity conditions}
In the just-identified case where the number of estimating equations equals the number of unknown parameters, \citet{zhaosenIPW2019} proposed percentile bootstrap to construct confidence bounds for $H({\bmtheta}_0)$. 
Below we extend this strategy to address the over-identified case. As previously discussed, having more estimating equations than necessary can be beneficial for further constraining the unobserved variables. The key to this extension is to augmenting the parameter ${\bmtheta}$ with an additional  
auxiliary parameter denoted by, say, ${\bmrho}$.
This auxiliary parameter ${\bmrho}$ plays a crucial role in justifying the validity of the inference; however, it will not appear in the actual implementation of our procedure, as detailed shortly. This is why we call it an auxiliary parameter.

To be more specific, we consider the following set of estimating equations for the augmented parameter vector $({\bmtheta}, {\bmrho})$: 
\begin{align}\label{eq:gen_Z_aug_para}
    \E\{p_{\bmtheta}(\bmO,\bmW^\tp + \bm{R}^\tp {\bmrho} )\}=\bm{0}
\end{align}
where $\bm{R}^\tp \in \mathbb{R}^{d_1 \times (d-d_0)}$ can be any random matrix that can depend on $\bmO$ and $\bmW^\tp$, and ${\bmrho} \in \mathbb{R}^{d - d_0}$ is the auxiliary parameter.
We can also understand $\bm{R}^\tp$ as augmented data, which again will be auxiliary since our actual procedure discussed shortly will not depend on its explicit construction. 
Let ${\bmrho}_0 = \bm{0}$.
Then, obviously, $({\bmtheta}_0, {\bmrho}_0)$ is a solution to the estimation equation in \eqref{eq:gen_Z_aug_para} for $({\bmtheta}, {\bmrho})$. 
Importantly, with the augmented auxiliary parameter, the estimating equations in \eqref{eq:gen_Z_aug_para} become just-identified, since the number of equations equals the number of parameters in $({\bmtheta}, {\bmrho})$. 

We then impose assumptions on the asymptotic normality and bootstrap validity of the Z-estimation for $({\bmtheta}_0, {\bmrho}_0)$ based on \eqref{eq:gen_Z_aug_para}, pretending that the values of $\bmW^\tp$ and $\bm{R}^\tp$ are known for each observed sample.  
Specifically, 
let $(\hat{{\bmtheta}}, \hat{{\bmrho}})$ be the root to (or equivalently the Z-estimator based on) the following empirical estimating equations: 
\begin{align}\label{eq:gen_Z_aug_para_emp}
\sum_{i=1}^n p_{\bmtheta}(\bmO_i, \bmW_i^\tp + \bm{R}^\tp_i {\bmrho} ) =\bm{0},
\end{align}
and $(\hat{{\bmtheta}}_{\bstrap}, \hat{{\bmrho}}_{\bstrap})$ follow the bootstrap distribution of the above Z-estimator, i.e., 
$(\hat{{\bmtheta}}_{\bstrap}, \hat{{\bmrho}}_{\bstrap})$ is the root to the following estimating equations:
\begin{align}\label{eq:gen_Z_aug_para_emp_boot}
    \sum_{i=1}^n k_{\bstrap i} p_{\bmtheta}(\bmO_{i}, \bmW_i^\tp + \bm{R}^\tp_i {\bmrho} ) =\bm{0},
\end{align}
where $(k_{\bstrap 1}, k_{\bstrap 2}, \ldots, k_{\bstrap n}) \sim \text{Multinomial} (n; n^{-1}, n^{-1}, \ldots, n^{-1})$. 

\begin{assumption}[Validity of the standard nonparametric bootstrap]\label{asmp:valid_bootstrap_z_est}
 For some random matrix $\bm{R}^\tp$ with finite $2+\nu$ moment for some $\nu>0$, in the sense that 
 $\E\{ \| \bm{R}^\tp \|^{2+\nu} \} < \infty$,  
 the Z-estimator $(\hat{{\bmtheta}}, \hat{{\bmrho}})$ from \eqref{eq:gen_Z_aug_para_emp} and its bootstrap version  $(\hat{{\bmtheta}}_{\bstrap}, \hat{{\bmrho}}_{\bstrap})$ from \eqref{eq:gen_Z_aug_para_emp_boot} have the following asymptotic properties as $n\rightarrow \infty$: 
    \begin{align}\label{eq:norm_z_est}
        \sqrt{n}
        \begin{pmatrix}
            \hat{{\bmtheta}} - {\bmtheta}_0\\
            \hat{{\bmrho}} - {\bmrho}_0
        \end{pmatrix}
        \converged 
        \mathcal{N}\left(\bm{0}, \bm{\Sigma}\right),
        \quad 
        \sqrt{n}
        \begin{pmatrix}
        \hat{{\bmtheta}}_{\mathsf{b}} - \hat{{\bmtheta}}\\
            \hat{{\bmrho}}_{\mathsf{b}} - \hat{{\bmrho}}
        \end{pmatrix}
        \overset{\Pr}{\converged} \mathcal{N}\left(\bm{0}, \bm{\Sigma}\right), 
    \end{align}
    where $\converged$ means convergence in distribution,  $\mathcal{N}(\bm{0}, \bm{\Sigma})$ denotes a Gaussian distribution with mean zero and covariance matrix $\bm{\Sigma}$, 
    and $\overset{\Pr}{\converged}$ means that the weak convergence of the bootstrap distribution holds in probability. 
    In addition, recall that ${\bmrho}_0 = \bm{0}$. 
\end{assumption}

Assumption \ref{asmp:valid_bootstrap_z_est} is a standard result from the theory of Z-estimation; see, e.g., \citet{Wellner1996bootstrapping} and \citet{KosorokEmp:2006}. 
In Assumption \ref{asmp:valid_bootstrap_z_est}, we mainly require the existence of an $\bm{R}^\tp$ with a certain bounded moment such that the theory of Z-estimation applies. We do not require the explicit construction of $\bm{R}^\tp$ in our bootstrap procedure, as described shortly. 
In Section \ref{sec:ci_for_overlap}, we provide a construction of $\bm{R}^\tp$ along with explicit conditions under which Assumption \ref{asmp:valid_bootstrap_z_est} holds in the context of our robust sensitivity analysis.

\subsection{Augmented percentile bootstrap confidence bounds}

Below we introduce our augmented percentile bootstrap in the case of over-identification. To facilitate understanding, we proceed step by step. We first discuss the standard bootstrap, pretending that $\bmW_i^\tp$s and $\bm{R}_i^\tp$s are known. 
We then discuss bootstrap inference pretending that $\bm{R}_i^\tp$s and the estimated auxiliary parameter (based on both original and bootstrap samples) are known. 
Finally, we introduce our augmented percentile bootstrap. 
For conciseness and without loss of generality, 
we focus on constructing a $1-\alpha$ upper confidence bound for the scalar parameter of interest $H({\bmtheta}_0)$, where $\alpha\in (0,1)$ specifies the desired coverage level. 
The lower confidence bounds and consequently two-sided confidence bounds can be constructed analogously. 
Furthermore, 
we assume that $H(\cdot)$ is differentiable at ${\bmtheta}_0$ so that, by the Delta method, $H(\hat{{\bmtheta}})$ is asymptotically Gaussian under Assumption \ref{asmp:valid_bootstrap_z_est}, and  that the asymptotic variance of $\sqrt{n}\{H(\hat{{\bmtheta}}) - H({\bmtheta}_0)\}$ is positive. 

First, we pretend that we know both $\bmW_i^\tp$s and $\bm{R}_i^\tp$s. In this case, we can apply the standard bootstrap and use $Q_{1-\alpha} \{ H(\hat{{\bmtheta}}_{\bstrap}) \}$, the $(1-\alpha)$th quantile of the distribution of $H(\hat{{\bmtheta}}_{\bstrap})$, as an $1-\alpha$ upper confidence bound for $H({\bmtheta}_0)$. 
Recall that $(\hat{{\bmtheta}}_{\bstrap}, \hat{{\bmrho}}_{\bstrap})$ is the solution for \eqref{eq:gen_Z_aug_para_emp_boot} and follows the bootstrapped distribution of the Z-estimator based on \eqref{eq:gen_Z_aug_para_emp}. 
Under Assumption \ref{asmp:valid_bootstrap_z_est}, we can show that 
$\limsup_{n\rightarrow \infty} \Pr[ H({\bmtheta}_0) > Q_{1-\alpha} \{ H(\hat{{\bmtheta}}_{\bstrap}) \} ] = \alpha$, i.e., $Q_{1-\alpha} \{ H(\hat{{\bmtheta}}_{\bstrap}) \}$ is an asymptotically valid $1-\alpha$ upper confidence bound for $H({\bmtheta}_0)$. 

Second, we pretend that $\bm{R}_i^\tp$s, the estimator for the auxiliary parameter $\hat{{\bmrho}}$, and its bootstrapped version $\hat{{\bmrho}}_{\bstrap}$ are all known.  
We can verify that $H(\hat{{\bmtheta}}_{\bstrap})$ must be upper bounded by 
\begin{align}\label{eq:bootstrap_R_rho}
    \check{\theta}_{H, \bstrap} = \ & \sup H({\bmtheta}) 
    \\
    \text{subject to} \quad & 
    \sum_{i=1}^n k_{\bstrap i} p_{\bmtheta}(\bmO_{i}, \bmW_i + \bm{R}^\tp_i \hat{{\bmrho}}_{\bstrap} ) =\bm{0},
    \nonumber
    \\
    & \bmW_i \in \mathcal{W}_0, \ i=1,2,\ldots, n.
    \nonumber
\end{align}
where we optimize over $({\bmtheta}, \bmW_1, \ldots, \bmW_n)$. 
This is because $(\hat{{\bmtheta}}_{\bstrap}, \bmW_1^\tp, \ldots, \bmW_n^\tp)$ is in the feasible region of \eqref{eq:bootstrap_R_rho}. 
We can then use  $Q_{1-\alpha} ( \check{\theta}_{H, \bstrap} )$ as a $1-\alpha$ upper confidence bound for $H({\bmtheta}_0)$, and its asymptotic validity follows immediately from the validity of the standard bootstrap and the  fact that $\check{\theta}_{H, \bstrap} \ge H(\hat{{\bmtheta}}_\bstrap)$. 
In the just-identified case, we do not need to introduce $\bm{R}^\tp$ and ${\bmrho}$, and we can thus remove $\bm{R}^\tp_i \hat{{\bmrho}}_{\bstrap}$s in \eqref{eq:bootstrap_R_rho}, under which $Q_{1-\alpha} ( \check{\theta}_{H, \bstrap} )$ reduces to the percentile bootstrap confidence bound in \citet{zhaosenIPW2019}. 
Note that, in general, both $\bm{R}^\tp_i$s and $\hat{{\bmrho}}_{\bstrap}$ are unknown and can depend on the unobserved variables $\bmW_i^\tp$s. 

Finally, we propose our bootstrap confidence bounds that do not require the knowledge of unobserved information. Note that the two bootstrap methods discussed above are generally infeasible to implement in practice, due to their dependence on unobserved information. 
The construction of our confidence bounds relies on the fact that, although the exact value of $\hat{{\bmrho}}_{\bstrap}$ is unknown, it is close to zero and more precisely of order $O_{\Pr}(n^{-1/2})$. 
This follows from the true value of ${\bmrho}$ being ${\bmrho}_0 = \bm{0}$, along with the asymptotic normality of the standard Z-estimation in Assumption \ref{asmp:valid_bootstrap_z_est}. 
Moreover, $\bm{R}^\tp_i \hat{{\bmrho}}_{\bstrap}$s will also be close to zero since $\bm{R}^\tp_i$s have bounded moments by construction. 
These motivate us to consider the following optimization based on the bootstrapped samples: 
\begin{align}\label{eq:bootstrap_R_epsilon}
    \tilde{\theta}_{H, \bstrap} = \ & \sup H({\bmtheta}) 
    \\
    \text{subject to} \quad & 
    \sum_{i=1}^n k_{\bstrap i} p_{\bmtheta}(\bmO_{i}, \bmW_i) =\bm{0},
    \nonumber
    \\
    & \bmW_i \in \mathcal{W}_0^\varepsilon, \ i=1,2,\ldots, n,
    \nonumber
\end{align}
where $\varepsilon>0$ is a positive constant and  $\mathcal{W}_0^\varepsilon = \{\bmW: \|\bmW - \bmW_0\| \le \varepsilon \text{ for some } \bmW_0 \in \mathcal{W}_0 \}$ is the $\varepsilon$-neighborhood of $\mathcal{W}_0$. 
We then propose to use $Q_{1-\alpha} ( \tilde{\theta}_{H, \bstrap} )$ as a $1-\alpha$ upper confidence bound for $H({\bmtheta}_0)$. 
We now give some intuition for the validity of the proposed confidence bound. 
We essentially consider $\bmW_i + \bm{R}^\tp_i \hat{{\bmrho}}_{\bstrap}$ in \eqref{eq:bootstrap_R_rho} as a single variable in the optimization, and it will be in $\mathcal{W}_0^\varepsilon$ when $\bmW_i \in \mathcal{W}_0$ and $\bm{R}^\tp_i  \hat{{\bmrho}}_{\bstrap}$ is sufficiently small, where the latter holds with high probability. 
Our proposed bootstrap confidence bound is asymptotically valid for any positive $\varepsilon$. 
We summarize the results in the following theorem.

\begin{theorem}\label{thm:bootstrap_over_identify}
    Assume that Assumption \ref{asmp:valid_bootstrap_z_est} holds,
    $H(\cdot)$ is differentiable at ${\bmtheta}_0$, 
    and $\dot{H}({\bmtheta}_0) \bm{\Sigma}_{{\bmtheta}{\bmtheta}} \dot{H}({\bmtheta}_0)^\top$ $>0$, where $\dot{H}(\cdot)$ denotes the derivative of $H(\cdot)$ and $\bm{\Sigma}_{{\bmtheta}{\bmtheta}}$ is the submatrix of $\bm{\Sigma}$ in \eqref{eq:norm_z_est} corresponding to the parameter ${\bmtheta}$. 
    For any $\varepsilon > 0$, 
    $Q_{1-\alpha} ( \tilde{\theta}_{H, \bstrap} )$ is an asymptotic $1-\alpha$ upper confidence bound for $H({\bmtheta}_0)$, in the sense that 
    $\limsup_{n\rightarrow \infty} \Pr\{ H({\bmtheta}_0) > Q_{1-\alpha} ( \tilde{\theta}_{H, \bstrap} ) \} \le \alpha $.
\end{theorem}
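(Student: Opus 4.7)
The plan is to sandwich $\tilde{\theta}_{H,\bstrap}$ from below by $H(\hat{\bmtheta}_\bstrap)$ on a high-probability event, and then invoke standard bootstrap validity (via the Delta method) for $H(\hat{\bmtheta}_\bstrap)$. First, the point $(\hat{\bmtheta}_\bstrap, \bmW_1^\tp, \ldots, \bmW_n^\tp)$ is feasible in the program \eqref{eq:bootstrap_R_rho} defining $\check{\theta}_{H,\bstrap}$, because $\bmW_i^\tp \in \mathcal{W}_0$ and, by construction, $(\hat{\bmtheta}_\bstrap, \hat{\bmrho}_\bstrap)$ solves \eqref{eq:gen_Z_aug_para_emp_boot}; thus $\check{\theta}_{H,\bstrap} \geq H(\hat{\bmtheta}_\bstrap)$. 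Second, on the event $A_n = \{\max_{1 \leq i \leq n} \|\bm{R}^\tp_i \hat{\bmrho}_\bstrap\| \leq \varepsilon\}$, any feasible point $(\bmtheta, \bmW_1, \ldots, \bmW_n)$ of \eqref{eq:bootstrap_R_rho} yields a feasible point of \eqref{eq:bootstrap_R_epsilon} with identical objective via the substitution $\bmW_i' = \bmW_i + \bm{R}^\tp_i \hat{\bmrho}_\bstrap$, since $\bmW_i' \in \mathcal{W}_0^\varepsilon$ and $\sum_i k_{\bstrap i} p_\bmtheta(\bmO_i, \bmW_i') = \bm{0}$. Consequently $\tilde{\theta}_{H,\bstrap} \geq \check{\theta}_{H,\bstrap} \geq H(\hat{\bmtheta}_\bstrap)$ on $A_n$.

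The next step is to show that $A_n$ has bootstrap probability close to one with high unconditional probability. Using the moment bound $\E\|\bm{R}^\tp\|^{2+\nu} < \infty$ with Markov's inequality and a union bound yields $\max_i \|\bm{R}^\tp_i\| = O_{\Pr}(n^{1/(2+\nu)})$. Assumption \ref{asmp:valid_bootstrap_z_est} together with $\bmrho_0 = \bm{0}$ gives $\|\hat{\bmrho}\| = O_{\Pr}(n^{-1/2})$, and the conditional bootstrap law of $\sqrt{n}(\hat{\bmrho}_\bstrap - \hat{\bmrho})$ is asymptotically Gaussian in probability, so $\|\hat{\bmrho}_\bstrap\| = O_{\Pr}(n^{-1/2})$ in the combined sense. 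Hence $\max_i \|\bm{R}^\tp_i \hat{\bmrho}_\bstrap\| = O_{\Pr}(n^{1/(2+\nu) - 1/2}) = o_{\Pr}(1)$, and for any $\eta > 0$ and $n$ sufficiently large there is a data event $G_n$ with $\Pr(G_n) \geq 1 - \eta$ on which the conditional bootstrap probability of $A_n^c$ is at most $\eta$.

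I would then convert the event-wise domination into a quantile inequality. For every $t$, the conditional CDFs satisfy $\Pr(\tilde{\theta}_{H,\bstrap} \leq t \mid \text{data}) \leq \Pr(H(\hat{\bmtheta}_\bstrap) \leq t \mid \text{data}) + \Pr(A_n^c \mid \text{data})$, which by a standard monotonicity argument gives $Q_{1-\alpha}(\tilde{\theta}_{H,\bstrap}) \geq Q_{1-\alpha - \Pr(A_n^c \mid \text{data})}\{H(\hat{\bmtheta}_\bstrap)\}$, and on $G_n$ this is at least $Q_{1-\alpha - \eta}\{H(\hat{\bmtheta}_\bstrap)\}$. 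Under Assumption \ref{asmp:valid_bootstrap_z_est} and the Delta method, $\sqrt{n}\{H(\hat{\bmtheta}) - H(\bmtheta_0)\} \converged \mathcal{N}(0, \sigma^2)$ and $\sqrt{n}\{H(\hat{\bmtheta}_\bstrap) - H(\hat{\bmtheta})\} \overset{\Pr}{\converged} \mathcal{N}(0, \sigma^2)$ with $\sigma^2 = \dot{H}(\bmtheta_0) \bm{\Sigma}_{\bmtheta\bmtheta} \dot{H}(\bmtheta_0)^\top > 0$, so $\Pr\{H(\bmtheta_0) > Q_{1-\alpha - \eta}\{H(\hat{\bmtheta}_\bstrap)\}\} \to \alpha + \eta$. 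Assembling the pieces gives $\limsup_n \Pr\{H(\bmtheta_0) > Q_{1-\alpha}(\tilde{\theta}_{H,\bstrap})\} \leq \eta + (\alpha + \eta) = \alpha + 2\eta$, and letting $\eta \downarrow 0$ yields the claim.

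The main obstacle, I expect, is the uniform control of $\max_i \|\bm{R}^\tp_i \hat{\bmrho}_\bstrap\|$. The $O_{\Pr}(n^{-1/2})$ rate of $\hat{\bmrho}_\bstrap$ alone does not suffice, since $\max_i \|\bm{R}^\tp_i\|$ over $n$ i.i.d.\ observations generally diverges; the $2+\nu$ moment assumption is precisely what ensures the polynomial rate $n^{1/(2+\nu)} = o(\sqrt{n})$. A secondary subtlety lies in translating the event-wise inequality into the quantile inequality, where the slack $\eta$ in the quantile level absorbs the residual bootstrap probability of the exceptional set $A_n^c$; positivity of $\sigma^2$ is then needed to ensure the quantile map is continuous at $1-\alpha$ so that the $\eta$-perturbation costs only $\eta$ in coverage.
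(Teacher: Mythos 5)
Your proof is correct and follows essentially the same route as the paper's: the feasibility sandwich $\tilde{\theta}_{H, \bstrap}\ge\check{\theta}_{H, \bstrap}\ge H(\hat{{\bmtheta}}_{\bstrap})$ whenever $\max_{1\le i\le n}\|\bm{R}_i^\tp\hat{{\bmrho}}_{\bstrap}\|\le\varepsilon$, the union-bound/Markov control via the $(2+\nu)$-moment showing this event has conditional bootstrap probability $1-o_{\Pr}(1)$, and absorption of the residual probability into an $\eta$-shift of the quantile level before invoking standard percentile-bootstrap validity (where $\dot{H}({\bmtheta}_0)\bm{\Sigma}_{{\bmtheta}{\bmtheta}}\dot{H}({\bmtheta}_0)^\top>0$ is used) and letting $\eta\downarrow 0$. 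The paper merely packages the same steps differently --- the quantile comparison as a counting argument over the $B=n^n$ bootstrap draws (Lemma A1, with the conditional probability denoted $\Omega$) and the rate $\|\hat{{\bmrho}}_{\bstrap}\|=O_{\Pr}(n^{-1/2})$ as a separate lemma justified via dominated convergence --- which are presentational rather than substantive differences from your CDF-domination and ``combined-sense'' stochastic-order arguments.
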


Below we give two remarks about Theorem \ref{thm:bootstrap_over_identify}. 
First, 
we can replace $\varepsilon$ in \eqref{eq:bootstrap_R_epsilon} by $\varepsilon_n$ that can vanish with the sample size $n$, with rate depending on the order of $\max_{1\le i\le n} \|\bm{R}_i^\tp\|$;  
see the supplementary material for details.
For example, when $\bm{R}^\tp$ is bounded, we can choose $\varepsilon_n$ to be any sequence such that $\varepsilon_n \gg n^{-1/2}$.  
Second, the bootstrapped $\tilde{\theta}_{H, \bstrap}$ in \eqref{eq:bootstrap_R_epsilon} has almost the same form as the intuitive estimator in \eqref{eq:z_est_point_upper_bound} for an upper bound of $H({\bmtheta}_0)$, except that $\tilde{\theta}_{H, \bstrap}$ uses the bootstrapped samples and involves an $\varepsilon$-neighborhood of $\mathcal{W}_0$. 
Therefore, we are essentially using the bootstrapped distribution of the intuitive estimator in \eqref{eq:z_est_point_upper_bound} to construct confidence bounds for $H({\bmtheta}_0)$. 
Different from the standard bootstrap procedure, 
we slightly relax the constraints with an $\varepsilon>0$, to accommodate the over-identified estimating equations. 
In the next subsection, we extend our bootstrap procedure to accommodate more general forms of constraints on the unobserved variables. 

\begin{remark}
    The augmentation of the feasible set $\mathcal{W}_0$ is in some sense necessary. For example, in the case where $\mathcal{W}_0$ contains only one element, $\tilde{\theta}_{H, \bstrap}$ in \eqref{eq:bootstrap_R_epsilon} may be ill-defined when $\varepsilon=0$, as the feasible region for the optimization in \eqref{eq:bootstrap_R_epsilon} could be an empty set. 
    This is because, when $\mathcal{W}_0$ is a singleton and $\varepsilon = 0$, the only variable being optimized is ${\bmtheta}$, whose dimension is smaller than the number of equality constraints imposed by the estimating equations.
\end{remark}

\subsection{Extension to more general forms of constraints on the unobserved variables}\label{sec:pred_set_W}

We now consider a general form of constraints on the unobserved variables. 
Specifically, we will work with a general prediction set for the unobserved variables within the samples, which has a guaranteed coverage probability at some prespecified level asymptotically.
The use of a prediction set rather than $\mathcal{W}_0$ as in the previous subsections can be preferable when the unobserved variables can take any value in $\mathbb{R}^{d_1}$
(i.e., $\mathcal{W}_0 = \mathbb{R}^{d_1}$) 
but is likely to be in some smaller set. 
In addition, the prediction set can incorporate ``dependence'' among these unobserved variables within the samples. For example, under our robust marginal sensitivity model in Assumption \ref{asmp:robust_marg_sen}, the number of units in the treated or control group with unmeasured confounding strength greater than a certain $\Lambda$ is likely to be bounded. 
The construction of such prediction sets is context-specific. 
In the following, we work with a general prediction set $\mathcal{W}_n \subset (\mathbb{R}^{d_1})^{\otimes n}$ for the $n$ unobserved variables within the samples, 
where $\Pr\{ (\bmW_1^\tp, \ldots, \bmW_n^\tp) \notin \mathcal{W}_n \}$ is asymptotically bounded from the above by some prespecified level.

Let $\mathcal{W}_n^\varepsilon$ be an $\varepsilon$-neighborhood of the set $\mathcal{W}_n$ defined as:
\begin{align*}
    \mathcal{W}_n^\varepsilon \equiv 
    \{
    (\bmW_1, \ldots, \bmW_n) \in (\mathbb{R}^{d_1})^{\otimes n}: 
    \max_{1\le i \le n} \|\bmW_i - \bmW_i'\| \le \varepsilon \text{ for some } (\bmW_1', \ldots, \bmW_n') \in \mathcal{W}_n 
    \}. 
\end{align*}
Similar to \eqref{eq:bootstrap_R_epsilon}, we consider the following optimization based on the bootstrapped samples: 
\begin{align}\label{eq:bootstrap_epsilon_pred_set}
    \hat{\theta}_{H, \bstrap} = \ & \sup H({\bmtheta}) 
    \\
    \text{subject to} \quad & 
    \sum_{i=1}^n k_{\bstrap i} p_{\bmtheta}(\bmO_{i}, \bmW_i) =\bm{0},
    \nonumber
    \\
    & (\bmW_1, \bmW_2, \ldots, \bmW_n) \in \mathcal{W}_n^\varepsilon.
    \nonumber
\end{align}
It is worth emphasizing that the first constraint in \eqref{eq:bootstrap_epsilon_pred_set} is based on estimating equations evaluated at the bootstrapped samples, whereas the second constraint is based on the original samples (rather than their bootstrapped versions). 
In other words, when we perform the bootstrap, we use only bootstrapped samples for the first constraint from the estimating equations, and maintain the original samples for the second constraint, which is often from our imposed assumptions on the strength of unmeasured confounding in the context of sensitivity analyses.
This also explains why we use the $k_{\bstrap i}$s to represent the bootstrapped samples, which highlights such differences in these two constraints and also facilitate the implementation of this optimization. 
Such a distinction between the two constraints in \eqref{eq:bootstrap_epsilon_pred_set} is particularly important for our robust marginal sensitivity model, 
as opposed to the usual marginal sensitivity model; see Section \ref{sec:ci_for_overlap} for details.

\begin{theorem}\label{thm:bootstrap_over_identify_pred_set}
    Under the same setting as in Theorem \ref{thm:bootstrap_over_identify},  
    for any $\varepsilon > 0$, 
    \begin{align*}
        \limsup_{n\rightarrow \infty} \Pr\{ H({\bmtheta}_0) > Q_{1-\alpha} ( \hat{\theta}_{H, \bstrap} ), \  (\bmW_1^\tp, \bmW_2^\tp, \ldots, \bmW_n^\tp) \in \mathcal{W}_n \} \le \alpha, 
    \end{align*}
    where $\hat{\theta}_{H, \bstrap}$ is defined in \eqref{eq:bootstrap_epsilon_pred_set}.
    Consequently, if further $\mathcal{W}_n$ is an asymptotic $1-\zeta$ prediction set for $(\bmW_1^\tp, \bmW_2^\tp, \ldots, \bmW_n^\tp)$, then $Q_{1-\alpha} ( \hat{\theta}_{H, \bstrap} )$ is an asymptotic $1-\alpha-\zeta$ upper confidence bound for $H({\bmtheta}_0)$, i.e., 
    \begin{align*}
        \limsup_{n\rightarrow \infty} \Pr\{ H({\bmtheta}_0) > Q_{1-\alpha} ( \hat{\theta}_{H, \bstrap} ) \}
        \le 
        \alpha + 
        \limsup_{n\rightarrow \infty} \Pr\{ (\bmW_1^\tp, \bmW_2^\tp, \ldots, \bmW_n^\tp) \notin \mathcal{W}_n \}
        \le \alpha + \zeta. 
    \end{align*}
\end{theorem}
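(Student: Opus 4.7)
The plan is to parallel the argument sketched for Theorem \ref{thm:bootstrap_over_identify}, replacing the unit-wise membership $\bmW_i^\tp\in\mathcal{W}_0$ by the joint membership $(\bmW_1^\tp,\dots,\bmW_n^\tp)\in\mathcal{W}_n$, and to prove the first displayed inequality; the second then follows by a union bound against the prediction-set miscoverage. Throughout, I fix $\varepsilon>0$ and introduce the events $E_n=\{(\bmW_1^\tp,\dots,\bmW_n^\tp)\in\mathcal{W}_n\}$ (a data event) and $F_n=\{\max_{1\le i\le n}\|\bm{R}_i^\tp\hat{{\bmrho}}_\bstrap\|\le\varepsilon\}$ (a joint data/bootstrap event).

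My first step establishes a feasibility inequality on $E_n\cap F_n$. Setting $\tilde{\bmW}_i=\bmW_i^\tp+\bm{R}_i^\tp\hat{{\bmrho}}_\bstrap$, the defining bootstrap equation \eqref{eq:gen_Z_aug_para_emp_boot} makes $(\hat{{\bmtheta}}_\bstrap,\tilde{\bmW}_1,\dots,\tilde{\bmW}_n)$ satisfy the first constraint in \eqref{eq:bootstrap_epsilon_pred_set}, while $E_n\cap F_n$ forces $(\tilde{\bmW}_1,\dots,\tilde{\bmW}_n)\in\mathcal{W}_n^\varepsilon$, so the tuple is feasible and $H(\hat{{\bmtheta}}_\bstrap)\le\hat{\theta}_{H,\bstrap}$. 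Next, I would verify that $F_n$ carries asymptotically full bootstrap mass. Assumption \ref{asmp:valid_bootstrap_z_est} and ${\bmrho}_0=\bm 0$ give $\|\hat{{\bmrho}}_\bstrap\|=O_\Pr(n^{-1/2})$, and the finite $(2+\nu)$-th moment of $\bm{R}^\tp$ combined with a Markov-union bound yields $\max_i\|\bm{R}_i^\tp\|=O_\Pr(n^{1/(2+\nu)})$. Multiplying, $\max_i\|\bm{R}_i^\tp\hat{{\bmrho}}_\bstrap\|=o_\Pr(1)$, so $\Pr(F_n^c)\to 0$, and by Fubini the bootstrap-conditional probability satisfies $P^\bstrap(F_n^c)=o_\Pr(1)$ in the data-probability.

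Combining the feasibility inequality with the definitions of the bootstrap distribution functions yields $P^\bstrap(\hat{\theta}_{H,\bstrap}\le x)\le P^\bstrap(H(\hat{{\bmtheta}}_\bstrap)\le x)+P^\bstrap(F_n^c)$ on $E_n$ for every $x$, hence the quantile comparison $Q_{1-\alpha}(\hat{\theta}_{H,\bstrap})\ge Q_{1-\alpha-P^\bstrap(F_n^c)}\{H(\hat{{\bmtheta}}_\bstrap)\}$ on $E_n$. The Delta method applied to Assumption \ref{asmp:valid_bootstrap_z_est} supplies the asymptotic Gaussianity of $\sqrt n\{H(\hat{{\bmtheta}})-H({\bmtheta}_0)\}$ and its bootstrap counterpart with variance $\dot H({\bmtheta}_0)\bm\Sigma_{{\bmtheta}{\bmtheta}}\dot H({\bmtheta}_0)^\top>0$, so the right-hand quantile differs from the usual bootstrap $(1-\alpha)$-quantile by $o_\Pr(n^{-1/2})$ by continuity of $\alpha\mapsto z_{1-\alpha}$. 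Thus
\[
\limsup_{n\to\infty}\Pr\{H({\bmtheta}_0)>Q_{1-\alpha}(\hat{\theta}_{H,\bstrap}),E_n\}\le\limsup_{n\to\infty}\Pr\{H({\bmtheta}_0)>Q_{1-\alpha}(H(\hat{{\bmtheta}}_\bstrap))\}=\alpha,
\]
establishing the first claim. The second follows from the union bound $\Pr\{H({\bmtheta}_0)>Q_{1-\alpha}(\hat{\theta}_{H,\bstrap})\}\le \Pr\{H({\bmtheta}_0)>Q_{1-\alpha}(\hat{\theta}_{H,\bstrap}),E_n\}+\Pr(E_n^c)$ together with the defining coverage of $\mathcal{W}_n$.

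The main obstacle I anticipate is the quantile-shift step: formalizing that the $o_\Pr(1)$ random perturbation $P^\bstrap(F_n^c)$ of the quantile level perturbs the final coverage statement only by $o(1)$. This reduces to uniform convergence of the bootstrap distribution of $\sqrt n\{H(\hat{{\bmtheta}}_\bstrap)-H(\hat{{\bmtheta}})\}$ to a continuous Gaussian and continuity of its quantile function at the level $1-\alpha$, both guaranteed by Assumption \ref{asmp:valid_bootstrap_z_est} and the nondegeneracy $\dot H({\bmtheta}_0)\bm\Sigma_{{\bmtheta}{\bmtheta}}\dot H({\bmtheta}_0)^\top>0$, but requires a careful Polya-type $\varepsilon$-$\delta$ argument to be fully rigorous.
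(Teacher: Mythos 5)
Your proposal is correct and follows essentially the same route as the paper's proof: the same feasibility argument showing $H(\hat{\bmtheta}_\bstrap)\le\hat{\theta}_{H,\bstrap}$ once the shifts $\bm{R}_i^\tp\hat{\bmrho}_\bstrap$ are uniformly below $\varepsilon$ (the paper routes this through an intermediate optimization $\breve{\theta}_{H,\bstrap}$ over $\mathcal{W}_n$ with shifted arguments, but the composite inequality is identical), the same Markov/moment bound giving $\max_{1\le i\le n}\|\bm{R}_i^\tp\hat{\bmrho}_\bstrap\|=o_{\Pr}(1)$, the same quantile-level perturbation controlled by the bootstrap-conditional bad-event probability (the paper's $\Omega$ is exactly your $P^{\bstrap}(F_n^c)$), and the same union bound against the miscoverage of $\mathcal{W}_n$ for the second display. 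The only cosmetic difference is the final quantile-shift step: where you propose a Polya-type continuity argument for the Gaussian quantile at the randomly perturbed level, the paper sidesteps this by monotonicity of quantiles plus an $\eta$-slack in the nominal level ($\alpha'=\alpha+\eta$, then $\eta\to 0$), so it only needs validity of the standard bootstrap at fixed levels --- a cleaner way to discharge the very obstacle you flag.
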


\section{Confidence intervals for the overlap-weighted treatment effect}\label{sec:ci_for_overlap}

In this section, we extend our estimated bounds in Section \ref{sec:point_sen_ana} to confidence bounds for the range of possible treatment effects under the robust marginal sensitivity model.
We consider parametric models, such as the logistic regression model, for fitting the propensity score. 
We first extend the robust marginal sensitivity model in Assumption \ref{asmp:robust_marg_sen} to accommodate model misspecification. 
We then use the augmented percentile bootstrap developed in Section \ref{sec:boot_over_iden} to construct confidence bounds for the overlap-weighted average treatment effect $\tau_\ow$. 

\subsection{A modified sensitivity analysis model allowing model misspecification}

Let $e(\bm{x};{\bm{\beta}})$ denote the model for the observable propensity score $\Pr(Z=1\mid \bmX=\bm{x})$, 
and $e_\m(\bm{x}) = e(\bm{x}; {\bm{\beta}}_0)$ denote the estimable propensity score under our model, where ${\bm{\beta}}_0 = \argmax_{\bm{\beta}} \E[ Z \log\{ e(\bmX;{\bm{\beta}}) \} + (1-Z) \log \{ 1 - e(\bmX;{\bm{\beta}}) \} ]$ is, under certain regularity conditions, the probability limit of the maximum likelihood estimator for ${\bm{\beta}}$ as $n\rightarrow \infty$.  
We can view $e_\m(\bm{x})$ as the best 
approximation of the observable propensity score $e_{\tp}(\bm{x})$ under our specification of the propensity score model. When the propensity score model is correctly specified,  $e_\m(\bm{x})$ becomes the same as $e_{\tp}(\bm{x})$. 
To allow model misspecification for the observable propensity score, we introduce the following modified robust marginal sensitivity model, which is stated in terms of the probability limit of our propensity score estimator. 

\begin{assumption}[Robust marginal sensitivity model under propensity score specification]\label{asmp:robust_sen_spec}
Recall the true propensity score $e_{\tp}(\bm{x}, \bm{u})$, and  define $e_{\m}(\bm{x})$ as an estimable propensity score based on certain model assumptions. 
For given parameters $\Lambda_1, \Lambda_0 \geq 1$ and $0\leq\delta_1, \delta_0\leq 1$, 
\begin{align}\label{eq:para_robust_sen}
	\Pr\left( \Lambda_z^{-1}\leq \dfrac{e_{\tp}(\bmX, \bmU)/\{1-e_{\tp}(\bmX, \bmU)\}}{e_\m(\bmX)/\{1-e_\m(\bmX)\}}\leq\Lambda_z \mid Z=z\right) \geq 1-\delta_z, \quad (z=0,1). 
\end{align}	
\end{assumption}

Obviously, when the propensity score model is correctly specified, in the sense that $e_\m(\bm{x}) = e_\tp(\bm{x})$, the sensitivity model in the above Assumption \ref{asmp:robust_sen_spec} is equivalent to that in Assumption \ref{asmp:robust_marg_sen}. 
In general, the sensitivity model in Assumption \ref{asmp:robust_sen_spec} involves not only the strength of unmeasured confounding, but also the degree of model misspecification for the observable propensity score $e_{\tp}(\cdot)$. 
In the following, we will focus on the sensitivity model in Assumption \ref{asmp:robust_sen_spec} instead of  Assumption \ref{asmp:robust_marg_sen}; this is the same as  
\citet{zhaosenIPW2019}.

\subsection{Formulation and regularity conditions}\label{sec:formu_z_est}

We now formulate the inference for the overlap-weighted average treatment effect $\tau_\ow$ using the framework in Section \ref{sec:boot_over_iden}. 
To facilitate the discussion, we first introduce some notation. 
Let the observed variables $\bmO = (\bmX, Z, Y)$ be the collection of observed covariates, treatment assignment and outcome, 
and the unobserved variable 
$W^\tp = \psi_\m(\bmX, \bmU)\in \mathbb{R}$ be the difference between the true propensity score $e_{\tp}(\bmX,\bmU)$ and the estimable propensity score $e_\m(\bmX)$ in the logit scale, 
i.e., $\psi_{\m}(\bm{x}, \bm{u})\equiv\logit\{e_{\tp}(\bm{x}, \bm{u})\}-\logit\{e_\m(\bm{x})\}$. 
Let $e_{{\bm{\beta}}}^{(W)}(\bmX) \equiv  \logit^{-1}[ W + \logit\{e(\bmX; {\bm{\beta}})\} ]$. Then by definition, $e_{{\bm{\beta}}_0}^{(W^\tp)}(\bmX) = e_{\tp}(\bmX, \bmU)$ is the true propensity score, recalling that ${\bm{\beta}}_0$ is the oracle parameter for the propensity score model. 
Define further $\kappa_{\treat 0} \equiv \E[Z\{1-e_{\tp}(\bmX, \bmU)\}]$ and $\kappa_{\control 0} \equiv \E\{(1-Z)e_{\tp}(\bmX, \bmU)\}$, which are the demonstrators in the expression of $\tau_\ow$ in \eqref{eq:tau_ow}. 

We consider then the following set of estimating equations for ${\bmtheta}_0 = (\tau_\ow, {\bm{\beta}}_0, \kappa_{\treat 0}, \kappa_{\control 0})$: 
\begin{align}\label{eq:est_equ_ow}
    \E [p_{{\bmtheta}}(\bmO, W^\tp)]
    \equiv 
    \E 
    \begin{bmatrix}
        Z \cdot \frac{\partial}{\partial {\bm{\beta}}}\log\{ e(\bmX;{\bm{\beta}}) \} + (1-Z) \cdot \frac{\partial}{\partial {\bm{\beta}}} \log \{ 1 - e(\bmX;{\bm{\beta}}) \}
        \\
        \{1-e_{\bm{\beta}}^{(W^\tp)}(\bmX)\}Z - \kappa_{\treat}\\
        e_{\bm{\beta}}^{(W^\tp)}(\bmX)(1 - Z) - \kappa_{\control}\\
        \{1-e_{\bm{\beta}}^{(W^\tp)}(\bmX)\}ZY/\kappa_{\treat} - 
        e_{\bm{\beta}}^{(W^\tp)}(\bmX)(1 - Z)Y/\kappa_{\control} - \tau\\
        \{1-e_{\bm{\beta}}^{(W^\tp)}(\bmX)\}Z g(\bmX)/\kappa_{\treat} - 
        e_{\bm{\beta}}^{(W^\tp)}(\bmX)(1 - Z) g(\bmX) /\kappa_{\control} 
    \end{bmatrix}
    = \bm{0},
\end{align}
where ${\bmtheta} = (\tau, {\bm{\beta}}, \kappa_{\treat}, \kappa_{\control})$, $p_{\bmtheta}(\cdot)$ is defined as above, and $g(\cdot)$ denotes some prespecified covariate transformation. Below we explain the validity of the estimating equations in \eqref{eq:est_equ_ow}, or equivalently why ${\bmtheta}_0 = (\tau_\ow, {\bm{\beta}}_0, \kappa_{\treat 0}, \kappa_{\control 0})$ is a solution for it. 
The first equation in \eqref{eq:est_equ_ow} corresponds to the score equation for the likelihood function of ${\bm{\beta}}$ from the propensity score model. 
The second and third equations correspond to the definition of $\kappa_{\treat 0}$ and $\kappa_{\control 0}$. 
The fourth equation corresponds to the definition of $\tau_\ow$ in \eqref{eq:tau_ow}. 
The last equation corresponds to the covariate balancing property of the true propensity score in \eqref{eq:cov_bal_property}. 

We now discuss the over-identification in \eqref{eq:est_equ_ow}. 
Let $\dim({\bm{\beta}})$ denote the dimension of the parameter for the propensity score model, 
and $\dim(g)$ denote the dimension of transformed covariates used for covariate balancing. 
The number of parameters in ${\bmtheta}$ is then $\dim({\bm{\beta}}) + 3$, and the number of estimating equations is $\dim({\bm{\beta}}) + 3 + \dim(g)$.  
Thus, the estimating equations in \eqref{eq:est_equ_ow} are over-identified, and the over-identification comes from the last equation in \eqref{eq:est_equ_ow} about the covariate balancing property.  
As discussed in Section \ref{eq:z_est_over_unmea}, when the true $W^\tp$ is unobserved, these additional estimating equations can be useful for restricting the unobserved $W^\tp$ and consequently sharpening our inference for the true treatment effect.

Following Section \ref{sec:boot_over_iden}, we will introduce an auxiliary parameter ${\bmrho}\in \mathbb{R}^{\dim(g)}$ with its true value being ${\bmrho}_0 = \bm{0}$, 
along with augmented data $\bm{R}^\tp \in \mathbb{R}^{1\times \dim(g)}$. 
We consider the estimating equation 
$\E [p_{{\bmtheta}}(\bmO, W^\tp + \bm{R}^\tp {\bmrho})] = \bm{0}$ for $({\bmtheta}_0, {\bmrho}_0)$, whose validity follows immediately from the validity of the estimating equations in \eqref{eq:est_equ_ow} and the fact that ${\bmrho}_0 = \bm{0}$. 
To apply the method developed in Section \ref{sec:boot_over_iden}, it suffices to verify that asymptotic normality and bootstrap validity as in Assumption \ref{asmp:valid_bootstrap_z_est} for the Z-estimation of $({\bmtheta}_0, {\bmrho}_0)$ based on the estimating equations $\E [p_{{\bmtheta}}(\bmO, W^\tp + \bm{R}^\tp {\bmrho})] = \bm{0}$, 
as well as the existence of a matrix $\bm{R}^\tp$ with a certain finite moment such that these asymptotic properties hold. 
Below we give a sufficient condition under the logistic regression model for the propensity score, 
where we choose $e(\bm{x}; {\bm{\beta}}) = \logit^{-1}\{ {\bm{\beta}}^\top s(\bm{x}) \}$ with $s(\bm{x})$ being some prespecified transformation of observed covariates. 
Similar logic applies to other parametric models for the propensity score  as well.

\begin{condition}\label{cond:logistic_Z_est}
    $({\bmtheta}, {\bmrho})\equiv(\tau, {\bm{\beta}}, \kappa_{\treat}, \kappa_{\control}, {\bmrho})$ is in a compact parameter space $\Theta\subset \mathbb{R}\times\mathbb{R}^{\dim({\bm{\beta}})}\times[\kappa_{\min}, \infty)\times[\kappa_{\min}, \infty)\times\mathbb{R}^{\dim({\bmrho})}$, where $\kappa_{\min}$ is a positive constant and the true parameter $({\bmtheta}_0, {\bmrho}_0)$ is in the interior of $\Theta$.
    Moreover, the joint distribution of $(Z, \bmX, \bmU, Y)$ satisfies:
    \begin{enumerate}
        \item[(i)] $\E(Y^4)<\infty$, $\E\{\|s(\bmX)\|^4\}<\infty$, and 
        $\E\{ \|g(\bmX)\|^{4} \}$ is finite;
        \item[(ii)] $\text{det}[\E\{s(\bmX)s(\bmX)^{\top}e^{{\bm{\beta}}_0^{\top}s(\bmX)}/(1+e^{{\bm{\beta}}_0^{\top}s(\bmX)})^2\}] >0$;
        \item[(iii)] $\Pr\{0<e_\tp(\bmX, \bmU)<1\}>0$, and $\Cov\{g(\bmX) \mid 0<e_\tp(\bmX, \bmU)<1\}$ is positive definite. 
    \end{enumerate}
\end{condition}

\begin{theorem}\label{thm:valid_bootstrap_R_g}
    Under a logistic regression model for the propensity score, if Condition \ref{cond:logistic_Z_est} holds,  
    then Assumption \ref{asmp:valid_bootstrap_z_est} holds with $\bm{R}^\tp = g^\top(\bmX)$. 
\end{theorem}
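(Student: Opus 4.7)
The plan is to apply the standard asymptotic theory for just-identified Z-estimation together with the nonparametric bootstrap consistency theorem (see, e.g., van der Vaart and Wellner, 1996, Theorem 3.6.13, or Kosorok, 2008, Theorems 6.18 and 10.16) to the augmented estimating equation $\E[p_{{\bmtheta}}(\bmO, W^\tp + \bm{R}^\tp {\bmrho})] = \bm{0}$ with $\bm{R}^\tp = g^\top(\bmX)$. The required moment bound $\E\|\bm{R}^\tp\|^{2+\nu} < \infty$ with $\nu = 2$ is immediate from $\E\|g(\bmX)\|^4 < \infty$ in Condition \ref{cond:logistic_Z_est}(i). What remains to verify is: (a) identifiability of $({\bmtheta}_0, {\bmrho}_0)$ as the unique zero of the augmented population equations inside $\Theta$; (b) a $P$-Donsker class property with square-integrable envelope for the estimating functions indexed by $\Theta$; and (c) invertibility of the Jacobian at $({\bmtheta}_0, {\bmrho}_0)$.

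For (b), because $e_{{\bm{\beta}}}^{(W)}(\bmX) = \logit^{-1}\{W + {\bm{\beta}}^\top s(\bmX)\}$ satisfies $e(1-e) \le 1/4$ and $\kappa_{\treat}, \kappa_{\control}$ are bounded below by $\kappa_{\min} > 0$ on $\Theta$, the function $p_{{\bmtheta}}(\bmO, W^\tp + g^\top(\bmX){\bmrho})$ is Lipschitz in $({\bmtheta},{\bmrho})$ with envelope of the form $C(1 + |Y| + \|s(\bmX)\| + \|g(\bmX)\|)(1 + \|s(\bmX)\| + \|g(\bmX)\|)$ for some constant $C$. Condition \ref{cond:logistic_Z_est}(i) supplies the fourth-moment bounds, so the finite-dimensional class indexed by the compact $\Theta$ is $P$-Donsker by standard bracketing arguments, and the analogous conditional Donsker property of the multinomial-weighted bootstrap process follows automatically.

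For (a) and the parts of (c) not involving the ${\bmrho}$ block, I would proceed hierarchically. Condition \ref{cond:logistic_Z_est}(ii) makes the logistic log-likelihood strictly concave at ${\bm{\beta}}_0$, so the ${\bm{\beta}}$-score equation pins down ${\bm{\beta}}_0$ uniquely and its sub-Jacobian, the negative Fisher information, is invertible. Given ${\bm{\beta}}_0$, the second, third and fourth equations are explicit definitions of $\kappa_{\treat}, \kappa_{\control}, \tau$, each with sub-Jacobian $-1$ on the diagonal; in particular $\kappa_{\treat 0}, \kappa_{\control 0} > 0$ since $\Pr\{0 < e_\tp(\bmX,\bmU) < 1\} > 0$ by Condition \ref{cond:logistic_Z_est}(iii). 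After eliminating these blocks via the implicit function theorem, both identifiability and non-singularity reduce to showing that the covariate-balance block is uniquely solvable in ${\bmrho}$ near ${\bmrho}_0 = \bm{0}$ with an invertible derivative.

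The hard part will be this final step. Direct differentiation at the truth yields the ${\bmrho}$-Jacobian of the balance block,
\begin{align*}
-\E\Big[e_\tp(\bmX,\bmU)\{1-e_\tp(\bmX,\bmU)\}\Big\{\frac{e_\tp(\bmX,\bmU)}{\kappa_{\treat 0}} + \frac{1-e_\tp(\bmX,\bmU)}{\kappa_{\control 0}}\Big\} g(\bmX) g(\bmX)^\top\Big],
\end{align*}
whose integrand is a strictly positive scalar times a rank-one PSD matrix on the event $\{0 < e_\tp(\bmX,\bmU) < 1\}$. Condition \ref{cond:logistic_Z_est}(iii) ensures this event has positive probability and that $\E[g(\bmX) g(\bmX)^\top \mid 0 < e_\tp(\bmX,\bmU) < 1]$ is positive definite, since a covariance matrix plus a mean outer product is positive definite whenever the covariance is. Consequently the displayed matrix is negative definite, hence invertible. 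Local uniqueness from the implicit function theorem, combined with the compactness of $\Theta$ and continuity of the estimating map, then yields global uniqueness of ${\bmrho}_0$ in $\Theta$. Plugging (a), (b) and (c) into the cited Z-estimator and nonparametric bootstrap theorems produces the asymptotic normality and bootstrap weak convergence in probability stated in Assumption \ref{asmp:valid_bootstrap_z_est}.
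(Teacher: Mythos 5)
Your overall architecture is exactly the paper's route: verify (a) identifiability, (b) a Glivenko--Cantelli/Donsker property with square-integrable envelope via Lipschitz estimates in $({\bmtheta},{\bmrho})$, and (c) Jacobian invertibility, then invoke Kosorok's Z-estimation bootstrap theorem; your moment bound ($\nu=2$ from $\E\|g(\bmX)\|^4<\infty$) and envelope match the paper's Lemmas. The genuine gap is in step (c), which you correctly flag as the hard part but then get wrong. The matrix you display is only the \emph{partial} derivative of the balance block with respect to ${\bmrho}$, holding $\kappa_{\treat},\kappa_{\control}$ fixed (indeed, since $\kappa_{\treat 0}=\kappa_{\control 0}=\E\{e_\tp(1-e_\tp)\}$, your expression simplifies to $-\kappa_{\treat 0}^{-1}\E\{g g^\top e_\tp(1-e_\tp)\}$ with $e_\tp = e_\tp(\bmX,\bmU)$, $g = g(\bmX)$). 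But once you eliminate $\kappa_{\treat}({\bmrho})$ and $\kappa_{\control}({\bmrho})$ via the implicit function theorem, the reduced balance map must be differentiated \emph{totally}: the cross-derivatives $\partial p_5/\partial \kappa_z$ and $\partial_{\bmrho}\,\E(p_2) = -\E\{e_\tp^2(1-e_\tp)g^\top\}\neq \bm{0}$ contribute a Schur-complement correction, and the correct Jacobian is the centered matrix
\begin{align*}
-\kappa_{\treat 0}^{-2}\Bigl[\E\{gg^\top e_\tp(1-e_\tp)\}\,\E\{e_\tp(1-e_\tp)\}-\E\{g\, e_\tp(1-e_\tp)\}\,\E\{g^\top e_\tp(1-e_\tp)\}\Bigr],
\end{align*}
which is what the paper obtains both in its local-uniqueness lemma (differentiating the H\'ajek-ratio form of the balance equation) and in its block-determinant computation for the full Jacobian.

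The distinction is not cosmetic. If $g(\bmX)\equiv c\neq 0$ is constant, the balance equation holds identically in ${\bmrho}$ (both H\'ajek averages equal $c$), so ${\bmrho}$ is unidentified and the true Jacobian above is exactly zero; yet your uncentered matrix is $-\kappa_{\treat 0}^{-1}\E\{e_\tp(1-e_\tp)\}\,cc^\top\neq \bm{0}$ (nonsingular when $\dim(g)=1$), so your criterion would wrongly certify invertibility. This is also why Condition \ref{cond:logistic_Z_est}(iii) is stated in terms of $\Cov\{g(\bmX)\mid 0<e_\tp(\bmX,\bmU)<1\}$ rather than the conditional second moment: positive semidefiniteness of the centered matrix follows from Cauchy--Schwarz applied to $A=\bm{u}^\top g\sqrt{e_\tp(1-e_\tp)}$ and $B=\sqrt{e_\tp(1-e_\tp)}$, with equality forcing $\bm{u}^\top g$ almost surely constant on $\{0<e_\tp<1\}$, which the covariance condition rules out; your fact that ``covariance plus mean outer product is positive definite'' is true but applied to the wrong matrix. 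A secondary issue: your closing claim that local uniqueness plus compactness and continuity of the estimating map yields \emph{global} uniqueness of ${\bmrho}_0$ in $\Theta$ is a non sequitur --- the paper itself only establishes uniqueness for ${\bmrho}$ in a sufficiently small neighborhood of $\bm{0}$, effectively restricting the ${\bmrho}$-component of the compact parameter space to that neighborhood, and you would need to do the same.
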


From Condition \ref{cond:logistic_Z_est} and Theorem \ref{thm:valid_bootstrap_R_g}, Assumption \ref{asmp:valid_bootstrap_z_est} and consequently our augmented percentile bootstrap inference require rather weak regularity conditions. In particular, we give an explicit construction of $\bm{R}^\tp$ here, which can simply be the transformed covariates in the covariate balancing constraint. 

\subsection{Bootstrap inference for the overlap-weighted average treatment effect}

With the formulation in Section \ref{sec:formu_z_est}, we now apply the method developed in Section \ref{sec:boot_over_iden} to construct bootstrap confidence bounds for the overlap-weighted average treatment effect $\tau_\ow$. 
Without loss of generality, in the remaining of this section, we assume that Assumption \ref{asmp:robust_sen_spec} holds for some given $\Lambda_1, \Lambda_0 \ge 1$ and $\delta_1, \delta_0 \in [0,1]$, 
and we impose the logistic regression model for the observable propensity score as described in Section \ref{sec:formu_z_est}.

We first construct prediction sets for $\bmW_{1:n}^\tp \equiv (W_1^\tp, \ldots, W_n^\tp) = (\psi_\m(\bmX_1, \bmU_1), \ldots, \psi_\m(\bmX_n, \bmU_n))$. 
From Assumption \ref{asmp:robust_sen_spec} and by the definition of $\psi_\m(\cdot)$, we know that 
$\Pr\{ |W^\tp| \le \log(\Lambda_z) \mid Z=z\} 
\ge 1 - \delta_z$ for $z=0,1$. 
This motivates us to consider the following form of prediction sets for $W_i^\tp$s.  
For any $\Lambda_1', \Lambda_0' \ge 1$ and $\delta_1', \delta_0' \in [0,1]$, define $\vec{\bm{\Lambda}}' = (\Lambda_1', \Lambda_0')$, 
$\vec{\bm{\delta}}' = (\delta_1', \delta_0')$, and 
\begin{equation}\label{eq:Lambda_delta_w_pred_set}
    \mathcal{W}_n(\vec{\bm{\Lambda}}', \vec{\bm{\delta}}')  \equiv \big\{ 
    (W_1, \ldots, W_n) \in \mathbb{R}^{n}: 
    \frac{1}{n_z} \sum_{i: Z_i=z} \I \{ |W_i| \leq \log(\Lambda_z') \} \geq 1-\delta_z', z=0,1
    \big\}. 
\end{equation}
Under Assumption \ref{asmp:robust_sen_spec} and 
setting  $n_z\delta_z'$ to be the $\sqrt{1-\zeta}$th quantile of $\Bin(n_z,\delta_z)$ for $z=0,1$, 
we can verify that $\mathcal{W}_n(\vec{\bm{\Lambda}}, \vec{\bm{\delta}}')$ is a prediction set for $\bmW_{1:n}^\tp$ with coverage probability at least $1-\zeta$, where $\vec{\bm{\Lambda}} \equiv (\Lambda_1, \Lambda_0)$. 
In addition, for any $\varepsilon > 0$, the $\varepsilon$-neighborhood of $\mathcal{W}_n(\vec{\bm{\Lambda}}, \vec{\bm{\delta}}')$ is equivalently $\mathcal{W}_n(\vec{\bm{\Lambda}}', \vec{\bm{\delta}}')$ with $\log(\Lambda_z') \equiv \log(\Lambda_z) + \varepsilon$.

We then consider the bootstrap confidence bounds from \eqref{eq:bootstrap_epsilon_pred_set}. 
We consider the following optimization for bootstrapped samples:
for any $\Lambda_1', \Lambda_0' \ge 1$ and $\delta_1', \delta_0' \in [0,1]$, 
\begin{align}\label{eq:tau_boot_epsilon_pred_set}
    \hat{\tau}_{\bstrap}^{\max} (\vec{\bm{\Lambda}}', \vec{\bm{\delta}}') = \ & \sup \tau 
    \quad 
    \textsc{or}
    \quad 
    \hat{\tau}_{\bstrap}^{\min} (\vec{\bm{\Lambda}}', \vec{\bm{\delta}}') = \inf \tau
    \\
    \text{subject to} \quad & 
    \sum_{i=1}^n k_{\bstrap i} p_{\bmtheta}(\bmO_{i}, W_i) =\bm{0}, 
    \nonumber
    \\
    & (W_1, W_2, \ldots, W_n) \in \mathcal{W}_n(\vec{\bm{\Lambda}}', \vec{\bm{\delta}}'),
    \nonumber
\end{align}
where $p_{{\bmtheta}}(\cdot)$ takes the form in \eqref{eq:est_equ_ow} with  
${\bmtheta} = (\tau, {\bm{\beta}}, \kappa_{\treat}, \kappa_{\control})$,  $\mathcal{W}_n(\vec{\bm{\Lambda}}', \vec{\bm{\delta}}')$ is defined as in \eqref{eq:Lambda_delta_w_pred_set}, 
and 
$(k_{\bstrap 1}, k_{\bstrap 2}, \ldots, k_{\bstrap n}) \sim \text{Multinomial} (n; n^{-1}, n^{-1}, \ldots, n^{-1})$. 
Define further 
\begin{align}\label{eq:tau_boot_quantile_gen}
    \LL_{\alpha}(\vec{\bm{\Lambda}}', \vec{\bm{\delta}}') \equiv Q_{\alpha} \big\{ \hat{\tau}_{\bstrap}^{\min} (\vec{\bm{\Lambda}}', \vec{\bm{\delta}}')  \big\}
    \quad 
    \text{and}
    \quad 
    \UU_{\alpha} (\vec{\bm{\Lambda}}', \vec{\bm{\delta}}') \equiv Q_{1-\alpha} \big\{ \hat{\tau}_{\bstrap}^{\max} (\vec{\bm{\Lambda}}', \vec{\bm{\delta}}') \big\}
\end{align}
to denote, respectively, the $\alpha$th lower and upper quantiles of the bootstrap distributions of the minimum and maximum objective values in \eqref{eq:tau_boot_epsilon_pred_set}.

The theorem and corollary below then follow immediately from Theorem \ref{thm:bootstrap_over_identify_pred_set}, which provide confidence bounds for the overlap-weighted average treatment effect under the robust sensitivity model in Assumption \ref{asmp:robust_sen_spec}.  
Recall $\bmW_{1:n}^\tp \equiv (W_1^\tp, W_2^\tp, \ldots, W_n^\tp)$ with $W_i^\tp$ denoting the difference between the true propensity score and estimable propensity score in the logit scale.

\begin{theorem}\label{thm:ci_tau_pred}
    Under Assumption \ref{asmp:robust_sen_spec}  for the robust marginal sensitivity model with some given  $(\Lambda_1, \Lambda_0, \delta_1, \delta_0)$ and Condition \ref{cond:logistic_Z_est}, 
    for any $\alpha \in (0,1)$, $\Lambda_1'>\Lambda_1$ and $\Lambda_0'>\Lambda_0$, and any $\delta_{n, 1}, \delta_{n,0} \in [0,1]$ that can vary with the sizes of treated and control groups, 
    \begin{align*}
        \limsup_{n\rightarrow \infty} \Pr\{ \tau_\ow > \UU_{\alpha}(\vec{\bm{\Lambda}}', \vec{\bm{\delta}}_{n}), \  \bmW_{1:n}^\tp \in \mathcal{W}_n(\vec{\bm{\Lambda}}, \vec{\bm{\delta}}_{n}) \} & \le \alpha, \\
        \limsup_{n\rightarrow \infty} \Pr\{ \tau_\ow < \LL_{\alpha}(\vec{\bm{\Lambda}}', \vec{\bm{\delta}}_{n}), \  \bmW_{1:n}^\tp \in \mathcal{W}_n(\vec{\bm{\Lambda}}, \vec{\bm{\delta}}_{n}) \} & \le \alpha,
    \end{align*}
    and consequently 
    $
        \limsup_{n\rightarrow \infty} \Pr\{ \tau_\ow \notin [\LL_{\alpha}(\vec{\bm{\Lambda}}', \vec{\bm{\delta}}_{n}), \UU_{\alpha}(\vec{\bm{\Lambda}}', \vec{\bm{\delta}}_{n})], \  W_{1:n}^\tp \in \mathcal{W}_n(\vec{\bm{\Lambda}}, \vec{\bm{\delta}}_{n}) \}  \le 2 \alpha,
    $
    where
    $\vec{\bm{\Lambda}}' = (\Lambda_{1}', \Lambda_{0}')$, 
    $\vec{\bm{\delta}}_{n} = (\delta_{n, 1}, \delta_{n, 0})$, 
    $\vec{\bm{\Lambda}} = (\Lambda_{1}, \Lambda_{0})$, 
    $\LL_{\alpha}(\vec{\bm{\Lambda}}', \vec{\bm{\delta}}_{n})$ and $\UU_{\alpha}(\vec{\bm{\Lambda}}', \vec{\bm{\delta}}_{n})$ are defined as in \eqref{eq:tau_boot_quantile_gen}, and $\mathcal{W}_n(\vec{\bm{\Lambda}}, \vec{\bm{\delta}}_{n})$ is defined as in \eqref{eq:Lambda_delta_w_pred_set}. 
\end{theorem}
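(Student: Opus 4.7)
I would realize Theorem \ref{thm:ci_tau_pred} as a direct specialization of the general augmented percentile bootstrap result in Theorem \ref{thm:bootstrap_over_identify_pred_set}. Specifically, I would take the observed variables to be $\bmO_i = (\bmX_i, Z_i, Y_i)$, the unobserved variable to be $W_i^\tp = \psi_\m(\bmX_i, \bmU_i) \in \mathbb{R}$, the parameter vector to be ${\bmtheta}_0 = (\tau_\ow, {\bm{\beta}}_0, \kappa_{\treat 0}, \kappa_{\control 0})$, the estimating equations $p_{\bmtheta}(\bmO, W)$ to be those displayed in \eqref{eq:est_equ_ow}, and the scalar functional to be $H({\bmtheta}) = \tau$, which is linear and trivially differentiable at ${\bmtheta}_0$ with gradient picking out the first coordinate. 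The bootstrap validity Assumption \ref{asmp:valid_bootstrap_z_est} is delivered by Theorem \ref{thm:valid_bootstrap_R_g} under Condition \ref{cond:logistic_Z_est}, with the explicit choice $\bm{R}^\tp = g^\top(\bmX)$ whose required moment bound comes from Condition \ref{cond:logistic_Z_est}(i). The positivity $\dot{H}({\bmtheta}_0) \bm{\Sigma}_{{\bmtheta}{\bmtheta}} \dot{H}({\bmtheta}_0)^\top > 0$ reduces to strict positivity of the $(\tau,\tau)$-entry of $\bm{\Sigma}_{{\bmtheta}{\bmtheta}}$, which I would verify by noting that this is the asymptotic variance of the oracle Z-estimator of $\tau_\ow$ when $W^\tp$ is known; non-degeneracy follows from Condition \ref{cond:logistic_Z_est}(iii) together with the standard sandwich-form representation of the Z-estimator variance.

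\textbf{Handling the prediction set and the $\varepsilon$-neighborhood.} Under Assumption \ref{asmp:robust_sen_spec} one has $\Pr\{|W^\tp| \le \log(\Lambda_z) \mid Z = z\} \ge 1 - \delta_z$, so by a routine binomial calculation and the Glivenko--Cantelli-type reasoning noted after \eqref{eq:Lambda_delta_w_pred_set}, $\mathcal{W}_n(\vec{\bm{\Lambda}}, \vec{\bm{\delta}}_n)$ serves as a valid prediction set for $\bmW_{1:n}^\tp$ once $\delta_{n,z}$ is chosen appropriately, although for the first two displayed inequalities in the theorem we do not need any particular coverage level for $\vec{\bm{\delta}}_n$ since the event $\{\bmW_{1:n}^\tp \in \mathcal{W}_n(\vec{\bm{\Lambda}}, \vec{\bm{\delta}}_n)\}$ appears explicitly in the probability. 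The key geometric step is to set $\varepsilon = \min\{\log(\Lambda_1'/\Lambda_1), \log(\Lambda_0'/\Lambda_0)\}$, which is strictly positive by hypothesis, and to check directly from the defining inequalities in \eqref{eq:Lambda_delta_w_pred_set} that $\mathcal{W}_n(\vec{\bm{\Lambda}}, \vec{\bm{\delta}}_n)^\varepsilon \subseteq \mathcal{W}_n(\vec{\bm{\Lambda}}', \vec{\bm{\delta}}_n)$, using that a perturbation of size at most $\varepsilon$ can inflate $|W_i|$ by at most $\varepsilon$, hence preserves the threshold $\log(\Lambda_z) + \varepsilon \le \log(\Lambda_z')$.

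\textbf{Invoking the general theorem and the monotonicity step.} Applying Theorem \ref{thm:bootstrap_over_identify_pred_set} with this $\varepsilon$ yields
\begin{align*}
\limsup_{n\to\infty} \Pr\bigl\{ \tau_\ow > Q_{1-\alpha}(\theta_\varepsilon^{\max}),\ \bmW_{1:n}^\tp \in \mathcal{W}_n(\vec{\bm{\Lambda}}, \vec{\bm{\delta}}_n)\bigr\} \le \alpha,
\end{align*}
where $\theta_\varepsilon^{\max}$ denotes the bootstrap supremum in \eqref{eq:bootstrap_epsilon_pred_set} taken over the $\varepsilon$-neighborhood of $\mathcal{W}_n(\vec{\bm{\Lambda}}, \vec{\bm{\delta}}_n)$. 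Because $\mathcal{W}_n(\vec{\bm{\Lambda}}', \vec{\bm{\delta}}_n)$ contains this $\varepsilon$-neighborhood, the feasible region in \eqref{eq:tau_boot_epsilon_pred_set} is larger, so pathwise across bootstrap draws $\hat{\tau}_\bstrap^{\max}(\vec{\bm{\Lambda}}', \vec{\bm{\delta}}_n) \ge \theta_\varepsilon^{\max}$, and consequently $\UU_{\alpha}(\vec{\bm{\Lambda}}', \vec{\bm{\delta}}_n) \ge Q_{1-\alpha}(\theta_\varepsilon^{\max})$. The upper-tail inequality in the theorem follows. The lower-tail inequality is the same argument applied to $-H$, and the two-sided bound is immediate from the union bound.

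\textbf{Main obstacle.} The main conceptual subtlety is the translation between the two bootstrap optimizations: \eqref{eq:bootstrap_epsilon_pred_set} in the general framework of Section \ref{sec:boot_over_iden} was formulated by conceptually augmenting the parameter with ${\bmrho}$ and the data with $\bm{R}^\tp$, whereas \eqref{eq:tau_boot_epsilon_pred_set} contains neither. The resolution, already encoded in the statement of Theorem \ref{thm:bootstrap_over_identify_pred_set}, is that the augmentation has been absorbed into the $\varepsilon$-inflation of the feasible set, so one only needs to verify the set inclusion above rather than re-derive any asymptotics. I therefore expect no new probabilistic machinery beyond what is already established in Theorems \ref{thm:bootstrap_over_identify_pred_set} and \ref{thm:valid_bootstrap_R_g}; the bulk of the work is the careful bookkeeping linking the robust sensitivity model's $(\vec{\bm{\Lambda}}, \vec{\bm{\delta}})$ parameterization to the abstract $\mathcal{W}_n$ and verifying the monotonicity comparison between $\hat{\tau}_\bstrap^{\max}(\vec{\bm{\Lambda}}', \vec{\bm{\delta}}_n)$ and $\theta_\varepsilon^{\max}$.
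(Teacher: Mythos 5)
Your proposal is correct and follows essentially the same route as the paper's own proof: the paper likewise sets $\varepsilon = \min\{\log(\Lambda_1'/\Lambda_1), \log(\Lambda_0'/\Lambda_0)\}$, observes that the $\varepsilon$-neighborhood of $\mathcal{W}_n(\vec{\bm{\Lambda}}, \vec{\bm{\delta}}_{n})$ equals $\mathcal{W}_n(\vec{\bm{\Lambda}}\cdot\exp(\varepsilon), \vec{\bm{\delta}}_{n}) \subset \mathcal{W}_n(\vec{\bm{\Lambda}}', \vec{\bm{\delta}}_{n})$, deduces the quantile monotonicity $\UU_{\alpha}(\vec{\bm{\Lambda}}', \vec{\bm{\delta}}_{n}) \ge \UU_{\alpha}(\vec{\bm{\Lambda}}\cdot\exp(\varepsilon), \vec{\bm{\delta}}_{n})$, and invokes Theorem \ref{thm:bootstrap_over_identify_pred_set} (with Assumption \ref{asmp:valid_bootstrap_z_est} supplied by Theorem \ref{thm:valid_bootstrap_R_g} under Condition \ref{cond:logistic_Z_est}), handling the lower bound symmetrically and the two-sided bound by a union bound. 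Your additional remarks on the Z-estimation bookkeeping and the non-degeneracy of the $\tau$-component of $\bm{\Sigma}_{{\bmtheta}{\bmtheta}}$ are consistent with, and slightly more explicit than, what the paper leaves implicit.
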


\begin{corollary}\label{cor:ci_tau}
    Consider the setting in Theorem \ref{thm:ci_tau_pred}, and let $n_z \delta_{n, z}$ be the $\sqrt{1-\zeta}$th quantile of $\Bin(n_z,\delta_z)$ for $z=0,1$. 
    For any given $\Lambda_1'>\Lambda_1$ and $\Lambda_0'>\Lambda_0$,
    $\UU_{\alpha}(\vec{\bm{\Lambda}}', \vec{\bm{\delta}}_{n})$ and $\LL_{\alpha}(\vec{\bm{\Lambda}}', \vec{\bm{\delta}}_{n})$ are, respectively, asymptotic $1-(\alpha+\zeta)$ upper and lower confidence bounds for $\tau_\ow$: 
    \begin{align*}
        \limsup_{n\rightarrow \infty} \Pr\{ \tau_\ow > \UU_{\alpha}(\vec{\bm{\Lambda}}', \vec{\bm{\delta}}_{n}) \} & \le \alpha + \zeta, 
        \qquad 
        \limsup_{n\rightarrow \infty} \Pr\{ \tau_\ow < \LL_{\alpha}(\vec{\bm{\Lambda}}', \vec{\bm{\delta}}_{n}) \}  \le \alpha + \zeta,
    \end{align*}
    and $[\LL_{\alpha}(\vec{\bm{\Lambda}}', \vec{\bm{\delta}}_{n}), \UU_{\alpha}(\vec{\bm{\Lambda}}', \vec{\bm{\delta}}_{n})]$ is an asymptotic $1-(2\alpha+\zeta)$ confidence interval for $\tau_\ow$: 
    \begin{align*}
        \limsup_{n\rightarrow \infty} \Pr\{ \tau_\ow \notin [\LL_{\alpha}(\vec{\bm{\Lambda}}', \vec{\bm{\delta}}_{n}), \UU_{\alpha}(\vec{\bm{\Lambda}}', \vec{\bm{\delta}}_{n})] \}  \le 2 \alpha + \zeta.
    \end{align*}
\end{corollary}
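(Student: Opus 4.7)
The plan is to reduce Corollary \ref{cor:ci_tau} to Theorem \ref{thm:ci_tau_pred} by showing that the chosen set $\mathcal{W}_n(\vec{\bm{\Lambda}}, \vec{\bm{\delta}}_{n})$ is an asymptotic (in fact, finite-sample) $1-\zeta$ prediction set for $\bmW_{1:n}^\tp$ under the stated choice of $\vec{\bm{\delta}}_n$, and then handling the ``prediction set failure'' event by a simple union-bound style decomposition.

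First, I would verify the prediction coverage. Under Assumption \ref{asmp:robust_sen_spec}, the definition of $W_i^\tp = \psi_\m(\bmX_i, \bmU_i)$ gives $\Pr(|W_i^\tp| > \log(\Lambda_z) \mid Z_i = z) \le \delta_z$ for $z=0,1$. Conditional on $Z_{1:n} \equiv (Z_1,\dots,Z_n)$, the i.i.d.\ structure of the samples implies that the variables $\{W_i^\tp : Z_i = z\}$ are conditionally i.i.d.\ from the corresponding within-group distribution, and are independent across the two groups. Hence for each $z$, the count $N_z^\tp \equiv \sum_{i:Z_i=z} \I\{|W_i^\tp| > \log(\Lambda_z)\}$ is stochastically dominated (conditionally on $Z_{1:n}$) by $\Bin(n_z, \delta_z)$. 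By the choice of $n_z \delta_{n,z}$ as the $\sqrt{1-\zeta}$-quantile of $\Bin(n_z, \delta_z)$, we get $\Pr(N_z^\tp \le n_z \delta_{n,z} \mid Z_{1:n}) \ge \sqrt{1-\zeta}$. Using conditional independence across $z=0,1$ and multiplying,
\begin{equation*}
\Pr\{\bmW_{1:n}^\tp \in \mathcal{W}_n(\vec{\bm{\Lambda}}, \vec{\bm{\delta}}_{n}) \mid Z_{1:n}\} \ge (\sqrt{1-\zeta})^2 = 1-\zeta,
\end{equation*}
and taking expectation yields $\Pr\{\bmW_{1:n}^\tp \notin \mathcal{W}_n(\vec{\bm{\Lambda}}, \vec{\bm{\delta}}_{n})\} \le \zeta$.

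Second, for each of the three inequalities in the corollary, I would write the unconditional probability as
\begin{equation*}
\Pr(A) \le \Pr\{A, \bmW_{1:n}^\tp \in \mathcal{W}_n(\vec{\bm{\Lambda}}, \vec{\bm{\delta}}_{n})\} + \Pr\{\bmW_{1:n}^\tp \notin \mathcal{W}_n(\vec{\bm{\Lambda}}, \vec{\bm{\delta}}_{n})\},
\end{equation*}
where $A$ is $\{\tau_\ow > \UU_\alpha(\vec{\bm{\Lambda}}', \vec{\bm{\delta}}_{n})\}$, $\{\tau_\ow < \LL_\alpha(\vec{\bm{\Lambda}}', \vec{\bm{\delta}}_{n})\}$, or $\{\tau_\ow \notin [\LL_\alpha(\vec{\bm{\Lambda}}', \vec{\bm{\delta}}_{n}), \UU_\alpha(\vec{\bm{\Lambda}}', \vec{\bm{\delta}}_{n})]\}$. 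Taking $\limsup_{n\to\infty}$, the first term on the right is bounded by $\alpha$ for the first two events and by $2\alpha$ for the two-sided event via Theorem \ref{thm:ci_tau_pred}; the second term is bounded by $\zeta$ via the first step. Adding these gives the three claimed bounds $\alpha+\zeta$, $\alpha+\zeta$, and $2\alpha+\zeta$, respectively.

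I do not foresee a serious obstacle: essentially all the heavy lifting, including the augmented percentile bootstrap argument and the $\varepsilon$-inflation used to pass from $\vec{\bm{\Lambda}}$ to $\vec{\bm{\Lambda}}'$, is already encapsulated in Theorem \ref{thm:ci_tau_pred}. The only point that merits some care is the conditioning-and-independence step that produces the $(\sqrt{1-\zeta})^2 = 1-\zeta$ factorization, which relies on (i) the i.i.d.\ sampling so that, given $Z_{1:n}$, the within-group $W_i^\tp$'s are conditionally i.i.d.\ and the two groups are conditionally independent, and (ii) the definition of $\delta_{n,z}$ as a quantile of the exact binomial rather than an asymptotic approximation, which avoids any need for an additional limiting argument in this step.
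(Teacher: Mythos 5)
Your proposal is correct and matches the paper's own proof: the paper likewise establishes that $\mathcal{W}_n(\vec{\bm{\Lambda}}, \vec{\bm{\delta}}_{n})$ is a (finite-sample) $1-\zeta$ prediction set by conditioning on $\bm{Z}_{1:n}$, using conditional independence and binomial stochastic dominance within each treatment arm, and multiplying the two $\sqrt{1-\zeta}$ factors from the quantile definition of $n_z\delta_{n,z}$, after which the conclusion follows from Theorem \ref{thm:ci_tau_pred} via exactly the union-bound decomposition you describe (which in the paper is packaged inside Theorem \ref{thm:bootstrap_over_identify_pred_set}). No gaps; your added care about the exact-binomial quantile avoiding any limiting argument in the coverage step is consistent with the paper's treatment.
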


In practice, we can choose $\alpha$ and $\zeta$ based on the desired confidence level. For example, we can set $\alpha$ (or $2\alpha$) and $\zeta$
to be the same as half of 1 minus the confidence level. 
That is, we can set $\alpha = \zeta = 0.025$ when constructing a one-sided $95\%$ confidence interval, and $2\alpha = \zeta = 0.025$ when constructing a two-sided $95\%$ confidence interval.   

The choice of $\Lambda_z'$ is more nuanced. 
Theorem \ref{thm:ci_tau_pred} and Corollary \ref{cor:ci_tau} establish asymptotic validity for any $\Lambda_z' > \Lambda_z$, regardless of how small the gap $\Lambda_z' - \Lambda_z$ is. Moreover, this gap can shrink to zero at an appropriate rate as $n \to \infty$; see the discussion after Theorem \ref{thm:bootstrap_over_identify}. Numerical experiments indicate that the confidence bounds are robust to small variations in $\Lambda_z'$. Thus, we generally recommend setting $\Lambda_z' = \Lambda_z$ in practice. However, we emphasize that for the theoretical validity of our augmented percentile bootstrap procedure, a strictly positive gap between $\Lambda_z'$ and $\Lambda_z$, no matter how small, is required.

\subsection{
Optimization for the percentile bootstrap confidence intervals
}\label{sec:opt_per_boot}

In this subsection, we will focus on how to compute the confidence bounds in Theorem \ref{thm:ci_tau_pred} and Corollary \ref{cor:ci_tau}, 
which all rely on the optimization in \eqref{eq:tau_boot_epsilon_pred_set} based on the bootstrapped samples. 
In the following, we focus on the computation of  
$\hat{\tau}_{\bstrap}^{\max} (\vec{\bm{\Lambda}}, \vec{\bm{\delta}})$ and $\hat{\tau}_{\bstrap}^{\min} (\vec{\bm{\Lambda}}, \vec{\bm{\delta}})$ defined as in \eqref{eq:tau_boot_epsilon_pred_set} for any $\Lambda_1, \Lambda_0\ge 1$ and $\delta_1, \delta_0 \in [0,1]$. 
To facilitate the discussion, 
we first introduce some notation similar to that in Section \ref{sec:point_sen_ana}. 
For each bootstrap sample denoted by $\bstrap$ and the associated $(k_{\bstrap 1}, \ldots, k_{\bstrap n})$, 
we use $\hat{{\bm{\beta}}}_\bstrap$ to denote the fitted parameter value for the propensity score model based on the bootstrap sample. 
Define further, for $1\le i \le n$ and $\psi_i \in \mathbb{R}$, 
$\hat{e}_{\bstrap i}^{(\psi_i)} \equiv e_{\hat{{\bm{\beta}}}_{\bstrap}}^{(\psi_i)}(\bmX_i) = \logit^{-1}[ \psi_{i} + \logit\{ e(\bmX_i; \hat{{\bm{\beta}}}_{\bstrap}) \} ]$ analogously as in Section \ref{sec:formu_z_est}. 

The optimization for $\hat{\tau}_{\bstrap}^{\min} (\vec{\bm{\Lambda}}, \vec{\bm{\delta}})$ and $\hat{\tau}_{\bstrap}^{\max} (\vec{\bm{\Lambda}}, \vec{\bm{\delta}})$ as in \eqref{eq:tau_boot_epsilon_pred_set} can be equivalently written as 
\begin{align}\label{eq:double_frac_program_bootstrap_2}
    \min 
    \text{ or } \max \quad 
    & 
    \dfrac{\sum_{i=1}^n k_{\bstrap i} Z_{i} (1-
    \hat{e}_{\bstrap i}^{(\psi_i)}
    )Y_{i}}{\sum_{i=1}^n k_{bi} Z_{i} (1-
    \hat{e}_{\bstrap i}^{(\psi_i)}
    ) }-\dfrac{\sum_{i=1}^n k_{\bstrap i}(1-Z_{i}) \hat{e}_{\bstrap i}^{(\psi_i)} Y_{i}}{\sum_{i=1}^n k_{\bstrap i} (1-Z_{bi}) \hat{e}_{\bstrap i}^{(\psi_i)}},
    \\
    \text{subject to} \quad 	&
    \sum_{i: Z_i=z} \I \{ |\psi_i | \leq \log(\Lambda_z) \} \ge \lceil n_z
    (1-\delta_z)\rceil \text{ for } z=0,1, 
    \nonumber
    \\
    & \dfrac{\sum_{i=1}^n k_{\bstrap i} Z_{i} (1-
    \hat{e}_{\bstrap i}^{(\psi_i)}
    ) g(\bmX_{i})}{\sum_{i=1}^n k_{\bstrap i} Z_{i} (1-
    \hat{e}_{\bstrap i}^{(\psi_i)}
    ) }
    = 
    \dfrac{\sum_{i=1}^n k_{\bstrap i}(1-Z_{i}) \hat{e}_{\bstrap i}^{(\psi_i)} g(\bmX_{i})}{\sum_{i=1}^n 
    k_{\bstrap i}(1-Z_{i}) \hat{e}_{\bstrap i}^{(\psi_i)}
    },
    \nonumber
\end{align}
where the minimum and maximum objective values are the same as  $\hat{\tau}_{\bstrap}^{\min} (\vec{\bm{\Lambda}}, \vec{\bm{\delta}})$ and $\hat{\tau}_{\bstrap}^{\max} (\vec{\bm{\Lambda}}, \vec{\bm{\delta}})$, respectively. The equivalent form in \eqref{eq:double_frac_program_bootstrap_2} follows from \eqref{eq:tau_boot_epsilon_pred_set} by solving the empirical estimating equations $\sum_{i=1}^n k_{\bstrap i} p_{\bmtheta}(\bmO_{i}, W_i) =\bm{0}$ and denoting $W_i$ by $\psi_i$, where the latter is to make it more coherent with the bound estimation in \eqref{eq:opt_tau_hat_ow}.

It is worth pointing out that, although the optimization in \eqref{eq:double_frac_program_bootstrap_2} appears similar to that in \eqref{eq:opt_tau_hat_ow}, there are subtle yet important differences between them.
Specifically, \eqref{eq:double_frac_program_bootstrap_2} is generally different from the optimization in \eqref{eq:opt_tau_hat_ow} applied directly to the bootstrap sample. 
This is because both the objective and the second constraint on covariate balance depend only on the bootstrap sample, whereas the first constraint from the robust sensitivity model refers to $\{\psi_i\}_{i=1}^n$ for the original sample. 
This subtle difference is not that critical under the usual marginal sensitivity model, under which all the $\psi_i$s as well as their bootstrapped versions will be bounded by the same value; see, e.g., \citet{zhaosenIPW2019}.

Nevertheless, we can still view the optimization in \eqref{eq:double_frac_program_bootstrap_2} as a weighted version generalizing that in \eqref{eq:opt_tau_hat_ow}.  
In particular, by the same logic as that in Sections \ref{subsec:double_frac_program} and \ref{sec:milp}, we can write \eqref{eq:double_frac_program_bootstrap_2} in a form analogous to \eqref{eq:frac_double}, and then transform it into a (mixed-integer) linear programming problem. 
Importantly, linear programming still yields valid confidence bounds, often with little increase in conservativeness.
We relegate the details to the supplementary material.

\section{Illustration}\label{sec:illu}

We now illustrate our methods by investigating the causal effect of fish consumption on the blood mercury level. 
We use the data from  \citet{Zhao18cross} and \citet{zhaosenIPW2019}, which are obtained from the National Health and Nutrition Examination Survey 2013–2014. 
After their data prepossessing, the final dataset includes $234$ treated units with high fish consumption and $873$ control units with low fish consumption. 
The outcome of interest is the blood mercury level, measured by $\log_2$(total blood
mercury in micrograms per liter).
Besides the treatment and outcome variables, we have eight covariates for these units, including  gender, age, income, whether income is missing, race, education, ever smoked and number of cigarettes smoked last month. 
We apply the proposed sensitivity analysis to study the causal effect of fish consumption, where we impose a logistic regression model for the observable propensity score and enforce balance for all the eight covariates as in \eqref{eq:cov_bal_psi}. 
In addition, the bootstrap distributions are approximated via Monte Carlo method with 1000 random draws.

\begin{figure}[htbp]
    \centering\includegraphics[width=0.66\textwidth]{./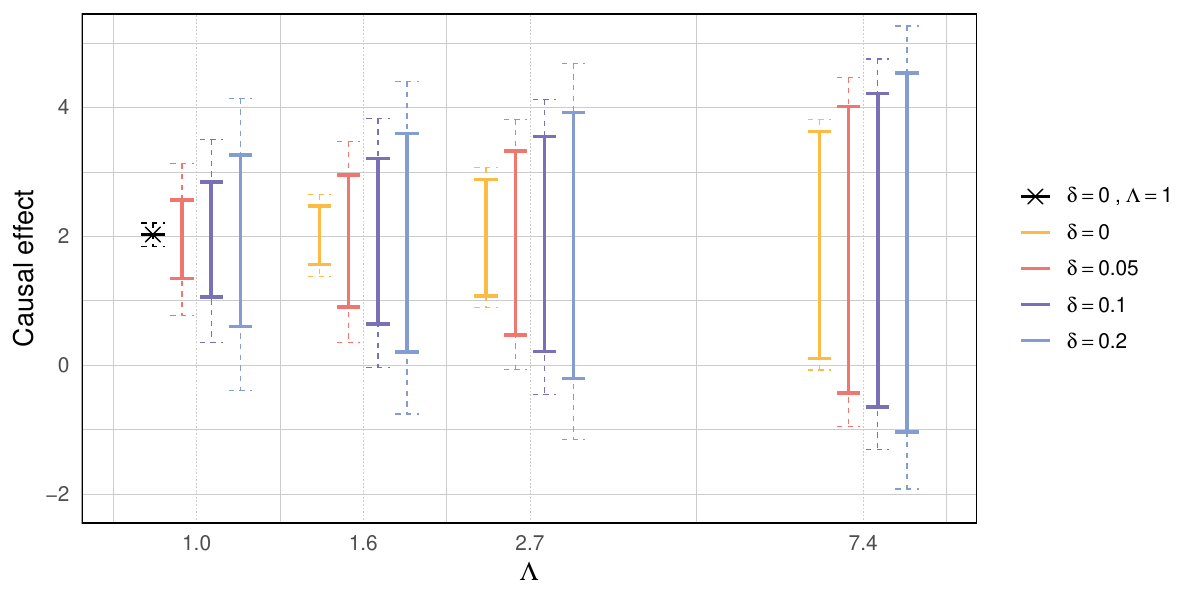}
    \caption{Estimated bounds and 95\% confidence intervals for the overlap-weighted average causal effect $\tau_{\ow}$ of fish consumption on the blood mercury level.
    The solid intervals show the estimated bounds of $\tau_{\ow}$ as described in Section \ref{sec:point_sen_ana}, and the dashed intervals show the 95\% confidence intervals for $\tau_{\ow}$, 
    under various sensitivity models indexed by $(\Lambda, \delta)$. 
    Specifically, $(\Lambda, \delta)$ corresponds to the robust marginal sensitivity model assuming that at least $1-\delta$ proportions of units within both treated and control populations have unmeasured confounding strength bounded by $\Lambda$. Note that $(\Lambda, \delta)=(1,0)$ corresponds to the case of no unmeasured confounding.
    }
    \label{fig:ci}
\end{figure}

Specifically, we consider the robust marginal sensitivity model in Assumption \ref{asmp:robust_sen_spec} with the same constraints on unmeasured confounding for the treated and control populations; that is, $\Lambda_1=\Lambda_0=\Lambda$ and $\delta_1=\delta_0=\delta$ for some $\Lambda$ and $\delta$. 
We consider four values of $\Lambda$: $\exp(0)$, $\exp(0.5)$, $\exp(1)$, and $\exp(2)$, and four values of $\delta$: $0$, $0.05$, $0.1$, and $0.2$. This results in a total of 16 combinations of $(\Lambda, \delta)$. 
Figure \ref{fig:ci} shows the confidence intervals under each combination of $(\Lambda, \delta)$. 
Note that $(\Lambda, \delta) = (1,0)$ corresponds to the case with  no unmeasured confounding.  

From Figure \ref{fig:ci}, in the absence of unmeasured confounding, the point estimate is about $2$, and the corresponding confidence interval is quite narrow and does not include zero, which indicates a significant nonzero causal effect. 
However, we may worry that the inferred significant effect may be due to unmeasured confounding rather than true causal effects.

We then consider the conventional marginal sensitivity model, which assumes that the strength of unmeasured confounding is bounded by a certain value $\Lambda$ uniformly across all units; this corresponds to cases with $\delta = 0$ in our analyses. As shown in Figure~\ref{fig:ci}, the causal effect remains significantly different from zero for $\Lambda$ values up to about $\exp(2) \approx 7.4$. In other words, 
fish consumption can significantly increase the blood mercury level 
even when the odds of treatment probability could increase or decrease by a factor of 7.4 once further conditioning on the unmeasured confounder.
While such a strong level of unmeasured confounding may be implausible for all units, it is possible that a small subset of units experience extreme unmeasured confounding. In this case, the conventional marginal sensitivity analysis may not be able to provide robust evidence for a significant causal effect.

We finally consider the proposed robust marginal sensitivity model. When $\delta = 0.05$, meaning that up to 5\% of units in both the treated and control groups may be subject to arbitrarily strong unmeasured confounding, the causal effect remains significant for $\Lambda$ values up to approximately $\exp(1) \approx 2.7$. Analogously, for $\delta = 0.1$, the effect is significant for $\Lambda$ values up to approximately $\exp(0.5) \approx 1.6$. However, when $\delta = 0.2$, the causal effect is no longer significant, even when $80\%$ of units in both treated and control populations do not suffer from unmeasured confounding at all. These results suggest that the observational data support a nonzero causal effect that is moderately robust to unmeasured confounding, even when such confounding may be severe for some units.

\section{Extensions and discussion}\label{sec:diss}

First, we extend to robust marginal sensitivity models assuming that 
\begin{align}\label{eq:robust_sen_all}
    \Pr \Big( \Lambda^{-1}\leq \dfrac{e_{\tp}(\bmX,\bmU)/\{1-e_{\tp}(\bmX,\bmU)\}}{e_{\tp}(\bmX)/\{1-e_{\tp}(\bmX)\}}\leq\Lambda  \Big) & \geq 1-\delta, \quad \text{for some $\Lambda\ge 1$ and $\delta \in [0,1]$}. 
\end{align}
The inference can be conducted analogously using the augmented percentile bootstrap. However, it generally requires higher computational cost. Intuitively, we can understand the model in \eqref{eq:robust_sen_all} as a collection of models in \eqref{eq:robust_sen} such that $\Lambda_1 = \Lambda_0 = \Lambda$ and $\Pr(Z=1) \cdot \delta_1 + \Pr(Z=0) \cdot \delta_0 = \delta$. 
Therefore, we can solve the optimization under \eqref{eq:robust_sen_all} via multiple linear programming problems in Section \ref{sec:milp}. 
Alternatively, we can use quadratically constrained linear programming. 
We relegate the details to Appendix A3 of the supplementary material. 

Second, we develop simultaneous sensitivity analyses across multiple (or all) models in \eqref{eq:robust_sen_all} or \eqref{eq:robust_sen}. 
This utilizes an equivalent understanding of sensitivity analysis as discussed in \citet{WL23}, where we instead infer the minimum amount of unmeasured confounding under the assumption of no causal effects. 
For ease of interpretation and implementation, we propose to work directly with the confounding strength within the sample. 
Let $\OR_{\m i} \ge 1$ be the odds ratio between the true and estimable propensity scores or its reciprocal for unit $i$, 
and $\OR_{\m [q]}^\treat$ (or $\OR_{\m [q]}^\control$) be the $q$th sample quantile of $\OR_{\m i}$s among treated (or control) units. 
Assuming the treatment has no effect, 
we can construct simultaneous prediction sets for $(\OR_{\m [q_1]}^\treat, \OR_{\m [q_0]}^\control)$ across all $q_1, q_0 \in [0,1]$,
which also implies 
simultaneous prediction sets for the sample quantiles of $\OR_{\m i}$s from both treated and control units. 
Considering the computational cost, we recommend reporting the resulting prediction sets for $\max\{ \OR_{\m [q]}^\treat, \OR_{\m [q]}^\control\}$ over $q \in [0,1]$. We relegate the details to Appendix A1 of the supplementary material.

Our paper provides a unified framework for sensitivity analysis under the violations of both unconfoundedness and overlap, which involves both linear programming for efficient computation and augmented percentile bootstrap for statistical inference. 
Moreover, the proposed bootstrap approach for parameters defined by over-identified estimating equations may be of independent interest and applicable to other settings. 
It will also be interesting to extend \citet{Dorn23}'s sharp sensitivity analysis for the usual marginal sensitivity model to our robust marginal sensitivity model, though this may be challenging due to the combinatorial nature of the robust marginal sensitivity model.
We leave these for future investigation.

\bibliographystyle{plainnat}
\bibliography{reference}

\begin{thebibliography}{39}
\providecommand{\natexlab}[1]{#1}
\providecommand{\url}[1]{\texttt{#1}}
\expandafter\ifx\csname urlstyle\endcsname\relax
  \providecommand{\doi}[1]{doi: #1}\else
  \providecommand{\doi}{doi: \begingroup \urlstyle{rm}\Url}\fi

\bibitem[Bonvini and Kennedy(2022)]{Kennedy2022sen}
M.~Bonvini and E.~H. Kennedy.
\newblock Sensitivity analysis via the proportion of unmeasured confounding.
\newblock \emph{Journal of the American Statistical Association}, 117:\penalty0
  1540--1550, 2022.

\bibitem[Chen and Li(2024)]{chen2024enhanced}
Z.~Chen and X.~Li.
\newblock Enhanced inference for distributions and quantiles of individual
  treatment effects in various experiments.
\newblock \emph{arXiv preprint arXiv:2407.13261}, 2024.

\bibitem[Chen et~al.(2024)Chen, Li, and Zhang]{chen2024role}
Z.~Chen, X.~Li, and B.~Zhang.
\newblock The role of randomization inference in unraveling individual
  treatment effects in early phase vaccine trials.
\newblock \emph{Statistical Communications in Infectious Diseases}, 16, 2024.

\bibitem[Cornfield et~al.(1959)Cornfield, Haenszel, Hammond, Lilienfeld,
  Shimkin, and Wynder]{cornfield1959smoking}
J.~Cornfield, W.~Haenszel, E.~C. Hammond, A.~M. Lilienfeld, M.~B. Shimkin, and
  E.~L. Wynder.
\newblock Smoking and lung cancer: recent evidence and a discussion of some
  questions.
\newblock \emph{Journal of the National Cancer institute}, 22:\penalty0
  173--203, 1959.

\bibitem[Crump et~al.(2009)Crump, Hotz, Imbens, and Mitnik]{crump2009dealing}
R.~K. Crump, V.~J. Hotz, G.~W. Imbens, and O.~A. Mitnik.
\newblock Dealing with limited overlap in estimation of average treatment
  effects.
\newblock \emph{Biometrika}, 96\penalty0 (1):\penalty0 187--199, 2009.

\bibitem[D'Amour et~al.(2021)D'Amour, Ding, Feller, Lei, and
  Sekhon]{DAmour2021}
A.~D'Amour, P.~Ding, A.~Feller, L.~Lei, and J.~Sekhon.
\newblock Overlap in observational studies with high-dimensional covariates.
\newblock \emph{Journal of Econometrics}, 221:\penalty0 644--654, 2021.

\bibitem[Ding and VanderWeele(2014)]{Vanderweele2014}
P.~Ding and T.~J. VanderWeele.
\newblock Generalized cornfield conditions for the risk difference.
\newblock \emph{Biometrika}, 101:\penalty0 971--977, 2014.

\bibitem[Ding and VanderWeele(2016)]{ding2016sensitivity}
P.~Ding and T.~J. VanderWeele.
\newblock Sensitivity analysis without assumptions.
\newblock \emph{Epidemiology}, 27:\penalty0 368--377, 2016.

\bibitem[Dorn and Guo(2023)]{Dorn23}
J.~Dorn and K.~Guo.
\newblock Sharp sensitivity analysis for inverse propensity weighting via
  quantile balancing.
\newblock \emph{Journal of the American Statistical Association}, 118:\penalty0
  2645--2657, 2023.

\bibitem[Dorn et~al.(2024)Dorn, Guo, and Kallus]{Dorn24}
J.~Dorn, K.~Guo, and N.~Kallus.
\newblock Doubly-valid/doubly-sharp sensitivity analysis for causal inference
  with unmeasured confounding.
\newblock \emph{Journal of the American Statistical Association}, in
  press:\penalty0 in press, 2024.

\bibitem[Fogarty and Hasegawa(2019)]{FH2017}
C.~B. Fogarty and R.~B. Hasegawa.
\newblock Extended sensitivity analysis for heterogeneous unmeasured
  confounding with an application to sibling studies of returns to education.
\newblock \emph{The Annals of Applied Statistics}, 13:\penalty0 767 -- 796,
  2019.

\bibitem[Franks et~al.(2020)Franks, D'Amour, and Feller]{Franks2020}
A.~M. Franks, A.~D'Amour, and A.~Feller.
\newblock Flexible sensitivity analysis for observational studies without
  observable implications.
\newblock \emph{Journal of the American Statistical Association}, 115:\penalty0
  1730--1746, 2020.

\bibitem[Glover(1975)]{GloverImproved:1975}
F.~Glover.
\newblock Improved linear integer programming formulations of nonlinear integer
  problems.
\newblock \emph{Management Science}, 22:\penalty0 455--460, 12 1975.

\bibitem[{Gurobi Optimization, LLC}(2022)]{gurobi}
{Gurobi Optimization, LLC}.
\newblock {Gurobi Optimizer Reference Manual}, 2022.
\newblock URL \url{https://www.gurobi.com}.

\bibitem[Hasegawa and Small(2017)]{HS2017}
R.~Hasegawa and D.~Small.
\newblock Sensitivity analysis for matched pair analysis of binary data: From
  worst case to average case analysis.
\newblock \emph{Biometrics}, 73:\penalty0 1424--1432, 2017.

\bibitem[Hong et~al.(2019)Hong, Leung, and Li]{Hong2019limited}
H.~Hong, M.~P. Leung, and J.~Li.
\newblock Inference on finite-population treatment effects under limited
  overlap.
\newblock \emph{The Econometrics Journal}, 23:\penalty0 32--47, 2019.

\bibitem[Huang and Pimentel(2024)]{huang2024variance}
M.~Huang and S.~D. Pimentel.
\newblock Variance-based sensitivity analysis for weighting estimators results
  in more informative bounds.
\newblock \emph{Biometrika}, 112\penalty0 (1):\penalty0 asae040, 2024.

\bibitem[Imbens(2003)]{Imbens2003sen}
G.~W. Imbens.
\newblock Sensitivity to exogeneity assumptions in program evaluation.
\newblock \emph{American Economic Review}, 93:\penalty0 126--132, 2003.

\bibitem[Kosorok(2006)]{KosorokEmp:2006}
M.~R. Kosorok.
\newblock \emph{Introduction to Empirical Processes and Semiparametric
  Inference}.
\newblock Springer, 2006.

\bibitem[Lee and Weidner(2025)]{lee2021bounding}
S.~Lee and M.~Weidner.
\newblock Bounding treatment effects by pooling limited information across
  observations.
\newblock \emph{arXiv preprint arXiv:2111.05243}, 2025.

\bibitem[Li et~al.(2018)Li, Morgan, and Zaslavsky]{LiBalancingCov2018}
F.~Li, K.~L. Morgan, and A.~M. Zaslavsky.
\newblock Balancing covariates via propensity score weighting.
\newblock \emph{Journal of the American Statistical Association}, 113:\penalty0
  390--400, 2018.

\bibitem[Li and Greene(2013)]{li2013weighting}
L.~Li and T.~Greene.
\newblock A weighting analogue to pair matching in propensity score analysis.
\newblock \emph{International Journal of Biostatistics}, 9\penalty0
  (2):\penalty0 215--234, 2013.

\bibitem[Ma and Wang(2020)]{Ma2020robust}
X.~Ma and J.~Wang.
\newblock Robust inference using inverse probability weighting.
\newblock \emph{Journal of the American Statistical Association}, 115:\penalty0
  1851--1860, 2020.

\bibitem[Robins et~al.(2000)Robins, Rotnitzky, and Scharfstein]{robins2000sen}
J.~M. Robins, A.~Rotnitzky, and D.~O. Scharfstein.
\newblock Sensitivity analysis for selection bias and unmeasured confounding in
  missing data and causal inference models.
\newblock In M.~E. Halloran and D.~Berry, editors, \emph{Statistical Models in
  Epidemiology, the Environment, and Clinical Trials}, pages 1--94. Springer
  New York, 2000.

\bibitem[Rosenbaum(2002)]{Rosenbaum02a}
P.~R. Rosenbaum.
\newblock \emph{Observational Studies}.
\newblock Springer, New York, 2 edition, 2002.

\bibitem[Rosenbaum and Rubin(1983{\natexlab{a}})]{Rubin1983}
P.~R. Rosenbaum and D.~B. Rubin.
\newblock {The central role of the propensity score in observational studies
  for causal effects}.
\newblock \emph{Biometrika}, 70:\penalty0 41--55, 1983{\natexlab{a}}.

\bibitem[Rosenbaum and Rubin(1983{\natexlab{b}})]{rubinsen1983}
P.~R. Rosenbaum and D.~B. Rubin.
\newblock Assessing sensitivity to an unobserved binary covariate in an
  observational study with binary outcome.
\newblock \emph{Journal of the Royal Statistical Society. Series B
  (Methodological)}, 45:\penalty0 212--218, 1983{\natexlab{b}}.

\bibitem[Rubin(1974)]{Rubin:1974}
D.~B. Rubin.
\newblock Estimating causal effects of treatments in randomized and
  nonrandomized studies.
\newblock \emph{Journal of Educational Psychology}, 66:\penalty0 688--701,
  1974.

\bibitem[Soriano et~al.(2023)Soriano, Ben-Michael, Bickel, Feller, and
  Pimentel]{dan2023interpretable}
D.~Soriano, E.~Ben-Michael, P.~J. Bickel, A.~Feller, and S.~D. Pimentel.
\newblock Interpretable sensitivity analysis for balancing weights.
\newblock \emph{Journal of the Royal Statistical Society Series A: Statistics
  in Society}, 186\penalty0 (4):\penalty0 707--721, 2023.

\bibitem[Tan(2006)]{tan2006}
Z.~Tan.
\newblock A distributional approach for causal inference using propensity
  scores.
\newblock \emph{Journal of the American Statistical Association}, 101:\penalty0
  1619--1637, 2006.

\bibitem[Tan(2024)]{Tan24}
Z.~Tan.
\newblock {Model-assisted sensitivity analysis for treatment effects under
  unmeasured confounding via regularized calibrated estimation}.
\newblock \emph{Journal of the Royal Statistical Society Series B: Statistical
  Methodology}, page in press, 2024.

\bibitem[Wellner(2005)]{WellnerEmp:2005}
J.~A. Wellner.
\newblock \emph{Empirical Processes: Theory and Applications}.
\newblock Delft Technical University Lecture Notes, 2005.

\bibitem[Wellner and Zhan(1996)]{Wellner1996bootstrapping}
J.~A. Wellner and Y.~Zhan.
\newblock Bootstrapping z-estimators.
\newblock \emph{University of Washington Department of Statistics Technical
  Report}, 1996.

\bibitem[Wu and Li(2025)]{WL23}
D.~Wu and X.~Li.
\newblock Sensitivity analysis for quantiles of hidden biases in matched
  observational studies.
\newblock \emph{Journal of the American Statistical Association}, in press,
  2025.

\bibitem[Yang and Ding(2018)]{Yang2018trim}
S.~Yang and P.~Ding.
\newblock Asymptotic inference of causal effects with observational studies
  trimmed by the estimated propensity scores.
\newblock \emph{Biometrika}, 105:\penalty0 487--493, 2018.

\bibitem[Zhang and Tchetgen~Tchetgen(2022)]{Zhang2022semi}
B.~Zhang and E.~J. Tchetgen~Tchetgen.
\newblock A semi-parametric approach to model-based sensitivity analysis in
  observational studies.
\newblock \emph{Journal of the Royal Statistical Society: Series A (Statistics
  in Society)}, 185:\penalty0 S668--S691, 2022.

\bibitem[Zhang and Zhao(2024)]{zhang2022}
Y.~Zhang and Q.~Zhao.
\newblock {$L^\infty$ - and $L^2$ - sensitivity analysis for causal inference
  with unmeasured confounding}.
\newblock \emph{arXiv preprint arXiv:2211.04697}, 2024.

\bibitem[Zhao et~al.(2018)Zhao, Small, and Rosenbaum]{Zhao18cross}
Q.~Zhao, D.~S. Small, and P.~R. Rosenbaum.
\newblock Cross-screening in observational studies that test many hypotheses.
\newblock \emph{Journal of the American Statistical Association}, 113:\penalty0
  1070--1084, 2018.

\bibitem[Zhao et~al.(2019)Zhao, Small, and Bhattacharya]{zhaosenIPW2019}
Q.~Zhao, D.~S. Small, and B.~B. Bhattacharya.
\newblock Sensitivity analysis for inverse probability weighting estimators via
  the percentile bootstrap.
\newblock \emph{Journal of the Royal Statistical Society: Series B (Statistical
  Methodology)}, 81:\penalty0 735--761, 2019.

\end{thebibliography}

\newpage

\begin{center}
	\bf \LARGE 
	Supplementary Material 
\end{center}

\setcounter{equation}{0}
\setcounter{section}{0}
\setcounter{figure}{0}
\setcounter{example}{0}
\setcounter{proposition}{0}
\setcounter{corollary}{0}
\setcounter{theorem}{0}
\setcounter{lemma}{0}
\setcounter{table}{0}
\setcounter{condition}{0}
\setcounter{assumption}{0}
\setcounter{remark}{0}

\renewcommand {\theproposition} {A\arabic{proposition}}
\renewcommand {\theexample} {A\arabic{example}}
\renewcommand {\thefigure} {A\arabic{figure}}
\renewcommand {\thetable} {A\arabic{table}}
\renewcommand {\theequation} {A\arabic{equation}}
\renewcommand {\thelemma} {A\arabic{lemma}}
\renewcommand {\thesection} {A\arabic{section}}
\renewcommand {\thetheorem} {A\arabic{theorem}}
\renewcommand {\thecorollary} {A\arabic{corollary}}
\renewcommand {\thecondition} {A\arabic{condition}}
\renewcommand {\thermk} {A\arabic{rmk}}
\renewcommand {\theremark} {A\arabic{remark}}
\renewcommand {\theassumption} {A\arabic{assumption}}

\renewcommand {\thepage} {A\arabic{page}}
\setcounter{page}{1}

\onehalfspacing

\section{Sensitivity analysis based on multiple quantiles of unmeasured confounding strength}\label{eq:multiple_quantile}

In Section \ref{sec:ci_for_overlap}, we construct confidence intervals for the overlap-weighted average treatment effect $\tau_{\ow}$ under the robust marginal sensitivity model with any prespecified $(\vec{\bm{\Lambda}}, \vec{\bm{\delta}})$. 
In practice, it is often desirable to try multiple sensitivity models with various values of $(\vec{\bm{\Lambda}}, \vec{\bm{\delta}})$. 
This naturally leads to the following questions: how should the results from multiple sensitivity analyses be interpreted, and is any adjustment needed to account for conducting multiple analyses?
We will address these questions throughout this section. 

We first note that the robust marginal sensitivity model in Assumption \ref{asmp:robust_sen_spec} is equivalent to
imposing bounds on the quantiles of unmeasured confounding strength measured by
\begin{equation}\label{eq:OR_m}
    \OR_{\m}(\bmX,\bmU)
    \equiv \max\left\{ \dfrac{\odd\{ e_{\tp}(\bmX,\bmU) \}}{\odd\{ e_{\m}(\bmX)\}}, 
    \dfrac{\odd\{ e_{\m}(\bmX) \}}{\odd \{ e_{\tp}(\bmX,\bmU) \}} \right\}
    = \exp\{ |\psi_{\m}(\bmX,\bmU)| \},
\end{equation}
where the last equality holds by the definition of $\psi_{\m}(\cdot)$ in Section \ref{sec:formu_z_est}. 
Specifically, Assumption \ref{asmp:robust_sen_spec} for any given $(\vec{\bm{\Lambda}}, \vec{\bm{\delta}})$ is equivalent to that the $(1-\delta_z)$th quantile of the conditional distribution of $\OR_{\m}(\bmX,\bmU)$ given $Z=z$ is bounded by $\Lambda_z$, for $z=0,1$. 
Consequently, Theorem \ref{thm:ci_tau_pred} and Corollary \ref{cor:ci_tau} equivalently provide confidence intervals for the overlap-weighted average treatment effect $\tau_{\ow}$ given bounds on the quantiles of unmeasured confounding strength.

\subsection{An alternative view of sensitivity analysis}\label{sec:equiv_view_sa}

In the main paper, we assume certain bounds on the strength of unmeasured confounding, as in the usual marginal sensitivity model, 
and then investigate how sensitive the inferred treatment effects can be. In particular, we can construct confidence bounds for the true overlap-weighted average treatment effect $\tau_{\ow}$ under any given sensitivity model. 
If the confidence intervals do not cover zero, then the treatment effect $\tau_{\ow}$ is said to be significantly different from zero, or significantly positive or negative if we use one-sided confidence intervals, under the given sensitivity model. 
The more the amount of unmeasured confounding allowed under this sensitivity model, the more robust the inferred causal conclusions will be.

To better accommodate sensitivity analyses with various constraints on the strength of unmeasured confounding, we consider an alternative and equivalent view of sensitivity analysis, as discussed in, for example, \citet{WL23}. 
Instead of inferring bounds of treatment effects under constraints on strength of unmeasured confounding, 
we instead infer the strength of unmeasured confounding under assumptions on treatment effects. 
For example, we can assume that the overlap-weighted average treatment effect is zero (or nonpositive, or nonnegative), and construct confidence bounds on the strength of unmeasured confounding under this assumption, which typically give lower bounds for the unmeasured confounding strength. 
Intuitively, this indicates the minimum amount of unmeasured confounding required to make a zero average treatment effect possible. 
When the required strength of unmeasured confounding is large, it suggests that the average treatment effect is likely to be nonzero. On the contrary, if the required strength of unmeasured confounding is small, then the average treatment effect is likely to be zero, or at least the conclusion of a nonzero effect is sensitive to unmeasured confounding. 

In the remaining of this section, we will take this alternative perspective of sensitivity analysis. 
In particular, we will assume that the true overlap-weighted average treatment effect $\tau_{\ow}$ is zero (or nonpositive, or nonnegative), and then infer the strength of unmeasured confounding measured by the quantiles of the distribution of $\OR_{\m}(\bmX,\bmU)$ within treated and control groups. 
Moreover, 
we will consider simultaneously valid confidence intervals for multiple quantiles of unmeasured confounding strength. 
As detailed below, we will essentially consider various values of $\vec{\bm{\Lambda}}$ and $\vec{\bm{\delta}}$ for the robust sensitivity model in Assumption \ref{asmp:robust_sen_spec}, check whether the corresponding confidence intervals cover zero, and then use the results to construct simultaneous confidence intervals for quantiles of unmeasured confounding strength.

\subsection{Simultaneous sensitivity analysis for quantiles of confounding strength}

In the following, we assume that the overlap-weighted average treatment effect is zero, i.e., $\tau_{\ow} = 0$, and use bootstrap two-sided confidence intervals of form $[\LL_{\alpha}(\vec{\bm{\Lambda}}, \vec{\bm{\delta}}), \UU_{\alpha}(\vec{\bm{\Lambda}}, \vec{\bm{\delta}})]$ in Theorem \ref{thm:ci_tau_pred} to conduct inference for the strength of unmeasured confounding. 
Similar logic applies when we assume $\tau_{\ow} \ge 0$ or $\tau_{\ow} \le 0$, and use upper or lower bootstrap confidence bounds.  
For the ease of interpretation and implementation, 
we will focus on inferring the strength of unmeasured confounding within the sample; see Remark \ref{remark:gen_sam_pop} for generalization to the treated and control populations. 
Specifically, 
for $0\le q\le 1$, 
let $\OR_{\m[q]}^\treat$ denote the $q$th sample quantile of the strength of unmeasured confounding among the treated units, i.e., $\{\OR_{\m}(\bmX_i, \bmU_i): Z_i = 1, 1\le i \le n\}$. 
Analogously, let $\OR_{\m [q]}^\control$ denote the $q$th sample quantile of the strength of unmeasured confounding among the control units. 
Note that, by definition, $W_{1:n}^\tp \in \mathcal{W}_n(\vec{\bm{\Lambda}}, \vec{\bm{\delta}})$ 
if and only if 
$\OR_{\m[1-\delta_{1}]}^\treat \le \Lambda_1$ 
and 
$\OR_{\m[1-\delta_{0}]}^\control \le \Lambda_0$. 
From the equivalent understanding of sensitivity analysis in Section \ref{sec:equiv_view_sa} and Theorem \ref{thm:ci_tau_pred}, 
we can immediately derive asymptotically valid tests for the following null hypothesis about the sample quantiles of unmeasured confounding strength: 
\begin{equation}\label{eq:H_samp_U}
    H_{\vec{\bm{q}},\vec{\bm{\Lambda}}}^{\text{U}}: \OR^\treat_{\m[q_1]} \le \Lambda_1 \text{ and } \OR^\control_{\m[q_0]} \le \Lambda_0, 
\end{equation}
where $\vec{\bm{q}} = (q_1, q_0)\in [0,1]^2$ and $\vec{\bm{\Lambda}} = (\Lambda_1, \Lambda_0) \in [1, \infty]^2$ are any prespecified quantiles and bounds, and we use the subscript U to emphasize that this is a hypothesis about the strength of unmeasured confounding. 
The hypothesis in \eqref{eq:H_samp_U} is random due to the randomness in the sample, and a test is said to be (asymptotically) valid if and only if (the limit superior of) the probability that the null hypothesis is true and the test rejects it is bounded from the above by the desired significance level \citep[see, e.g.,][]{chen2024enhanced}.

\begin{corollary}\label{cor:test_U}
    Assume that Condition \ref{cond:logistic_Z_est} holds and the true overlap-weighted average treatment effect is zero, i.e., $\tau_{\ow} = 0$. 
    Consider the null hypothesis $H_{\vec{\bm{q}},\vec{\bm{\Lambda}}}^{\text{U}}$ in \eqref{eq:H_samp_U} for any given $\vec{\bm{q}} \in [0,1]^2$ and $\vec{\bm{\Lambda}} \in [1, \infty]^2$. Let $\xi > 1$ be any fixed constant. 
    For any $\alpha \in (0,1)$ and $\xi > 1$, 
    the test that rejects $H_{\vec{\bm{q}},\vec{\bm{\Lambda}}}^{\text{U}}$ if and only if $0\notin [\LL_{\alpha/2}(\xi \vec{\bm{\Lambda}}, \vec{\bm{1}}-\vec{\bm{q}}), \UU_{\alpha/2}(\xi \vec{\bm{\Lambda}}, \vec{\bm{1}}-\vec{\bm{q}})]$ is an asymptotically valid level-$\alpha$ test for the null hypothesis $H_{\vec{\bm{q}},\vec{\bm{\Lambda}}}$,    
    in the sense that 
    \begin{align*}
        \limsup_{n\rightarrow \infty} \Pr\big\{
        H_{\vec{\bm{q}},\vec{\bm{\Lambda}}}^{\text{U}} \text{ holds and } 0\notin [\LL_{\alpha/2}(\xi \vec{\bm{\Lambda}}, \vec{\bm{1}}-\vec{\bm{q}}), \UU_{\alpha/2}(\xi \vec{\bm{\Lambda}}, \vec{\bm{1}}-\vec{\bm{q}})]
        \big\} \le \alpha, 
    \end{align*}
    where $\xi \vec{\bm{\Lambda}} = (\xi \Lambda_1, \xi \Lambda_0)$ and 
    $\vec{\bm{1}}-\vec{\bm{q}} = (1-q_1, 1-q_0)$. 
\end{corollary}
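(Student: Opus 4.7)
The plan is to reduce the claim directly to the two-sided form of Theorem~\ref{thm:ci_tau_pred}, by (a) rewriting the hypothesis $H_{\vec{\bm{q}},\vec{\bm{\Lambda}}}^{\text{U}}$ as a prediction-set membership event and (b) using $\tau_\ow=0$ to turn ``the confidence interval covers $\tau_\ow$'' into ``the confidence interval covers $0$.''

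First, I would establish the equivalence
\begin{equation*}
H_{\vec{\bm{q}},\vec{\bm{\Lambda}}}^{\text{U}} \text{ holds} \iff \bmW_{1:n}^\tp \in \mathcal{W}_n(\vec{\bm{\Lambda}},\, \vec{\bm{1}}-\vec{\bm{q}}).
\end{equation*}
By the definition of the sample quantile, $\OR^\treat_{\m[q_1]}\le \Lambda_1$ is equivalent to the assertion that at least a $q_1$ fraction of treated units satisfy $\OR_\m(\bmX_i,\bmU_i)\le \Lambda_1$, and by \eqref{eq:OR_m} this is in turn equivalent to $n_1^{-1}\sum_{i:Z_i=1}\I\{|W_i^\tp|\le \log\Lambda_1\}\ge q_1 = 1-(1-q_1)$. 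The analogous rewriting for the control quantile, together with the defining condition in \eqref{eq:Lambda_delta_w_pred_set}, yields the displayed equivalence with the choice $\delta_z = 1-q_z$.

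Next, I would apply Theorem~\ref{thm:ci_tau_pred} in its two-sided form with $\vec{\bm{\Lambda}}' = \xi\vec{\bm{\Lambda}}$ (which satisfies the strict inequality $\xi\Lambda_z>\Lambda_z$ because $\xi>1$), $\vec{\bm{\delta}}_n = \vec{\bm{1}}-\vec{\bm{q}}$ (treated as constant in $n$), and the significance level $\alpha/2$ in place of the $\alpha$ appearing there. This yields
\begin{equation*}
\limsup_{n\to\infty}\Pr\bigl\{\tau_\ow\notin[\LL_{\alpha/2}(\xi\vec{\bm{\Lambda}},\vec{\bm{1}}-\vec{\bm{q}}),\,\UU_{\alpha/2}(\xi\vec{\bm{\Lambda}},\vec{\bm{1}}-\vec{\bm{q}})],\;\bmW_{1:n}^\tp\in\mathcal{W}_n(\vec{\bm{\Lambda}},\vec{\bm{1}}-\vec{\bm{q}})\bigr\} \le 2\cdot(\alpha/2) = \alpha.
\end{equation*}
Under the standing assumption $\tau_\ow=0$, the first component of this joint event coincides with $\{0\notin[\LL_{\alpha/2}(\xi\vec{\bm{\Lambda}},\vec{\bm{1}}-\vec{\bm{q}}),\UU_{\alpha/2}(\xi\vec{\bm{\Lambda}},\vec{\bm{1}}-\vec{\bm{q}})]\}$, i.e.\ with the event that the proposed test rejects; while by the first step the second component coincides with $\{H_{\vec{\bm{q}},\vec{\bm{\Lambda}}}^{\text{U}}\text{ holds}\}$. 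Combining these identifications with the displayed bound gives exactly the validity statement of the corollary.

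There is no real technical obstacle in the argument. The only points requiring mild care are the direction of the quantile inequality (which forces $\vec{\bm{\delta}}_n=\vec{\bm{1}}-\vec{\bm{q}}$ rather than $\vec{\bm{q}}$) and the fact that Theorem~\ref{thm:ci_tau_pred} demands the strict inequality $\Lambda_z'>\Lambda_z$, which is precisely the reason the test is stated with the slack factor $\xi>1$; note that $\xi$ may be taken arbitrarily close to $1$, and as discussed after Theorem~\ref{thm:bootstrap_over_identify} it can in fact be allowed to shrink toward $1$ with $n$.
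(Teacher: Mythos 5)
Your proof is correct and takes essentially the same route as the paper's: the paper likewise identifies the event that $H_{\vec{\bm{q}},\vec{\bm{\Lambda}}}^{\text{U}}$ holds with the prediction-set membership $\bmW_{1:n}^\tp \in \mathcal{W}_n(\vec{\bm{\Lambda}}, \vec{\bm{1}}-\vec{\bm{q}})$ (taking $\vec{\bm{\delta}} = \vec{\bm{1}}-\vec{\bm{q}}$), uses $\tau_\ow = 0$ to equate rejection with non-coverage of $\tau_\ow$, and then invokes the two-sided bound of Theorem \ref{thm:ci_tau_pred} at level $\alpha/2$ with $\vec{\bm{\Lambda}}' = \xi\vec{\bm{\Lambda}}$, exactly as you do. Your side remarks on the direction of the quantile inequality and on the strict slack $\xi>1$ being needed for the hypothesis $\Lambda_z' > \Lambda_z$ of Theorem \ref{thm:ci_tau_pred} match the paper's treatment.
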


Moreover, by test inversion, we can then construct prediction sets for quantiles of unmeasured confounding strength within the sample. Importantly, such prediction sets will be simultaneously valid across all quantiles, which can then strengthen our causal evidence from observational studies; see Section \ref{sec:ill_simultaneous} for an illustration.
We summarize the results in the following theorem. 
Define 
\begin{align}\label{eq:I_q_vec_alpha}
        \mathcal{I}_{\vec{\bm{q}}}^\alpha \equiv \big\{ \vec{\bm{\Lambda}} = (\Lambda_1, \Lambda_0) \in [1, \infty]^2: 
        0 \in [\LL_{\alpha/2}(\vec{\bm{\Lambda}}, \vec{\bm{1}}-\vec{\bm{q}}), \UU_{\alpha/2}(\vec{\bm{\Lambda}}, \vec{\bm{1}}-\vec{\bm{q}})] \big\}.
\end{align}
Define further $\mathcal{I}_{\vec{\bm{q}}, \xi}^\alpha = \xi^{-1} \mathcal{I}_{\vec{\bm{q}}}^\alpha = \{\xi^{-1}\vec{\bm{\Lambda}}: \vec{\bm{\Lambda}} \in \mathcal{I}_{\vec{\bm{q}}}^\alpha\}$ for $\xi \ge 1$.

\begin{theorem}\label{thm:U_quant_pred_int}
    Assume that Condition \ref{cond:logistic_Z_est} holds and the true overlap-weighted average treatment effect is zero, i.e., $\tau_{\ow} = 0$. 
    Let $\alpha \in (0,1)$ and $\xi >1$ be any fixed constants. 
    Then, for any $\vec{\bm{q}} = (q_1, q_0)\in [0,1]^2$,  $\mathcal{I}_{\vec{\bm{q}}, \xi}^\alpha$ is an asymptotic $1-\alpha$ prediction set for $(\OR^\treat_{\m[q_1]}, \OR^\control_{\m[q_0]})$. Moreover, it is simultaneously valid across all quantiles, in the sense that 
    \begin{align*}    
        \liminf_{n \rightarrow \infty} \Pr\big\{ (\OR^\treat_{\m[q_1]}, \OR^\control_{\m[q_0]}) \in \mathcal{I}_{\vec{\bm{q}}, \xi}^\alpha \text{ for all } \vec{\bm{q}} \in [0,1]^2 \big\} \ge 1-\alpha. 
    \end{align*}
\end{theorem}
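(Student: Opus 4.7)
The plan is to prove the stronger simultaneous statement (the pointwise prediction claim then follows automatically as a special case) by identifying a single ``oracle'' augmented percentile bootstrap confidence interval for $\tau_{\ow}$ that is asymptotically contained, simultaneously over all $\vec{\bm{q}} \in [0,1]^2$, in the sensitivity interval $V_{\vec{\bm{q}}} \equiv [\LL_{\alpha/2}(\xi \vec{\bm{\Lambda}}_{\vec{\bm{q}}}^{\tp}, \vec{\bm{1}}-\vec{\bm{q}}), \UU_{\alpha/2}(\xi \vec{\bm{\Lambda}}_{\vec{\bm{q}}}^{\tp}, \vec{\bm{1}}-\vec{\bm{q}})]$, where $\vec{\bm{\Lambda}}_{\vec{\bm{q}}}^{\tp} \equiv (\OR_{\m[q_1]}^{\treat}, \OR_{\m[q_0]}^{\control})$. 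Under $\tau_{\ow}=0$, coverage of zero by the oracle interval then forces coverage of zero by every $V_{\vec{\bm{q}}}$ at once, which by the definition \eqref{eq:I_q_vec_alpha} is exactly the event $(\OR_{\m[q_1]}^{\treat}, \OR_{\m[q_0]}^{\control}) \in \mathcal{I}_{\vec{\bm{q}}, \xi}^{\alpha}$ for all $\vec{\bm{q}}$.

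Concretely, I would take the oracle solution $(\hat{{\bmtheta}}^{\mathrm{ora}}_{\bstrap}, \hat{{\bmrho}}^{\mathrm{ora}}_{\bstrap})$ of $\sum_{i=1}^n k_{\bstrap i}\, p_{{\bmtheta}}(\bmO_i,\, W_i^{\tp} + R_i^{\tp}\, {\bmrho}) = \bm{0}$ with $R_i^{\tp} = g(\bmX_i)^{\top}$. By Assumption \ref{asmp:valid_bootstrap_z_est}, Theorem \ref{thm:valid_bootstrap_R_g}, and the Delta method, $[Q_{\alpha/2}(\hat\tau^{\mathrm{ora}}_{\bstrap}), Q_{1-\alpha/2}(\hat\tau^{\mathrm{ora}}_{\bstrap})]$ is asymptotically $1-\alpha$ valid for $\tau_{\ow}$. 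To establish containment in every $V_{\vec{\bm{q}}}$, the key observation is that on a bootstrap set of conditional probability $1 - \gamma_n$ with $\gamma_n \to 0$ one has $\max_i |R_i^{\tp}\, \hat{{\bmrho}}^{\mathrm{ora}}_{\bstrap}| \le \log \xi$; this follows from $\|\hat{{\bmrho}}^{\mathrm{ora}}_{\bstrap}\| = O_p(n^{-1/2})$ in bootstrap probability together with the Markov bound $\max_i \|R_i^{\tp}\| = o_p(n^{1/2})$ granted by the $(2+\nu)$-moment hypothesis on $g(\bmX)$ in Condition \ref{cond:logistic_Z_est}(i). On this good event, the perturbed $W_i = W_i^{\tp} + R_i^{\tp}\, \hat{{\bmrho}}^{\mathrm{ora}}_{\bstrap}$ still satisfies $|W_i| \le \log(\xi \Lambda^{\tp}_{\vec{\bm{q}}, z})$ on the same $\ge \lceil q_z n_z\rceil$ indices $i$ in group $z$ on which $|W_i^{\tp}| \le \log \Lambda^{\tp}_{\vec{\bm{q}}, z}$ already holds, so $(W_1, \ldots, W_n) \in \mathcal{W}_n(\xi \vec{\bm{\Lambda}}_{\vec{\bm{q}}}^{\tp}, \vec{\bm{1}}-\vec{\bm{q}})$ \emph{simultaneously} in $\vec{\bm{q}}$. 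Since $(\hat{{\bmtheta}}^{\mathrm{ora}}_{\bstrap}, W)$ satisfies the bootstrapped estimating equations by construction, it is a feasible point of \eqref{eq:tau_boot_epsilon_pred_set} for every $\vec{\bm{q}}$, yielding the pathwise sandwich $\hat\tau^{\min}_{\bstrap}(\xi \vec{\bm{\Lambda}}_{\vec{\bm{q}}}^{\tp}, \vec{\bm{1}}-\vec{\bm{q}}) \le \hat\tau^{\mathrm{ora}}_{\bstrap} \le \hat\tau^{\max}_{\bstrap}(\xi \vec{\bm{\Lambda}}_{\vec{\bm{q}}}^{\tp}, \vec{\bm{1}}-\vec{\bm{q}})$ uniformly in $\vec{\bm{q}}$.

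The main obstacle is the passage from this pathwise sandwich, which holds only on a bootstrap event of conditional probability $1 - \gamma_n$, to an actual inclusion of percentile bootstrap \emph{quantiles}. A direct computation only yields $\LL_{\alpha/2}(\xi \vec{\bm{\Lambda}}_{\vec{\bm{q}}}^{\tp}, \vec{\bm{1}}-\vec{\bm{q}}) \le Q_{\alpha/2 + \gamma_n}(\hat\tau^{\mathrm{ora}}_{\bstrap})$ and symmetrically $\UU_{\alpha/2}(\xi \vec{\bm{\Lambda}}_{\vec{\bm{q}}}^{\tp}, \vec{\bm{1}}-\vec{\bm{q}}) \ge Q_{1 - \alpha/2 - \gamma_n}(\hat\tau^{\mathrm{ora}}_{\bstrap})$, exposing a $\gamma_n$-slippage in the quantile index that must be absorbed uniformly in $\vec{\bm{q}}$. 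This absorption rests on continuity of the Gaussian limit cdf of $\sqrt{n}(\hat\tau^{\mathrm{ora}}_{\bstrap} - \hat\tau^{\mathrm{ora}})$ with positive limiting variance, granted by Assumption \ref{asmp:valid_bootstrap_z_est} together with Condition \ref{cond:logistic_Z_est}(ii)--(iii); it gives $Q_{\alpha/2 + \gamma_n}(\hat\tau^{\mathrm{ora}}_{\bstrap}) - Q_{\alpha/2}(\hat\tau^{\mathrm{ora}}_{\bstrap}) = o_p(1)$, and the combined slack is negligible against the asymptotic $1-\alpha$ coverage of the oracle interval, closing the argument.
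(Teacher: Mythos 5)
Your proposal is correct and takes essentially the same route as the paper's own proof: the paper likewise anchors on the oracle augmented bootstrap Z-estimator $(\hat{{\bmtheta}}_{\bstrap}, \hat{{\bmrho}}_{\bstrap})$, uses its feasibility for the $\vec{\bm{q}}$-indexed optimizations after absorbing the perturbation $\bm{R}_i^\tp \hat{{\bmrho}}_{\bstrap}$ into the $\log(\xi)$-inflated set $\mathcal{W}_n\big(\xi(\OR^\treat_{\m[q_1]}, \OR^\control_{\m[q_0]}), \vec{\bm{1}}-\vec{\bm{q}}\big)$ (via the counting argument of Lemma \ref{lemma:quantile_compare_gen}) to obtain the simultaneous-in-$\vec{\bm{q}}$ sandwich, and then handles exactly your quantile slippage (your $\gamma_n$ is the paper's $\Omega$) before invoking standard bootstrap validity under $\tau_\ow = 0$. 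One minor caution: the statement $Q_{\alpha/2+\gamma_n}(\hat{\tau}_{\bstrap}) - Q_{\alpha/2}(\hat{\tau}_{\bstrap}) = o_{\Pr}(1)$ is by itself too weak, since both quantiles converge to $\tau_\ow$ anyway; the clean formalization, which the paper uses, is the bracketing bound $\Pr\{0 \notin [Q_{\alpha/2+\gamma_n}, Q_{1-\alpha/2-\gamma_n}]\} \le \Pr\{0 \notin [Q_{\alpha/2+\eta}, Q_{1-\alpha/2-\eta}]\} + \Pr(\gamma_n > \eta)$ followed by letting $\eta \downarrow 0$.
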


From Theorem \ref{thm:U_quant_pred_int}, under the assumption of zero average treatment effect, 
we can construct lower prediction bounds for quantiles of unmeasured confounding strength within the sample. Importantly, these prediction bounds are simultaneously valid, which greatly enhances the causal evidence by requiring more constraints on the strength of unmeasured confounding. 
If any of these prediction bounds are not likely to hold, we can then conclude that the average treatment effect is likely nonzero, and such a conclusion is robust to measured confounding. 
However, computing all the prediction bounds in Theorem \ref{thm:U_quant_pred_int} can be computationally demanding, as it requires conducting the sensitivity analysis in Theorem \ref{thm:ci_tau_pred} over all $(\vec{\bm{\Lambda}}, \vec{\bm{\delta}})$ specified by the robust marginal sensitivity model in Assumption \ref{asmp:robust_sen_spec}. 
In addition, the resulting prediction sets are not easy to visualize. 
In the following, we propose to focus on constructing prediction sets for 
$\overline{\OR}_{\m [q]} \equiv \max\{\OR^\treat_{\m[q]}, \OR^\control_{\m[q]}\}$,  for $q\in [0,1]$. 
The resulting prediction sets essentially form a superset of the simultaneous prediction sets in Theorem \ref{thm:U_quant_pred_int}, but require less computation cost and are easier to visualize. 

For any $\alpha\in (0,1)$ and $q\in [0,1]$, 
define 
\begin{align*}
        \mathcal{J}_{q}^\alpha \equiv \big\{ \Lambda \in [1, \infty]: 
        0 \in [\LL_{\alpha/2}( \Lambda \vec{\bm{1}}, (1-q)\vec{\bm{1}}), \UU_{\alpha/2}(\Lambda \vec{\bm{1}}, (1-q)\vec{\bm{1}})] \big\}, 
\end{align*}
which must be an interval of form $(c, \infty]$ or $[c, \infty]$. 
Define further $\mathcal{J}_{q, \xi}^\alpha = \xi^{-1} \mathcal{J}_{q}^\alpha$. 

\begin{corollary}\label{cor:U_quant_pred_int_max_tr_co}
    Assume that Condition \ref{cond:logistic_Z_est} holds and the true overlap-weighted average treatment effect is zero, i.e., $\tau_{\ow} = 0$. 
    Let $\alpha \in (0,1)$ and $\xi >1$ be any fixed constants. 
    Then, for any $q \in [0,1]$,  $\mathcal{J}_{q, \xi}^\alpha$ is an asymptotic $1-\alpha$ prediction interval for $\overline{\OR}_{\m [q]} \equiv \max\{\OR^\treat_{\m[q]}, \OR^\control_{\m[q]}\}$. Moreover, it is simultaneously valid across all quantiles, in the sense that 
    $
        \liminf_{n \rightarrow \infty} \Pr\{ \overline{\OR}_{\m [q]}\in \mathcal{J}_{q, \xi}^\alpha \text{ for all } q \in [0,1]\} \ge 1-\alpha. 
    $
\end{corollary}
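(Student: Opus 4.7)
\textbf{Proof proposal for Corollary \ref{cor:U_quant_pred_int_max_tr_co}.}
The plan is to deduce this corollary from Theorem \ref{thm:U_quant_pred_int} by specializing to $\vec{\bm{q}} = (q,q)$ and $\vec{\bm{\Lambda}} = \Lambda \vec{\bm{1}}$, and then using a monotonicity property of the acceptance region $\mathcal{I}_{\vec{\bm{q}}}^\alpha$ to pass from the joint containment of $(\OR^\treat_{\m[q]}, \OR^\control_{\m[q]})$ to a containment for their maximum. Concretely, observe that $\Lambda \in \mathcal{J}_q^\alpha$ is by construction equivalent to $(\Lambda,\Lambda) \in \mathcal{I}_{(q,q)}^\alpha$, so that $\mathcal{J}_{q,\xi}^\alpha$ is just the ``diagonal slice'' of $\mathcal{I}_{(q,q),\xi}^\alpha$.

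The key monotonicity claim that I would establish first is the following: for each fixed $\vec{\bm{\delta}}$, the set $\{\vec{\bm{\Lambda}} \in [1,\infty]^2 : 0 \in [\LL_{\alpha/2}(\vec{\bm{\Lambda}}, \vec{\bm{\delta}}), \UU_{\alpha/2}(\vec{\bm{\Lambda}}, \vec{\bm{\delta}})]\}$ is upward closed, i.e.\ if $(\Lambda_1, \Lambda_0)$ lies in it and $\Lambda_z' \geq \Lambda_z$ for $z = 0,1$, then $(\Lambda_1', \Lambda_0')$ also lies in it. This follows because in the optimization \eqref{eq:tau_boot_epsilon_pred_set} the feasible set $\mathcal{W}_n(\vec{\bm{\Lambda}}, \vec{\bm{\delta}})$ defined in \eqref{eq:Lambda_delta_w_pred_set} is coordinatewise monotone increasing in $\vec{\bm{\Lambda}}$: enlarging $\Lambda_z$ relaxes the constraint $|W_i| \leq \log(\Lambda_z)$ for all treated (or control) units. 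Consequently, for every bootstrap draw the range $[\hat\tau_\bstrap^{\min}(\vec{\bm{\Lambda}}, \vec{\bm{\delta}}), \hat\tau_\bstrap^{\max}(\vec{\bm{\Lambda}}, \vec{\bm{\delta}})]$ is monotone increasing in $\vec{\bm{\Lambda}}$ under set inclusion, which forces the quantiles to satisfy $\LL_{\alpha/2}(\vec{\bm{\Lambda}}', \vec{\bm{\delta}}) \leq \LL_{\alpha/2}(\vec{\bm{\Lambda}}, \vec{\bm{\delta}})$ and $\UU_{\alpha/2}(\vec{\bm{\Lambda}}', \vec{\bm{\delta}}) \geq \UU_{\alpha/2}(\vec{\bm{\Lambda}}, \vec{\bm{\delta}})$. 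Hence the bootstrap interval at $\vec{\bm{\Lambda}}'$ is wider and still contains zero; scaling by $\xi^{-1}$ preserves this monotonicity, so $\mathcal{I}_{\vec{\bm{q}},\xi}^\alpha$ is upward closed as well.

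With this in hand, apply Theorem \ref{thm:U_quant_pred_int} to the simultaneous event $\mathcal{E} \equiv \{(\OR^\treat_{\m[q]}, \OR^\control_{\m[q]}) \in \mathcal{I}_{(q,q),\xi}^\alpha \text{ for all } q \in [0,1]\}$, which is a subset of the simultaneous event over $\vec{\bm{q}} \in [0,1]^2$ and therefore has asymptotic probability at least $1-\alpha$. On $\mathcal{E}$, for each $q$ the point $(\overline{\OR}_{\m[q]}, \overline{\OR}_{\m[q]})$ coordinatewise dominates $(\OR^\treat_{\m[q]}, \OR^\control_{\m[q]})$, so by upward closure it also belongs to $\mathcal{I}_{(q,q),\xi}^\alpha$; by the diagonal-slice identification above, this is equivalent to $\overline{\OR}_{\m[q]} \in \mathcal{J}_{q,\xi}^\alpha$. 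Taking the liminf yields the simultaneous $1-\alpha$ validity; the marginal statement for a single $q$ is the obvious specialization.

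The main obstacle I anticipate is making the monotonicity argument fully rigorous at the bootstrap-quantile level, because it requires verifying that enlarging the feasible region cannot decrease $\hat\tau_\bstrap^{\max}$ or increase $\hat\tau_\bstrap^{\min}$ \emph{sample-path by sample-path} (not just in distribution) so that the pointwise monotonicity transfers cleanly to the quantiles $\LL_{\alpha/2}$ and $\UU_{\alpha/2}$. Once this pathwise monotonicity is in place, the rest of the argument is a short deduction from Theorem \ref{thm:U_quant_pred_int}.
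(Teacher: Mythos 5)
Your proposal is correct and follows essentially the same route as the paper's proof: both specialize Theorem \ref{thm:U_quant_pred_int} to the diagonal $\vec{\bm{q}}=(q,q)$, identify $\mathcal{J}_{q,\xi}^\alpha$ as the diagonal slice of $\mathcal{I}_{(q,q),\xi}^\alpha$, and use that $\LL_{\alpha/2}(\cdot,\vec{\bm{\delta}})$ is nonincreasing and $\UU_{\alpha/2}(\cdot,\vec{\bm{\delta}})$ nondecreasing in $\vec{\bm{\Lambda}}$ to replace $\big(\OR^\treat_{\m[q]}, \OR^\control_{\m[q]}\big)$ by $\overline{\OR}_{\m[q]}\vec{\bm{1}}$. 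The pathwise monotonicity you flag as the main obstacle is in fact immediate: for each fixed realization of the bootstrap weights $(k_{\bstrap 1},\ldots,k_{\bstrap n})$, the feasible sets $\mathcal{W}_n(\vec{\bm{\Lambda}},\vec{\bm{\delta}})$ in \eqref{eq:tau_boot_epsilon_pred_set} are nested in $\vec{\bm{\Lambda}}$, so $\hat{\tau}_{\bstrap}^{\max}$ and $\hat{\tau}_{\bstrap}^{\min}$ are monotone draw-by-draw and the quantile monotonicity follows at once, which is exactly what the paper invokes ``by definition.''
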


\begin{remark}\label{remark:gen_sam_pop}
    In the above Theorem \ref{thm:U_quant_pred_int} and Corollary \ref{cor:U_quant_pred_int_max_tr_co}, we focus on inference for the strength of unmeasured confounding within the sample. We can generalize the inference to population quantiles of unmeasured confounding strength, using a similar strategy from \citet{chen2024enhanced}; see also \citet{chen2024role}. 
    Specifically, we can first construct (simultaneous) ``confidence'' sets for quantiles of unmeasured confounding strength within the treated (or control) population using the sample quantiles among treated (or control) units; these intervals are not calculable since the sample quantiles of unmeasured confounding strength are unobserved.
We then use Theorem \ref{thm:U_quant_pred_int} and Corollary \ref{cor:U_quant_pred_int_max_tr_co} to construct simultaneous prediction sets for the sample quantiles of unmeasured confounding strength, which can then be used to bound the population quantiles. 
In practice, we suggest reporting directly the simultaneous prediction bounds in Corollary \ref{cor:U_quant_pred_int_max_tr_co} for the sample quantiles of unmeasured confounding strength, which is more convenient and involves less choice of tuning parameters. 
Also, the subtle difference between sample and population quantiles will become negligible as the sample size $n \to \infty$.

\end{remark}

\subsection{Numerical illustration}\label{sec:ill_simultaneous}

We now illustrate the simultaneous prediction intervals from Corollary \ref{cor:U_quant_pred_int_max_tr_co} and their advantage using the application in Section \ref{sec:illu}. 
Note that the theoretical validity in Corollary \ref{cor:U_quant_pred_int_max_tr_co} requires a $\xi$ value greater than 1, no matter  how small the gap between $\xi$ and $1$ is. 
We set the $\xi$ parameter to be $1$ for convenience, since the prediction bounds are robust to small variations in $\xi$. 
Figure \ref{fig:strength_unmeaured_confounding} shows the 
lower prediction bounds for $\overline{\OR}_{\m [q]}$s; note that the plot is truncated at $q=80\%$, since the lower prediction bounds for smaller quantiles are the uninformative $1$. 
From Figure \ref{fig:strength_unmeaured_confounding}, if fish consumption has no causal effect on the blood mercury level, then, for example, 
some treated or control units must have unmeasured confounding strength of at least 6.70 (i.e., the unmeasured confounding changes the odds of treatment probability by a factor of at least about 6.70), 
$5\%$ of treated or control units must have unmeasured confounding strength of at least 3.27, 
and $10\%$ of treated or control units must have unmeasured confounding strength of at least 2.09.
Importantly, these statements will hold simultaneously with probability at least $95\%$, if fish consumption indeed has no causal effect. Therefore, we can detect significant causal effect if any of them is unlikely to be true. 
Figure \ref{fig:strength_unmeaured_confounding} suggests that the causal effect of fish consumption is robust to unmeasured confounding.

\begin{figure}[htbp]
    \centering\includegraphics[width=0.7\textwidth]{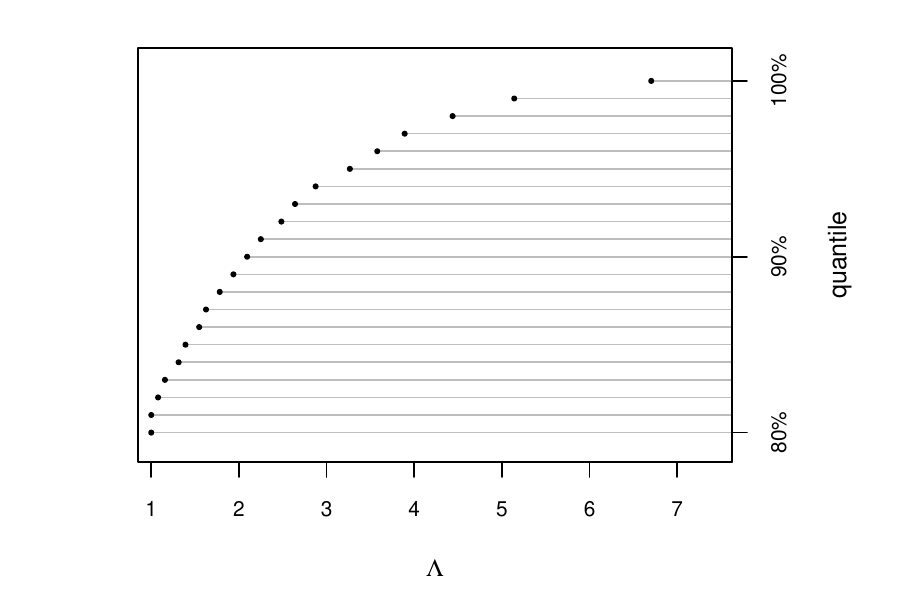}
    \caption{Lower prediction bounds for $\overline{\OR}_{\m [q]}$s across various quantiles of unmeasured confounding strength, under the assumption of a zero overlap-weighted average treatment effect of the fish consumption on the blood mercury level.}
    \label{fig:strength_unmeaured_confounding}
\end{figure}

\section{Additional details for the numerical optimization}

\subsection{Glover's linearization for Mixed-integer linear programming}\label{sec:glover_milp_appdix}
In this subsection, we linearize the constraints of $\bar{\Delta}_i\in \{0, t_{Z_i}\}$ using the \citet{GloverImproved:1975}’s linearization, so that the optimization in \eqref{eq:MILP_sep} becomes a standard mixed-integer linear programming problem. 
Specifically, for a sufficiently large $M$, the optimized objective from \eqref{eq:MILP_sep} is equivalent to that from the following one, optimized over $\bar{\omega}_{i}$s, $\bar{\Delta}_{i}$s, $t_1$, $t_0$ and $\tilde{\Delta}_i$s:
\begin{align}\label{eq:MILP_glover}
\min 
 \text{ or } \max \quad & \sum_{i=1}^{n_1} \bar{\omega}_{i} Y_i-\sum_{i=n_1+1}^n \bar{\omega}_{i} Y_i \\
\text{subject to} \quad 	&
a_i^{\low} \bar{\Delta}_{i} \leq \bar{\omega}_{i} \leq t_{Z_i}-(1-a_i^{\up})\bar{\Delta}_{i}   \text{ for } 1\le i\le n,
\nonumber
\\
	&\sum_{i=1}^{n_1} \bar{\Delta}_{i}\geq  
    \lceil n_1 (1-\delta_1)\rceil t_1, \quad 
    \sum_{i=n_1+1}^n \bar{\Delta}_{i}\geq  
    \lceil n_0 (1-\delta_0)\rceil t_0,
 \nonumber
 \\
 & 0 \le \bar{\Delta}_i\le M \tilde{\Delta}_i, \quad t_{Z_i}-M(1-\tilde{\Delta}_i)\le\bar{\Delta}_i\le t_{Z_i} \text{ for } 1\le i \le n,\nonumber
 \\
 &\tilde{\Delta}_i \in\{0, 1\} \text{ for } 1\le i \le n, 
 \nonumber
 \\
 & \sum_{i=1}^{n_1} \bar{\omega}_{i} g(\bmX_i) = \sum_{i=n_1+1}^n \bar{\omega}_{i} g(\bmX_i), 
 \nonumber
 \\
& \sum_{i=1}^{n_1} \bar{\omega}_{i}=1, \quad \sum_{i=n_1+1}^n \bar{\omega}_{i}=1,
\nonumber
\quad t_1\ge 0, \quad t_0\ge 0,
\nonumber
\end{align}
where $a_i^{\low}$s and $a_i^{\up}$s are defined as in \eqref{eq:a_i_low_up}.
Note that $\tilde{\Delta}_i$ is restricted to be either 0 or 1 for $1\le i \le n$, and consequently $\bar{\Delta}_i$ is restricted to be either 0 or $t_{Z_i}$. This explains the equivalence between \eqref{eq:MILP_sep} and \eqref{eq:MILP_glover}. 

\subsection{Optimization for the bootstrap confidence intervals}\label{sec:boot_milp_appdix}

In this subsection, we provide the details of the optimization used to construct the bootstrap confidence intervals in Section \ref{sec:opt_per_boot}.

We first write \eqref{eq:double_frac_program_bootstrap_2} in a form analogous to \eqref{eq:frac_double}. 
Specifically, analogous to Section \ref{subsec:double_frac_program}, we can assume, without loss of generality, that $Z_1=\cdots = Z_{n_1} = 1$ and $Z_{n_1+1} = \cdots = Z_{n} = 0$, and 
equivalently write the optimization in \eqref{eq:double_frac_program_bootstrap_2} as 
the following one
over $w_{\bstrap i}$s and $\Delta_i$s: 
\begin{align}\label{eq:boot_frac_double}
    \min \text{ or } \max \quad &\dfrac{\sum_{i=1}^{n_1}k_{\bstrap i}w_{\bstrap i}Y_i}{\sum_{i=1}^{n_1}k_{\bstrap i}w_{\bstrap i}}-\dfrac{\sum_{i=n_1+1}^nk_{\bstrap i}w_{\bstrap i}Y_i}{\sum_{i=n_1+1}^nk_{\bstrap i}w_{\bstrap i}}\\
    \text{subject to} \quad & w_{\bstrap i} \in [w_{\bstrap i}^{\low}, w_{\bstrap i}^{\up}]\equiv[a_{\bstrap i}^{\low}\Delta_i, 1-(1-a_{\bstrap i}^{\up})\Delta_i]
    \nonumber
    \\&\sum_{i=1}^{n_1} \Delta_i\geq \lceil n_1(1-\delta_1)\rceil, \quad \sum_{i=n_1+1}^{n} \Delta_i\geq \lceil n_0(1-\delta_0)\rceil,
    \nonumber
    \\
    &\Delta_i = 0 \text{ or } 1 \text{ for all } i, 
    \nonumber
    \\
    & \dfrac{\sum_{i=1}^{n_1}k_{\bstrap i}w_{\bstrap i}g(\bmX_i)}{\sum_{i=1}^{n_1}k_{\bstrap i}w_{\bstrap i}}=\dfrac{\sum_{i=n_1+1}^nk_{\bstrap i}w_{\bstrap i}g(\bmX_i)}{\sum_{i=n_1+1}^nk_{\bstrap i}w_{\bstrap i}}.\nonumber
\end{align}
In \eqref{eq:boot_frac_double}, $\Delta_i$ indicates whether $|\psi_i|$ is bounded by $\log(\Lambda_{Z_i})$, $w_{\bstrap i}$ represents $1-\hat{e}_{\bstrap i}^{(\psi_i)}$ for treated unit $1\le i \le n_1$ and $\hat{e}_{\bstrap i}^{(\psi_i)}$ for control unit $n_1<i\le n$, 
and $a_{\bstrap i}^{\low}$ and $a_{\bstrap i}^{\up}$ are the lower and upper bounds of $w_{\bstrap i}$ when $\Delta_i=1$. 
Let 
$\underline{e}_{\bstrap i} = \logit^{-1}[-\log(\Lambda_{Z_i}) + \logit\{e(\bmX_i; \hat{\bm{\beta}}_{\bstrap})\}]$ and $\overline{e}_{\bstrap i} = \logit^{-1}[ \log(\Lambda_{Z_i}) + \logit\{e_{b}(\bmX_i; \hat{\bm{\beta}}_{\bstrap})\} ]$. 
We then have 
$
[a_{\bstrap i}^{\low}, a_{\bstrap i}^{\up}] = [ 1 - \overline{e}_{\bstrap i}, 1 - \underline{e}_{\bstrap i}]
$
for $1\le i \le n_1$
and 
$
[a_{\bstrap i}^{\low}, a_{\bstrap i}^{\up}] = [ \underline{e}_{\bstrap i}, \overline{e}_{\bstrap i}]
$
for $n_1<i\le n$. 
The optimization in \eqref{eq:boot_frac_double} can be viewed as a weighted version of that in \eqref{eq:frac_double}.

We then transform \eqref{eq:boot_frac_double} into a (mixed-integer) linear programming problem. 
Similar to Section \ref{sec:milp}, we apply the Charnes-Cooper transformation, and define
\begin{alignat*}{5}
   \bar{\omega}_{\bstrap i}&=\dfrac{k_{\bstrap i}\omega_{\bstrap i}}{\sum_{j:Z_j=Z_i}k_{\bstrap j}\omega_j},
    &\quad
    \bar{\Delta}_{\bstrap i}=\frac{\Delta_i}{\sum_{j:Z_j=Z_i}k_{\bstrap j}\omega_j},
    \ \ \ 
    \text{ and }
    \ \ \ 
    t_{\bstrap z}&=\dfrac{1}{\sum_{j:Z_j=z}k_{\bstrap j}\omega_j} \ \text{ for }
    z=0,1.
\end{alignat*}
Then 
\eqref{eq:boot_frac_double} becomes equivalent to the following 
optimization over  $\bar{\omega}_{\bstrap i}$s, $\bar{\Delta}_{\bstrap i}$s, $t_{\bstrap 1}$ and $t_{\bstrap 0}$:
\begin{align}\label{eq:boot_MILP_sep}
\min 
 \text{ or } \max \quad & \sum_{i=1}^{n_1} \bar{\omega}_{\bstrap i} Y_i-\sum_{i=n_1+1}^n \bar{\omega}_{\bstrap i} Y_i \\
\text{subject to} \quad 	&
k_{\bstrap i}a_{\bstrap i}^{\low} \bar{\Delta}_{i} \leq \bar{\omega}_{\bstrap i} \leq k_{\bstrap i}t_{\bstrap Z_i}-k_{\bstrap i}(1-a_{\bstrap i}^{\up})\bar{\Delta}_{\bstrap i}   \text{ for } 1\le i\le n,
\nonumber
\\
	&\sum_{i=1}^{n_1} \bar{\Delta}_{\bstrap i}\geq  
    \lceil n_1 (1-\delta_1)\rceil t_{\bstrap 1}, \quad 
    \sum_{i=n_1+1}^n \bar{\Delta}_{\bstrap i}\geq  
    \lceil n_0 (1-\delta_0)\rceil t_{\bstrap 0},
 \nonumber
 \\
 &\bar{\Delta}_{\bstrap i}\in\{0, t_{\bstrap Z_i}\} \text{ for } 1\le i \le n, 
 \nonumber
 \\
 & \sum_{i=1}^{n_1} \bar{\omega}_{\bstrap i} g(\bmX_i) = \sum_{i=n_1+1}^n \bar{\omega}_{\bstrap i} g(\bmX_i), 
 \nonumber
 \\
& \sum_{i=1}^{n_1} \bar{\omega}_{\bstrap i}=1, \quad \sum_{i=n_1+1}^n \bar{\omega}_{\bstrap i}=1,
\nonumber
\quad t_{\bstrap 1}\ge 0, \quad t_{\bstrap 0}\ge 0.
\nonumber
\end{align}
If we relax the constraints on $\bar{\Delta}_{\bstrap i}$s and instead require that $0\le \bar{\Delta}_{\bstrap i}\le t_{\bstrap Z_i}$ for all $1\le i \le n$, 
then the optimization in \eqref{eq:boot_MILP_sep} reduces to a linear programming problem. 
Importantly, the relaxation will still lead to valid bootstrap confidence intervals, often with little loss in conservativeness. 
Alternatively, we can linearize the constraints on $\bar{\Delta}_{\bstrap i}$s
using the \citet{GloverImproved:1975}’s linearization, as discussed in Section \ref{sec:glover_milp_appdix}, 
and then transform the optimization in \eqref{eq:boot_MILP_sep} to a mixed-integer linear programming problem.

\section{Robust marginal sensitivity analysis with bounds on the strength of unmeasured confounding across the whole population}\label{sec:whole}

\subsection{Robust marginal sensitivity model for the whole population}

In Assumptions \ref{asmp:robust_marg_sen} and \ref{asmp:robust_sen_spec}, we impose bounds on the strength of unmeasured confounding within the treated and control populations separately. 
In this section, we consider the robust marginal sensitivity model that imposes bounds on the strength of unmeasured confounding across the overall population, as discussed after Assumption \ref{asmp:robust_marg_sen} and in Section \ref{sec:diss}.

\begin{assumption}[Robust marginal sensitivity model for the whole population]\label{asmp:robust_marg_sen_whole} 
Recall the true propensity score $e_{\tp}(\bm{x},\bm{u})$ and the observable propensity score $e_{\tp}(\bm{x})$. 
For given $\Lambda \geq 1$ and $\delta \in [0,1]$, we have 
\begin{align}\label{eq:robust_sen_whole}
    \Pr \Big( \Lambda^{-1}\leq \dfrac{e_{\tp}(\bmX,\bmU)/\{1-e_{\tp}(\bmX,\bmU)\}}{e_{\tp}(\bmX)/\{1-e_{\tp}(\bmX)\}}\leq\Lambda \Big) & \geq 1-\delta.
\end{align}
\end{assumption}

Similar to Section \ref{sec:ci_for_overlap}, we also consider extension of Assumption \ref{asmp:robust_marg_sen_whole} to accommodate model misspecification.

\begin{assumption}[Robust marginal sensitivity model under specification for the whole population]\label{asmp:robust_sen_spec_whole}
Recall the true propensity score $e_{\tp}(\bm{x},\bm{u})$, and  define $e_{\m}(\bm{x})$ as an estimable propensity score based on certain model assumptions. 
For given parameters $\Lambda \geq 1$ and $0\leq\delta\leq 1$, we have 
\begin{align}\label{eq:para_robust_sen_whole}
	\Pr\left( \Lambda^{-1}\leq \dfrac{e_{\tp}(\bmX, \bmU)/\{1-e_{\tp}(\bmX, \bmU)\}}{e_\m(\bmX)/\{1-e_\m(\bmX)\}}\leq\Lambda \right) \geq 1-\delta. 
\end{align}	
\end{assumption}

In the following, we focus on constructing bootstrap confidence bounds for the average treatment effect under any given robust marginal sensitivity model in Assumption \ref{asmp:robust_sen_spec_whole}. 
The estimation of the bounds as in Section \ref{sec:point_sen_ana} under Assumption \ref{asmp:robust_marg_sen_whole} can be viewed as a special case of the optimization for the bootstrap sample with $(k_{\bstrap 1}, k_{\bstrap 2}, \ldots, k_{\bstrap n})=(1,1, \ldots, 1)$, and is thus omitted for conciseness. 
Similar to Section \ref{eq:multiple_quantile}, we also study simultaneous sensitivity analyses for the strength of unmeasured confounding across the whole population.

\subsection{Bootstrap inference for the overlap-weighted average treatment effect}

We use the same formulation as in Section \ref{sec:formu_z_est} and apply the method developed in Section \ref{sec:boot_over_iden} to construct bootstrap confidence bounds for the overlap-weighted average treatment effect $\tau_\ow$. 
Throughout this subsection, we assume Assumption \ref{asmp:robust_sen_spec_whole} for some given $\Lambda \ge 1$ and $\delta \in [0,1]$. 

We first construct prediction sets for $\bmW_{1:n}^\tp \equiv (W_1^\tp, \ldots, W_n^\tp) = (\psi_\m(\bmX_1, \bmU_1), \ldots, \psi_\m(\bmX_n, \bmU_n))$. 
From Assumption \ref{asmp:robust_sen_spec_whole} and by the definition of $\psi_\m(\cdot)$, we know that 
$\Pr\{ |W^\tp| \le \log(\Lambda) \} = \Pr\{ |\psi_\m(\bmX, \bmU)| \le \log(\Lambda) \}
\ge 1 - \delta$. 
This then motivates us to consider the following form of prediction sets for $W_i^\tp$s:   
for $\Lambda' \ge 1$ and $\delta' \in [0,1]$, 
\begin{align}\label{eq:Lambda_delta_w_pred_set_whole}
    \mathcal{W}_n(\Lambda', \delta') & = \Big\{ (W_1, \ldots, W_n) \in \mathbb{R}^{n}: \      
    \frac{1}{n} \sum_{i=1}^n \I \{ |W_i| \leq \log(\Lambda') \} \geq 1-\delta' 
    \Big\}. 
\end{align}

Under Assumption \ref{asmp:robust_sen_spec_whole} and letting $n\delta'=(1-\zeta)$th quantile of $\Bin(n,\delta)$,
we can verify that $\mathcal{W}_n(\Lambda, \delta')$ is a prediction set for $\bmW_{1:n}^\tp$ with coverage probability at least $1-\zeta$. 
In addition, for any $\varepsilon > 0$, the $\varepsilon$-neighborhood of $\mathcal{W}_n(\Lambda, \delta')$ is equivalently $\mathcal{W}_n(\Lambda', \delta')$ with $\log(\Lambda') \equiv \log(\Lambda) + \varepsilon$.

We then consider the bootstrap confidence bounds from \eqref{eq:bootstrap_epsilon_pred_set}. 
We consider the following optimization for bootstrapped samples:
for any $\Lambda'\ge1$ and $\delta'\in[0,1]$,
\begin{align}\label{eq:tau_boot_epsilon_pred_set_whole}
    \hat{\tau}_{\bstrap}^{\max} (\Lambda', \delta') = \ & \sup \tau 
    \quad 
    \textsc{or}
    \quad 
    \hat{\tau}_{\bstrap}^{\min} (\Lambda', \delta') = \inf \tau
    \\
    \text{subject to} \quad & 
    \sum_{i=1}^n k_{\bstrap i} p_{\bmtheta}(\bmO_{i}, W_i) =\bm{0}, 
    \nonumber
    \\
    & (W_1, W_2, \ldots, W_n) \in \mathcal{W}_n(\Lambda', \delta'),
    \nonumber
\end{align}
where $p_{{\bmtheta}}(\cdot)$ takes the form in \eqref{eq:est_equ_ow} with  
${\bmtheta} = (\tau, \bm{\beta}, \kappa_{\treat}, \kappa_{\control})$,  $\mathcal{W}_n(\Lambda', \delta')$ is defined as in \eqref{eq:Lambda_delta_w_pred_set_whole}, 
and 
$(k_{\bstrap 1}, k_{\bstrap 2}, \ldots, k_{\bstrap n}) \sim \text{Multinomial} (n; n^{-1}, n^{-1}, \ldots, n^{-1})$.
Define further 
\begin{align}\label{eq:tau_boot_quantile_gen_whole}
    \LL_{\alpha}(\Lambda', \delta') \equiv Q_{\alpha} \big\{ \hat{\tau}_{\bstrap}^{\min} (\Lambda', \delta')  \big\}
    \quad 
    \text{and}
    \quad 
    \UU_{\alpha} (\Lambda', \delta') \equiv Q_{1-\alpha} \big\{ \hat{\tau}_{\bstrap}^{\max} (\Lambda', \delta') \big\}
\end{align}
to denote the $\alpha$th lower and upper quantiles of the bootstrap distributions of the minimum and maximum objective values in \eqref{eq:tau_boot_epsilon_pred_set_whole}, respectively. 

The theorem and corollary below then follow immediately from Theorem \ref{thm:bootstrap_over_identify_pred_set}, which provide confidence bounds for the overlap-weighted average treatment effect under the robust sensitivity model in Assumption \ref{asmp:robust_sen_spec_whole}.  
Recall that $\bmW_{1:n}^\tp = (W_1^\tp, W_2^\tp, \ldots, W_n^\tp)$ with $W_i^\tp$ denoting the difference between the true and estimable propensity scores in the logit scale. 

\begin{theorem}\label{thm:ci_tau_pred_whole}
    Under Assumption \ref{asmp:robust_sen_spec_whole}  for the robust marginal sensitivity model with some given  $(\Lambda, \delta)$ and Condition \ref{cond:logistic_Z_est}, 
    for any $\alpha \in (0,1)$ and $\Lambda'>\Lambda$, and any $\delta_n \in [0,1]$ that can vary with the sample size $n$, 
    \begin{align*}
        \limsup_{n\rightarrow \infty} \Pr\{ \tau_\ow > \UU_{\alpha}(\Lambda', \delta_n), \  \bmW_{1:n}^\tp \in \mathcal{W}_n(\Lambda, \delta_n) \} & \le \alpha, \\
        \limsup_{n\rightarrow \infty} \Pr\{ \tau_\ow < \LL_{\alpha}(\Lambda', \delta_n), \  \bmW_{1:n}^\tp \in \mathcal{W}_n(\Lambda, \delta_n) \} & \le \alpha,
    \end{align*}
    and consequently 
    $
        \limsup_{n\rightarrow \infty} \Pr\{ \tau_\ow \notin [\LL_{\alpha}(\Lambda', \delta_n), \UU_{\alpha}(\Lambda', \delta_n)], \  \bmW_{1:n}^\tp \in \mathcal{W}_n(\Lambda, \delta_n) \}  \le 2 \alpha,
    $
    where $\LL_{\alpha}(\Lambda', \delta_n)$ and $\UU_{\alpha}(\Lambda', \delta_n)$ are defined as in \eqref{eq:tau_boot_quantile_gen_whole}, and $\mathcal{W}_n(\Lambda, \delta_n)$ is defined as in \eqref{eq:Lambda_delta_w_pred_set_whole}. 
\end{theorem}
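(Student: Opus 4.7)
The plan is to derive Theorem \ref{thm:ci_tau_pred_whole} as a direct application of Theorem \ref{thm:bootstrap_over_identify_pred_set}, in complete parallel with the way Theorem \ref{thm:ci_tau_pred} is obtained in Section \ref{sec:ci_for_overlap}. First I would reuse the Z-estimation set-up of Section \ref{sec:formu_z_est}: set $\bmO_i=(\bmX_i,Z_i,Y_i)$, $W_i^\tp=\psi_\m(\bmX_i,\bmU_i)$, ${\bmtheta}_0=(\tau_\ow,{\bm{\beta}}_0,\kappa_{\treat 0},\kappa_{\control 0})$ defined by the over-identified equations \eqref{eq:est_equ_ow}, the scalar functional $H({\bmtheta})=\tau$ (and $H({\bmtheta})=-\tau$ for the lower bound), and auxiliary data $\bm{R}^\tp=g^\top(\bmX)$. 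Under Condition \ref{cond:logistic_Z_est}, Theorem \ref{thm:valid_bootstrap_R_g} gives Assumption \ref{asmp:valid_bootstrap_z_est}, and $H$ is linear so $\dot H({\bmtheta}_0)\bm{\Sigma}_{{\bmtheta}{\bmtheta}}\dot H({\bmtheta}_0)^\top = \bm{\Sigma}_{\tau\tau}>0$ as required.

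Next I would take $\mathcal{W}_n=\mathcal{W}_n(\Lambda,\delta_n)$ from \eqref{eq:Lambda_delta_w_pred_set_whole} and verify that its $\varepsilon$-neighborhood in the sup norm on each coordinate is exactly $\mathcal{W}_n(\Lambda',\delta_n)$ with $\log(\Lambda')=\log(\Lambda)+\varepsilon$: if $(W_1,\ldots,W_n)\in\mathcal{W}_n(\Lambda,\delta_n)$ and $|W_i'-W_i|\le\varepsilon$ for all $i$, then on the indices where $|W_i|\le\log(\Lambda)$ we have $|W_i'|\le\log(\Lambda)+\varepsilon$, preserving the threshold count; the reverse containment (that every point in the neighborhood of $\mathcal{W}_n(\Lambda,\delta_n)$ lies in $\mathcal{W}_n(\Lambda',\delta_n)$) is immediate, and for the direction needed here it suffices that $\mathcal{W}_n(\Lambda,\delta_n)^\varepsilon\subseteq\mathcal{W}_n(\Lambda',\delta_n)$. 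Given $\Lambda'>\Lambda$, I pick $\varepsilon=\log(\Lambda')-\log(\Lambda)>0$, so that the optimization \eqref{eq:tau_boot_epsilon_pred_set_whole} coincides with the generic form \eqref{eq:bootstrap_epsilon_pred_set}.

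Applying Theorem \ref{thm:bootstrap_over_identify_pred_set} with $H({\bmtheta})=\tau$ yields
\[
\limsup_{n\to\infty}\Pr\{\tau_\ow>\UU_\alpha(\Lambda',\delta_n),\,\bmW_{1:n}^\tp\in\mathcal{W}_n(\Lambda,\delta_n)\}\le\alpha,
\]
and applied with $H({\bmtheta})=-\tau$ (noting $Q_{1-\alpha}\{-\hat\tau_{\bstrap}^{\min}\}=-Q_\alpha\{\hat\tau_{\bstrap}^{\min}\}=-\LL_\alpha(\Lambda',\delta_n)$) yields the corresponding lower bound. A union bound over the two events then gives the two-sided inequality with the $2\alpha$ constant.

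The main obstacle, such as it is, is mostly notational: one has to confirm that the $\varepsilon$-neighborhood identity for $\mathcal{W}_n(\Lambda,\delta_n)$ is exactly in the form required by Theorem \ref{thm:bootstrap_over_identify_pred_set}, and that the equation $\sum_i k_{\bstrap i}p_{\bmtheta}(\bmO_i,W_i)=\bm 0$ in \eqref{eq:tau_boot_epsilon_pred_set_whole} matches the generic constraint in \eqref{eq:bootstrap_epsilon_pred_set}. Since both correspondences are transparent once the formulation of Section \ref{sec:formu_z_est} is in hand, and since allowing $\delta_n$ to vary with $n$ is permitted by Theorem \ref{thm:bootstrap_over_identify_pred_set} (which treats the prediction set $\mathcal{W}_n$ as arbitrary), no new technical machinery beyond that already developed for Theorem \ref{thm:ci_tau_pred} is needed.
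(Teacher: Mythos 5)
Your proposal is correct and follows essentially the same route as the paper, which proves Theorem \ref{thm:ci_tau_pred_whole} by invoking Theorem \ref{thm:bootstrap_over_identify_pred_set} via the same $\varepsilon$-neighborhood identification $\mathcal{W}_n^\varepsilon(\Lambda,\delta_n)=\mathcal{W}_n(\Lambda',\delta_n)$ with $\varepsilon=\log(\Lambda')-\log(\Lambda)$, exactly as in the proof of Theorem \ref{thm:ci_tau_pred}. Your additional details (verifying Assumption \ref{asmp:valid_bootstrap_z_est} via Theorem \ref{thm:valid_bootstrap_R_g}, handling the lower bound with $H({\bmtheta})=-\tau$, and the union bound for the two-sided statement) simply make explicit the steps the paper omits "for conciseness."
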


\begin{corollary}\label{cor:ci_tau_whole}
    Consider the same setting as Theorem \ref{thm:ci_tau_pred_whole}. 
    Let $n \delta_n$ be the $1-\zeta$ quantile of $\Bin(n,\delta)$. 
    Then, for any given $\Lambda'>\Lambda$, 
    $\UU_{\alpha}(\Lambda', \delta_n)$ and $\LL_{\alpha}(\Lambda', \delta_n)$ are, respectively, asymptotic $1-(\alpha+\zeta)$ upper and lower confidence bounds for $\tau_\ow$: 
    \begin{align*}
        \limsup_{n\rightarrow \infty} \Pr\{ \tau_\ow > \UU_{\alpha}(\Lambda', \delta_n) \} & \le \alpha + \zeta, 
        \qquad 
        \limsup_{n\rightarrow \infty} \Pr\{ \tau_\ow < \LL_{\alpha}(\Lambda', \delta_n) \}  \le \alpha + \zeta,
    \end{align*}
    and $[\UU_{\alpha}(\Lambda', \delta_n), \LL_{\alpha}(\Lambda', \delta_n)]$ is asymptotic $1-(2\alpha+\zeta)$ confidence intervals for $\tau_\ow$: 
    \begin{align*}
        \limsup_{n\rightarrow \infty} \Pr\{ \tau_\ow \notin [\LL_{\alpha}(\Lambda', \delta_n), \UU_{\alpha}(\Lambda', \delta_n)] \}  \le 2 \alpha + \zeta.
    \end{align*}
\end{corollary}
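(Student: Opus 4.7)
The proposal is to derive Corollary \ref{cor:ci_tau_whole} as a direct consequence of Theorem \ref{thm:ci_tau_pred_whole}, by combining its joint-event statement with a finite-sample bound on the probability that the true sample differences $\bmW_{1:n}^\tp$ fall outside the prediction set $\mathcal{W}_n(\Lambda,\delta_n)$. The one-sided confidence bounds follow immediately from a union bound, and the two-sided bound is then obtained by a second union bound across the upper and lower directions.

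First I would verify that $\mathcal{W}_n(\Lambda,\delta_n)$ is an asymptotic (in fact exact) $1-\zeta$ prediction set for $\bmW_{1:n}^\tp$. Under Assumption \ref{asmp:robust_sen_spec_whole} combined with the definition $W_i^\tp=\psi_\m(\bmX_i,\bmU_i)$, the indicators $X_i \equiv \I\{|W_i^\tp|>\log(\Lambda)\}$ are i.i.d.\ Bernoulli with success probability $p = \Pr\{|W^\tp|>\log(\Lambda)\}\le\delta$. By the definition of $\mathcal{W}_n(\Lambda,\delta_n)$ in \eqref{eq:Lambda_delta_w_pred_set_whole},
\begin{equation*}
\Pr\{\bmW_{1:n}^\tp \notin \mathcal{W}_n(\Lambda,\delta_n)\}
=\Pr\Big\{\sum_{i=1}^n X_i > n\delta_n\Big\}.
\end{equation*}
Since $\sum_{i=1}^n X_i\sim \Bin(n,p)$ is stochastically dominated by $\Bin(n,\delta)$, this probability is at most $\Pr\{\Bin(n,\delta)>n\delta_n\}\le\zeta$, by the definition of $n\delta_n$ as the $1-\zeta$ quantile of $\Bin(n,\delta)$.

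Next I would combine this prediction-set bound with the joint statement from Theorem \ref{thm:ci_tau_pred_whole}. For the upper bound, decompose
\begin{equation*}
\Pr\{\tau_\ow>\UU_\alpha(\Lambda',\delta_n)\}
\le \Pr\{\tau_\ow>\UU_\alpha(\Lambda',\delta_n),\ \bmW_{1:n}^\tp\in \mathcal{W}_n(\Lambda,\delta_n)\}
+\Pr\{\bmW_{1:n}^\tp\notin\mathcal{W}_n(\Lambda,\delta_n)\}.
\end{equation*}
Taking $\limsup_{n\to\infty}$, the first term is at most $\alpha$ by Theorem \ref{thm:ci_tau_pred_whole} and the second is at most $\zeta$ by the preceding step, giving $\alpha+\zeta$. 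The lower-bound statement is symmetric. For the two-sided conclusion, apply a union bound:
\begin{equation*}
\Pr\{\tau_\ow\notin[\LL_\alpha(\Lambda',\delta_n),\UU_\alpha(\Lambda',\delta_n)]\}
\le \Pr\{\tau_\ow>\UU_\alpha(\Lambda',\delta_n)\}+\Pr\{\tau_\ow<\LL_\alpha(\Lambda',\delta_n)\},
\end{equation*}
and take $\limsup$ to obtain $2\alpha+\zeta$. Alternatively, one can directly union-bound inside the joint event as in the last display of Theorem \ref{thm:ci_tau_pred_whole}, which already supplies the $2\alpha$ contribution from the conditional statements.

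There is no substantial obstacle; this is essentially a bookkeeping argument combining an already-proved joint coverage guarantee with a standard Binomial-quantile prediction-set construction. The only item requiring mild care is justifying that the Bernoulli parameter for $X_i$ is at most $\delta$, i.e., that Assumption \ref{asmp:robust_sen_spec_whole} translates marginally into $\Pr\{|W^\tp|\le\log(\Lambda)\}\ge 1-\delta$, which is immediate from the definition of $\psi_\m(\bmX,\bmU)$ and the monotone relation between the odds ratio and $|\psi_\m|$. All other elements are plug-and-play applications of Theorem \ref{thm:ci_tau_pred_whole} and the union bound.
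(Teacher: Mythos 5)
Your proposal is correct and follows essentially the same route as the paper: the paper's proof of Corollary \ref{cor:ci_tau_whole} likewise establishes that $\mathcal{W}_n(\Lambda,\delta_n)$ is a $1-\zeta$ prediction set by noting that $\sum_{i=1}^n \I\{|W_i^\tp|\le \log(\Lambda)\}$ stochastically dominates $\Bin(n,1-\delta)$ and invoking the Binomial-quantile definition of $n\delta_n$, and then cites Theorem \ref{thm:ci_tau_pred_whole} (whose underlying mechanism, Theorem \ref{thm:bootstrap_over_identify_pred_set}, contains exactly your decomposition into the joint event plus the prediction-set miss). One small accounting caveat: your primary two-sided argument, union-bounding the two one-sided $(\alpha+\zeta)$ bounds, only yields $2\alpha+2\zeta$, not the claimed $2\alpha+\zeta$; the ``alternative'' you mention in passing---applying the joint two-sided statement $\limsup_n \Pr\{\tau_\ow\notin[\LL_\alpha,\UU_\alpha],\,\bmW_{1:n}^\tp\in\mathcal{W}_n(\Lambda,\delta_n)\}\le 2\alpha$ from Theorem \ref{thm:ci_tau_pred_whole} and adding the complement probability $\zeta$ once---is in fact the required route and is what the paper implicitly does, so you should make it the main argument rather than an aside.
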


In practice, we can choose $\alpha$ and $\zeta$ based on the desired confidence level, as discussed after Corollary \ref{cor:ci_tau}.

Below we construct narrower confidence bounds for the overlap-weighted average treatment effect under the robust sensitivity model in Assumption \ref{asmp:robust_sen_spec_whole}, 
which build upon the confidence bounds in Section \ref{sec:ci_for_overlap} under sensitivity models that impose separate constraints for treated and control populations. 
Note that the set $\mathcal{W}_n(\Lambda, \delta_n)$ has the following equivalent form: 
\begin{align*}
    \mathcal{W}_n(\Lambda, \delta_n)=\bigcup_{\vec{\bm{\delta}}_n:\lfloor n_1\delta_{n,1}\rfloor+\lfloor n_0\delta_{n,0}\rfloor=\lfloor n\delta_n\rfloor}\mathcal{W}_n(\Lambda\vec{\bm{1}}, \vec{\bm{\delta}}_n)
    =\bigcup_{i = \max\{0, \lfloor n\delta_n\rfloor-n_0\}}^{\min\{\lfloor n\delta_n\rfloor, n_1\}}\mathcal{W}_n\left(\Lambda\vec{\bm{1}}, \left(\frac{i}{n_1}, \frac{\lfloor n\delta_n\rfloor-i}{n_0}\right)\right), 
\end{align*}
where $\vec{\bm{1}} = (1,1), \vec{\bm{\delta}}_n=(\delta_{n,1}, \delta_{n,0})$,  and $\mathcal{W}_n(\Lambda\vec{\bm{1}}, \vec{\bm{\delta}}_n)$ and $\mathcal{W}_n(\Lambda\vec{\bm{1}}, ({i}/{n_1}, ({\lfloor n\delta_n\rfloor-i})/{n_0}))$ are defined as in \eqref{eq:Lambda_delta_w_pred_set}.

\begin{theorem}\label{thm:ci_tau_pred_whole_sharper}
    Under Assumption \ref{asmp:robust_sen_spec_whole}  for the robust marginal sensitivity model with some given  $(\Lambda, \delta)$ and Condition \ref{cond:logistic_Z_est}, 
    for any $\alpha \in (0,1)$ and $\Lambda'>\Lambda$, and any $\delta_n \in [0,1]$ that can vary with the sample size $n$, 
    \begin{align*}
        \limsup_{n\rightarrow \infty} \Pr\left[ \tau_\ow > \tilde{\UU}_{\alpha}(\Lambda', \delta_n), \  \bmW_{1:n}^\tp \in \mathcal{W}_n(\Lambda, \delta_n) \right] & \le \alpha, \\
        \limsup_{n\rightarrow \infty} \Pr\left[ \tau_\ow < \tilde{\LL}_{\alpha}(\Lambda', \delta_n), \  \bmW_{1:n}^\tp \in \mathcal{W}_n(\Lambda, \delta_n) \right] & \le \alpha,
    \end{align*}
    where 
    \begin{align*}
        \tilde{\UU}_{\alpha}(\Lambda', \delta_n)
        & \equiv 
        \max_{i \in \mathcal{I}_n} \left\{ \UU_{\alpha}\left(\Lambda'\vec{\bm{1}}, \left( \frac{i}{n_1}, \frac{\lfloor n\delta_n\rfloor-i}{n_0} \right)\right)\right\}
        \le 
        \UU_{\alpha}(\Lambda', \delta_n), 
        \\
        \tilde{\LL}_{\alpha}(\Lambda', \delta_n) 
        & \equiv 
        \min_{i \in \mathcal{I}_n} \left\{ \LL_{\alpha}\left(\Lambda'\vec{\bm{1}}, \left( \frac{i}{n_1}, \frac{\lfloor n\delta_n\rfloor-i}{n_0} \right)\right)\right\} 
        \ge 
        \LL_{\alpha}(\Lambda', \delta_n), 
    \end{align*}
    and 
    $\mathcal{I}_n$ is the set of integers between $\max\{0, \lfloor n\delta_n\rfloor-n_0\}$ and $\min\{\lfloor n\delta_n\rfloor, n_1\}$. 
\end{theorem}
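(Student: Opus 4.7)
The plan is to reduce Theorem \ref{thm:ci_tau_pred_whole_sharper} to Theorem \ref{thm:ci_tau_pred} using the finite-union decomposition $\mathcal{W}_n(\Lambda, \delta_n) = \bigcup_{i \in \mathcal{I}_n} \mathcal{W}_n(\Lambda\vec{\bm{1}}, \vec{\bm{\delta}}_n^{(i)})$ with $\vec{\bm{\delta}}_n^{(i)} \equiv (i/n_1, (\lfloor n\delta_n\rfloor - i)/n_0)$, recorded in the excerpt just before the theorem. The monotonicity statements $\tilde{\UU}_\alpha(\Lambda', \delta_n) \le \UU_\alpha(\Lambda', \delta_n)$ and $\tilde{\LL}_\alpha(\Lambda', \delta_n) \ge \LL_\alpha(\Lambda', \delta_n)$ come essentially for free: the inclusion $\mathcal{W}_n(\Lambda'\vec{\bm{1}}, \vec{\bm{\delta}}_n^{(i)}) \subset \mathcal{W}_n(\Lambda', \delta_n)$ shrinks the feasible region of the optimization in \eqref{eq:tau_boot_epsilon_pred_set}, giving $\hat{\tau}_{\bstrap}^{\max}(\Lambda'\vec{\bm{1}}, \vec{\bm{\delta}}_n^{(i)}) \le \hat{\tau}_{\bstrap}^{\max}(\Lambda', \delta_n)$ sample-wise; the $(1-\alpha)$-quantile is monotone, and taking the maximum over $i \in \mathcal{I}_n$ preserves the inequality, with the symmetric argument for $\tilde{\LL}_\alpha$.

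For the coverage bound on $\tilde{\UU}_\alpha$, I would introduce the \emph{canonical} index $i^{\circ}(\bmW_{1:n}^\tp) \equiv \sum_{i: Z_i = 1} \I\{|W_i^\tp| > \log\Lambda\}$, i.e., the true count of treated units whose confounding strength exceeds $\log\Lambda$. On the event $\{\bmW_{1:n}^\tp \in \mathcal{W}_n(\Lambda, \delta_n)\}$, this index lies in $\mathcal{I}_n$ and $\bmW_{1:n}^\tp \in \mathcal{W}_n(\Lambda\vec{\bm{1}}, \vec{\bm{\delta}}_n^{(i^\circ)})$, so $\tilde{\UU}_\alpha \ge \UU_\alpha(\Lambda'\vec{\bm{1}}, \vec{\bm{\delta}}_n^{(i^\circ)})$ by the defining maximum. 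Partitioning by the value of $i^\circ$ then yields the disjoint inclusion
\begin{align*}
    \{\tau_\ow > \tilde{\UU}_\alpha,\ \bmW_{1:n}^\tp \in \mathcal{W}_n(\Lambda, \delta_n)\}
    \subset \bigsqcup_{i_0 \in \mathcal{I}_n} B_{i_0},
\end{align*}
where $B_{i_0} \equiv \{\tau_\ow > \UU_\alpha(\Lambda'\vec{\bm{1}}, \vec{\bm{\delta}}_n^{(i_0)}),\ \bmW_{1:n}^\tp \in \mathcal{W}_n(\Lambda\vec{\bm{1}}, \vec{\bm{\delta}}_n^{(i_0)}),\ i^\circ(\bmW_{1:n}^\tp) = i_0\}$. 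Dropping the third constraint from each $B_{i_0}$ and invoking Theorem \ref{thm:ci_tau_pred} with the deterministic choice $\vec{\bm{\delta}}_n = \vec{\bm{\delta}}_n^{(i_0)}$ controls each $\Pr(B_{i_0})$ asymptotically at level $\alpha$; the analogous construction handles $\tilde{\LL}_\alpha$.

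The main obstacle will be passing from the per-stratum bounds to the overall bound $\alpha$: a crude union bound across the $|\mathcal{I}_n|$ strata gives only $|\mathcal{I}_n|\alpha$. The key refinement is that the events $B_{i_0}$ are \emph{strictly} refined by the mutually exclusive conditions $\{i^\circ = i_0\}$, so their probability masses cannot all simultaneously be at level $\alpha$. Concretely, I would revisit the proof of Theorem \ref{thm:bootstrap_over_identify_pred_set} that underpins Theorem \ref{thm:ci_tau_pred}, and show that its asymptotic tail control localizes across the strata: the tail mass of the bootstrap quantile is attained along sample paths whose empirical counts of violators match the stratum-specific $\vec{\bm{\delta}}_n^{(i_0)}$, so the total $\alpha$ budget is allocated across disjoint strata rather than replicated on each. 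This stratified localization is the heart of the argument and the step that requires the most care; once it is in hand, summing the stratum-wise bounds over the disjoint $\{i^\circ = i_0\}$ yields the desired overall level $\alpha$.
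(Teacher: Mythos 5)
Your monotonicity half is fine, and your setup --- the finite decomposition $\mathcal{W}_n(\Lambda,\delta_n)=\bigcup_{i\in\mathcal{I}_n}\mathcal{W}_n(\Lambda\vec{\bm{1}},\vec{\bm{\delta}}_n^{(i)})$ and the selection of a data-dependent stratum containing the true $\bmW_{1:n}^\tp$ --- matches the paper's strategy (Theorem \ref{thm:bootstrap_over_identify_pred_set_sharp}). But the final step, where all the difficulty sits, is a genuine gap. Your plan is to bound each $\Pr(B_{i_0})$ at level $\alpha$ via Theorem \ref{thm:ci_tau_pred} and then argue, through an unproved ``stratified localization'' of the bootstrap tail mass, that the disjointness of $\{i^\circ=i_0\}$ lets the per-stratum $\alpha$'s share a single budget. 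No such localization is established, and it is not how the sum can be controlled: $|\mathcal{I}_n|$ grows linearly in $n$, and nothing in the proof of Theorem \ref{thm:bootstrap_over_identify_pred_set} ties the tail mass of the bootstrap quantile $Q_{1-\alpha}\{\hat{\tau}_{\bstrap}^{\max}(\Lambda'\vec{\bm{1}},\vec{\bm{\delta}}_n^{(i_0)})\}$ to the empirical violator counts in the way you assert; the joint behavior of the random index $i^\circ$ and the $|\mathcal{I}_n|$ bootstrap quantiles is exactly what you would need to analyze, and you do not.

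The paper's argument (proof of Theorem \ref{thm:bootstrap_over_identify_pred_set_sharp}) sidesteps summation over strata entirely: on the event $\{\Omega\le\eta\}$ with $\Omega$ the bootstrap frequency of $\{\max_i\|\bm{R}_i^\tp\hat{{\bmrho}}_b\|>\varepsilon\}$, whichever stratum $k$ contains the true $\bmW_{1:n}^\tp$, the chain
\begin{align*}
    H({\bmtheta}_0) > Q_{1-\alpha+\eta}\big(\hat{\theta}_{H,\bstrap}^{(k)}\big)
    \ \ge\ Q_{1-\alpha}\big(\breve{\theta}_{H,\bstrap}^{(k)}\big)
    \ \ge\ Q_{1-\alpha}\big\{H(\hat{{\bmtheta}}_\bstrap)\big\}
\end{align*}
holds, where $\breve{\theta}_{H,\bstrap}^{(k)}$ is the infeasible per-stratum optimization with the shift $\bm{R}_i^\tp\hat{{\bmrho}}_\bstrap$ and constraint set $\mathcal{W}_{n,k}$ (no $\varepsilon$-expansion), and the last inequality holds because $(\hat{{\bmtheta}}_\bstrap,\bmW_1^\tp,\ldots,\bmW_n^\tp)$ is feasible for it. The terminal event $\{H({\bmtheta}_0)>Q_{1-\alpha}\{H(\hat{{\bmtheta}}_\bstrap)\}\}$ does not depend on $k$, so the union over all strata of your events $B_{i_0}$ is contained in this \emph{single} event of asymptotic probability $\alpha$ (plus $\Pr(\Omega>\eta)\to 0$); the $\alpha$ budget is shared not by any tail-mass allocation but because every stratum's comparison bottoms out at the same oracle percentile-bootstrap bound. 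One further small repair to your construction: your canonical index $i^\circ=\sum_{i:Z_i=1}\I\{|W_i^\tp|>\log\Lambda\}$ can fall below $\lfloor n\delta_n\rfloor-n_0$ (e.g., with few treated violators but $\lfloor n\delta_n\rfloor>n_0$), in which case $i^\circ\notin\mathcal{I}_n$; you should use $\max\{i^\circ,\lfloor n\delta_n\rfloor-n_0\}$, which still indexes a stratum containing $\bmW_{1:n}^\tp$ since enlarging $i$ relaxes the treated constraint while the control constraint at $i=\lfloor n\delta_n\rfloor-n_0$ is vacuous.
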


Theorem \ref{thm:ci_tau_pred_whole_sharper} is indeed related to a general strategy that can be used to narrow the bootstrap confidence bounds. We discuss this general strategy in Section \ref{sec:narrow_partition}. 
The validity of the confidence bounds from Theorem \ref{thm:ci_tau_pred_whole_sharper} and the fact that they are narrower than (or equal to) those from Theorem \ref{thm:ci_tau_pred_whole} follow immediately from the discussion in Section \ref{sec:narrow_partition}. 

\subsection{Optimization for the bootstrap confidence intervals}

In this subsection, we study how to compute the confidence bounds in Theorem \ref{thm:ci_tau_pred_whole} and Corollary \ref{cor:ci_tau_whole}, 
which all rely on the optimization in \eqref{eq:tau_boot_epsilon_pred_set_whole} based on the bootstrapped samples. 
Note that the computation for the confidence bounds from 
Theorem \ref{thm:ci_tau_pred_whole_sharper} has already been discussed in Sections \ref{sec:opt_per_boot} and \ref{sec:boot_milp_appdix}.

In the following, we focus on the computation of  
$\hat{\tau}_{\bstrap}^{\max} (\Lambda, \delta)$ and $\hat{\tau}_{\bstrap}^{\min} (\Lambda, \delta)$ defined as in \eqref{eq:tau_boot_epsilon_pred_set_whole} for any $\Lambda\ge 1$ and $\delta \in [0,1]$. 
To facilitate the discussion, 
we adopt the notation from Section \ref{sec:opt_per_boot}. 
The optimization for $\hat{\tau}_{\bstrap}^{\min}(\Lambda, \delta)$ and $\hat{\tau}_{\bstrap}^{\max}(\Lambda, \delta)$ as in \eqref{eq:tau_boot_epsilon_pred_set_whole} can be equivalently written in the following form:
\begin{align}\label{eq:double_frac_program_bootstrap_2_whole}
    \min 
    \text{ or } \max \quad 
    & 
    \dfrac{\sum_{i=1}^n k_{\bstrap i} Z_{i} (1-
    \hat{e}_{\bstrap i}^{(\psi_i)}
    )Y_{i}}{\sum_{i=1}^n k_{bi} Z_{i} (1-
    \hat{e}_{\bstrap i}^{(\psi_i)}
    ) }-\dfrac{\sum_{i=1}^n k_{\bstrap i}(1-Z_{i}) \hat{e}_{\bstrap i}^{(\psi_i)} Y_{i}}{\sum_{i=1}^n k_{\bstrap i} (1-Z_{bi}) \hat{e}_{\bstrap i}^{(\psi_i)}},
    \\
    \text{subject to} \quad 	&
    \sum_{i=1}^n \I \{ |\psi_i | \leq \log(\Lambda) \} \ge \lceil n
    (1-\delta)\rceil, 
    \nonumber
    \\
    & \dfrac{\sum_{i=1}^n k_{\bstrap i} Z_{i} (1-
    \hat{e}_{\bstrap i}^{(\psi_i)}
    ) g(\bmX_{i})}{\sum_{i=1}^n k_{\bstrap i} Z_{i} (1-
    \hat{e}_{\bstrap i}^{(\psi_i)}
    ) }
    = 
    \dfrac{\sum_{i=1}^n k_{\bstrap i}(1-Z_{i}) \hat{e}_{\bstrap i}^{(\psi_i)} g(\bmX_{i})}{\sum_{i=1}^n 
    k_{\bstrap i}(1-Z_{i}) \hat{e}_{\bstrap i}^{(\psi_i)}
    },
    \nonumber
\end{align}
and $\hat{\tau}_{\bstrap}^{\min}(\Lambda, \delta)$ and $\hat{\tau}_{\bstrap}^{\max}(\Lambda, \delta)$ are equivalently the minimum and maximum objective values from \eqref{eq:double_frac_program_bootstrap_2_whole}.

We assume, without loss of generality, that $Z_1=\cdots = Z_{n_1} = 1$ and $Z_{n_1+1} = \cdots = Z_{n} = 0$. 
By the same logic as Section \ref{sec:boot_milp_appdix}, we can 
 write the optimization in \eqref{eq:double_frac_program_bootstrap_2_whole} equivalently as: 
\begin{align}\label{eq:boot_frac_double_whole}
    \min \text{ or } \max \quad &\dfrac{\sum_{i=1}^{n_1}k_{\bstrap i}w_{\bstrap i}Y_i}{\sum_{i=1}^{n_1}k_{\bstrap i}w_{\bstrap i}}-\dfrac{\sum_{i=n_1+1}^nk_{\bstrap i}w_{\bstrap i}Y_i}{\sum_{i=n_1+1}^nk_{\bstrap i}w_{\bstrap i}}\\
    \text{subject to} \quad & w_{\bstrap i} \in [w_{\bstrap i}^{\low}, w_{\bstrap i}^{\up}]\equiv[a_{\bstrap i}^{\low}\Delta_i, 1-(1-a_{\bstrap i}^{\up})\Delta_i]
    \nonumber
    \\&\sum_{i=1}^n \Delta_i\geq \lceil n(1-\delta)\rceil, \quad \Delta_i = 0 \text{ or } 1 \text{ for all } i, 
    \nonumber
    \\
    & \dfrac{\sum_{i=1}^{n_1}k_{\bstrap i}w_{\bstrap i}g(\bmX_i)}{\sum_{i=1}^{n_1}k_{\bstrap i}w_{\bstrap i}}=\dfrac{\sum_{i=n_1+1}^nk_{\bstrap i}w_{\bstrap i}g(\bmX_i)}{\sum_{i=n_1+1}^nk_{\bstrap i}w_{\bstrap i}},\nonumber
\end{align}
where the optimization is over $\Delta_i$s and $w_{\bstrap i}$s, 
and the variables are defined the same as those in \eqref{eq:boot_frac_double}.

\subsubsection{Multiple mixed-integer linear programming for the sensitivity bounds}\label{sec:multiple_linear_appdix}

 Note that in \eqref{eq:boot_frac_double_whole}, 
the maximum or minimum objective value can always be obtained when $\sum_{i=1}^n \Delta_i =  \lceil n(1-\delta)\rceil$, since reducing $\Delta_i$ will increase or preserve the feasible region for $w_{\bstrap i}$. 
Thus, it is equivalent to consider the optimization in \eqref{eq:boot_frac_double_whole} with the constraint that $\sum_{i=1}^n \Delta_i =  \lceil n(1-\delta)\rceil$. 
To solve it,
it suffices to consider the following programming problems for all integer $l\in
[\max\{0, \lceil n(1-\delta)\rceil-n_0\}, \ \min\{n_1, \lceil n(1-\delta)\rceil\}]$ :
\begin{align}\label{eq:boot_frac_double_whole_milp}
    \min \text{ or } \max \quad &\dfrac{\sum_{i=1}^{n_1}k_{\bstrap i}w_{\bstrap i}Y_i}{\sum_{i=1}^{n_1}k_{\bstrap i}w_{\bstrap i}}-\dfrac{\sum_{i=n_1+1}^nk_{\bstrap i}w_{\bstrap i}Y_i}{\sum_{i=n_1+1}^nk_{\bstrap i}w_{\bstrap i}}\\
    \text{subject to} \quad & w_{\bstrap i} \in [w_{\bstrap i}^{\low}, w_{\bstrap i}^{\up}]\equiv[a_{\bstrap i}^{\low}\Delta_i, 1-(1-a_{\bstrap i}^{\up})\Delta_i]
    \nonumber
    \\&\sum_{i=1}^{n_1} \Delta_i =  l, \quad \sum_{i=n_1+1}^{n} \Delta_i =  \lceil n(1-\delta)\rceil-l, \quad\Delta_i = 0 \text{ or } 1 \text{ for all } i, 
    \nonumber
    \\
    & \dfrac{\sum_{i=1}^{n_1}k_{\bstrap i}w_{\bstrap i}g(\bmX_i)}{\sum_{i=1}^{n_1}k_{\bstrap i}w_{\bstrap i}}=\dfrac{\sum_{i=n_1+1}^nk_{\bstrap i}w_{\bstrap i}g(\bmX_i)}{\sum_{i=n_1+1}^nk_{\bstrap i}w_{\bstrap i}}.\nonumber
\end{align}
By the same logic as before, it is equivalent to consider \eqref{eq:boot_frac_double_whole_milp} with the constraints that $\sum_{i=1}^{n_1} \Delta_i \ge   l$ and $\sum_{i=n_1+1}^{n} \Delta_i \ge  \lceil n(1-\delta)\rceil-l$. 
Obviously, the resulting optimization has the same form as that in \eqref{eq:boot_frac_double}, and can be solved in the same way as described in Section \ref{sec:boot_milp_appdix}.

\subsubsection{Quadratic programming with integer constraints}\label{sec:quad_prog}

Consider the following Charnes-Cooper transformation: 
\begin{alignat*}{5}
   \bar{\omega}_{\bstrap i}&=\dfrac{k_{\bstrap i}\omega_{\bstrap i}}{\sum_{j:Z_j=Z_i}k_{\bstrap j}\omega_j}
   \quad
    \text{and}
    \quad
    t_{\bstrap z}&=\dfrac{1}{\sum_{j:Z_j=z}k_{\bstrap j}\omega_j} \ \text{ for }
    z=0,1.
\end{alignat*}
We can then transform the optimization in
\eqref{eq:boot_frac_double_whole} into the following 
one over  $\bar{\omega}_{\bstrap i}$s, $\Delta_i$s, $t_{\bstrap 1}$ and $t_{\bstrap 0}$:
\begin{align}\label{eq:boot_QCLP_joint}
\min 
 \text{ or } \max \quad & \sum_{i=1}^{n_1} \bar{\omega}_{\bstrap i} Y_i-\sum_{i=n_1+1}^n \bar{\omega}_{\bstrap i} Y_i \\
\text{subject to} \quad 	&
k_{\bstrap i}a_{\bstrap i}^{\low} \Delta_{i}t_{\bstrap Z_i} \leq \bar{\omega}_{\bstrap i} \leq k_{\bstrap i}t_{\bstrap Z_i}-k_{\bstrap i}(1-a_{\bstrap i}^{\up})\Delta_{i}t_{\bstrap Z_i}   \text{ for } 1\le i\le n,
\nonumber
\\
	&\sum_{i=1}^{n} \Delta_{i}\geq  
    \lceil n (1-\delta)\rceil, \quad \Delta_i=0 \text{ or }1 \text{ for all } i,
 \nonumber
 \\
 & \sum_{i=1}^{n_1} \bar{\omega}_{\bstrap i} g(\bmX_i) = \sum_{i=n_1+1}^n \bar{\omega}_{\bstrap i} g(\bmX_i), 
 \nonumber
 \\
& \sum_{i=1}^{n_1} \bar{\omega}_{\bstrap i}=1, \quad \sum_{i=n_1+1}^n \bar{\omega}_{\bstrap i}=1,
\nonumber
\quad t_{\bstrap 1}\ge 0, \quad t_{\bstrap 0}\ge 0,
\nonumber
\end{align}
which becomes a quadratically constrained linear programming (QCLP) problem with integer constraints. 
We can also relax the integer constraints on $\Delta_i$s and instead require them to be in the unit interval $[0,1]$, i.e., $0\le \Delta_i \le 1$ for all $i$. 
The QCLP problem can be solved using standard software such as Gurobi \citep{gurobi}, usually within a reasonable time once the integer constraints are relaxed. 
Importantly, the relaxed optimization can still lead to valid but potentially more conservative confidence bounds. 
From the numerical experiment, relaxing the integer constraints often lead to little loss in conservativeness.

Below we discuss an ad hoc procedure to efficiently get a lower (or upper) bound for the maximum (or minimum) objective value from \eqref{eq:boot_QCLP_joint}, which can be used to assess how conservative the relaxation of integer constraints can be. 
The main idea is to 
slightly adjust the solution of $\Delta_i$s from the relaxed QCLP. 
Specifically, let $\bar{\Delta}_i$s be the values of $\Delta_i$s from a solution of the relaxed QCLP that, say, maximizes the objective value in \eqref{eq:boot_QCLP_joint}. 
We can verify that $\sum_{i=1}^n \bar{\Delta}_i$ takes an integer value; otherwise we can always decrease certain $\Delta_i$s without changing the objective value, since the ranges of the $w_{1i}$s and $w_{0i}$s in \eqref{eq:boot_QCLP_joint} decrease with the $\Delta_i$s. 
Suppose that $\sum_{i=1}^n \bar{\Delta}_i=k$ for some integer $0\le k \le n$.
We then define $\bar{\Delta}_j'=1$ if $\bar{\Delta}_j$ is among the largest $k$ elements of $\bar{\Delta}_i$s, and $0$ otherwise. 
Finally, we fix the values of $\Delta_i$s at $\bar{\Delta}_j'$, and solve the optimization in \eqref{eq:boot_QCLP_joint} trying to maximize the objective value. Importantly, with the fixed $\Delta_i$s, the quadratic programming problem reduces to a linear programming problem and thus can be solved efficiently.
Denote the resulting maximum objective value by $\bar{U}'$. 
Following the same procedure we can also obtain $\bar{L}'$ 
for the minimum objective value. 
It is not difficult to see that $[\bar{L}', \bar{U}'] \subset [L, U]$, where $L$ and $U$ denote the optimized objective values from the origial QCLP problem. This is because the solution for $\bar{L}'$ and $\bar{U}'$ also fall in the feasible region of the original QCLP problem.

\subsection{Simultaneous sensitivity analyses}

In this subsection, we assume that the overlap-weighted average treatment effect is zero, i.e., $\tau_{\ow} = 0$, and use bootstrap two-sided confidence intervals of form $[\LL_{\alpha}(\Lambda, \delta), \UU_{\alpha}(\Lambda, \delta)]$ in Theorem \ref{thm:ci_tau_pred_whole} to conduct inference for the strength of unmeasured confounding. 
Similar logic applies when we assume $\tau_{\ow} \ge 0$ or $\tau_{\ow} \le 0$, and use upper or lower bootstrap confidence bounds.

For $0\le q\le 1$, 
let $\OR_{\m[q]}$ denote the $q$th sample quantile of the strength of unmeasured confounding within the sample.
Note that, by definition, $\bmW_{1:n}^\tp \in \mathcal{W}_n(\Lambda, \delta)$ 
if and only if 
$\OR_{\m[1-\delta]} \le \Lambda$. 
From the equivalent understanding of sensitivity analysis in Section \ref{sec:equiv_view_sa} and Theorems \ref{thm:ci_tau_pred_whole} and \ref{thm:ci_tau_pred_whole_sharper}, 
we can immediately derive asymptotically valid tests for the following null hypotheses about the sample quantiles of unmeasured confounding strength: 
\begin{equation}\label{eq:H_samp_U_whole}
    H_{q,\Lambda}^{\text{U}}: \OR_{\m[q]} \le \Lambda, 
    \quad
    \text{where $q\in [0,1]$ and $\Lambda \in [1, \infty]$}. 
\end{equation}

\begin{corollary}\label{cor:test_U_whole}
    Assume that Condition \ref{cond:logistic_Z_est} holds and the true overlap-weighted average treatment effect is zero, i.e., $\tau_{\ow} = 0$. 
    Consider the null hypothesis $H_{q,\Lambda}^{\text{U}}$ in \eqref{eq:H_samp_U_whole} for any $q \in [0,1]$ and $\Lambda\in [1, \infty]$. Let $\xi > 1$ be any fixed constant. 
    For any $\alpha \in (0,1)$ and $\xi > 1$, 
    the test that rejects $H_{q,\Lambda}^{\text{U}}$ if and only if $0\notin [\LL_{\alpha/2}(\xi \Lambda, 1-q), \UU_{\alpha/2}(\xi \Lambda, 1-q)]$ is an asymptotically valid level-$\alpha$ test for the null hypothesis $H_{q,\Lambda}$,    
    in the sense that 
    $$
    \limsup_{n\rightarrow \infty} \Pr\big\{
    H_{q,\Lambda}^{\text{U}} \text{ holds and } 0\notin [\LL_{\alpha/2}(\xi \Lambda, 1-q), \UU_{\alpha/2}(\xi \Lambda, 1-q)]
    \big\} \le \alpha.
    $$
\end{corollary}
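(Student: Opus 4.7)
\textbf{Proof proposal for Corollary \ref{cor:test_U_whole}.}

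The plan is to reduce the claim to a direct application of Theorem \ref{thm:ci_tau_pred_whole}, by identifying the sample-quantile null hypothesis with a predicate on the unobserved vector $\bmW_{1:n}^{\tp}$ of the type that appears in the prediction set $\mathcal{W}_n(\Lambda, \delta_n)$ defined in \eqref{eq:Lambda_delta_w_pred_set_whole}. The main content of the proof is this translation; once it is in place, the rest is just substitution and a choice of significance level.

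The first step is to observe the key equivalence. By the definition of $\OR_{\m}$ in \eqref{eq:OR_m} and of $\psi_{\m}$ in Section \ref{sec:formu_z_est}, we have $\OR_{\m}(\bmX_i, \bmU_i) = \exp(|W_i^{\tp}|)$ for each $i$. Consequently, the event $H_{q,\Lambda}^{\text{U}}: \OR_{\m[q]} \le \Lambda$ is the event that at least a fraction $q$ of the $n$ units satisfy $|W_i^{\tp}| \le \log \Lambda$, which is precisely the defining condition for $\bmW_{1:n}^{\tp} \in \mathcal{W}_n(\Lambda, 1-q)$. (Any small discrepancy coming from the precise convention for the sample quantile is of order $1/n$ in the $\delta_n$ slot and can be absorbed by choosing $\delta_n$ to match the exact threshold; Theorem \ref{thm:ci_tau_pred_whole} allows arbitrary $\delta_n \in [0,1]$.)

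The second step is to invoke Theorem \ref{thm:ci_tau_pred_whole} with the sensitivity bound $\Lambda$, the inflated bound $\Lambda' = \xi \Lambda$, which strictly exceeds $\Lambda$ because $\xi > 1$, the tolerance $\delta_n = 1-q$, and the significance level $\alpha/2$ applied on each side. Combining the two one-sided bounds in that theorem and translating via the equivalence from the first step yields
\begin{align*}
    \limsup_{n\to\infty} \Pr\{ \tau_{\ow} \notin [\LL_{\alpha/2}(\xi\Lambda, 1-q),\, \UU_{\alpha/2}(\xi\Lambda, 1-q)],\ H_{q,\Lambda}^{\text{U}} \text{ holds} \} \le \alpha.
\end{align*}
Substituting $\tau_{\ow} = 0$ under the corollary's hypothesis then replaces the first event by $0 \notin [\LL_{\alpha/2}(\xi\Lambda, 1-q),\, \UU_{\alpha/2}(\xi\Lambda, 1-q)]$, which is exactly the rejection region of the proposed test. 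This gives the desired bound.

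The only real subtlety is that Theorem \ref{thm:ci_tau_pred_whole} is stated under Assumption \ref{asmp:robust_sen_spec_whole}, whereas here we do not posit any population-level sensitivity bound. However, since the conclusion of Theorem \ref{thm:ci_tau_pred_whole} is a bound on the joint event that includes $\bmW_{1:n}^{\tp} \in \mathcal{W}_n(\Lambda, \delta_n)$, we can trivially take the ambient assumption at a vacuous parameter setting (e.g.\ $\delta = 1$), or equivalently trace the proof back to Theorem \ref{thm:bootstrap_over_identify_pred_set}, whose joint-event bound requires no prediction-set coverage guarantee. This is the step I would be most careful about when writing up the proof in full; the rest is essentially bookkeeping on the quantile definition and the $\alpha/2$ splitting.
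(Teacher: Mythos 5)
Your proposal is correct and follows essentially the same route as the paper's own proof: the paper likewise identifies the event $H_{q,\Lambda}^{\text{U}}$ with $\bmW_{1:n}^\tp \in \mathcal{W}_n(\Lambda, 1-q)$ (via $\OR_{\m}(\bmX_i,\bmU_i)=\exp(|W_i^\tp|)$), substitutes $\tau_\ow=0$, and applies Theorem \ref{thm:ci_tau_pred_whole} with $\Lambda'=\xi\Lambda$ and $\delta_n=1-q$ at level $\alpha/2$ per side, exactly mirroring its proof of Corollary \ref{cor:test_U}. Your closing observation---that the sensitivity model in the theorem's statement can be taken at a vacuous setting (e.g., $\delta=1$) or bypassed by tracing back to Theorem \ref{thm:bootstrap_over_identify_pred_set}, since the joint-event bound never uses Assumption \ref{asmp:robust_sen_spec_whole}---correctly resolves a detail the paper leaves implicit.
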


Similar to Section \ref{eq:multiple_quantile}, by test inversion, we can then construct prediction sets for quantiles of unmeasured confounding strength within the sample. Importantly, such prediction sets will be simultaneously valid across all quantiles, which can strengthen our causal evidence from observational studies.
We summarize the results in the following theorem. 
Define 
\begin{align*}
        \mathcal{I}_{q}^\alpha \equiv \big\{ \Lambda \in [1, \infty]: 
        0 \in [\LL_{\alpha/2}(\Lambda, 1-q), \UU_{\alpha/2}(\Lambda, 1-q)] \big\}.
\end{align*}
Define further $\mathcal{I}_{q, \xi}^\alpha = \xi^{-1} \mathcal{I}_{q}^\alpha = \{\xi^{-1}\Lambda: \Lambda \in \mathcal{I}_{q}^\alpha\}$ for $\xi \ge 1$.

\begin{theorem}\label{thm:U_quant_pred_int_whole}
    Assume that Condition \ref{cond:logistic_Z_est} holds and the true overlap-weighted average treatment effect is zero, i.e., $\tau_{\ow} = 0$. 
    Let $\alpha \in (0,1)$ and $\xi >1$ be any fixed constants. 
    Then, for any $q\in [0,1]$,  $\mathcal{I}_{q, \xi}^\alpha$ is an asymptotic $1-\alpha$ prediction set for $\OR_{\m[q]}$. Moreover, it is simultaneously valid across all quantiles, in the sense that 
    $
        \liminf_{n \rightarrow \infty} \Pr\{ \OR_{\m[q]} \in \mathcal{I}_{q, \xi}^\alpha \text{ for all } q \in [0,1]\} \ge 1-\alpha. 
    $
\end{theorem}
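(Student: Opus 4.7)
The key identity is that $\bmW_{1:n}^\tp\in\mathcal{W}_n(\Lambda,\delta)$ if and only if $\OR_{\m[1-\delta]}\le\Lambda$, which makes the sample-level inclusion $\bmW_{1:n}^\tp\in\mathcal{W}_n(\OR_{\m[q]},1-q)$ tautologically true for every $q\in[0,1]$. Under $\tau_\ow=0$, the inclusion $\OR_{\m[q]}\in\mathcal{I}_{q,\xi}^\alpha$ is equivalent to $0\in[\LL_{\alpha/2}(\xi\OR_{\m[q]},1-q),\UU_{\alpha/2}(\xi\OR_{\m[q]},1-q)]$, so the task reduces to controlling the probability that this coverage fails, either at a single $q$ (pointwise) or at any $q$ (simultaneous).

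For the pointwise claim, I would invoke Corollary \ref{cor:test_U_whole}: the $\xi$-scaled test of $H_{q,\Lambda}^{\text{U}}\colon \OR_{\m[q]}\le\Lambda$ that rejects when $0\notin[\LL_{\alpha/2}(\xi\Lambda,1-q),\UU_{\alpha/2}(\xi\Lambda,1-q)]$ is asymptotically level $\alpha$, and by construction $\mathcal{I}_{q,\xi}^\alpha$ is its acceptance region in $\Lambda$. Because $H_{q,\OR_{\m[q]}}^{\text{U}}$ holds identically for the random plug-in $\Lambda=\OR_{\m[q]}$, and because the acceptance region is an upper interval $[c,\infty]$ in $\Lambda$ by monotonicity of the CI (wider CI for larger $\Lambda$), a standard discretization of $\Lambda$ combined with the strict inflation $\xi>1$ to absorb the data-dependence of $\OR_{\m[q]}$ yields $\limsup_n\Pr\{\OR_{\m[q]}\notin\mathcal{I}_{q,\xi}^\alpha\}\le\alpha$.

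For simultaneous validity, I decompose the failure event into one-sided pieces $E_n^U=\{\exists q:\ 0>\UU_{\alpha/2}(\xi\OR_{\m[q]},1-q)\}$ and $E_n^L=\{\exists q:\ 0<\LL_{\alpha/2}(\xi\OR_{\m[q]},1-q)\}$, aiming to control each by $\alpha/2+o(1)$. Since the sample inclusion $\bmW_{1:n}^\tp\in\mathcal{W}_n(\OR_{\m[q]},1-q)$ holds identically in $q$, the true value $\tau_\ow=0$ lies in the sample feasible interval $[\tilde\tau^{\min}(\xi\OR_{\m[q]},1-q),\tilde\tau^{\max}(\xi\OR_{\m[q]},1-q)]$ for every $q$ simultaneously; hence $E_n^U$ forces the bootstrap $(1-\alpha/2)$-quantile of $\hat\tau_{\bstrap}^{\max}(\xi\OR_{\m[q]},1-q)$ to undershoot its sample counterpart $\tilde\tau^{\max}(\xi\OR_{\m[q]},1-q)$ at some $q$, and symmetrically for $E_n^L$. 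The main obstacle is upgrading this domination from a pointwise to a simultaneous statement in $q$, which I plan to handle by invoking the joint bootstrap weak convergence of the augmented Z-estimator supplied by Assumption \ref{asmp:valid_bootstrap_z_est}, combined with an equicontinuity argument in $(\Lambda,\delta)$ along the random one-dimensional curve $\{(\xi\OR_{\m[q]},1-q):q\in[0,1]\}$, so that the $\sup_q$ reduces to a single directional event in the limiting Gaussian process. Paralleling the proof of Theorem \ref{thm:U_quant_pred_int} in Appendix A1, this one-sided reduction, rather than any union bound, is what preserves the $\alpha/2$-level on each side and yields the claimed simultaneous $1-\alpha$ coverage.
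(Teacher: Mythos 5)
Your first two ingredients are sound and match the engine of the paper's argument: the tautological feasibility $\bmW_{1:n}^\tp \in \mathcal{W}_n(\OR_{\m[q]}, 1-q)$ for \emph{every} $q$, and the consequence that the single (infeasible) shifted bootstrap estimator $\hat{\tau}_\bstrap$ lies in every constrained feasible interval at once. But you then misdiagnose the difficulty. The ``main obstacle'' you name --- upgrading the domination from pointwise to simultaneous in $q$ --- does not exist: on the bootstrap event $\max_{1\le i\le n}\|\bm{R}_i^\tp \hat{\bmrho}_\bstrap\| \le \log\xi$, the sandwich $\hat{\tau}_{\bstrap}^{\min}(\xi\OR_{\m[q]},1-q) \le \hat{\tau}_\bstrap \le \hat{\tau}_{\bstrap}^{\max}(\xi\OR_{\m[q]},1-q)$ holds \emph{deterministically, draw by draw, for all $q$ simultaneously}, because the shifted true $W_i^\tp + \bm{R}_i^\tp\hat{\bmrho}_\bstrap$ land in the $\log\xi$-enlargement of every set $\mathcal{W}_n(\OR_{\m[q]},1-q)$ at once. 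Hence the quantile inequality holds uniformly in $q$ (up to the correction $\Omega$), and failure of coverage at \emph{any} $q$ implies the single fixed event $0\notin[Q_{\alpha/2+\Omega}(\hat{\tau}_\bstrap), Q_{1-\alpha/2-\Omega}(\hat{\tau}_\bstrap)]$, which standard bootstrap validity bounds by $\alpha$ --- this is exactly how the paper proves Theorem \ref{thm:U_quant_pred_int}, with no union bound and, crucially, no process-level limit. The machinery you propose to fill the gap would in fact fail: Assumption \ref{asmp:valid_bootstrap_z_est} supplies only a fixed-dimensional bootstrap CLT for the augmented Z-estimator at $({\bmtheta}_0,{\bmrho}_0)$, not joint weak convergence of the constrained optimal values indexed by $(\Lambda,\delta)$; and $\delta \mapsto \hat{\tau}_\bstrap^{\max}(\Lambda,\delta)$ is a step function through $\lceil n(1-\delta)\rceil$, so stochastic equicontinuity along the random curve $\{(\xi\OR_{\m[q]},1-q): q\in[0,1]\}$ is unavailable. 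Relatedly, your account of the role of $\xi$ is off: the slack $\xi>1$ absorbs the augmentation shift $\bm{R}_i^\tp\hat{\bmrho}_\bstrap$ arising from over-identification, not the data-dependence of the plug-in $\OR_{\m[q]}$; the latter is handled for free by feasibility, and no discretization of $\Lambda$ is used or needed.

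You should also know that the paper's actual proof of Theorem \ref{thm:U_quant_pred_int_whole} does not argue directly in the whole-population model at all. It reduces to the already-proved separate-group result (Theorem \ref{thm:U_quant_pred_int}) via the combinatorial Lemma \ref{lemma:quantile_groups} on merged order statistics, which yields a split $n_1\delta_1 + n_0\delta_0 = \lceil n\delta\rceil$ with $\OR_{\m[\delta]} = \max\{\OR^\treat_{\m[\delta_1]}, \OR^\control_{\m[\delta_0]}\}$; combined with the sharper bounds $\tilde{\LL},\tilde{\UU}$ of Theorem \ref{thm:ci_tau_pred_whole_sharper} and monotonicity of $\LL_{\alpha/2},\UU_{\alpha/2}$ in $\Lambda$, the simultaneous whole-sample event is shown to contain the simultaneous separate-group event, and Theorem \ref{thm:U_quant_pred_int} finishes the proof. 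A direct whole-population argument along your skeleton could be made to work, but only by running the deterministic intersection argument above --- not by the Gaussian-process route you sketch.
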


The simultaneous prediction bounds in Theorem \ref{thm:U_quant_pred_int_whole} essentially provide a prediction band for the whole quantile function (and thus distribution function) of the unmeasured confounding strength within the sample. 
Consequently, they can also tell the proportions of units within the sample whose unmeasured confounding strength is greater than any thresholds, 
i.e., $n^{-1} \sum_{i=1}^n \I\{\OR_\m (\bmX_i, \bmU_i) > \Lambda \}$ for any $\Lambda \ge 1$. 
In addition, by the same logic as Remark \ref{remark:gen_sam_pop}, we can also use Theorem \ref{thm:U_quant_pred_int_whole} to construct (simultaneous) confidence intervals for (multiple) quantiles of unmeasured confounding strength in the whole population.

\begin{remark}
   In both Corollary \ref{cor:test_U_whole} and Theorem \ref{thm:U_quant_pred_int_whole}, the confidence bounds $\LL_{\alpha/2}(\cdot, \cdot)$ and $\UU_{\alpha/2}(\cdot, \cdot)$ can be replaced by the narrower bounds $\tilde{\LL}_{\alpha/2}(\cdot, \cdot)$ and $\tilde{\UU}_{\alpha/2}(\cdot, \cdot)$ from Theorem \ref{thm:ci_tau_pred_whole_sharper}, and the results will still hold.
   Moreover, the simultaneous prediction sets in Theorem \ref{thm:U_quant_pred_int_whole} essentially form a superset of the simultaneous prediction sets in Theorem \ref{thm:U_quant_pred_int}; this is analogous to the prediction sets in Corollary \ref{cor:U_quant_pred_int_max_tr_co}. 
   See Section \ref{sec:proof_simul_whole} for the technical details. 
\end{remark}

\section{Narrower bootstrap confidence bounds by decomposing the prediction set for unmeasured variables}\label{sec:narrow_partition}

Consider the same setting as in Section \ref{sec:pred_set_W} and suppose that the prediction set $\mathcal{W}_n$ can be decomposed as 
$\mathcal{W}_n=\bigcup_{j\in \mathcal{J}_n} \mathcal{W}_{n,j}$, where $\mathcal{W}_{n,j}$s may overlap with each other.  
Let $\mathcal{W}_{n,j}^{\varepsilon}$ be an $\varepsilon$-neighborhood of the set $\mathcal{W}_{n,j}$:
\begin{align*}
    \mathcal{W}_{n,j}^\varepsilon \equiv 
    \{
    (\bmW_1, \ldots, \bmW_n) \in (\mathbb{R}^{d_1})^{\otimes n}: 
    \max_{1\le i \le n} \|\bmW_i - \bmW_i'\| \le \varepsilon \text{ for some } (\bmW_1', \ldots, \bmW_n') \in \mathcal{W}_{n,j} 
    \}. 
\end{align*}
Similar to \eqref{eq:bootstrap_epsilon_pred_set}, we consider the following optimization based on bootstrapped samples for each $j\in\mathcal{J}_n$: 
\begin{align}\label{eq:bootstrap_epsilon_pred_set_sharp}
    \hat{\theta}_{H, \bstrap}^{(j)} = \ & \sup H({\bmtheta}) 
    \\
    \text{subject to} \quad & 
    \sum_{i=1}^n k_{\bstrap i} p_{\bmtheta}(\bmO_{i}, \bmW_i) =\bm{0},
    \nonumber
    \\
    & (\bmW_1, \bmW_2, \ldots, \bmW_n) \in \mathcal{W}_{n,j}^\varepsilon.
    \nonumber
\end{align}

\begin{theorem}\label{thm:bootstrap_over_identify_pred_set_sharp}
    Under the same setting as in Theorem \ref{thm:bootstrap_over_identify},  
    for any $\varepsilon > 0$, 
    \begin{align*}
        \limsup_{n\rightarrow \infty} \Pr[ H({\bmtheta}_0) > \max_{j\in\mathcal{J}_n}\{Q_{1-\alpha} ( \hat{\theta}_{H, \bstrap}^{(j)} )\}, \  (\bmW_1^\tp, \bmW_2^\tp, \ldots, \bmW_n^\tp) \in \mathcal{W}_n ] \le \alpha, 
    \end{align*}
    where $\hat{\theta}_{H, \bstrap}^{(j)}$ is defined in \eqref{eq:bootstrap_epsilon_pred_set_sharp}.
    Moreover,
    \begin{align}\label{eq:narrow_decomp}
        \max_{j\in\mathcal{J}_n}\big\{Q_{1-\alpha} ( \hat{\theta}_{H, \bstrap}^{(j)} ) \big\}\le Q_{1-\alpha}\big\{ \max_{j\in\mathcal{J}_n}\hat{\theta}_{H, \bstrap}^{(j)} \big\}= Q_{1-\alpha} ( \hat{\theta}_{H, \bstrap} ),
    \end{align}
    where $\hat{\theta}_{H, \bstrap}$ is defined in \eqref{eq:bootstrap_epsilon_pred_set}. 
\end{theorem}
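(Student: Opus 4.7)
The plan is to establish part (2) first via a set-theoretic decomposition plus a standard quantile monotonicity inequality, and then to obtain part (1) by a pointwise selection argument that reduces to the already-proved Theorem \ref{thm:bootstrap_over_identify_pred_set}.

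For part (2), the key identity I would verify is $\mathcal{W}_n^\varepsilon=\bigcup_{j\in\mathcal{J}_n}\mathcal{W}_{n,j}^\varepsilon$. The inclusion $\supseteq$ is immediate from $\mathcal{W}_{n,j}\subseteq \mathcal{W}_n$. For $\subseteq$, any $(\bmW_1,\ldots,\bmW_n)\in\mathcal{W}_n^\varepsilon$ lies within coordinatewise $\varepsilon$-distance of some $(\bmW_1',\ldots,\bmW_n')\in\mathcal{W}_n=\bigcup_j\mathcal{W}_{n,j}$, hence of some $\mathcal{W}_{n,j}$, so $(\bmW_1,\ldots,\bmW_n)\in\mathcal{W}_{n,j}^\varepsilon$. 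Because the feasible set of the optimization \eqref{eq:bootstrap_epsilon_pred_set} is therefore the union over $j$ of the feasible sets of \eqref{eq:bootstrap_epsilon_pred_set_sharp}, the pointwise equality $\hat{\theta}_{H,\bstrap}=\max_{j\in\mathcal{J}_n}\hat{\theta}_{H,\bstrap}^{(j)}$ follows. The quantile inequality $\max_j Q_{1-\alpha}(X_j)\le Q_{1-\alpha}(\max_j X_j)$ is then standard: from $\max_j X_j\ge X_{j_0}$ pointwise, monotonicity of quantiles gives $Q_{1-\alpha}(\max_j X_j)\ge Q_{1-\alpha}(X_{j_0})$ for every $j_0$; taking a maximum over $j_0$ closes the argument.

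For part (1), I would choose measurably, on the event $\bmW^\tp_{1:n}\in\mathcal{W}_n$, an index $j^\star=j^\star(\bmW^\tp_{1:n})\in\mathcal{J}_n$ with $\bmW^\tp_{1:n}\in\mathcal{W}_{n,j^\star}$. Because $\max_{j} Q_{1-\alpha}(\hat{\theta}_{H,\bstrap}^{(j)})\ge Q_{1-\alpha}(\hat{\theta}_{H,\bstrap}^{(j^\star)})$ holds pointwise in the data, one has
\[
\{H({\bmtheta}_0)>\max_j Q_{1-\alpha}(\hat{\theta}_{H,\bstrap}^{(j)}),\,\bmW^\tp_{1:n}\in\mathcal{W}_n\}\subseteq\{H({\bmtheta}_0)>Q_{1-\alpha}(\hat{\theta}_{H,\bstrap}^{(j^\star)}),\,\bmW^\tp_{1:n}\in\mathcal{W}_{n,j^\star}\}.
\]
I would then transplant the proof of Theorem \ref{thm:bootstrap_over_identify_pred_set} to this data-dependent $j^\star$: with $\bmW^\tp_{1:n}\in\mathcal{W}_{n,j^\star}$ and $\bm{R}^\tp\hat{\bmrho}_\bstrap=O_\Pr(n^{-1/2})$ (from Assumption \ref{asmp:valid_bootstrap_z_est} together with $\E\|\bm{R}^\tp\|^{2+\nu}<\infty$), we obtain $\bmW^\tp_{1:n}+\bm{R}^\tp\hat{\bmrho}_\bstrap\in\mathcal{W}_{n,j^\star}^\varepsilon$ with bootstrap probability approaching one, so that $\hat{\theta}_{H,\bstrap}^{(j^\star)}\ge H(\hat{\bmtheta}_\bstrap)$ with probability tending to one. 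Combining this with the asymptotic normality of $\sqrt{n}(H(\hat{\bmtheta}_\bstrap)-H(\hat{\bmtheta}))$ obtained via the Delta method and Assumption \ref{asmp:valid_bootstrap_z_est} then yields the desired asymptotic $\alpha$ bound.

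The main obstacle is handling the randomness of $j^\star$ without inflating the constant. The clean route is to disintegrate the probability by the value of $j^\star$ and apply the bootstrap validity pointwise in each cell; crucially, each $Q_{1-\alpha}(\hat{\theta}_{H,\bstrap}^{(j)})$ depends only on the observed sample $\bmO_{1:n}$ and not on $\bmW^\tp_{1:n}$, so the asymptotic-normality argument applies uniformly over realizations of $\bmW^\tp_{1:n}$, and the selection rule for $j^\star$ picks out precisely the cell in which bootstrap validity has already been established for the pair $(\hat{\theta}_{H,\bstrap}^{(j^\star)},\mathcal{W}_{n,j^\star})$. The bookkeeping to ensure that the aggregation of these pointwise bounds preserves the single $\alpha$ rather than summing into $|\mathcal{J}_n|\alpha$ is the delicate step; it rests on the fact that on the event $\bmW^\tp_{1:n}\in\mathcal{W}_n$ a single $j^\star$ is selected and the inclusion above uses only that one cell, so the probability is controlled by the validity of the bootstrap-quantile bound within the chosen cell rather than by a union bound over all cells.
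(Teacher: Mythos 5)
Your proposal is correct and takes essentially the same route as the paper: part (2) via the identity $\mathcal{W}_n^\varepsilon=\bigcup_{j\in\mathcal{J}_n}\mathcal{W}_{n,j}^\varepsilon$ (hence $\hat{\theta}_{H,\bstrap}=\max_{j\in\mathcal{J}_n}\hat{\theta}_{H,\bstrap}^{(j)}$ pathwise) together with quantile monotonicity, and part (1) by selecting the cell $\mathcal{W}_{n,j^\star}$ containing $\bmW_{1:n}^\tp$ and chaining pathwise---with the $\Omega$/$\eta$ quantile-shift bookkeeping of Lemma \ref{lemma:quantile_compare_gen} through the intermediate optimizer $\breve{\theta}_{H,\bstrap}^{(j^\star)}\ge H(\hat{{\bmtheta}}_\bstrap)$---down to the cell-independent event $\{H({\bmtheta}_0)>Q_{1-\alpha'}\{H(\hat{{\bmtheta}}_\bstrap)\}\}$, which is exactly how the paper preserves the single $\alpha$ without a union bound. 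Your closing talk of ``disintegrating by the value of $j^\star$'' is unnecessary (and, if taken literally as conditioning, delicate); as you yourself observe at the end, no disintegration is needed because the event within whichever cell is selected implies the same $j$-free event, and a single unconditional application of standard bootstrap validity finishes the argument.
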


Theorem \ref{thm:bootstrap_over_identify_pred_set_sharp} shows that $\max_{j\in\mathcal{J}_n}\{Q_{1-\alpha} ( \hat{\theta}_{H, \bstrap}^{(j)} )\}$ can lead to a narrower while still valid confidence bounds for $H({\bmtheta}_0)$. 
However, we note that the computation for $\max_{j\in\mathcal{J}_n}\{Q_{1-\alpha} ( \hat{\theta}_{H, \bstrap}^{(j)} )\}$ generally involves higher computational cost. 
Compared to the original confidence bound $Q_{1-\alpha} ( \hat{\theta}_{H, \bstrap} )$, 
we need to solve $|\mathcal{J}_n|$ times more optimization problems of a similar form, where $|\mathcal{J}_n|$ denotes the cardinality of the set $\mathcal{J}_n$.

\section{Additional numeric illustration and experiments}
In this section, we conduct simulations based on i.i.d.~samples from the following model: 
\begin{align}\label{eq:data_gen}
    X_i &\sim \text{Uniform}([0,1]),\\
    U_i\mid X_i &\sim 
    \begin{cases}
        \text{Uniform}([0,1]), & \text{if } 0\le X_i \le 0.7, \\
        \text{Uniform}([0,100]), & \text{if } 0.7< X_i \le 1,
    \end{cases}\nonumber\\
    Z_i\mid X_i,U_i &\sim \text{Bernoulli}\left(\frac{1}{1+\exp(X_i-0.1  U_i)}\right),\nonumber\\
    Y_i(0)=2&X_i+3U_i, \qquad Y_i(1)=Y_i(0)+5.\nonumber
\end{align}

\subsection{Visualization of prediction sets}

We generate data from the model in \eqref{eq:data_gen} with sample size $n = 1000$. 
Suppose that we are interested in the $90\%$ and $95\%$ sample quantiles of unmeasured confounding strength among the treated and control units, respectively, i.e., $(\OR^\treat_{\m[q_1]}, \OR^\control_{\m[q_0]})$ with  $\vec{\bm{q}}=(q_1,q_0)=(0.9, 0.95)$. 
From Theorem \ref{thm:U_quant_pred_int}, 
an asymptotic $1-\alpha$ prediction set for $(\OR^\treat_{\m[q_1]}, \OR^\control_{\m[q_0]})$ is 
$\mathcal{I}_{\vec{\bm{q}}, \xi}^\alpha = \xi^{-1} \mathcal{I}_{\vec{\bm{q}}}^\alpha$, where $\xi$ can be any constant greater than $1$ and  $\mathcal{I}_{\vec{\bm{q}}}^\alpha$ is defined in \eqref{eq:I_q_vec_alpha}. 
For convenience, we set $\xi = 1$, and present this prediction set $\mathcal{I}_{\vec{\bm{q}}}^\alpha$ for $\alpha = 0.05$ in 
Figure \ref{fig:Visualizing_the_Structure_of_Iq}.

\begin{figure}[htbp]
    \centering\includegraphics[width=0.6\textwidth]{./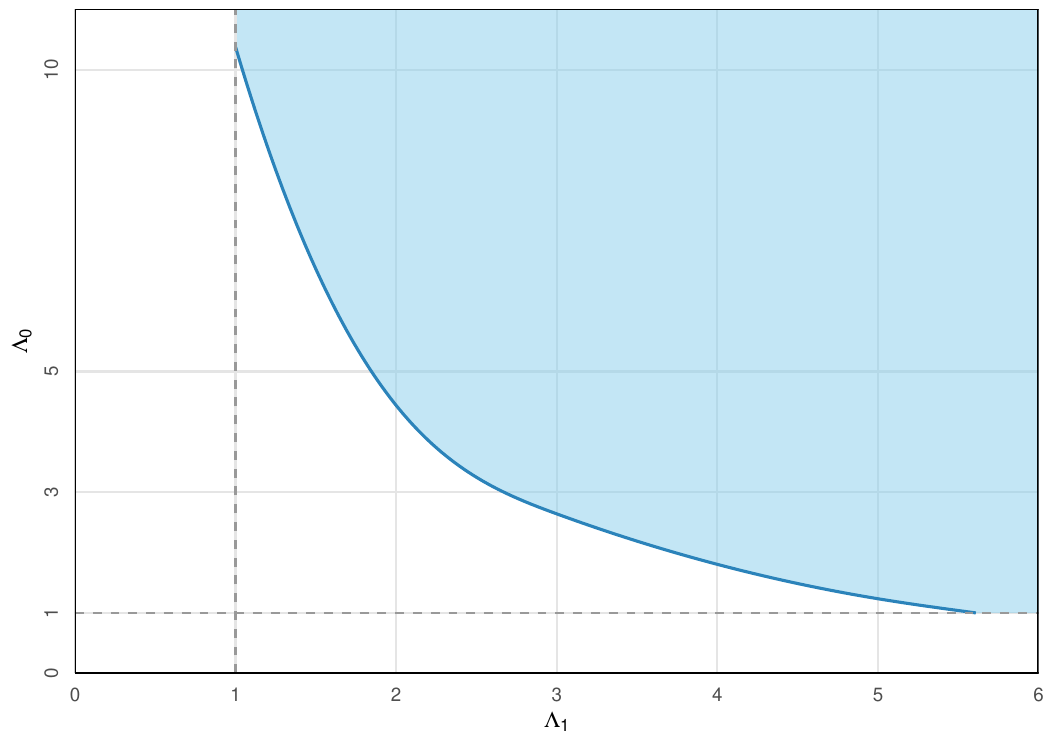}
    \caption{A $95\%$ prediction set for the sample quantiles of unmeasured confounding strength $(\OR^\treat_{\m[q_1]}, \OR^\control_{\m[q_0]})$, represented by the sky-blue shaded region represents.}
    \label{fig:Visualizing_the_Structure_of_Iq}
\end{figure}

\subsection{Sensitivity of confidence bounds to small variations in \texorpdfstring{$\vec{\bm{\Lambda}}$}{vector Lambda} or \texorpdfstring{$\Lambda$}{Lambda}}

We consider first the sensitivity analysis in Section \ref{sec:ci_for_overlap} with bounds on the strength of unmeasured confounding in both the treated and control populations. 
We generate $n=1000$ i.i.d.~samples from \eqref{eq:data_gen}, 
and consider the bounds $[\LL_{\alpha}(\vec{\bm{\Lambda}}, \vec{\bm{\delta}}), \UU_{\alpha}(\vec{\bm{\Lambda}}, \vec{\bm{\delta}})]$ defined in \eqref{eq:tau_boot_quantile_gen} for constructing confidence bounds for the overlap weighted average treatment effect.
Table \ref{table:small_variation_seperate} shows the values of  $[\LL_{\alpha}(\vec{\bm{\Lambda}}, \vec{\bm{\delta}}), \UU_{\alpha}(\vec{\bm{\Lambda}}, \vec{\bm{\delta}})]$ under small variations around $\vec{\bm{\Lambda}} = (2,3)$, at three different values of $\vec{\bm{\delta}}$. 
From Table \ref{table:small_variation_seperate}, the bounds are robust to small variations in $\vec{\bm{\Lambda}}$. 

\begin{table}[htbp]
\centering
\renewcommand{\arraystretch}{2}
\caption{Values of $[\LL_{\alpha}(\vec{\bm{\Lambda}}, \vec{\bm{\delta}}), \UU_{\alpha}(\vec{\bm{\Lambda}}, \vec{\bm{\delta}})]$ under small variations in $\vec{\bm{\Lambda}}$.}
\begin{tabular}{lccc}
\toprule[1.5pt]
                           & $\vec{\bm{\delta}}=(0.1, 0.05)$      & $\vec{\bm{\delta}}=(0.05, 0.1)$      & $\vec{\bm{\delta}}=(0.1, 0.2)$        \\ \hline
$\vec{\bm{\Lambda}}=(2,3)$               & {[}3.586, 129.571{]}  & {[}6.498, 123.691{]}  & {[}-5.073, 139.646{]} \\
$\vec{\bm{\Lambda}}=(2.001,3.001)$ & {[}3.578, 129.589{]} & {[}6.489, 123.711{]} & {[}-5.082, 139.665{]}    \\
$\vec{\bm{\Lambda}}=(2.01,3.01)$     & {[}3.511, 129.746{]} & {[}6.408, 123.888{]} & {[}-5.164, 139.834{]}\\ 
\bottomrule[1.5pt]
\end{tabular}
\label{table:small_variation_seperate}
\end{table}

We consider then the sensitivity analysis in Section \ref{sec:whole} with bounds on the unmeasured confounding strength in the whole population. Again, we generate $n=1000$ i.i.d.~samples from \eqref{eq:data_gen} and consider the bounds $[\LL_{\alpha}(\Lambda, \delta), \UU_{\alpha}(\Lambda, \delta)]$ defined as in \eqref{eq:tau_boot_quantile_gen_whole} for constructing confidence bounds for the overlap-weighted average treatment effect. 
Table \ref{table:small_variation_joint} shows the values of $[\LL_{\alpha}(\Lambda, \delta), \UU_{\alpha}(\Lambda, \delta)]$ under small variations around $\Lambda=2$, at three different values of $\delta$. 
From Table \ref{table:small_variation_joint}, the bounds are robust to small variations in $\Lambda$.

\begin{table}[htbp]
\centering
\renewcommand{\arraystretch}{2}
\caption{Different values of $[\LL_{\alpha}(\Lambda, \delta), \UU_{\alpha}(\Lambda, \delta)]$ for small variation in $\Lambda$.}
\begin{tabular}{lccc}
\toprule[1.5pt]
               & $\delta=0.05$              & $\delta=0.1$              & $\delta=0.2$                \\ \hline
$\Lambda=2$       & {[}10.237, 116.060{]} & {[}2.402, 130.814{]} & {[}-19.846, 155.882{]} \\
$\Lambda=2.001$ & {[}10.228, 116.082{]} & {[}2.396, 130.833{]} & {[}-19.867, 155.898{]} \\
$\Lambda=2.01$  & {[}10.153, 116.272{]} & {[}2.338, 131.008{]} & {[}-20.053, 156.043{]}\\
\bottomrule[1.5pt]
\end{tabular}
\label{table:small_variation_joint}
\end{table}

\subsection{Conservativeness in the bounds when relaxing integer-type constraints}

We consider first sensitivity analysis with bounds on unmeasured confounding strength in both treated and control populations. 
We generate data from \eqref{eq:data_gen} with $n=1000$, and compare the bounds from \eqref{eq:MILP_sep} with and without relaxing the integer constraints, which are equivalent to 
$[\hat{\tau}_{\bstrap}^{\max}(\vec{\bm{\Lambda}}, \vec{\bm{\delta}}), \hat{\tau}_{\bstrap}^{\min}(\vec{\bm{\Lambda}}, \vec{\bm{\delta}})]$ defined as in \eqref{eq:tau_boot_epsilon_pred_set} with 
$(k_{\bstrap 1}, k_{\bstrap 2}, \ldots, k_{\bstrap n}) = (1,1,\ldots, 1)$. 
Table 
\ref{table:milp_seperate_conservativeness} 
reports the optimized lower and upper bounds. 
It shows that relaxing the integer-type constraints lead to little increase in conservativeness. 
\begin{table}[htbp]
\centering
\renewcommand{\arraystretch}{2}
\caption{Values of $[\hat{\tau}_{\bstrap}^{\max}(\vec{\bm{\Lambda}}, \vec{\bm{\delta}}), \hat{\tau}_{\bstrap}^{\min}(\vec{\bm{\Lambda}}, \vec{\bm{\delta}})]$ with and without integer-type constraints.}
\begin{tabular}{lcc}
\toprule[1.5pt]
\textit{Method}                    & $\Lambda=(2,3),\ \delta=(0.1, 0.05)$ & $\Lambda=(5,2),\  \delta=(0.05, 0.1)$ \\ \hline
Glover's Linearization    & {[}15.006, 110.655{]}          & {[}7.074, 127.107{]}         \\
Relax integer constraints & {[}15.006, 110.655{]}          & {[}7.074, 127.107{]}         \\ 
\bottomrule[1.5pt]
\end{tabular}
\label{table:milp_seperate_conservativeness}
\end{table}

We consider then sensitivity analysis with bounds on unmeasured confounding strength in the whole population. 
We generate data from \eqref{eq:data_gen} with $n=50$, and compute bounds in \eqref{eq:boot_frac_double_whole} using the observed samples, i.e., $(k_{\bstrap 1}, k_{\bstrap 2}, \ldots, k_{\bstrap n}) = (1,1,\ldots, 1)$. 
We consider both the multiple linear programming approach in Section \ref{sec:multiple_linear_appdix} and the quadratic programming approach in \ref{sec:quad_prog}, with and without integer-type constraints.
Table \ref{table:conservative_whole} presents the bounds obtained by solving the optimization problem in \eqref{eq:boot_frac_double_whole}, both with and without integer constraints, as well as those from the ad hoc approach described in Section \ref{sec:quad_prog}. 
Note that the bounds from the relaxed programming and ad hoc approach provide, respectively, superset and subset of the exact bounds. 
From Table \ref{table:conservative_whole}, relaxing integer-type constraints lead to little conservativeness, and the ad hoc approach provides quite tight bounds for the exact solution.

\begin{table}[htbp]
    \centering
    \renewcommand{\arraystretch}{2}
    \caption{Sensitivity analysis bounds from the optimization in \eqref{eq:boot_frac_double_whole}, using both the multiple linear programming and quadratic programming formulations.
    ``exact'' denotes the optimization with integer-type constraints, 
    ``relaxed'' denotes the optimization without integer-type constraints, 
    and 
    ``ad hoc'' denotes the ad hoc approach discussed in Section \ref{sec:quad_prog}. 
    }
    \resizebox{0.96\columnwidth}{!}{%
    \begin{tabular}{cccccc}
    \toprule[1.5pt]
    \textit{Method}                                            & \multicolumn{1}{l}{} & \multicolumn{4}{c}{\textit{Point estimation of the bounds}}                                                         \\ \hline
    \textbf{}                                                  &                      & $\Lambda = \exp(1), \ \delta=0.1$ & $\Lambda = \exp(1), \ \delta=0.15$ & $\Lambda = \exp(2), \ \delta=0.1$ & $\Lambda = \exp(2), \ \delta=0.15$ \\  \cline{2-6} 
    \multirow{3}{*}{MMILP} & 
    {\text{ad hoc}}
    & {[}12.266, 105.329{]}   & {[}9.336, 110.047{]}    & {[}7.096, 140.769{]}   & {[}6.075, 144.247{]}     \\
    & 
    {\text{exact}}
    & {[}12.266, 105.329{]}   & {[}9.336, 110.047{]}    & {[}7.096, 140.769{]}   & {[}6.075, 144.247{]}     \\
    & 
    {\text{relaxed}}
    & {[}12.266, 105.332{]}   & {[}9.336, 110.047{]}    & {[}7.096, 140.769{]}   & {[}6.075, 144.247{]}     \\ \cline{2-6} 
    \multirow{3}{*}{QCLP}                     & 
    {\text{ad hoc}} & {[}12.266, 105.329{]}   & {[}9.336, 110.047{]}    & {[}7.096, 140.769{]}   & {[}6.075, 144.247{]}       \\
    & 
    {\text{exact}} & {[}12.266, 105.329{]}   & {[}9.336, 110.047{]}    & {[}7.096, 140.769{]}   & {[}6.075, 144.247{]}       \\
    & 
    {\text{relaxed}}
    & {[}12.266, 105.332{]}   & {[}9.336, 110.047{]}    & {[}7.096, 140.769{]}   & {[}6.075, 144.247{]}       \\
    \bottomrule[1.5pt]
    \end{tabular}%
    }
    \label{table:conservative_whole}
\end{table}

\section{Additional real data analyses}

We now present additional analyses for the application in Section \ref{sec:illu} using the robust sensitivity analysis in Section \ref{sec:whole}  that imposes bounds on the strength of unmeasured confounding across the whole population. 
Figure \ref{fig:ci_whole}(a) shows the bootstrap confidence bounds in Corollary \ref{cor:ci_tau_whole} under various values of $(\Lambda, \delta)$, and Figure \ref{fig:ci_whole}(b) shows the narrower bootstrap confidence bounds from Theorem \ref{thm:ci_tau_pred_whole_sharper}. 
Figure \ref{fig:strength_unmeaured_confounding_whole} shows the simultaneous prediction intervals for the sample quantiles of unmeasured confounding strength among all units from Theorem \ref{thm:U_quant_pred_int_whole} with $\xi = 1$.
The interpretation of these results are similar to that in Section \ref{sec:illu} and is thus omitted for conciseness.

\begin{figure}[htbp]
    \centering
    \begin{subfigure}{0.48\textwidth}
        \centering
        \includegraphics[width=\textwidth]{./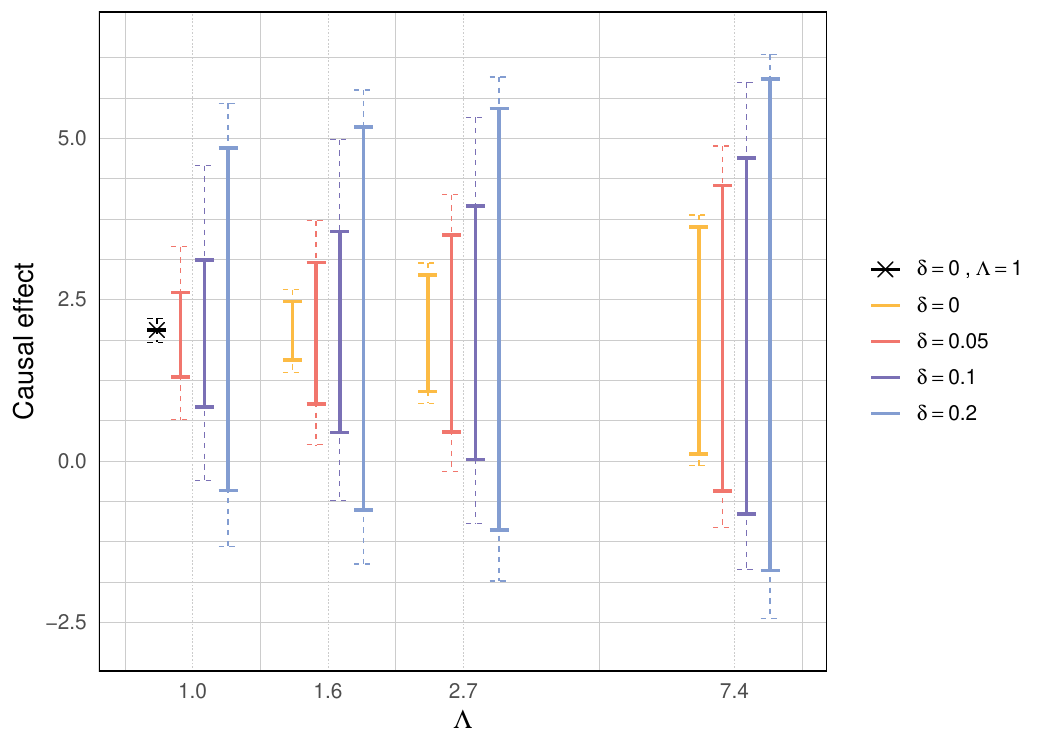}
        \caption{}
    \end{subfigure}
    \begin{subfigure}{0.48\textwidth}
        \centering
        \includegraphics[width=\textwidth]{./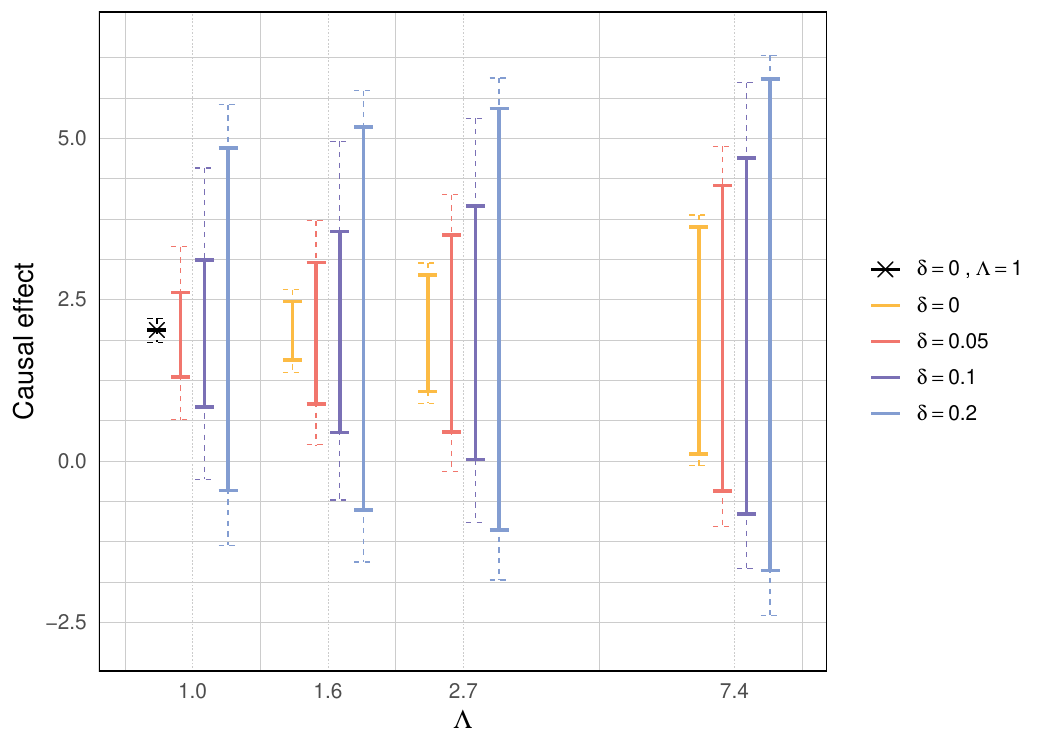}
        \caption{}
    \end{subfigure}
    \caption{
    Estimated bounds and 95\% confidence intervals for the overlap-weighted average causal effect $\tau_{\ow}$ of fish consumption on the blood mercury level.
    The solid intervals show the estimated bounds of $\tau_{\ow}$, and the dashed intervals show the 95\% confidence intervals for $\tau_{\ow}$, 
    under various sensitivity models in Assumption \ref{asmp:robust_sen_spec_whole} indexed by $(\Lambda, \delta)$. 
    The bounds in (a) and (b) are from Corollary \ref{cor:ci_tau_whole} and Theorem \ref{thm:ci_tau_pred_whole_sharper}, respectively. 
    }
    \label{fig:ci_whole}
\end{figure}

\begin{figure}[htbp]
    \centering\includegraphics[width=0.7\textwidth]{./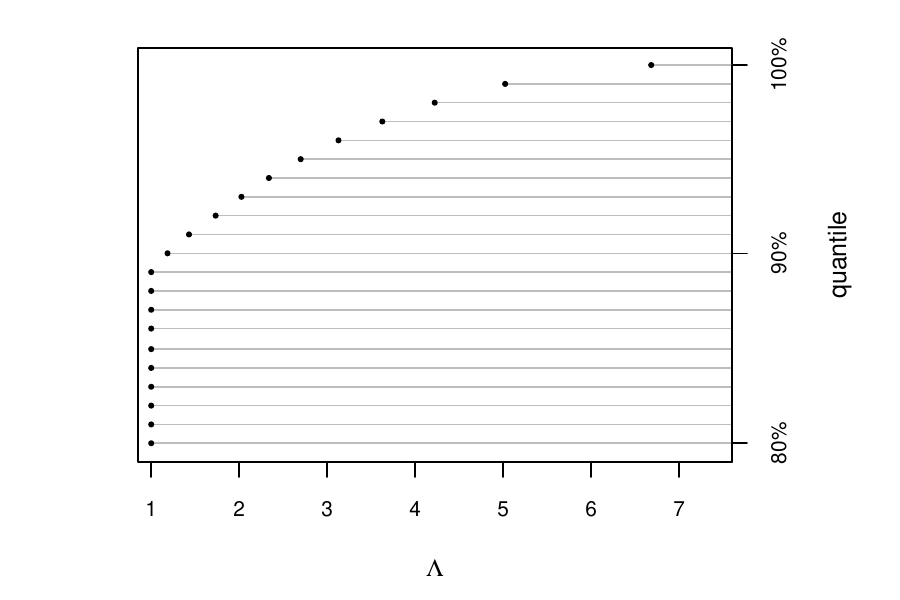}
    \caption{Lower prediction bounds for ${\OR}_{\m [q]}$s across various quantiles of unmeasured confounding strength, under the assumption of a zero overlap-weighted average treatment effect of the fish consumption on the blood mercury level.}
    \label{fig:strength_unmeaured_confounding_whole}
\end{figure}

\section{Proof of theorems}
\subsection{Proof of Proposition \ref{prop:tau_h_identifiable}}

In Proposition \ref{prop:tau_h_identifiable}, we have assumed that the potential outcomes $Y(1)$ and $Y(0)$, as well as the weighting function $h(\bmX,\bmU)$ under consideration, all have finite second moments. 
This ensures that the expectations in the following proof are well-defined. 

\begin{proof}[\bf Proof of Proposition \ref{prop:tau_h_identifiable}]

First, we prove the sufficiency. Suppose that $\E[h(\bmX,\bmU)\I\{e_{\tp}(\bmX,\bmU)=z\}]=0$ for $z=0,1$. 
Because $h(\cdot)$ is nonnegative, we must have $h(\bmX,\bmU)\I\{e_{\tp}(\bmX,\bmU)=z\}=0$ almost surely for $z=0,1$.
Define 
\begin{align*}
\xi_0(\bm{x},\bm{u})
= 
\begin{cases}
    \dfrac{1}{e_{\tp}(\bm{x},\bm{u})},  & \text{if } e_{\tp}(\bm{x},\bm{u})>0,\\
    0, & \text{if } e_{\tp}(\bm{x},\bm{u})=0,
\end{cases}
\ \ \text{ and } \ \ 
\xi_1(\bm{x},\bm{u})
=
\begin{cases}
    \dfrac{1}{1-e_{\tp}(\bm{x},\bm{u})},  & \text{if } e_{\tp}(\bm{x},\bm{u})<1,\\
    0, & \text{if } e_{\tp}(\bm{x},\bm{u})=1.
\end{cases}	
\end{align*}
We then have $\xi_0(\bmX,\bmU)e_{\tp}(\bmX,\bmU) = \I\{e_{\tp}(\bmX,\bmU)>0\}$, 
and 
\begin{align}\label{eq:h_xi0_YZ}
	\E\{h(\bmX,\bmU)\xi_0(\bmX,\bmU)YZ\} 
	=&\E[h(\bmX,\bmU)\xi_0(\bmX, \bmU)\E\{Y(1)Z|\bmX,\bmU\}] \nonumber \\
	=&\E[h(\bmX,\bmU)\xi_0(\bmX,\bmU)\E\{Y(1)|\bmX,\bmU\}\E(Z|\bmX,\bmU)] \nonumber \\
	=&\E[h(\bmX,\bmU)\xi_0(\bmX,\bmU)e_{\tp}(\bmX,\bmU)\E\{Y(1)|\bmX,\bmU\}] \nonumber \\
	=&\E[h(\bmX,\bmU) \I\{e_{\tp}(\bmX,\bmU)>0\} \E\{Y(1)|\bmX,\bmU\}] \nonumber \\
	=&\E\{h(\bmX,\bmU) \I\{e_{\tp}(\bmX,\bmU)>0\} Y(1)\},
\end{align}
where the first and last equalities follows from the law of total expectation, the second equality follows from the unconfoundedness assumption, 
the third equality follows from the definition of $e_\tp(\cdot)$, 
and the fourth equality follows from the definition of $\xi_0(\cdot)$. 
Because $h(\bmX,\bmU)\I\{e_{\tp}(\bmX,\bmU)=0\}=0$ almost surely, we then have 
\begin{align*}
    \E\{h(\bmX,\bmU)\xi_0(\bmX,\bmU)YZ\}
    & = \E\{h(\bmX,\bmU) \I\{e_{\tp}(\bmX,\bmU)>0\} Y(1)\}
    \\
    & = \E\{h(\bmX,\bmU) \I\{e_{\tp}(\bmX,\bmU)>0\} Y(1)\} + \E\{h(\bmX,\bmU) \I\{e_{\tp}(\bmX,\bmU)=0\} Y(1)\}
    \\
    & = \E\{h(\bmX,\bmU) Y(1)\}. 
\end{align*}
By the same logic, we have 
\begin{align*}
    & \E\{h(\bmX,\bmU)\xi_1(\bmX,\bmU)Y(1-Z)\}=\E\{h(\bmX,\bmU)Y(0)\}, &
    \E\{h(\bmX,\bmU)\xi_0(\bmX,\bmU)Z\}= \E\{h(\bmX,\bmU)\},
    \\
    & \E\{h(\bmX,\bmU)\xi_1(\bmX,\bmU)(1-Z)\}=\E\{h(\bmX,\bmU)\}. 
\end{align*}
Therefore, we can write $\tau_h$ equivalently as
\begin{align*}
	\tau_h=
    \dfrac{\E\{h(\bmX,\bmU)\{Y(1) - Y(0)\}\}}{\E\{h(\bmX,\bmU)\}} = 
    \dfrac{\E\{h(\bmX,\bmU)\xi_0(\bmX,\bmU)YZ\}}{\E\{h(\bmX,\bmU)\xi_0(\bmX,\bmU)Z\}}-\dfrac{\E\{h(\bmX,\bmU)\xi_1(\bmX,\bmU)Y(1-Z)\}}{\E\{h(\bmX,\bmU)\xi_1(\bmX,\bmU)(1-Z)\}}. 
\end{align*}
This indicates that $\tau_h$ is identifiable from the observable data.

Second, we prove the necessity. Suppose that $\E[h(\bmX,\bmU)\I\{e_{\tp}(\bmX,\bmU)=z\}] > 0$ for $z=0$ or $1$. 
Without loss of generality, we assume that $\E[h(\bmX,\bmU)\I\{e_{\tp}(\bmX,\bmU)=0\}]>0$. 
We then have $\Pr\{(\bmX,\bmU)\in \mathcal{A}_0\}>0$, where $\mathcal{A}_0=\{(\bm{x},\bm{u}): h(\bm{x},\bm{u})\I\{e_{\tp}(\bm{x},\bm{u})=0\}>0\}$. 
Note that 
\begin{align}\label{eq:hy_decom}
	\E\{h(\bmX,\bmU)Y(1)\}=\E[h(\bmX,\bmU)\I\{e_{\tp}(\bmX,\bmU)=0\}Y(1)]+\E[h(\bmX,\bmU)\I\{e_{\tp}(\bmX,\bmU)>0\}Y(1)],
\end{align}
where the second term is identifiable from the observable data as implied by \eqref{eq:h_xi0_YZ}. 
Below we show that the first term in \eqref{eq:hy_decom} is not identifiable.
Specifically, for any unit whose $(\bmX,\bmU)$ is in $\mathcal{A}_0$, we have probability zero to observe its treatment potential outcome $Y(1)$. 
That is, the distribution of $Y(1)$ among units whose covariate values fall in $\mathcal{A}_0$ is not identifiable from the observable data. 
Moreover, the first term in \eqref{eq:hy_decom} depends on such a distribution. 
For example, if $Y(1)$ equals a constant $c$ for units whose covariate values fall in $\mathcal{A}_0$, then we have 
$
    \E[h(\bmX,\bmU)\I\{e_{\tp}(\bmX,\bmU)=0\}Y(1)]
    = c \cdot \E[h(\bmX,\bmU)\I\{e_{\tp}(\bmX,\bmU)=0\}],
$
where $\E[h(\bmX,\bmU)\I\{e_{\tp}(\bmX,\bmU)=0\}] > 0$. 
Therefore, the quantity in \eqref{eq:hy_decom} and consequently $\tau_h$ are not identifiable from the observable data.

Finally, we show the equivalent conditions on $h(\cdot)$; that is, $\E[h(\bmX,\bmU)\I\{e_{\tp}(\bmX,\bmU)=z\}]=0$ for $z=0,1$ if and only if $h(\bmX, \bmU)=\tilde{h}(\bmX,\bmU)e_{\tp}(\bmX,\bmU)\{1-e_{\tp}(\bmX,\bmU)\}$ almost surely for some function $\tilde{h}(\cdot)$. 
When $h(\bmX,\bmU)\overset{\text{a.s.}}{=}\tilde{h}(\bmX,\bmU)e_{\tp}(\bmX,\bmU)\{1-e_{\tp}(\bmX,\bmU)\}$ for some $\tilde{h}(\cdot)$, it is easy to see that $\E[h(\bmX,\bmU)\I\{e_{\tp}(\bmX,\bmU)=z\}]=0$ for $z=0,1$. 
Below we consider the case where $\E[h(\bmX,\bmU)\I\{e_{\tp}(\bmX,\bmU)=z\}]=0$ for $z=0,1$.
In this case, we have $h(\bmX,\bmU)\I\{e_{\tp}(\bmX,\bmU)=z\} \overset{\text{a.s.}}{=} 0$ for $z=0,1$. 
Let 
\begin{align*}
    \tilde{h}(\bm{x},\bm{u}) = 
    \begin{cases}
        \dfrac{h(\bm{x},\bm{u})}{e_{\tp}(\bm{x},\bm{u})\{1-e_{\tp}(\bm{x},\bm{u})\}}, & 0 < e_{\tp}(\bm{x},\bm{u}) < 1,\\
        0, & e_{\tp}(\bm{x},\bm{u}) = 0 \text{ or } 1.
    \end{cases}
\end{align*}
We then have 
\begin{align*}
    & \quad \ h(\bmX,\bmU)
    \\
    & = 
    h(\bmX,\bmU)\I\{e_{\tp}(\bmX, \bmU)=0\}+ h(\bmX,\bmU) \I\{e_{\tp}(\bmX, \bmU)=1\}
    + 
    h(\bmX, \bmU)\I\{0<e_{\tp}(\bmX, \bmU)<1\}\\
    & \overset{\text{a.s.}}{=}
    h(\bmX, \bmU)\I\{0<e_{\tp}(\bmX, \bmU)<1\}
    \\
    & =
    \tilde{h}(\bmX, \bmU)e_{\tp}(\bmX, \bmU)\{1-e_{\tp}(\bmX, \bmU)\}\I\{0<e_{\tp}(\bmX, \bmU)<1\}
    \\
    & =
    \tilde{h}(\bmX, \bmU)e_{\tp}(\bmX, \bmU)\{1-e_{\tp}(\bmX, \bmU)\} [\I\{e_{\tp}(\bmX, \bmU)=0\}+ \I\{e_{\tp}(\bmX, \bmU)=1\}]\\
    &\ \ \ \ + 
    \tilde{h}(\bmX, \bmU)e_{\tp}(\bmX, \bmU)\{1-e_{\tp}(\bmX, \bmU)\} \I\{0<e_{\tp}(\bmX, \bmU)<1\}
    \\
    & = \tilde{h}(\bmX, \bmU)e_{\tp}(\bmX, \bmU)\{1-e_{\tp}(\bmX, \bmU)\},
\end{align*}
where the second last equality follows from the fact that $ e_{\tp}(\bmX, \bmU)(1-e_{\tp}(\bmX, \bmU)) [\I\{e_{\tp}(\bmX, \bmU)=0\}+ \I\{e_{\tp}(\bmX, \bmU)=1\}] \equiv 0$.

From the above, Proposition \ref{prop:tau_h_identifiable} holds.
\end{proof}

\subsection{Proof of Theorem \ref{thm:bootstrap_over_identify}}
To prove Theorem \ref{thm:bootstrap_over_identify}, we need the following lemmas.

\begin{lemma}\label{lemma:quantile_compare_gen}
    Let $\Omega \equiv B^{-1}\sum_{b=1}^B\I\{\max_{1\le i\le n}\|\bm{R}_i^\tp\hat{{\bmrho}}_b\|>\varepsilon\}$,
    where $B\equiv n^n$ is the total number of possible bootstrap samples and we use $b\in \{1,2,\ldots, B\}$ to index all the bootstrap samples. 
    We then have, for any $\alpha \ge \Omega$,
    \begin{align*}
        Q_{1-\alpha+\Omega}(\tilde{\theta}_{H, \bstrap})\ge Q_{1-\alpha}(\check{\theta}_{H, \bstrap})
        \quad
        \text{and}
        \quad
        Q_{\alpha-\Omega}(\tilde{\theta}_{H, \bstrap})\le Q_{\alpha}(\check{\theta}_{H, \bstrap}),
    \end{align*}
    where $\check{\theta}_{H, \bstrap}$ and $\tilde{\theta}_{H, \bstrap}$ are defined in \eqref{eq:bootstrap_R_rho} and \eqref{eq:bootstrap_R_epsilon}. 
\end{lemma}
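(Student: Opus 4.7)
The plan is to first establish a deterministic comparison between $\tilde{\theta}_{H, \bstrap}$ and $\check{\theta}_{H, \bstrap}$ on each bootstrap sample for which the perturbation $\bm{R}^\tp_i \hat{{\bmrho}}_\bstrap$ is uniformly small, and then translate that comparison into a comparison of bootstrap quantiles using the fact that the ``bad'' samples together carry at most bootstrap probability $\Omega$. Fix a bootstrap index $b$ and consider the event $E_b = \{\max_{1\le i \le n} \|\bm{R}_i^\tp \hat{{\bmrho}}_b\| \le \varepsilon\}$; by construction $\Omega = B^{-1}\sum_{b=1}^{B}\I(E_b^{\com})$ coincides exactly with the bootstrap probability of $E^{\com}$, since each of the $B=n^n$ bootstrap sequences carries mass $1/B$ under the Multinomial draw. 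On $E_b$, I would take any feasible point $({\bmtheta}, \bmW_1, \ldots, \bmW_n)$ for the optimization defining $\check{\theta}_{H, \bstrap_b}$ in \eqref{eq:bootstrap_R_rho} and shift the unobserved variables to $\tilde{\bmW}_i = \bmW_i + \bm{R}^\tp_i \hat{{\bmrho}}_b$: the shifted tuple automatically satisfies the estimating equations in \eqref{eq:bootstrap_R_epsilon}; the constraint $\tilde{\bmW}_i \in \mathcal{W}_0^\varepsilon$ holds because $\bmW_i \in \mathcal{W}_0$ and $\|\tilde{\bmW}_i - \bmW_i\| \le \varepsilon$; and the objective $H({\bmtheta})$ is unchanged. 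Taking the supremum yields $\tilde{\theta}_{H, \bstrap_b} \ge \check{\theta}_{H, \bstrap_b}$ on $E_b$, and the same substitution applied to the corresponding $\inf$ versions of the two optimizations (relevant for the second inequality) gives the reverse ordering $\tilde{\theta}_{H, \bstrap_b} \le \check{\theta}_{H, \bstrap_b}$.

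Next, I would convert the pointwise comparison into a comparison of the bootstrap CDFs, treating all probabilities as conditional on the original data. Writing $F_{\tilde\theta}$ and $F_{\check\theta}$ for the bootstrap CDFs and using $\tilde{\theta}_{H, \bstrap} \ge \check{\theta}_{H, \bstrap}$ off a bootstrap event of probability at most $\Omega$,
\[
    F_{\tilde\theta}(x) \le \Pr_{\bstrap}(\check{\theta}_{H, \bstrap} \le x,\, E) + \Pr_{\bstrap}(E^{\com}) \le F_{\check\theta}(x) + \Omega,
\]
and the parallel bound for the $\inf$ versions yields $F_{\tilde\theta}(x) \ge F_{\check\theta}(x) - \Omega$. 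For the first inequality of the lemma, any $x < Q_{1-\alpha}(\check{\theta}_{H, \bstrap})$ satisfies $F_{\check\theta}(x) < 1-\alpha$ by the definition of the infimum quantile, hence $F_{\tilde\theta}(x) < 1-\alpha+\Omega$ and therefore $x < Q_{1-\alpha+\Omega}(\tilde{\theta}_{H, \bstrap})$; letting $x \uparrow Q_{1-\alpha}(\check{\theta}_{H, \bstrap})$ delivers the claim. For the second inequality, any $x \ge Q_{\alpha}(\check{\theta}_{H, \bstrap})$ satisfies $F_{\check\theta}(x) \ge \alpha$ by right-continuity of CDFs, so the lower CDF bound gives $F_{\tilde\theta}(x) \ge \alpha - \Omega$ and therefore $x \ge Q_{\alpha-\Omega}(\tilde{\theta}_{H, \bstrap})$; taking the infimum over such $x$ finishes the argument.

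The main obstacle is essentially bookkeeping rather than any deep idea: care is needed with the definition of the quantile at jump points (handled cleanly by the infimum convention together with the right-continuity of the bootstrap CDF, which is a step function), with the fact that $\Omega$, $\hat{{\bmrho}}_\bstrap$, and both $\tilde{\theta}_{H, \bstrap}$ and $\check{\theta}_{H, \bstrap}$ are random through the bootstrap draw, and with keeping every probability statement conditional on the original sample, so that the resulting quantile inequalities are the data-dependent statements that the enclosing proof of Theorem~\ref{thm:bootstrap_over_identify} can then combine with Assumption~\ref{asmp:valid_bootstrap_z_est} (under which $\hat{{\bmrho}}_\bstrap = O_{\Pr}(n^{-1/2})$ and hence $\Omega \to 0$) to derive the asymptotic validity of $Q_{1-\alpha}(\tilde{\theta}_{H, \bstrap})$.
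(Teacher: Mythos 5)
Your proposal is correct, and its first step is exactly the paper's: on each bootstrap draw $b$ with $\max_{1\le i\le n}\|\bm{R}_i^\tp\hat{\bmrho}_b\|\le\varepsilon$, the map $\bmW_i \mapsto \bmW_i + \bm{R}_i^\tp\hat{\bmrho}_b$ sends the feasible region of \eqref{eq:bootstrap_R_rho} into that of \eqref{eq:bootstrap_R_epsilon}, giving $\tilde{\theta}_{H,b}\ge\check{\theta}_{H,b}$ for the $\sup$ versions (and the reversed ordering for the $\inf$ versions). Where you diverge is the second step: the paper sorts the $B=n^n$ values $\check{\theta}_{H,1}\le\cdots\le\check{\theta}_{H,B}$, identifies $Q_{1-\alpha}(\check{\theta}_{H,\bstrap})=\check{\theta}_{H,k}$ with $k=B-\lfloor\alpha B\rfloor$, and uses a counting bound --- the set $\mathcal{S}$ of ``good'' indices at or above rank $k$ has cardinality at least $\lfloor(\alpha-\Omega)B\rfloor+1$, and the $(1-\alpha+\Omega)$th quantile dominates the minimum of any such subsequence --- whereas you bound the conditional bootstrap CDFs, $F_{\check\theta}(x)-\Omega\le F_{\tilde\theta}(x)\le F_{\check\theta}(x)+\Omega$, and invert with the infimum quantile convention and right-continuity. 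The two routes are logically equivalent here (the empirical CDF argument is just the counting argument in disguise), but yours generalizes verbatim to non-discrete conditional distributions and makes the jump-point bookkeeping explicit, while the paper's is more self-contained for the exactly discrete bootstrap law. One genuine merit of your write-up: you state explicitly that the second inequality concerns the $\inf$ versions of both optimizations, with the pointwise ordering reversed; the paper compresses this into ``analogously,'' and a literal reading of the lemma with both $\tilde{\theta}_{H,\bstrap}$ and $\check{\theta}_{H,\bstrap}$ as suprema would make the second inequality false (pointwise domination $\tilde{\theta}\ge\check{\theta}$ cannot yield $Q_{\alpha-\Omega}(\tilde{\theta})\le Q_{\alpha}(\check{\theta})$ in general, as a two-point example with $\Omega=0$ shows). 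Your identification of $\Omega$ with the conditional bootstrap probability $\Pr\{\max_{1\le i\le n}\|\bm{R}_i^\tp\hat{\bmrho}_\bstrap\|>\varepsilon\mid\mathcal{D}\}$ also matches how the paper later uses the lemma in the proof of Theorem \ref{thm:bootstrap_over_identify}.
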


 \begin{proof}[\bf Proof of Lemma \ref{lemma:quantile_compare_gen}]
    Consider any $1\le b \le B$ such that $\max_{1\le i\le n}\|\bm{R}_i^\tp\hat{{\bmrho}}_b\|\le\varepsilon$. 
    Then, for any $({\bmtheta}, \bmW_1, \ldots, \bmW_n)$ in the feasible region of \eqref{eq:bootstrap_R_rho}, $({\bmtheta}, \bmW_1+\bm{R}_1^\tp\hat{{\bmrho}}_b, \ldots, \bmW_n+\bm{R}_n^\tp\hat{{\bmrho}}_b)$ must be in the feasible region of \eqref{eq:bootstrap_R_epsilon}. 
    Therefore, once $\max_{1\le i\le n}\|\bm{R}_i^\tp\hat{{\bmrho}}_b\|\le\varepsilon$,     
    we must have $\tilde{\theta}_{H, b}\ge \check{\theta}_{H, b}$.

    Assume, without loss of generality, that $\check{\theta}_{H, 1}\le\check{\theta}_{H, 2}\le\cdots \le\check{\theta}_{H, B}$. 
    Let $k=B-\lfloor \alpha B\rfloor$ and $\mathcal{S}=\{s:  k\le s \le B, \max_{1\le i\le n}\|\bm{R}_i^\tp\hat{{\bmrho}}_s\|\le\varepsilon\}$.
    Then $Q_{1-\alpha}(\check{\theta}_{H, \bstrap})=\check{\theta}_{H, k}$. In addition, by the definition of $\Omega$, we know that $|\mathcal{S}|\ge B-k+1-\Omega B=\lfloor \alpha B\rfloor+1-\Omega B = \lfloor (\alpha - \Omega) B\rfloor+1$.
    Note that  
    the $(1-\beta)$th quantile of a sequence of length $B$ is greater than or equal to the minimum of any subsequence of length at least $\lfloor \beta B\rfloor+1$. 
    We then have
    \begin{align*}
        Q_{1-\alpha+\Omega}(\tilde{\theta}_{H, \bstrap}) \ge& \min_{s\in \mathcal{S}}\tilde{\theta}_{H, s}\ge \min_{s\in \mathcal{S}}\check{\theta}_{H, s}
        \ge
        \check{\theta}_{H, k}=Q_{1-\alpha}(\check{\theta}_{H, \bstrap}),
    \end{align*}
    where the second inequality follows from the definitions in  \eqref{eq:bootstrap_R_rho} and \eqref{eq:bootstrap_R_epsilon} and the fact that $\max_{1\le i\le n}\|\bm{R}_i^\tp\hat{{\bmrho}}_s\|\le\varepsilon$ for $s\in \mathcal{S}$. 
    Analogously, we can show that $Q_{\alpha-\Omega}(\tilde{\theta}_{H, \bstrap})\le Q_{\alpha}(\check{\theta}_{H, \bstrap})$.
    Thus, we derive Lemma \ref{lemma:quantile_compare_gen}. 
\end{proof}

\begin{lemma}\label{lemma:order_rho_b}
    Under Assumption \ref{asmp:valid_bootstrap_z_est}, we have 
    \begin{align*}
        \|\hat{{\bmrho}}_\bstrap\|=O_{\Pr}(n^{-1/2}).
    \end{align*}
\end{lemma}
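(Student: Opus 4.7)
The plan is to derive the bound on $\|\hat{{\bmrho}}_\bstrap\|$ by a triangle-inequality decomposition $\hat{{\bmrho}}_\bstrap = (\hat{{\bmrho}}_\bstrap - \hat{{\bmrho}}) + (\hat{{\bmrho}} - {\bmrho}_0) + {\bmrho}_0$, controlling each piece via the two asymptotic normality statements in Assumption \ref{asmp:valid_bootstrap_z_est}, and then using the key fact that ${\bmrho}_0 = \bm{0}$ so that the last term vanishes exactly rather than merely asymptotically.

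First I would extract the marginal on the ${\bmrho}$ coordinate from the joint convergence \eqref{eq:norm_z_est}: the continuous mapping theorem applied to the projection gives $\sqrt{n}(\hat{{\bmrho}} - {\bmrho}_0) \converged \mathcal{N}(\bm{0}, \bm{\Sigma}_{{\bmrho}{\bmrho}})$, hence $\|\hat{{\bmrho}} - {\bmrho}_0\| = O_{\Pr}(n^{-1/2})$, and since ${\bmrho}_0 = \bm{0}$ this yields $\|\hat{{\bmrho}}\| = O_{\Pr}(n^{-1/2})$. Next I would handle the bootstrap piece: from the second display in \eqref{eq:norm_z_est}, $\sqrt{n}(\hat{{\bmrho}}_\bstrap - \hat{{\bmrho}}) \overset{\Pr}{\converged} \mathcal{N}(\bm{0}, \bm{\Sigma}_{{\bmrho}{\bmrho}})$ conditionally on the data in probability, which, by the standard argument converting conditional weak convergence in probability into unconditional tightness (for any $\epsilon>0$ there is $M$ with $\Pr\{\Pr(\sqrt{n}\|\hat{{\bmrho}}_\bstrap - \hat{{\bmrho}}\| > M \mid \text{data}) > \epsilon\} < \epsilon$ eventually, and then a union bound over the outer and inner probabilities), gives $\|\hat{{\bmrho}}_\bstrap - \hat{{\bmrho}}\| = O_{\Pr}(n^{-1/2})$ unconditionally. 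Combining the two bounds via the triangle inequality $\|\hat{{\bmrho}}_\bstrap\| \le \|\hat{{\bmrho}}_\bstrap - \hat{{\bmrho}}\| + \|\hat{{\bmrho}}\|$ yields the claim. There is no genuine obstacle here; the only subtlety is translating the ``in probability'' bootstrap weak convergence into an unconditional stochastic boundedness statement, which is a routine consequence of the definition of $\overset{\Pr}{\converged}$ and does not require any further structural assumptions beyond those already in place.
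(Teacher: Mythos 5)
Your proof is correct and follows essentially the same route as the paper's: the same triangle-inequality decomposition $\|\hat{\bm{\rho}}_{\mathsf{b}}\| \le \|\hat{\bm{\rho}}_{\mathsf{b}} - \hat{\bm{\rho}}\| + \|\hat{\bm{\rho}}\|$, with $\|\hat{\bm{\rho}}\| = O_{\Pr}(n^{-1/2})$ from the first display of Assumption 4 together with ${\bm{\rho}}_0 = \bm{0}$, and the bootstrap increment controlled via the conditional weak convergence. The only cosmetic difference is in how the conditional-in-probability convergence is made unconditional: the paper integrates the conditional tail probability and applies the dominated convergence theorem, while you use the equivalent direct tightness/union-bound argument from the definition of $\overset{\Pr}{\rightsquigarrow}$ — both are routine and yield the same conclusion.
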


\begin{proof}[\bf Proof of Lemma \ref{lemma:order_rho_b}]
Let $\mathcal{D} = \{\bmO_i, \bmW_i^\tp\}_{i=1}^n$ represent the $n$ i.i.d.~samples.
From Assumption \ref{asmp:valid_bootstrap_z_est}, we can know that, 
for any fixed $M>0$,
\begin{align}
\label{eq:rho_b_con_prob}
    \Pr(\|\sqrt{n}(\hat{{\bmrho}}_{\mathsf{b}} - 
\hat{{\bmrho}})\|>M\mid\mathcal{D})\overset{\Pr}{\rightarrow}\Pr(\|\mathcal{N}\left(\bm{0}, \bm{\Sigma}_{{\bmrho}}\right)\|>M),
\end{align}
where $\bm{\Sigma}_{{\bmrho}}$ is the submatrix of $\bm{\Sigma}$ corresponding to the asymptotic variance of $\hat{{\bmrho}}_{\mathsf{b}}$, and $\mathcal{N}\left(\bm{0}, \bm{\Sigma}_{{\bmrho}}\right)$ denotes a Gaussian random vector with mean zero and covariance matrix $\bm{\Sigma}_{{\bmrho}}$.
Note that for any $\varepsilon>0$, there exists $M_{\varepsilon}>0$ such that $\Pr(\|\mathcal{N}\left(\bm{0}, \bm{\Sigma}_{{\bmrho}}\right)\|>M_{\varepsilon})\le \varepsilon/2$. 

From \eqref{eq:rho_b_con_prob} and by the dominated convergence theorem,  we have 
\begin{align}\label{eq:expect_rho_b_con_prob}
    \Pr(\|\sqrt{n}(\hat{{\bmrho}}_{\mathsf{b}} - 
\hat{{\bmrho}})\|>M)\rightarrow\Pr(\|\mathcal{N}\left(\bm{0}, \bm{\Sigma}_{{\bmrho}}\right)\|>M).
\end{align}
Then for any given $\varepsilon$ and $M_{\varepsilon}$, there exists $n_{\varepsilon}>0$ such that $\Pr(\|\sqrt{n}(\hat{{\bmrho}}_{\mathsf{b}} - 
\hat{{\bmrho}})\|>M_{\varepsilon}) \le \Pr(\|\mathcal{N}\left(\bm{0}, \bm{\Sigma}_{{\bmrho}}\right)\|>M_{\varepsilon})+\varepsilon/2 \le \varepsilon$ for all $n>n_{\varepsilon}$.
Thus, 
$\|\hat{{\bmrho}}_{\mathsf{b}} - 
\hat{{\bmrho}}\|=O_{\Pr}(n^{-1/2})$.
In addition, 
Assumption \ref{asmp:valid_bootstrap_z_est} also implies that 
$\|\hat{{\bmrho}}\|=O_{\Pr}(n^{-1/2})$.

From the above, 
$\|\hat{{\bmrho}}_\bstrap\|\le\|\hat{{\bmrho}}_{\mathsf{b}} - 
\hat{{\bmrho}}\|+\|\hat{{\bmrho}}\|=O_{\Pr}(n^{-1/2})$, i.e., Lemma \ref{lemma:order_rho_b} holds.
\end{proof}

\begin{proof}[\bf Proof of Theorem \ref{thm:bootstrap_over_identify}]

    First, we prove that $\max_{1\le i \le n}\|\bm{R}_i^\tp\hat{{\bmrho}}_\bstrap\|=o_{\Pr}(1)$. 
    By Lemma \ref{lemma:order_rho_b}, 
    we have 
    $\|\hat{{\bmrho}}_\bstrap\|=O_{\Pr}(n^{-1/2})$.
    Since $\E(\|\bm{R}^\tp\|^{2+\nu})<\infty$ for some $\nu>0$, for any $\tilde{\varepsilon}>0$, we then have 
    \begin{align*}
        \Pr(\max_{1\le i \le n}\|\bm{R}_i^\tp\|\ge\tilde{\varepsilon} n^{1/2}) & \le n \Pr(\|\bm{R}^\tp\|\ge\tilde{\varepsilon} n^{1/2}) \le n \E(\|\bm{R}^\tp\|^{2+\nu})/(\tilde{\varepsilon}^{2+\nu}n^{1+\nu/2})\\
        & =n^{-\nu/2}\E(\|\bm{R}^\tp\|^{2+\nu})/\tilde{\varepsilon}^{2+\nu}\rightarrow 0,
    \end{align*}
    where the first inequality follows from the union bound, and the second inequality follows from the Markov inequality. 
    This immediately implies that     
    $\max_{1\le i \le n}\|\bm{R}_i^\tp\|=o_{\Pr}(n^{1/2})$, and consequently $\max_{1\le i \le n}\|\bm{R}_i^\tp\hat{{\bmrho}}_\bstrap\|=o_{\Pr}(1)$.

    Second,  
    for any given $\varepsilon>0$, we define $\Omega$ as in Lemma \ref{lemma:quantile_compare_gen}: 
    \begin{align*}
        \Omega=B^{-1}\sum_{b=1}^B\I\left( \max_{1\le i\le n}\|\bm{R}_i^\tp\hat{{\bmrho}}_b\|>\varepsilon \right)
        = 
        \Pr\left( \max_{1\le i\le n}\|\bm{R}_i^\tp\hat{{\bmrho}}_\bstrap\|>\varepsilon \mid \mathcal{D}\right), 
    \end{align*}
    where $\mathcal{D} = \{\bmO_i, \bmW_i^\tp\}_{i=1}^n$ represent the $n$ i.i.d.~samples.
    Then by the Markov inequality, for any $\eta>0$, 
    \begin{align}\label{eq:Omega_o1}
        \Pr(\Omega>\eta)\le\E(\Omega)/\eta=\Pr\left(\max_{1\le i \le n}\|\bm{R}_i^\tp\hat{{\bmrho}}_\bstrap\|>\varepsilon \right)/\eta\rightarrow 0 
        \ \ \text{as} \ \ 
        n\rightarrow\infty.
    \end{align}
    
    Third, we prove that $Q_{1-\alpha} ( \tilde{\theta}_{H, \bstrap} )$ is an asymptotic $1-\alpha$ upper confidence bound for $H({\bmtheta}_0)$. 
    From Lemma \ref{lemma:quantile_compare_gen} and the discussion before, for any $0 < \eta \le \alpha' < 1$, 
    \begin{align*}
        & \quad \ \limsup_{n\rightarrow \infty} \Pr\{ H({\bmtheta}_0) > Q_{1-\alpha'+\eta} ( \tilde{\theta}_{H, \bstrap} ) \}\\
        &\le
        \limsup_{n\rightarrow \infty} \Pr\{ H({\bmtheta}_0) > Q_{1-\alpha'+\eta} ( \tilde{\theta}_{H, \bstrap} ), \Omega\le\eta \}+\limsup_{n\rightarrow \infty} \Pr\{ H({\bmtheta}_0) > Q_{1-\alpha'+\eta} ( \tilde{\theta}_{H, \bstrap} ), \Omega>\eta \}\\
        &\le
        \limsup_{n\rightarrow \infty} \Pr\{ H({\bmtheta}_0) > Q_{1-\alpha'+\Omega} ( \tilde{\theta}_{H, \bstrap} ) , 
        \Omega \le \alpha'
        \}+\limsup_{n\rightarrow \infty} \Pr(\Omega>\eta)\\
        &\le
        \limsup_{n\rightarrow \infty} \Pr\{ H({\bmtheta}_0) > Q_{1-\alpha'}(\check{\theta}_{H, \bstrap}) \}\\
        &\le
        \limsup_{n\rightarrow \infty} \Pr[ H({\bmtheta}_0) > Q_{1-\alpha'}\{H(\hat{{\bmtheta}}_\bstrap)\} ]\\
        &=\alpha',
    \end{align*}
    where the third inequality uses Lemma \ref{lemma:quantile_compare_gen} and \eqref{eq:Omega_o1}, 
    the fourth inequality follows from the definition in \eqref{eq:bootstrap_R_rho}, and the last equality follows from the property of the standard bootstrap. 
    Let $\alpha'=\alpha+\eta$. We then have 
    \begin{align*}
        \limsup_{n\rightarrow \infty} \Pr\{ H({\bmtheta}_0) > Q_{1-\alpha} ( \tilde{\theta}_{H, \bstrap} ) \}\le \alpha+\eta.
    \end{align*}
    Letting $\eta\rightarrow 0$, this then implies that 
    $
        \limsup_{n\rightarrow \infty} \Pr\{ H({\bmtheta}_0) > Q_{1-\alpha} ( \tilde{\theta}_{H, \bstrap} ) \}\le \alpha.
    $ Thus, $Q_{1-\alpha} ( \tilde{\theta}_{H, \bstrap} )$ is an asymptotic $1-\alpha$ upper confidence bound for $H({\bmtheta}_0)$. 

    From the above, Theorem \ref{thm:bootstrap_over_identify} holds. 
\end{proof}

\begin{proof}[\bf Comment on a sample-size-dependent $\varepsilon$ for Theorem \ref{thm:bootstrap_over_identify}]
    We can replace $\varepsilon$ in \eqref{eq:bootstrap_R_epsilon} by $\varepsilon_n$ that can decrease with the sample size $n$, with rate depending on the order of $\max_{1\le i\le n} \|\bm{R}_i^\tp\|$, such that Theorem \ref{thm:bootstrap_over_identify} still holds.

    Specifically, suppose that $\max_{1\le i\le n} \|\bm{R}_i^\tp\|=o_{\Pr}(n^\gamma)$ for some $\gamma\le 1/2$.
    Then $\max_{1\le i \le n}\|\bm{R}_i^\tp\hat{{\bmrho}}_\bstrap\|=o_{\Pr}(n^{\gamma-1/2})$.
    Let $\varepsilon_n=n^{\gamma-1/2}\varepsilon_0$ for a positive constant $\varepsilon_0$ and $\Omega_n=B^{-1}\sum_{b=1}^B\I\{\max_{1\le i\le n}\|\bm{R}_i^\tp\hat{{\bmrho}}_b\|>\varepsilon_n\}$,
    we then have, for any $\eta>0$, 
    \begin{align*}
        \Pr(\Omega_n>\eta)\le\E(\Omega_n)/\eta=\Pr\left(\max_{1\le i \le n}\|\bm{R}_i^\tp\hat{{\bmrho}}_\bstrap\|>n^{\gamma-1/2}\varepsilon_0 \right)/\eta\rightarrow 0 
        \ \ \text{as} \ \ 
        n\rightarrow\infty.
    \end{align*}
    We can then derive Theorem \ref{thm:bootstrap_over_identify} by the same logic as its proof before. 
    
    Analogously, when $\bm{R}^\tp$ is bounded, we can choose ${\varepsilon_n}$ such that $\varepsilon_n \gg n^{-1/2}$, under which Theorem \ref{thm:bootstrap_over_identify} still holds. 
\end{proof}

\subsection{Proof of Theorem \ref{thm:bootstrap_over_identify_pred_set}}
\begin{proof}[\bf Proof of Theorem \ref{thm:bootstrap_over_identify_pred_set}]
    Similar to \eqref{eq:bootstrap_R_rho}, we consider the following optimization  based on bootstrap samples:
    \begin{align}\label{eq:bootstrap_R_rho_pred_set}
    \breve{\theta}_{H, \bstrap} = \ & \sup H({\bmtheta}) 
    \\
    \text{subject to} \quad & 
    \sum_{i=1}^n k_{\bstrap i} p_{\bmtheta}(\bmO_{i}, \bmW_i + \bm{R}^\tp_i \hat{{\bmrho}}_{\bstrap} ) =\bm{0},
    \nonumber
    \\
    & (\bmW_1, \bmW_2, \ldots, \bmW_n) \in \mathcal{W}_n.
    \nonumber
\end{align}
    For any given $\varepsilon> 0$, define  $\Omega=B^{-1}\sum_{b=1}^B\I\{\max_{1\le i\le n}\|\bm{R}_i^\tp\hat{{\bmrho}}_b\|>\varepsilon\}$ as in Lemma \ref{lemma:quantile_compare_gen}.

    By the same logic as the second step in the proof of Theorem \ref{thm:bootstrap_over_identify}, we have 
    $\Pr(\Omega>\eta)\rightarrow 0$ as $n\rightarrow \infty$ for any $\eta>0$. 
    Additionally, by the same logic as the proof of Lemma \ref{lemma:quantile_compare_gen}, we have $Q_{1-\alpha+\Omega}(\hat{\theta}_{H, \bstrap})\ge Q_{1-\alpha}(\breve{\theta}_{H, \bstrap})$ for any $\alpha \ge \Omega$.
    Moreover, note that if $(\bmW_1^\tp, \bmW_2^\tp, \ldots, \bmW_n^\tp)\in \mathcal{W}_n$, then $\breve{\theta}_{H, \bstrap}\ge H(\hat{{\bmtheta}}_\bstrap)$. This is because 
    $(\hat{{\bmtheta}}_{\bstrap}, \bmW_1^\tp, \ldots, \bmW_n^\tp)$ is in the feasible region of \eqref{eq:bootstrap_R_rho_pred_set}.
    
    By the same logic as the third step in the proof of Theorem \ref{thm:bootstrap_over_identify}, for any
    $0 < \eta \le \alpha' < 1$, 
    \begin{align*}
        &\quad \ \limsup_{n\rightarrow \infty} \Pr\{ H({\bmtheta}_0) > Q_{1-\alpha'+\eta} ( \hat{\theta}_{H, \bstrap} ), \  (\bmW_1^\tp, \bmW_2^\tp, \ldots, \bmW_n^\tp) \in \mathcal{W}_n \}\\
        & \le
        \limsup_{n\rightarrow \infty} \Pr\{ H({\bmtheta}_0) > Q_{1-\alpha'+\eta} ( \hat{\theta}_{H, \bstrap} ), \  (\bmW_1^\tp, \bmW_2^\tp, \ldots, \bmW_n^\tp) \in \mathcal{W}_n, \ \Omega\le\eta \}\\
        &\quad \ +\limsup_{n\rightarrow \infty} \Pr\{ H({\bmtheta}_0) > Q_{1-\alpha'+\eta} ( \hat{\theta}_{H, \bstrap} ), \  (\bmW_1^\tp, \bmW_2^\tp, \ldots, \bmW_n^\tp) \in \mathcal{W}_n , \ \Omega>\eta\}\\
        & \le
        \limsup_{n\rightarrow \infty} \Pr\{ H({\bmtheta}_0) > Q_{1-\alpha'+\Omega} ( \hat{\theta}_{H, \bstrap}),
        \alpha'\ge \Omega, 
        \  (\bmW_1^\tp, \bmW_2^\tp, \ldots, \bmW_n^\tp) \in \mathcal{W}_n \}+\limsup_{n\rightarrow \infty} \Pr(\Omega>\eta)\\
        & \le
        \limsup_{n\rightarrow \infty} \Pr\{ H({\bmtheta}_0) > Q_{1-\alpha'}(\breve{\theta}_{H, \bstrap}), \  (\bmW_1^\tp, \bmW_2^\tp, \ldots, \bmW_n^\tp)\in \mathcal{W}_n \}\\
        & \le
        \limsup_{n\rightarrow \infty} \Pr[ H({\bmtheta}_0) > Q_{1-\alpha'}\{H(\hat{{\bmtheta}}_\bstrap)\} ]\\
        & =\alpha'.
    \end{align*}
    Letting $\alpha'=\alpha+\eta$, then 
    \begin{align*}
        \limsup_{n\rightarrow \infty} \Pr\{ H({\bmtheta}_0) > Q_{1-\alpha} ( \hat{\theta}_{H, \bstrap} ), \  (\bmW_1^\tp, \bmW_2^\tp, \ldots, \bmW_n^\tp) \in \mathcal{W}_n \} & \le\alpha+\eta.
    \end{align*}
    Letting $\eta\rightarrow 0$, this then implies that 
    \begin{align*}
        \limsup_{n\rightarrow \infty} \Pr\{ H({\bmtheta}_0) > Q_{1-\alpha} ( \hat{\theta}_{H, \bstrap} ), \  (\bmW_1^\tp, \bmW_2^\tp, \ldots, \bmW_n^\tp) \in \mathcal{W}_n \}& \le\alpha.
    \end{align*}

    If further $\mathcal{W}_n$ is an asymptotic $1-\zeta$ prediction set for $(\bmW_1^\tp, \bmW_2^\tp, \ldots, \bmW_n^\tp)$, then 
    \begin{align*}
     \limsup_{n\rightarrow \infty} \Pr\{ H({\bmtheta}_0) > Q_{1-\alpha} ( \hat{\theta}_{H, \bstrap} ) \}
    & \le 
         \limsup_{n\rightarrow \infty} \Pr\{ H({\bmtheta}_0) > Q_{1-\alpha} ( \hat{\theta}_{H, \bstrap} ), \  (\bmW_1^\tp, \bmW_2^\tp, \ldots, \bmW_n^\tp) \in \mathcal{W}_n \}\\
         & \quad \ + 
         \limsup_{n\rightarrow \infty} \Pr\{ (\bmW_1^\tp, \bmW_2^\tp, \ldots, \bmW_n^\tp) \notin \mathcal{W}_n \}\\
         & \le \alpha + \limsup_{n\rightarrow \infty} \Pr\{ (\bmW_1^\tp, \bmW_2^\tp, \ldots, \bmW_n^\tp) \notin \mathcal{W}_n \}
         \\
         & \le \alpha + \zeta. 
    \end{align*}

    From the above, Theorem \ref{thm:bootstrap_over_identify_pred_set} holds. 
\end{proof}

\subsection{Proof of Theorem \ref{thm:valid_bootstrap_R_g}}
We consider a logistic regression model for the propensity model, 
where $e(x;\bm{\beta}) = \logit^{-1} \{\bm{\beta}^\top s(\bmX)\}$. 
For any parameter vector $({\bmtheta},{\bmrho})=(\tau, \bm{\beta}, \kappa_{\treat}, \kappa_{\control}, {\bmrho})$, 
consider the following estimating equations as in \eqref{eq:est_equ_ow}:
\begin{align*}
    p_{{\bmtheta}}(\bmO, W^\tp+\bm{R}^\tp{\bmrho})
    &=
    \begin{pmatrix}
        p_{{\bmtheta}, 1}(\bmO, W^\tp+\bm{R}^\tp{\bmrho})\\
        p_{{\bmtheta}, 2}(\bmO, W^\tp+\bm{R}^\tp{\bmrho})\\
        p_{{\bmtheta}, 3}(\bmO, W^\tp+\bm{R}^\tp{\bmrho})\\
        p_{{\bmtheta}, 4}(\bmO, W^\tp+\bm{R}^\tp{\bmrho})\\
        p_{{\bmtheta}, 5}(\bmO, W^\tp+\bm{R}^\tp{\bmrho})
    \end{pmatrix}\\
    &=
    \begin{bmatrix}
        \left(Z-\dfrac{e^{\bm{\beta}^\top s(\bmX)}}{1+e^{\bm{\beta}^\top s(\bmX)}}\right)s(\bmX)
        \\
        \dfrac{Z}{1+e^{W^\tp+\bm{R}^\tp{\bmrho}+\bm{\beta}^\top s(\bmX)}} - \kappa_{\treat}\\
        \dfrac{1-Z}{1+e^{-W^\tp-\bm{R}^\tp{\bmrho}-\bm{\beta}^\top s(\bmX)}} - \kappa_{\control}\\
        \dfrac{ZY}{\{1+e^{W^\tp+\bm{R}^\tp{\bmrho}+\bm{\beta}^\top s(\bmX)}\}\kappa_{\treat}}- \dfrac{(1-Z)Y}{\{1+e^{-W^\tp-\bm{R}^\tp{\bmrho}-\bm{\beta}^\top s(\bmX)}\}\kappa_{\control}}-\tau\\
        \dfrac{Zg(\bmX)}{\{1+e^{W^\tp+\bm{R}^\tp{\bmrho}+\bm{\beta}^\top s(\bmX)}\}\kappa_{\treat}}- \dfrac{(1-Z)g(\bmX)}{\{1+e^{-W^\tp-\bm{R}^\tp{\bmrho}-\bm{\beta}^\top s(\bmX)}\}\kappa_{\control}}
    \end{bmatrix}.
\end{align*}

Define $\Phi({\bmtheta}, {\bmrho})=\E\{p_{{\bmtheta}}(\bmO, W^\tp+\bm{R}^\tp{\bmrho})\}$. Note that $\Phi({\bmtheta}_0, {\bmrho}_0)=\bm{0}$, where $({\bmtheta}_0, {\bmrho}_0)=(\tau_{\ow}, \bm{\beta}_0, \kappa_{\treat 0}, \kappa_{\control 0}, {\bmrho}_0)$. The Z-estimates $(\hat{{\bmtheta}}, \hat{{\bmrho}})=(\hat{\tau}, \hat{\bm{\beta}}, \hat{\kappa}_{\treat}, \hat{\kappa}_{\control}, \hat{{\bmrho}})$ are obtained by solving the equations:
\begin{align*}
    \Phi_n(\hat{{\bmtheta}}, \hat{{\bmrho}})\equiv\frac{1}{n}\sumn p_{\hat{{\bmtheta}}}(\bmO_i, W_i^\tp+\bm{R}_i^\tp\hat{{\bmrho}})=\bm{0}.
    \end{align*}

Let $\Pr_n$ be the empirical measure of the sample $\{(\bmO_i, W_i^\tp, \bm{R}_i^\tp)\}_{i=1}^n$, and $\{(\hat{\bmO}_{\bstrap i}, \hat{W}_{\bstrap i}^\tp, \hat{\bm{R}}_{\bstrap i}^\tp)\}_{i=1}^n$
be the corresponding bootstrap samples or equivalently i.i.d.~samples from the empirical distribution $\Pr_n$. The bootstrap empirical distribution is $\hat{\Pr}_n=n^{-1}\sumn \delta_{(\hat{\bmO}_{\bstrap i}, \hat{W}_{\bstrap i}^\tp, \hat{\bm{R}}_{\bstrap i}^\tp)}$, where $\delta_{(\hat{\bmO}_{\bstrap i}, \hat{W}_{\bstrap i}^\tp, \hat{\bm{R}}_{\bstrap i}^\tp)}$ denotes a point mass at $(\hat{\bmO}_{\bstrap i}, \hat{W}_{\bstrap i}^\tp, \hat{\bm{R}}_{\bstrap i}^\tp)$. 
Then the bootstrap Z-estimator $(\hat{{\bmtheta}}_{\bstrap}, \hat{{\bmrho}}_{\bstrap})=(\hat{\tau}_{\bstrap}, \hat{\bm{\beta}}_{\bstrap}, \hat{\kappa}_{\treat\bstrap}, \hat{\kappa}_{\control\bstrap}, \hat{{\bmrho}}_{\bstrap})$ are obtained from the equations:
\begin{align*}
    \hat{\Phi}_n(\hat{{\bmtheta}}_{\bstrap}, \hat{{\bmrho}}_{\bstrap})\equiv\frac{1}{n}\sumn p_{\hat{{\bmtheta}}_{\bstrap}}(\hat{\bmO}_{\bstrap i}, \hat{W}_{\bstrap i}^\tp+\hat{\bm{R}}_{\bstrap i}^\tp\hat{{\bmrho}}_{\bstrap})=\bm{0}.
\end{align*}

To prove Theorem \ref{thm:valid_bootstrap_R_g}, we need the following seven lemmas. 

\begin{lemma}[\citet{KosorokEmp:2006}, Theorem 10.16]\label{lemma:asy_ci}
    Consider $Z$-estimation based on the estimating equation ${\bmtheta}\mapsto\Psi_n({\bmtheta})\equiv \mathbb{P}_n\psi_{\bmtheta}$, where ${\bmtheta}\in{\bm{\Theta}}\subset\mathbb{R}^p$ and $\bm{x}\mapsto\psi_{\bmtheta}(\bm{x})$ is a measurable $p$-vector valued function for each ${\bmtheta}$.
    Define the map ${\bmtheta}\mapsto\Psi({\bmtheta})\equiv \Pr\psi_{\bmtheta}$ and assume ${\bmtheta}_0\in{\bm{\Theta}}$ satisfies $\Psi({\bmtheta}_0)=\bm{0}$. Let $\hat{{\bmtheta}}_n$ be an approximate zero of $\Psi_n$, and let $\tilde{{\bmtheta}}_n$ be an approximate zero of the bootstrapped estimating equation ${\bmtheta}\mapsto\hat\Psi_n({\bmtheta})\equiv\hat{\mathbb{P}}_n\psi_{\bmtheta}$. Let ${\bm{\Theta}}$ be open and assume the following:
    \begin{enumerate}
        \item[(i)] For any sequence $\{{\bmtheta}_n\}\in\bm{\Theta}, \Psi({\bmtheta}_n)\rightarrow \bm{0}$ implies $\|{\bmtheta}_n-{\bmtheta}_0\|\rightarrow 0$;
        \item[(ii)] The class $\{\psi_{\bmtheta}:{\bmtheta}\in{\bm{\Theta}}\}$ is strong Glivenko-Cantelli;
        \item[(iii)] For any $\eta>0$, the class $\mathcal{F}\equiv\{\psi_{\bmtheta}:{\bmtheta}\in{\bm{\Theta}}, \|{\bmtheta}-{\bmtheta}_0\|\leq \eta\}$ is Donsker and $\mathbb{P}\|\psi_{\bmtheta}-\psi_{{\bmtheta}_0}\|^2\rightarrow 0$ as $\|{\bmtheta}-{\bmtheta}_0\|\rightarrow 0$;
        \item[(iv)] $\mathbb{P}\|\psi_{{\bmtheta}_0}\|^2<\infty$ and $\Psi({\bmtheta})$ is differentiable at ${\bmtheta}_0$ with nonsingular derivative matrix $\bm{V}_{{\bmtheta}_0};$
        \item[(v)] $\Psi_n(\hat{{\bmtheta}}_n)=o_{\Pr}(n^{-1/2})$ and $\hat{\Psi}_n(\tilde{{\bmtheta}}_n)=o_{\Pr}(n^{-1/2})$.
    \end{enumerate}
    Then 
    \begin{align*}
        n^{1/2}(\hat{{\bmtheta}}_n-{\bmtheta}_0)\converged \mathcal{N}(\bm{0}, \bm{V}_{{\bmtheta}_0}^{-1}\mathbb{P}[\psi_{{\bmtheta}_0}\psi_{{\bmtheta}_0}^{\top}](\bm{V}_{{\bmtheta}_0}^{-1})^{\top}), \ 
        \text{and}
        \ 
        n^{1/2}(\tilde{{\bmtheta}}_n-\hat{{\bmtheta}}_n)\overset{\Pr}{\converged}\mathcal{N}(\bm{0}, \bm{V}_{{\bmtheta}_0}^{-1}\mathbb{P}[\psi_{{\bmtheta}_0}\psi_{{\bmtheta}_0}^{\top}](\bm{V}_{{\bmtheta}_0}^{-1})^{\top}).
    \end{align*}
\end{lemma}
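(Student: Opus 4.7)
The plan is to prove Lemma \ref{lemma:asy_ci} by combining three ingredients: (a) consistency of both the Z-estimator and its bootstrap analog via the (bootstrap) Glivenko-Cantelli property together with condition (i), (b) a first-order asymptotic linearization obtained by inverting the local differentiability of $\Psi$, and (c) a Donsker/multiplier central limit theorem that delivers the Gaussian limits for the empirical and bootstrap empirical processes at ${\bmtheta}_0$. The empirical and bootstrap halves follow exactly parallel algebraic arguments; the novelty in the bootstrap half is purely probabilistic, namely routing the limit through the conditional multiplier CLT.

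First I would establish $\hat{{\bmtheta}}_n\overset{\Pr}{\to}{\bmtheta}_0$. The strong Glivenko-Cantelli hypothesis (ii) gives $\sup_{{\bmtheta}\in\bm{\Theta}}\|\Psi_n({\bmtheta})-\Psi({\bmtheta})\|\to 0$ almost surely; combined with $\Psi_n(\hat{{\bmtheta}}_n)=o_{\Pr}(n^{-1/2})$ this forces $\Psi(\hat{{\bmtheta}}_n)\overset{\Pr}{\to}\bm{0}$, and condition (i) then upgrades this to $\|\hat{{\bmtheta}}_n-{\bmtheta}_0\|\overset{\Pr}{\to}0$. Consistency of $\tilde{{\bmtheta}}_n$ follows from the bootstrap Glivenko-Cantelli theorem (which is implied by the Donsker hypothesis (iii) together with the square-integrability in (iv)) combined with the same identifiability step.

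Next I would produce the linearization. Fix $\eta>0$ and restrict to $\mathcal{F}\equiv\{\psi_{\bmtheta}:\|{\bmtheta}-{\bmtheta}_0\|\le\eta\}$. Since $\mathcal{F}$ is Donsker, $\mathbb{G}_n\equiv\sqrt{n}(\Psi_n-\Psi)$ is asymptotically equicontinuous on $\mathcal{F}$ in $L^2(\Pr)$, and the $L^2$-continuity $\Pr\|\psi_{\bmtheta}-\psi_{{\bmtheta}_0}\|^2\to 0$ together with consistency yields $\mathbb{G}_n(\hat{{\bmtheta}}_n)=\mathbb{G}_n({\bmtheta}_0)+o_{\Pr}(1)$. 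Using (v) and then (iv),
\begin{align*}
o_{\Pr}(1)=\sqrt{n}\,\Psi_n(\hat{{\bmtheta}}_n)=\mathbb{G}_n(\hat{{\bmtheta}}_n)+\sqrt{n}\,\Psi(\hat{{\bmtheta}}_n)=\mathbb{G}_n({\bmtheta}_0)+\bm{V}_{{\bmtheta}_0}\sqrt{n}(\hat{{\bmtheta}}_n-{\bmtheta}_0)+o_{\Pr}(1),
\end{align*}
so by invertibility of $\bm{V}_{{\bmtheta}_0}$,
\begin{align*}
\sqrt{n}(\hat{{\bmtheta}}_n-{\bmtheta}_0)=-\bm{V}_{{\bmtheta}_0}^{-1}\mathbb{G}_n({\bmtheta}_0)+o_{\Pr}(1).
\end{align*}
The multivariate CLT gives $\mathbb{G}_n({\bmtheta}_0)\converged \mathcal{N}(\bm{0},\Pr[\psi_{{\bmtheta}_0}\psi_{{\bmtheta}_0}^{\top}])$, and the continuous mapping theorem delivers the first limit.

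The hard part is the bootstrap. Running the same expansion on $\tilde{{\bmtheta}}_n$ and decomposing $\hat{\Pr}_n-\Pr=(\hat{\Pr}_n-\Pr_n)+(\Pr_n-\Pr)$ yields
\begin{align*}
o_{\Pr}(1)=\sqrt{n}\,\hat{\Psi}_n(\tilde{{\bmtheta}}_n)=\hat{\mathbb{G}}_n(\tilde{{\bmtheta}}_n)+\mathbb{G}_n(\tilde{{\bmtheta}}_n)+\sqrt{n}\,\Psi(\tilde{{\bmtheta}}_n),
\end{align*}
with $\hat{\mathbb{G}}_n\equiv\sqrt{n}(\hat{\Pr}_n-\Pr_n)$. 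Subtracting the empirical linearization gives $\sqrt{n}(\tilde{{\bmtheta}}_n-\hat{{\bmtheta}}_n)=-\bm{V}_{{\bmtheta}_0}^{-1}\hat{\mathbb{G}}_n({\bmtheta}_0)+o_{\Pr}(1)$, provided one can justify the bootstrap asymptotic equicontinuity $\hat{\mathbb{G}}_n(\tilde{{\bmtheta}}_n)=\hat{\mathbb{G}}_n({\bmtheta}_0)+o_{\Pr}(1)$ conditional on the data. This is the main obstacle and is exactly the content of the conditional multiplier / exchangeable bootstrap CLT for Donsker classes (e.g., the multinomial bootstrap version of Kosorok Theorem 10.4, or Giné-Zinn), which yields $\hat{\mathbb{G}}_n\overset{\Pr}{\rightsquigarrow}\mathbb{G}$ in $\ell^{\infty}(\mathcal{F})$ conditional on the data, together with the corresponding asymptotic equicontinuity modulus. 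Combining this with the linearization and continuous mapping delivers the stated conditional Gaussian limit for $\sqrt{n}(\tilde{{\bmtheta}}_n-\hat{{\bmtheta}}_n)$ with the same asymptotic covariance.
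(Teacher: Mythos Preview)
Your sketch is correct and follows the standard argument (consistency via Glivenko--Cantelli and identifiability, then asymptotic linearization via Donsker equicontinuity and differentiability of $\Psi$, then CLT and its conditional multiplier analog for the bootstrap). However, the paper does not actually prove this lemma at all: its entire ``proof'' is the one-line citation ``Lemma~\ref{lemma:asy_ci} is from \citet[][Theorem 10.16]{KosorokEmp:2006}.'' So there is nothing to compare against beyond noting that your outline is precisely the route Kosorok's Theorem 10.16 takes; you have supplied substantially more detail than the paper does.
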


\begin{proof}[\bf Proof of Lemma \ref{lemma:asy_ci}]
    Lemma \ref{lemma:asy_ci} is from \citet[][Theorem 10.16]{KosorokEmp:2006}. 
\end{proof}

\begin{lemma}\label{lemma:unqiue_root}
    If 
    Condition \ref{cond:logistic_Z_est} holds and $\bm{R}^\tp=g^\top(\bmX)$ ,
    the following equation has a unique solution ${\bmrho}={\bmrho}_0$ in a sufficiently small open neighborhood around ${\bmrho}_0\equiv\bm{0}$:
    \begin{align*}
        \frac{\E[g(\bmX)Z/\{1+e^{W^\tp+\bm{R}^\tp{\bmrho}+\bm{\beta}_0^{\top}s(\bmX)}\}]}{\E[Z/\{1+e^{W^\tp+\bm{R}^\tp{\bmrho}+\bm{\beta}_0^{\top}s(\bmX)}\}]}
        =
        \frac{\E[g(\bmX)(1-Z)/\{1+e^{-W^\tp-\bm{R}^\tp{\bmrho}-\bm{\beta}_0^{\top}s(\bmX)}\}]}{\E[(1-Z)/\{1+e^{-W^\tp-\bm{R}^\tp{\bmrho}-\bm{\beta}_0^{\top}s(\bmX)}\}]}.
    \end{align*}    
\end{lemma}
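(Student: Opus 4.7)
The plan is to reformulate the equation as $A(\bm\rho) = \bm{0}$ for a smooth $\mathbb{R}^{\dim(g)} \to \mathbb{R}^{\dim(g)}$ map, verify that $\bm\rho = \bm{0}$ is a solution, and then apply the inverse function theorem after computing the Jacobian $A'(\bm{0})$. Explicitly, let $e(\bm\rho) \equiv \mathrm{logit}^{-1}\{\bm{R}^\tp \bm\rho + W^\tp + \bm\beta_0^\top s(\bmX)\}$ and set
\begin{align*}
A(\bm\rho) \;=\; \frac{\E[g(\bmX) Z\{1-e(\bm\rho)\}]}{\E[Z\{1-e(\bm\rho)\}]} \;-\; \frac{\E[g(\bmX)(1-Z) e(\bm\rho)]}{\E[(1-Z) e(\bm\rho)]}.
\end{align*}
At $\bm\rho = \bm{0}$, $e(\bm{0}) = e_\tp(\bmX,\bmU)$ by the definition of $W^\tp = \psi_\m(\bmX,\bmU)$, so $A(\bm{0}) = \bm{0}$ is exactly the population covariate balancing identity \eqref{eq:cov_bal_property}, which in turn follows from ignorability (Assumption \ref{asmp:ignorable}) together with the fact that $e_\tp$ is the oracle propensity score. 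Condition \ref{cond:logistic_Z_est}(i) and (iii) ensure the denominators are bounded away from $0$ on a neighborhood of $\bm{0}$ and that differentiation under the expectation is legitimate via dominated convergence, so $A$ is $C^1$ near $\bm{0}$.

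Next I would compute $A'(\bm{0})$ by the chain rule, using $de(\bm\rho)/d\bm\rho = e(\bm\rho)\{1-e(\bm\rho)\} g(\bmX)^\top$. Writing $e^* = e_\tp(\bmX,\bmU)$, ignorability gives $\E[Z \, \varphi(\bmX,\bmU)] = \E[e^* \varphi(\bmX,\bmU)]$ and $\E[(1-Z) \varphi(\bmX,\bmU)] = \E[(1-e^*) \varphi(\bmX,\bmU)]$ for any integrable $\varphi$. Applying this to the quotient rule, the denominators at $\bm{0}$ both reduce to $\pi_0 \equiv \E[e^*(1-e^*)]$ and the numerators both reduce to $\mu \equiv \E[g(\bmX) e^*(1-e^*)]$; after combining the two fractions the weights $e^{*2}(1-e^*) + e^*(1-e^*)^2 = e^*(1-e^*)$ telescope, leaving
\begin{align*}
A'(\bm{0}) \;=\; -\pi_0^{-1}\bigl\{\E[g g^\top e^*(1-e^*)] - \pi_0^{-1}\mu\mu^\top\bigr\} \;=\; -\,\mathrm{Cov}_\nu\{g(\bmX)\},
\end{align*}
where $\nu$ is the probability measure with density proportional to $e^*(1-e^*)$ with respect to $\Pr$ (the overlap-weighted law).

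The main (though short) remaining step is to argue $\mathrm{Cov}_\nu\{g(\bmX)\} \succ 0$. I would note that $\nu$ and $\Pr(\cdot\,|\,0<e^*<1)$ are mutually absolutely continuous on the event $E = \{0 < e^* < 1\}$ since $e^*(1-e^*)$ is strictly positive there; by Condition \ref{cond:logistic_Z_est}(iii), for any $v \neq \bm{0}$ the random variable $v^\top g(\bmX)$ is not $\Pr(\cdot\,|\,E)$-a.s.\ constant, hence not $\nu$-a.s.\ constant, so $\mathrm{Var}_\nu\{v^\top g(\bmX)\} > 0$. Thus $A'(\bm{0})$ is nonsingular, the inverse function theorem applied to $A$ at $\bm{0}$ gives local injectivity, and $\bm\rho = \bm{0}$ is the unique zero of $A$ in a sufficiently small open neighborhood. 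The only delicate piece is the Jacobian algebra; the positive-definiteness argument and the application of the inverse function theorem are otherwise standard.
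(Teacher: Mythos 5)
Your proposal is correct and follows essentially the same route as the paper's proof: you define the difference map, verify $A(\bm{0})=\bm{0}$, differentiate under the expectation (justified by dominated convergence with the moment bounds of Condition \ref{cond:logistic_Z_est}(i)), identify the Jacobian at $\bm{0}$ as $-\Cov_\nu\{g(\bmX)\}$ under the overlap-weighted law --- which is exactly the paper's expression $-\bm{M}/[\E\{e_\tp(1-e_\tp)\}]^2$ --- and invoke the inverse function theorem, with your absolute-continuity argument for $\Cov_\nu\{g(\bmX)\}\succ 0$ being the same equality-case reasoning that the paper routes through the Cauchy--Schwarz inequality and Condition \ref{cond:logistic_Z_est}(iii). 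One small correction: the identities $\E\{Z\varphi(\bmX,\bmU)\}=\E\{e_\tp(\bmX,\bmU)\varphi(\bmX,\bmU)\}$ and hence $A(\bm{0})=\bm{0}$ need only the tower property and the definition of $e_\tp(\bmX,\bmU)$ as $\Pr(Z=1\mid \bmX,\bmU)$, not ignorability (Assumption \ref{asmp:ignorable}), which concerns the potential outcomes.
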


\begin{proof}[\bf Proof of Lemma \ref{lemma:unqiue_root}]
    Let \begin{align*}
        f({\bmrho})\equiv
        &\frac{\E[g(\bmX)Z/\{1+e^{W^\tp+\bm{R}^\tp{\bmrho}+\bm{\beta}_0^{\top}s(\bmX)}\}]}{\E[Z/\{1+e^{W^\tp+\bm{R}^\tp{\bmrho}+\bm{\beta}_0^{\top}s(\bmX)}\}]}\\
        &-\frac{\E[g(\bmX)(1-Z)/\{1+e^{-W^\tp-\bm{R}^\tp{\bmrho}-\bm{\beta}_0^{\top}s(\bmX)}\}]}{\E[(1-Z)/\{1+e^{-W^\tp-\bm{R}^\tp{\bmrho}-\bm{\beta}_0^{\top}s(\bmX)}\}]}.
    \end{align*}
    By the inverse function theorem, it suffices to show that $\nabla_{{\bmrho}=\bm{0}}f({\bmrho})$ is nonsingular. 
    In fact, we can establish an even stronger result that 
    $\nabla_{{\bmrho}=\bm{0}}f({\bmrho})\prec 0$, i.e., $\nabla_{{\bmrho}=\bm{0}}f({\bmrho})$ is negative definite. 

    By the law of total expectation and using the fact that 
    $
        e^{W^\tp+\bm{\beta}_0^\top s(\bmX)}={e_\tp(\bmX,\bmU)}/\{1-e_\tp(\bmX,\bmU)\}, 
    $
    we have 
    \begin{align*}
        f({\bmrho})  
        & = 
        \frac{\E\left\{\dfrac{g e_\tp }{1+\dfrac{e_\tp}{1-e_\tp}e^{\bm{R}^\tp{\bmrho}}}\right\}}{\E\left\{\dfrac{e_\tp}{1+\dfrac{e_\tp}{1-e_\tp}e^{\bm{R}^\tp{\bmrho}}}\right\}}
        -
        \frac{\E\left\{ \dfrac{g (1-e_\tp)}{1+\dfrac{1-e_\tp}{e_\tp}e^{-\bm{R}^\tp{\bmrho}}}\right\} }{\E\left\{\dfrac{1-e_\tp}{1+\dfrac{1-e_\tp}{e_\tp}e^{-\bm{R}^\tp{\bmrho}}}\right\}},
    \end{align*}
    where we use $g$ and $e_\tp$ to denote $g(\bmX)$ and $e_\tp(\bmX,\bmU)$ for notational simplicity. 
    Note that for $\tilde{g}$ equals $g$ or $1$, we have 
    \begin{align*}
        \nabla_{{\bmrho}} \left\{\dfrac{\tilde{g} e_\tp }{1+\dfrac{e_\tp}{1-e_\tp}e^{\bm{R}^\tp{\bmrho}}}\right\}
        & = - \frac{\tilde{g}  \bm{R}^\tp e_\tp \frac{e_\tp}{1-e_\tp} e^{\bm{R}^\tp {\bmrho}}}{
        \left( 1+\dfrac{e_\tp}{1-e_\tp}e^{\bm{R}^\tp{\bmrho}} \right)^2
        }
        = 
        - \frac{\tilde{g}  g^\top e_\tp \frac{e_\tp}{1-e_\tp} e^{g^\top {\bmrho}}}{
        \left( 1+\dfrac{e_\tp}{1-e_\tp}e^{g^\top  {\bmrho}} \right)^2
        }, 
        \\ 
        \nabla_{{\bmrho}} \left\{ \dfrac{\tilde{g} (1-e_\tp)}{1+\dfrac{1-e_\tp}{e_\tp}e^{-\bm{R}^\tp{\bmrho}}}\right\}
        & =  \frac{\tilde{g} \bm{R}^\tp (1-e_\tp) \frac{1-e_\tp}{e_\tp} e^{-\bm{R}^\tp{\bmrho}}}{
        \left( 1+\dfrac{1-e_\tp}{e_\tp}e^{-\bm{R}^\tp{\bmrho}} \right)^2
        } 
        =  \frac{\tilde{g} g^\top (1-e_\tp) \frac{1-e_\tp}{e_\tp} e^{-g^\top{\bmrho}}}{
        \left( 1+\dfrac{1-e_\tp}{e_\tp}e^{-g^\top{\bmrho}} \right)^2
        }, 
    \end{align*}
    where we use the fact that $\bm{R}^\tp = g^\top$. These immediately imply that for $\tilde{g}$ equals $g$ or $1$,    
    \begin{align*}
        \nabla_{{\bmrho}=\bm{0}} \left\{\dfrac{\tilde{g} e_\tp }{1+\dfrac{e_\tp}{1-e_\tp}e^{\bm{R}^\tp{\bmrho}}}\right\}
        & 
        = - \tilde{g} g^\top e_\tp^2 (1-e_\tp), 
        \nabla_{{\bmrho}=\bm{0}} \left\{ \dfrac{\tilde{g} (1-e_\tp)}{1+\dfrac{1-e_\tp}{e_\tp}e^{-\bm{R}^\tp{\bmrho}}}\right\}
        = \tilde{g} g^\top e_\tp (1-e_\tp)^2, 
    \end{align*}
    Moreover, for $\tilde{g}$ equals $g$ or $1$,
    \begin{align*}
        \left\| \nabla_{{\bmrho}} \left\{\dfrac{\tilde{g} e_\tp }{1+\dfrac{e_\tp}{1-e_\tp}e^{\bm{R}^\tp{\bmrho}}}\right\} \right\|
        & 
        = 
        \left\| \tilde{g}  g^\top \right\| 
        \frac{e_\tp}{1+\dfrac{e_\tp}{1-e_\tp}e^{g^\top  {\bmrho}}}
         \frac{  \frac{e_\tp}{1-e_\tp} e^{g^\top {\bmrho}}}{
        1+\dfrac{e_\tp}{1-e_\tp}e^{g^\top  {\bmrho}} 
        }
        \le \| \tilde{g} \|   \| g\| 
        \le \frac{\| \tilde{g} \|^2 + \| g\|^2}{2}, 
    \end{align*}
    and analogously, 
    \begin{align*}
        \left\| \nabla_{{\bmrho}} \left\{ \dfrac{\tilde{g} (1-e_\tp)}{1+\dfrac{1-e_\tp}{e_\tp}e^{-\bm{R}^\tp{\bmrho}}}\right\} \right\|
        & \le \frac{\| \tilde{g} \|^2 + \| g\|^2}{2}. 
    \end{align*}
    Using the dominated convergence theorem, we can then interchange the differentiation and expectation, and derive that  
    \begin{align}\label{eq:grad_f_gamma}
        \nabla_{{\bmrho}=\bm{0}}f({\bmrho}) & = 
        \frac{
        - \E \{ gg^\top e_\tp^2 (1-e_\tp) \} \E\{ e_\tp (1-e_\tp)\} 
        + \E\{g e_\tp (1-e_\tp) \} \E\{ g^\top e_\tp^2 (1-e_\tp) \}
        ]
        }{[ \E\{ e_\tp (1-e_\tp)\} ]^2 }
        \nonumber
        \\
        & \quad 
        - 
        \frac{
        \E\{ g g^\top e_\tp (1-e_\tp)^2 \} \E\{ e_\tp (1-e_\tp)\} 
        - 
        \E\{g e_\tp (1-e_\tp) \}  \E\{g^\top e_\tp (1-e_\tp)^2 \}
        }{
        [\E\{ e_\tp (1-e_\tp)\} ]^2
        }
        \nonumber
        \\
        & = 
        \frac{
        - \E \{ gg^\top e_\tp (1-e_\tp) \} \E\{ e_\tp (1-e_\tp)\}
        + \E\{g e_\tp (1-e_\tp) \} \E\{ g^\top e_\tp (1-e_\tp) \}
        }{
        [\E\{ e_\tp (1-e_\tp)\} ]^2
        },
    \end{align}
    where the last equality uses the fact that $e_\tp^2 (1-e_\tp) + e_\tp (1-e_\tp)^2 = e_\tp (1-e_\tp)$.

    To prove that $\nabla_{{\bmrho}=\bm{0}}f({\bmrho})\prec 0$, it suffices to prove that the numerator in \eqref{eq:grad_f_gamma} is negative definite, since the denominator there is positive under Condition \ref{cond:logistic_Z_est}(iii).
    Denote the negative numerator in \eqref{eq:grad_f_gamma} by 
    \begin{align*}
        \bm{M} & = \E \{ gg^\top e_\tp (1-e_\tp) \} \E\{ e_\tp (1-e_\tp)\}
        - \E\{g e_\tp (1-e_\tp) \} \E\{ g^\top e_\tp (1-e_\tp) \}.
    \end{align*}
    It suffices to prove that $\bm{M}$ is positive definite. Note that 
    for any $\bm{u}\in\mathbb{R}^p\backslash\{\bm{0}\}$, with $A=\bm{u}^\top g\sqrt{e_\tp(1-e_\tp)}, B=\sqrt{e_\tp(1-e_\tp)}$, we have
    \begin{align*}
        \bm{u}^\top \bm{M} \bm{u}=&\E(A^2)\E(B^2)-\{E(AB)\}^2\ge 0,
    \end{align*}
    where the last inequality follows from the Cauchy–Schwarz inequality. 
    Thus, $\bm{M}$ must be positive semidefinite. 

    We finally prove that $M$ is positive definite by contradiction. 
    If $\bm{M}$ is not positive definite, then there exists $\bm{u}\in\mathbb{R}^p$ such that $\bm{u}^\top \bm{M} \bm{u}=0$. 
    This implies that the corresponding $A$ and $B$ satisfy that $A=\lambda B$ almost surely for some constant $\lambda\in\mathbb{R}$, i.e., $\bm{u}^\top g\sqrt{e_\tp(1-e_\tp)}=\lambda \sqrt{e_\tp(1-e_\tp)}$ almost surely.
    Consequently, 
    \begin{align*}
        \Pr(\bm{u}^\top g \ne \lambda \mid 0 < e_\tp < 1 )
        & = 
        \frac{\Pr(\bm{u}^\top g \ne \lambda, 0 < e_\tp < 1 )}{\Pr(0 < e_\tp < 1)}
        \le 
        \frac{\Pr(\bm{u}^\top g\sqrt{e_\tp(1-e_\tp)}\ne \lambda \sqrt{e_\tp(1-e_\tp)})}{\Pr(0 < e_\tp < 1)}\\
        & = 0, 
    \end{align*}
    which is in contradiction to  Condition \ref{cond:logistic_Z_est}(iii). 
    Therefore, $\bm{M}$ must be positive semidefinite.

    From the above, Lemma \ref{lemma:unqiue_root} holds. 
\end{proof}

\begin{lemma}\label{lemma:str_posi}
    $\|\Phi({\bmtheta}, {\bmrho})\|_1$ is strictly positive for any   
    $(\bmtheta, \bmrho) \ne ({\bmtheta}_0, {\bmrho}_0)$ such that 
    ${\bmrho}$ is in a sufficiently small open neighborhood of ${\bmrho}_0\equiv 0$.
\end{lemma}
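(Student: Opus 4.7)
The plan is to show that $\Phi({\bmtheta},{\bmrho})=\bm{0}$ implies $({\bmtheta},{\bmrho})=({\bmtheta}_0,{\bmrho}_0)$, by peeling off the five blocks of $\Phi$ in a specific order that exploits the triangular structure of the estimating equations. Since the five blocks involve $\bm{\beta}$ only (block 1), $(\bm{\beta},W^\tp+\bm{R}^\tp{\bmrho},\kappa_\treat,\kappa_\control)$ (blocks 2--4), and $(\bm{\beta},W^\tp+\bm{R}^\tp{\bmrho},\kappa_\treat,\kappa_\control)$ again but with $g(\bmX)$ weighting (block 5), I would solve first for $\bm{\beta}$, then for ${\bmrho}$ (using blocks 2, 3 and 5 simultaneously), and finally for the remaining scalars $\kappa_\treat$, $\kappa_\control$, and $\tau$.

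First I would use the vanishing of the first block, $\E[\{Z-e(\bmX;\bm{\beta})\}s(\bmX)]=\bm{0}$, which is the population score equation for the logistic likelihood $\ell(\bm{\beta})=\E[Z\log e(\bmX;\bm{\beta})+(1-Z)\log\{1-e(\bmX;\bm{\beta})\}]$. Under Condition \ref{cond:logistic_Z_est}(ii), the negative Hessian $\E\{s(\bmX)s(\bmX)^\top e^{\bm{\beta}^\top s(\bmX)}/(1+e^{\bm{\beta}^\top s(\bmX)})^2\}$ is positive definite at $\bm{\beta}_0$, and hence, since the positive weight $e^{\bm{\beta}^\top s}/(1+e^{\bm{\beta}^\top s})^2$ is everywhere bounded away from zero on compact sets and the support structure of $s(\bmX)$ is unchanged, it is positive definite at every $\bm{\beta}\in\Theta$. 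Thus $\ell$ is strictly concave on $\Theta$, so its gradient has the unique root $\bm{\beta}=\bm{\beta}_0$.

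With $\bm{\beta}=\bm{\beta}_0$ fixed, the vanishing of blocks 2 and 3 forces
\[
\kappa_\treat=\E\!\left[\frac{Z}{1+e^{W^\tp+\bm{R}^\tp{\bmrho}+\bm{\beta}_0^\top s(\bmX)}}\right],\qquad \kappa_\control=\E\!\left[\frac{1-Z}{1+e^{-W^\tp-\bm{R}^\tp{\bmrho}-\bm{\beta}_0^\top s(\bmX)}}\right],
\]
both of which are strictly positive (a consequence of Condition \ref{cond:logistic_Z_est}(iii), which ensures the conditional expectations of $Z$ and $1-Z$ are not identically zero on $\{0<e_\tp<1\}$). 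Substituting these expressions into the fifth block and clearing the positive denominators $\kappa_\treat$ and $\kappa_\control$ yields exactly the equation in Lemma \ref{lemma:unqiue_root}. Therefore, on a sufficiently small open neighborhood of ${\bmrho}_0=\bm{0}$, this forces ${\bmrho}={\bmrho}_0$; then the displayed formulas give $\kappa_\treat=\kappa_{\treat 0}$ and $\kappa_\control=\kappa_{\control 0}$. Finally, the fourth block is linear in $\tau$ with coefficient $-1$, so its vanishing yields $\tau=\tau_\ow$.

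The contrapositive then gives $\|\Phi({\bmtheta},{\bmrho})\|_1>0$ for any $({\bmtheta},{\bmrho})\ne({\bmtheta}_0,{\bmrho}_0)$ in the specified neighborhood. The main delicate step is establishing uniqueness of the logistic-regression root $\bm{\beta}_0$ globally on $\Theta$ from the \emph{pointwise} positivity of the Fisher information at $\bm{\beta}_0$ in Condition \ref{cond:logistic_Z_est}(ii); the argument above leverages the fact that positive definiteness of $\E\{s(\bmX)s(\bmX)^\top w(\bmX;\bm{\beta})\}$ with $w>0$ depends only on the support of $s(\bmX)$, not on $\bm{\beta}$. All other steps reduce to substituting into Lemma \ref{lemma:unqiue_root} and solving linear equations.
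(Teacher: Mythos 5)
Your proposal is correct and follows essentially the same route as the paper's proof: you establish the contrapositive by solving block 1 for $\bm{\beta}=\bm{\beta}_0$, reducing blocks 2, 3, and 5 to the covariate-balance equation of Lemma \ref{lemma:unqiue_root} to force ${\bmrho}={\bmrho}_0$ in a small neighborhood of $\bm{0}$, and then reading off $\kappa_{\treat}$, $\kappa_{\control}$, and $\tau$ from blocks 2--4. The only cosmetic difference is in the first step, where the paper invokes the mean value theorem while you invoke strict concavity of the population logistic likelihood; both rest on the same observation that positive definiteness of $\E\{s(\bmX)s(\bmX)^\top w(\bmX;\bm{\beta})\}$ for a strictly positive weight depends only on $\Pr\{u^\top s(\bmX)\neq 0\}>0$ for all $u \neq \bm{0}$, so Condition \ref{cond:logistic_Z_est}(ii) propagates from $\bm{\beta}_0$ to every $\bm{\beta}$.
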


\begin{proof}[\bf Proof of Lemma \ref{lemma:str_posi}]
    Suppose there exists $({\bmtheta}', {\bmrho}')=(\tau', \bm{\beta}', \kappa'_{\treat}, \kappa'_{\control}, {\bmrho}')$ such that $\Phi({\bmtheta}', {\bmrho}')=0$, i.e., $\E\{p_{{\bmtheta}', i}(\bmO, W^\tp+\bm{R}^\tp {\bmrho}')\}=0$ for $1\leq i \leq 5$, for ${\bmrho}'$ that is in a sufficiently small open neighborhood of $0$ as in Lemma \ref{lemma:unqiue_root}. It suffices to prove that $({\bmtheta}', {\bmrho}')=({\bmtheta}_0, {\bmrho}_0)$.
    
    From Condition \ref{cond:logistic_Z_est}(ii) and using the mean value theorem, we can show that 
    $$
        \E\left(Zs(\bmX)-\dfrac{e^{\bm{\beta}^{\top}s(\bmX)}}{1+e^{\bm{\beta}^{\top}s(\bmX)}}s(\bmX)\right)=\bm{0}
    $$
    has a unique root $\bm{\beta}=\bm{\beta}_0$. 
    Thus, $\bm{\beta}'=\bm{\beta}_0$.  
    Moreover, from equations $\E\{p_{{\bmtheta}', i}(\bmO, W^\tp+\bm{R}^\tp {\bmrho}')\}=0$ for $i\in \{ 2,3,5 \}$, we can derive that 
    \begin{align*}
        \frac{\E[g(\bmX)Z/\{1+e^{W^\tp+\bm{R}^\tp{\bmrho'}+\bm{\beta}_0^{\top}s(\bmX)}\}]}{\E[Z/\{1+e^{W^\tp+\bm{R}^\tp{\bmrho'}+\bm{\beta}_0^{\top}s(\bmX)}\}]}
        =
        \frac{\E[g(\bmX)(1-Z)/\{1+e^{-W^\tp-\bm{R}^\tp{\bmrho'}-\bm{\beta}_0^{\top}s(\bmX)}\}]}{\E[(1-Z)/\{1+e^{-W^\tp-\bm{R}^\tp{\bmrho'}-\bm{\beta}_0^{\top}s(\bmX)}\}]},
    \end{align*}
    which, by Lemma \ref{lemma:unqiue_root}, has a unique root at ${\bmrho}_0$ in a small open neighborhood of $\bm{0}$. 
    Thus, we also have ${\bmrho}'={\bmrho}_0$. 
    From 
    $\E\{p_{{\bmtheta}', i}(\bmO, W^\tp+\bm{R}^\tp {\bmrho}')\}=0$ for $i\in \{ 2,3,4\}$, we can then derive that $\kappa_{\treat}'=\kappa_{\treat 0}, \kappa_{\control}'=\kappa_{\control 0}, \tau'=\tau_{\ow}$. 

    From the above, we must have 
    $({\bmtheta}', {\bmrho}')=({\bmtheta}_0, {\bmrho}_0)$. 
    Therefore, Lemma \ref{lemma:str_posi} holds. 
\end{proof}

\begin{lemma}[\citet{WellnerEmp:2005}, Lemma 6.1]\label{lemma:glivenko_cantelli}
    Suppose that $\mathcal{F}=\{f(\cdot, t): t \in T\}$ where the functions $f: \mathcal{X}\times T\rightarrow \mathbb{R}$ are continuous in $t$ for $\mathbb{P}-$almost all $\bm{x}\in\mathcal{X}$. Suppose that $T$ is compact and that the envelope function $F$ defined by $F(\bm{x})=\sup_{t\in T}|f(\bm{x}, t)|$ satisfies the outer expectation $\E^*F<\infty$. Then $\mathcal{F}$ is $\mathbb{P}$-Glivenko-Cantelli.
\end{lemma}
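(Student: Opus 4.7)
The plan is to verify that $\mathcal{F}$ has finite bracketing numbers in $L^1(\mathbb{P})$ for every $\epsilon > 0$, which by the bracketing Glivenko-Cantelli theorem suffices for the conclusion. The only tools available are compactness of the index set $T$, pointwise (in $\bm{x}$) continuity of $t \mapsto f(\bm{x}, t)$ for $\mathbb{P}$-almost all $\bm{x}$, and the integrable envelope $F$.

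First, for any $\delta > 0$, define the oscillation $\omega_\delta(\bm{x}) \equiv \sup_{s,t \in T, \ d(s,t) < \delta} |f(\bm{x}, s) - f(\bm{x}, t)|$. By pointwise continuity in $t$, for $\mathbb{P}$-almost all $\bm{x}$ we have $\omega_\delta(\bm{x}) \to 0$ as $\delta \to 0$. Since $\omega_\delta(\bm{x}) \le 2F(\bm{x})$ with $\E^* F < \infty$, dominated convergence yields $\E^*\omega_\delta \to 0$. Next, by compactness of $T$, for any $\delta > 0$ cover $T$ with finitely many open balls $B(t_1, \delta), \ldots, B(t_N, \delta)$. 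For each $k$, construct the bracket
\begin{equation*}
  l_k(\bm{x}) \equiv f(\bm{x}, t_k) - \omega_\delta(\bm{x}), \qquad u_k(\bm{x}) \equiv f(\bm{x}, t_k) + \omega_\delta(\bm{x}).
\end{equation*}
Any $f(\cdot, t) \in \mathcal{F}$ with $t \in B(t_k, \delta)$ satisfies $l_k \le f(\cdot, t) \le u_k$ by the definition of $\omega_\delta$, and $\|u_k - l_k\|_{L^1(\mathbb{P})} = 2\,\E\omega_\delta$, which can be made smaller than any prescribed $\epsilon$ by choosing $\delta$ small.

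Hence the bracketing number $N_{[\,]}(\epsilon, \mathcal{F}, L^1(\mathbb{P}))$ is finite for every $\epsilon > 0$, so $\mathcal{F}$ is $\mathbb{P}$-Glivenko-Cantelli by the standard bracketing entropy criterion (see, e.g., Theorem 19.4 of van der Vaart, 1998). The main subtlety is the measurability of the oscillation $\omega_\delta$, which is why the statement uses outer expectation $\E^*$; once this is handled via the standard outer-integral machinery the argument proceeds as above. Since the result is a direct citation of Lemma 6.1 of Wellner (2005), the actual proof in the paper will simply invoke that reference.
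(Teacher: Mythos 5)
Your proof is correct, but note that the paper itself does not prove this lemma at all: its ``proof'' is the one-line citation to Wellner (2005, Lemma 6.1), exactly as you anticipated in your closing sentence. What you have supplied is the standard bracketing argument that underlies the cited result (it is essentially Example 19.8 in van der Vaart, 1998, which uses the brackets $\bigl[\inf_{t\in B(t_k,\delta)} f(\cdot,t),\ \sup_{t\in B(t_k,\delta)} f(\cdot,t)\bigr]$; your variant $f(\cdot,t_k)\pm\omega_\delta$ is equivalent up to a harmless factor of $2$ in the bracket width). Two small points are worth making explicit if you want the argument fully airtight: first, the claim $\omega_\delta(\bm{x})\to 0$ as $\delta\to 0$ uses that continuity of $t\mapsto f(\bm{x},t)$ on the \emph{compact} $T$ is automatically uniform in $t$ for each fixed $\bm{x}$, which is where compactness enters beyond providing the finite cover; second, $\omega_\delta$ need not be measurable, so the bracket endpoints and the dominated-convergence step should be run through the measurable cover $\omega_\delta^{*}$, using $\omega_\delta^{*}\le (2F)^{*}$ and $\E\,\omega_\delta^{*}\to 0$ --- you flag this subtlety and defer it to the standard outer-integral machinery, which is acceptable. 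In sum, your route buys a self-contained and elementary proof where the paper offers only a reference; for the paper's purposes (the lemma is invoked solely to verify the Glivenko--Cantelli condition in the proof of Theorem 3 via Lemma A5) the bare citation is adequate, and nothing in your argument conflicts with it.
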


\begin{proof}[\bf Proof of Lemma \ref{lemma:glivenko_cantelli}]
    Lemma \ref{lemma:glivenko_cantelli} is from \citet[][Lemma 6.1]{WellnerEmp:2005}. 
\end{proof}

\begin{lemma}\label{lemma:p_gli_can}
    The class $\big\{p_{{\bmtheta}}(\bmO, W^\tp+\bm{R}^\tp{\bmrho}): ({\bmtheta}, {\bmrho})\in{\bm{\Theta}}\big\}$ is $\mathbb{P}$-Glivenko--Cantelli.
\end{lemma}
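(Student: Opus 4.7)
The plan is to apply Lemma \ref{lemma:glivenko_cantelli} componentwise to the five blocks of $p_{\bm\theta}(\bm O, W^\tp + \bm R^\tp {\bm\rho})$. Since a finite union/stack of Glivenko--Cantelli classes is Glivenko--Cantelli, it suffices to verify the hypotheses of Lemma \ref{lemma:glivenko_cantelli} for each scalar coordinate of $p_{\bm\theta}$ separately, viewing $(\bm\theta,\bm\rho)$ as the parameter $t$ ranging over the compact set $\Theta$ from Condition \ref{cond:logistic_Z_est}.

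First I would check the continuity requirement. In the context of Theorem \ref{thm:valid_bootstrap_R_g}, $\bm R^\tp = g^\top(\bm X)$, so each coordinate of $p_{\bm\theta}(\bm O, W^\tp + \bm R^\tp {\bm\rho})$ is a composition of the sigmoid map $x\mapsto (1+e^x)^{-1}$ with a linear function of $(\bm\theta,\bm\rho)$, multiplied by data-dependent but parameter-free factors such as $Z$, $1-Z$, $Y$, $s(\bm X)$, $g(\bm X)$. Such compositions are continuous in $(\bm\theta,\bm\rho)$ for every realization of $(\bm O, W^\tp)$, which is stronger than the a.s.\ continuity required by Lemma \ref{lemma:glivenko_cantelli}.

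Next I would exhibit integrable envelopes for each coordinate. Writing $C$ for a finite constant depending on $\Theta$ (which is compact so $\|\bm\beta\|,\|\bm\rho\|,|\tau|$ are uniformly bounded and $\kappa_\treat,\kappa_\control\ge\kappa_{\min}>0$) and using that each sigmoid lies in $[0,1]$:
\begin{itemize}
\item[(i)] $\|p_{\bm\theta,1}\|\le 2\|s(\bm X)\|$, with $\mathbb{E}\|s(\bm X)\|\le \{\mathbb{E}\|s(\bm X)\|^4\}^{1/4}<\infty$.
\item[(ii)] $|p_{\bm\theta,2}|\le 1+C$ and $|p_{\bm\theta,3}|\le 1+C$, trivially integrable.
\item[(iii)] $|p_{\bm\theta,4}|\le 2|Y|/\kappa_{\min}+C$, integrable by $\mathbb{E}Y^4<\infty$.
\item[(iv)] $\|p_{\bm\theta,5}\|\le 2\|g(\bm X)\|/\kappa_{\min}$, integrable by $\mathbb{E}\|g(\bm X)\|^4<\infty$.
\end{itemize}
In each case the envelope does not depend on $(\bm\theta,\bm\rho)$, so $\mathbb{E} F<\infty$ holds unconditionally.

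With continuity and an integrable envelope established for each coordinate, Lemma \ref{lemma:glivenko_cantelli} yields that each coordinate class is $\mathbb{P}$-Glivenko--Cantelli, and hence so is the full vector-valued class. There is no real obstacle here: the proof is essentially a bookkeeping exercise, and the only mild subtlety is remembering to use the specific construction $\bm R^\tp=g^\top(\bm X)$ from Theorem \ref{thm:valid_bootstrap_R_g} together with the fourth-moment assumptions in Condition \ref{cond:logistic_Z_est}(i) to control the envelopes that involve $s(\bm X)$, $g(\bm X)$, and $Y$.
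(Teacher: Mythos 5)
Your proposal is correct and takes essentially the same route as the paper: both exploit compactness of ${\bm{\Theta}}$ and the moment conditions in Condition \ref{cond:logistic_Z_est}(i) to exhibit a parameter-free integrable envelope and then invoke Lemma \ref{lemma:glivenko_cantelli}, the only cosmetic difference being that the paper bounds the whole vector at once in the $\ell_1$ norm (via $B(\bmO, W^\tp, \bm{R}^\tp) \le \|s(\bmX)\|_1+\kappa_{\min}^{-1}(|Y|+\|g(\bmX)\|_1)+M$) rather than coordinatewise. One minor remark: because every sigmoid factor lies in $[0,1]$, your envelopes never involve $\bm{R}^\tp$ and continuity in ${\bmrho}$ holds for arbitrary $\bm{R}^\tp$, so the specific construction $\bm{R}^\tp=g^\top(\bmX)$ that you flag as a subtlety plays no role in this lemma --- it becomes essential only later, in Lemmas \ref{lemma:p_donsker} and \ref{lemma:well_define}.
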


\begin{proof}[\bf Proof of Lemma \ref{lemma:p_gli_can}]
    Define the envelope function $B(\bmO, W^\tp, \bm{R}^\tp)\equiv\sup_{({\bmtheta}, {\bmrho})\in{\bm{\Theta}}}\|p_{{\bmtheta}}(\bmO, W^\tp+\bm{R}^\tp{\bmrho})\|_1$. 
    Since $({\bmtheta}, {\bmrho})\equiv(\tau, \bm{\beta}, \kappa_{\treat}, \kappa_{\control}, {\bmrho})$ is in a compact parameter space $\bm{\Theta}\subset \mathbb{R}\times\mathbb{R}^{\dim(\bm{\beta})}\times[\kappa_{\min}, \infty)\times[\kappa_{\min}, \infty)\times\mathbb{R}^{\dim({\bmrho})}$, we have 
    \begin{align*}
        \|p_{{\bmtheta}}(\bmO, W^\tp+\bm{R}^\tp{\bmrho})\|_1
        &\le \sum_{i=1}^5 \|p_{{\bmtheta}, i}(\bmO, W^\tp+\bm{R}^\tp{\bmrho})\|_1 \\
        &\le\|s(\bmX)\|_1+\kappa_{\min}^{-1}(|Y|+\|g(\bmX)\|_1)+M,
    \end{align*}
    where $M$ is some positive constant.
    Using condition \ref{cond:logistic_Z_est}(i), we know that 
    $\E\{B(\bmO, W^\tp, \bm{R}^\tp)\}<\infty$. 
    Therefore, from Lemma \ref{lemma:glivenko_cantelli}, the class $\big\{p_{{\bmtheta}}(\bmO, W^\tp+\bm{R}^\tp{\bmrho}): ({\bmtheta}, {\bmrho})\in{\bm{\Theta}}\big\}$ is $\mathbb{P}$-Glivenko-Cantelli.
\end{proof}

\begin{lemma}\label{lemma:p_donsker}
    The class $\big\{p_{{\bmtheta}}(\bmO, W^\tp+\bm{R}^\tp{\bmrho}): ({\bmtheta}, {\bmrho})\in{\bm{\Theta}}\big\}$ is $\mathbb{P}$-Donsker and $\E\{\|p_{{\bmtheta}_n}(\bmO, W^\tp+\bm{R}^\tp{\bmrho}_n)-p_{{\bmtheta}_0}(\bmO, W^\tp+\bm{R}^\tp{\bmrho}_0)\|_2^2\}\rightarrow0$, whenever $\|({\bmtheta}_n, {\bmrho}_n)-({\bmtheta}_0, {\bmrho}_0)\|_2 \rightarrow 0.$
\end{lemma}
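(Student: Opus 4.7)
The plan is to prove the Donsker property via a standard Lipschitz-in-parameter argument and then deduce the $L^2$-continuity at $({\bmtheta}_0, {\bmrho}_0)$ by the same envelope bound. The parameter space $\bm{\Theta}$ is a compact subset of a finite-dimensional Euclidean space, so the class is indexed by a totally bounded set, and the only real work is to exhibit a square-integrable Lipschitz envelope.

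First, I would check that each coordinate of $p_{\bmtheta}(\bmO, W^\tp + \bm{R}^\tp \bmrho)$ is continuously differentiable in $({\bmtheta}, {\bmrho}) \in \bm{\Theta}$. Three building blocks recur: (i) logistic expressions such as $1/(1+e^x)$ whose derivatives in $x$ are uniformly bounded by $1/4$; (ii) the reciprocals $1/\kappa_{\treat}$ and $1/\kappa_{\control}$, each bounded by $\kappa_{\min}^{-1}$ on $\bm{\Theta}$; and (iii) linear prefactors involving $Y$, $s(\bmX)$, $g(\bmX)$, and $\bm{R}^\tp = g^\top(\bmX)$. Differentiating in ${\bm{\beta}}$ pulls down a factor of $s(\bmX)$, differentiating in ${\bmrho}$ pulls down $\bm{R}^\tp = g^\top(\bmX)$, and differentiating in $(\tau, \kappa_\treat, \kappa_\control)$ produces only quantities controlled by $\kappa_{\min}^{-1}$ times $|Y|$, $\|g(\bmX)\|$, or constants. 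Collecting all these contributions via the mean-value theorem, I can construct an envelope $F(\bmO, W^\tp, \bm{R}^\tp)$, polynomial in $|Y|$, $\|s(\bmX)\|$, $\|g(\bmX)\|$, such that
\[
\big\| p_{{\bmtheta}_1}(\bmO, W^\tp + \bm{R}^\tp {\bmrho}_1) - p_{{\bmtheta}_2}(\bmO, W^\tp + \bm{R}^\tp {\bmrho}_2) \big\| \le F(\bmO, W^\tp, \bm{R}^\tp) \, \big\| ({\bmtheta}_1, {\bmrho}_1) - ({\bmtheta}_2, {\bmrho}_2) \big\|
\]
for all $({\bmtheta}_1, {\bmrho}_1), ({\bmtheta}_2, {\bmrho}_2) \in \bm{\Theta}$. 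The dominant terms of $F$ are of the form $\|s(\bmX)\|^2$, $\|g(\bmX)\|^2$, $\|s(\bmX)\|\,\|g(\bmX)\|$, $|Y|\,\|s(\bmX)\|$, and $|Y|\,\|g(\bmX)\|$; by the fourth-moment assumptions in Condition \ref{cond:logistic_Z_est}(i) and Cauchy--Schwarz, $\E(F^2) < \infty$.

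Second, since $\bm{\Theta}$ is a compact subset of Euclidean space, its metric entropy satisfies $\log N(\varepsilon, \bm{\Theta}, \|\cdot\|) \lesssim \log(1/\varepsilon)$. Combined with the Lipschitz bound and the square-integrable envelope, a standard bracketing argument (see, e.g., van der Vaart and Wellner, 1996, Theorem 2.7.11, or van der Vaart, 1998, Example 19.7) gives $\log N_{[\,]}(\varepsilon\|F\|_{L^2}, \mathcal{F}, L^2(\mathbb{P})) \lesssim \log(1/\varepsilon)$, so the bracketing entropy integral is finite and the class is $\mathbb{P}$-Donsker.

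Finally, the $L^2$-continuity follows immediately from the Lipschitz bound:
\[
\E\big\{ \big\| p_{{\bmtheta}_n}(\bmO, W^\tp + \bm{R}^\tp {\bmrho}_n) - p_{{\bmtheta}_0}(\bmO, W^\tp + \bm{R}^\tp {\bmrho}_0) \big\|^2 \big\} \le \big\| ({\bmtheta}_n, {\bmrho}_n) - ({\bmtheta}_0, {\bmrho}_0) \big\|^2 \cdot \E(F^2) \to 0.
\]
The main obstacle is the bookkeeping of gradient bounds for the fractional terms in components 4 and 5 of $p_{\bmtheta}$: even though every logistic factor is bounded, each derivative in ${\bmrho}$ pulls down $\bm{R}^\tp = g^\top(\bmX)$, so the envelope accumulates cross-products such as $|Y|\,\|g(\bmX)\|$, and the division by $\kappa_{\treat}$ and $\kappa_{\control}$ has to be controlled uniformly via $\kappa_{\min} > 0$. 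The finite fourth-moment conditions in Condition \ref{cond:logistic_Z_est}(i) are what make all such cross-products square-integrable.
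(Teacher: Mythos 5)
Your proposal is correct and follows essentially the same route as the paper's proof: a mean-value-theorem Lipschitz bound in $({\bmtheta},{\bmrho})$ over the compact $\bm{\Theta}$, with a square-integrable envelope of the form $\|s(\bmX)\|_2^2+(1+\|s(\bmX)\|_2+\|\bm{R}^\tp\|_2)(1+|Y|+\|g(\bmX)\|_2)$ whose $L^2$-integrability comes from the fourth-moment bounds in Condition \ref{cond:logistic_Z_est}(i) via Cauchy--Schwarz, yielding the Donsker property by the standard parametric Lipschitz-class criterion and the $L^2$-continuity at $({\bmtheta}_0,{\bmrho}_0)$ as an immediate consequence. The only cosmetic difference is that you make the bracketing-entropy step explicit, whereas the paper invokes the Lipschitz criterion directly.
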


\begin{proof}[\bf Proof of Lemma \ref{lemma:p_donsker}]
    We consider the difference in the values of the estimating equations $p_{{\bmtheta}}(\bmO, W^\tp+\bm{R}^\tp{\bmrho})$ evaluated at two parameter values. 
    Recall that $p_{{\bmtheta},i}(\bmO, W^\tp+\bm{R}^\tp{\bmrho})$ denotes the $i$-th component of $p_{{\bmtheta}}(\bmO, W^\tp+\bm{R}^\tp{\bmrho})$, for $1\le i \le 5$. 
    Let $({\bmtheta}_1,{\bmrho}_1)=(\tau_1, \bm{\beta}_1, \kappa_{\treat 1}, \kappa_{\control 1}, {\bmrho}_1)$ and $({\bmtheta}_2,{\bmrho}_2)=(\tau_2, \bm{\beta}_2, \kappa_{\treat 2}, \kappa_{\control 2}, {\bmrho}_2)$ be two points in the parameter space ${\bm{\Theta}}$.
    We further introduce the notation $\lesssim$, where $a\lesssim b $ if and only if $a\le Cb$ for some absolute constant $C>0$. 
    
    For the first estimating equation, 
    by definition and the mean value theorem, we have 
    \begin{align*}
        &\quad \ \|p_{{\bmtheta}_1, 1}(\bmO, W^\tp+\bm{R}^\tp{\bmrho}_1)-p_{{\bmtheta}_2, 1}(\bmO, W^\tp+\bm{R}^\tp{\bmrho}_2)\|_2\\
        &=\left\|\left(\dfrac{e^{\bm{\beta}_2^{\top}s(\bmX)}}{1+e^{\bm{\beta}_2^{\top}s(\bmX)}}-\dfrac{e^{\bm{\beta}_1^{\top}s(\bmX)}}{1+e^{\bm{\beta}_1^{\top}s(\bmX)}}\right)s(\bmX)\right\|_2 \\
        &=\left\|(\bm{\beta}_2-\bm{\beta}_1)^{\top}s(\bmX)s(\bmX)\dfrac{e^{t_*}}{(1+e^{t_*})^2}\right\|_2  \quad \ \ (\text{for some } t_* \text{ between } \bm{\beta}_1^{\top}s(\bmX) \text{ and } \bm{\beta}_2^{\top}s(\bmX) )\\
        &\leq |(\bm{\beta}_2-\bm{\beta}_1)^{\top}s(\bmX)|\|s(\bmX)\|_2 
        \leq\|\bm{\beta}_2-\bm{\beta}_1\|_2\|s(\bmX)\|_2\|s(\bmX)\|_2\\
        &\leq M_1(\bmX)\|\bm{\beta}_2-\bm{\beta}_1\|_2,
    \end{align*}
    where $M_1(\bmX)=\|s(\bmX)\|_2^2$.

    For the second estimating equation, 
    by definition and the mean value theorem, we have 
    \begin{align*}
        &\quad \ \|p_{{\bmtheta}_1, 2}(\bmO, W^\tp+\bm{R}^\tp{\bmrho}_1)-p_{{\bmtheta}_2, 2}(\bmO, W^\tp+\bm{R}^\tp{\bmrho}_2)\|_2\\
        &\le|\kappa_{\treat 2}-\kappa_{\treat 1}|+\left|\dfrac{1}{1+e^{W^\tp+\bm{R}^\tp{\bmrho}_2+\bm{\beta}_2^{\top}s(\bmX)}}-\dfrac{1}{1+e^{W^\tp+\bm{R}^\tp{\bmrho}_1+\bm{\beta}_1^{\top}s(\bmX)}}\right|\\
        & 
        \le 
        |\kappa_{\treat 2}-\kappa_{\treat 1}| 
        + 
        | \bm{R}^\tp({\bmrho}_2-{\bmrho}_1) +(\bm{\beta}_2-\bm{\beta}_1)^{\top}s(\bmX) | 
        \frac{e^{t_*}}{(1+e^{t_*})^2} 
        \\
        & \quad \ \ 
        \text{(for some $t_*$ between $W^\tp+\bm{R}^\tp{\bmrho}_2+\bm{\beta}_2^{\top}s(\bmX)$ and $W^\tp+\bm{R}^\tp{\bmrho}_1+\bm{\beta}_1^{\top}s(\bmX)$)}\\
        & \le 
        |\kappa_{\treat 2}-\kappa_{\treat 1}|+\|\bm{R}^\tp\|_2\|{\bmrho}_2-{\bmrho}_1\|_2 +\|\bm{\beta}_2-\bm{\beta}_1\|_2\|s(\bmX)\|_2\\
        & \le 
        M_2(\bmX, \bm{R}^\tp)(|\kappa_{\treat 2}-\kappa_{\treat 1}|+\|\bm{\beta}_2-\bm{\beta}_1\|_2+\|{\bmrho}_2-{\bmrho}_1\|_2),
    \end{align*}
    where 
    where $M_2(\bmX, \bm{R}^\tp)=1+\|s(\bmX)\|_2+\|\bm{R}^\tp\|_2$.
    
    For the third estimating equation, by the same as logic, we can derive that 
    $$
    \|p_{{\bmtheta}_1, 3}(\bmO, W^\tp+\bm{R}^\tp{\bmrho}_1)-p_{{\bmtheta}_2, 3}(\bmO, W^\tp+\bm{R}^\tp{\bmrho}_2)\|_2\le 
        M_2(\bmX, \bm{R}^\tp)(|\kappa_{\control 2}-\kappa_{\control 1}|+\|\bm{\beta}_2-\bm{\beta}_1\|_2+\|{\bmrho}_2-{\bmrho}_1\|_2).
    $$

    For the fourth estimating equation, by definition and the mean value theorem, we have 
    \begin{align*}
        &\quad \ \|p_{{\bmtheta}_1, 4}(\bmO, W^\tp+\bm{R}^\tp{\bmrho}_1)-p_{{\bmtheta}_2, 4}(\bmO, W^\tp+\bm{R}^\tp{\bmrho}_2)\|_2\\
        &\le
        |\tau_2-\tau_1|+\left|\dfrac{Y}{\{1+e^{W^\tp+\bm{R}^\tp{\bmrho}_2+\bm{\beta}_2^\top s(\bmX)}\}\kappa_{\treat 2}}-\dfrac{Y}{\{1+e^{W^\tp+\bm{R}^\tp{\bmrho}_1+\bm{\beta}_1^\top s(\bmX)}\}\kappa_{\treat 1}}\right|\\
        &\quad \ +\left|\dfrac{Y}{\{1+e^{-W^\tp-\bm{R}^\tp{\bmrho}_2-\bm{\beta}_2^\top s(\bmX)}\}\kappa_{\control 2}}-\dfrac{Y}{\{1+e^{-W^\tp-\bm{R}^\tp{\bmrho}_1-\bm{\beta}_1^\top s(\bmX)}\}\kappa_{\control 1}}\right|\\
        &\le 
        |\tau_2-\tau_1|+\left|\dfrac{Y}{\{1+e^{W^\tp+\bm{R}^\tp{\bmrho}_2+\bm{\beta}_2^\top s(\bmX)}\}\kappa_{\treat 2}}-\dfrac{Y}{\{1+e^{W^\tp+\bm{R}^\tp{\bmrho}_2+\bm{\beta}_2^\top s(\bmX)}\}\kappa_{\treat 1}}\right|\\
        &\quad \ +\left|\dfrac{Y}{\{1+e^{W^\tp+\bm{R}^\tp{\bmrho}_2+\bm{\beta}_2^\top s(\bmX)}\}\kappa_{\treat 1}}-\dfrac{Y}{\{1+e^{W^\tp+\bm{R}^\tp{\bmrho}_1+\bm{\beta}_1^\top s(\bmX)}\}\kappa_{\treat 1}}\right|\\
        &\quad \ +\left|\dfrac{Y}{\{1+e^{-W^\tp-\bm{R}^\tp{\bmrho}_2-\bm{\beta}_2^\top s(\bmX)}\}\kappa_{\control 2}}-\dfrac{Y}{\{1+e^{-W^\tp-\bm{R}^\tp{\bmrho}_2-\bm{\beta}_2^\top s(\bmX)}\}\kappa_{\control 1}}\right|\\
        &\quad \ +\left|\dfrac{Y}{\{1+e^{-W^\tp-\bm{R}^\tp{\bmrho}_2-\bm{\beta}_2^\top s(\bmX)}\}\kappa_{\control 1}}-\dfrac{Y}{\{1+e^{-W^\tp-\bm{R}^\tp{\bmrho}_1-\bm{\beta}_1^\top s(\bmX)}\}\kappa_{\control 1}}\right|\\
        &\lesssim|\tau_2-\tau_1|+|Y||\kappa_{\treat 2}-\kappa_{\treat 1}|+|Y||\kappa_{\control 2}-\kappa_{\control 1}|+\|\bm{\beta}_2-\bm{\beta}_1\|_2\|s(\bmX)\|_2 |Y|+\|\bm{R}^\tp \|_2 |Y|\|{\bmrho}_2-{\bmrho}_1\|_2\\
        &\lesssim M_3(\bmX, \bm{R}^\tp)(|\tau_2-\tau_1|+|\kappa_{\treat 2}-\kappa_{\treat 1}|+|\kappa_{\control 2}-\kappa_{\control 1}|+\|\bm{\beta}_2-\bm{\beta}_1\|_2+\|{\bmrho}_2-{\bmrho}_1\|_2),
    \end{align*}
    where     
    $M_3(\bmX, \bm{R}^\tp)=1+|Y|+\|s(\bmX)\|_2|Y|+\|\bm{R}^\tp \|_2|Y|$, 
    and the second last inequality uses similar bounds as the second estimating equation and the fact that the parameter space is compact.

    For the fifth estimating equation, by the similar logic as the bound for the fourth estimating equation, we have  
    \begin{align*}
        &\quad \ \|p_{{\bmtheta}_1, 5}(\bmO, W^\tp+\bm{R}^\tp{\bmrho}_1)-p_{{\bmtheta}_2, 5}(\bmO, W^\tp+\bm{R}^\tp{\bmrho}_2)\|_2\\
        &\le
        \left\|\dfrac{g(\bmX)}{\{1+e^{W^\tp+\bm{R}^\tp{\bmrho}_2+\bm{\beta}_2^\top s(\bmX)}\}\kappa_{\treat 2}}-\dfrac{g(\bmX)}{\{1+e^{W^\tp+\bm{R}^\tp{\bmrho}_1+\bm{\beta}_1^\top s(\bmX)}\}\kappa_{\treat 1}}\right\|_2\\
        &\quad \ +\left\|\dfrac{g(\bmX)}{\{1+e^{-W^\tp-\bm{R}^\tp{\bmrho}_2-\bm{\beta}_2^\top s(\bmX)}\}\kappa_{\control 2}}-\dfrac{g(\bmX)}{\{1+e^{-W^\tp-\bm{R}^\tp{\bmrho}_1-\bm{\beta}_1^\top s(\bmX)}\}\kappa_{\control 1}}\right\|_2\\
        &\lesssim\|g(\bmX)\|_2|\kappa_{\treat 2}-\kappa_{\treat 1}|+\|g(\bmX)\|_2|\kappa_{\control 2}-\kappa_{\control 1}|+\|g(\bmX)\|_2 \|\bm{\beta}_2-\bm{\beta}_1\|_2 \|s(\bmX)\|_2+\| g(\bmX)\|_2 \|\bm{R}^\tp\|_2\|{\bmrho}_2-{\bmrho}_1\|_2\\
        &\lesssim M_4(\bmX, \bm{R}^\tp)(|\kappa_{\treat 2}-\kappa_{\treat 1}|+|\kappa_{\control 2}-\kappa_{\control 1}|+\|\bm{\beta}_2-\bm{\beta}_1\|_2+\|{\bmrho}_2-{\bmrho}_1\|_2),
    \end{align*}
    where $M_4(\bmX, \bm{R}^\tp)=1+\|g(\bmX)\|_2+\|s(\bmX)\|_2\|g(\bmX)\|_2+\|\bm{R}^\tp \|_2\|g(\bmX)\|_2$.

    The bounds for the five estimating equations then imply that 
    \begin{align*}
        &\quad \ \|p_{{\bmtheta}_1}(\bmO, W^\tp+\bm{R}^\tp{\bmrho}_1)-p_{{\bmtheta}_2}(\bmO, W^\tp+\bm{R}^\tp{\bmrho}_2)\|_2\\
        &\le\sum_{i=1}^5\|p_{{\bmtheta}_1,i}(\bmO, W^\tp+\bm{R}^\tp{\bmrho}_1)-p_{{\bmtheta}_2, i}(\bmO, W^\tp+\bm{R}^\tp{\bmrho}_2)\|_2\\
        &\lesssim M(\bmX, \bm{R}^\tp)(|\tau_2-\tau_1|+|\kappa_{\treat 2}-\kappa_{\treat 1}|+|\kappa_{\control 2}-\kappa_{\control 1}|+\|\bm{\beta}_2-\bm{\beta}_1\|_2+\|{\bmrho}_2-{\bmrho}_1\|_2),
    \end{align*}
    where $M(\bmX, \bm{R}^\tp)=\|s(\bmX)\|_2^2+(1+\|s(\bmX)\|_2+\|\bm{R}^\tp\|_2)(1+|Y|+\|g(\bmX)\|_2)$.
    Note that $\bm{R}^\tp=g^\top(\bmX)$. From Condition \ref{cond:logistic_Z_est}(i),
    $\E\{M(\bmX,\bm{R}^\tp)^2\}<\infty$. This then implies that the class $\big\{p_{{\bmtheta}}(\bmO, W^\tp+\bm{R}^\tp{\bmrho}): ({\bmtheta}, {\bmrho})\in{\bm{\Theta}}\big\}$ is $\mathbb{P}$-Donsker. 
    Furthermore, by the above inequality, we can see that $\E[\|p_{{\bmtheta}_n}(\bmO, W^\tp+\bm{R}^\tp{\bmrho}_n)-p_{{\bmtheta}_0}(\bmO, W^\tp+\bm{R}^\tp{\bmrho}_0)\|_2^2]\rightarrow0$ when $\|({\bmtheta}_n, {\bmrho}_n)-({\bmtheta}_0, {\bmrho}_0)\|_2\rightarrow 0.$
    Therefore, Lemma \ref{lemma:p_donsker} holds. 
\end{proof}

\begin{lemma}\label{lemma:well_define}
    $\E\{p_{{\bmtheta}}(\bmO, W^\tp+\bm{R}^\tp{\bmrho})p^\top_{{\bmtheta}}(\bmO, W^\tp+\bm{R}^\tp{\bmrho})\}<\infty$ and $
    \E\big\{\nabla_{({\bmtheta}, {\bmrho})=({\bmtheta}_0, {\bmrho}_0)}p_{{\bmtheta}}(\bmO, W^\tp+\bm{R}^\tp{\bmrho})\big\}$ is invertible.
\end{lemma}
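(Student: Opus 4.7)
\textbf{Proof proposal for Lemma \ref{lemma:well_define}.}

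The plan is to handle the two claims separately. For the finiteness of the second moment, I will exploit the fact that all logistic-type factors appearing in $p_{\bmtheta}$, i.e., terms of the form $1/(1+e^{\pm(W^\tp + \bm{R}^\tp{\bmrho} + \bm{\beta}^\top s(\bmX))})$ and $e^{\bm{\beta}^\top s(\bmX)}/(1+e^{\bm{\beta}^\top s(\bmX)})$, are uniformly bounded by $1$ regardless of $W^\tp$, $\bm{R}^\tp$, ${\bmrho}$, and $\bm{\beta}$. Combined with $\kappa_\treat, \kappa_\control \ge \kappa_{\min} > 0$ on the compact parameter space $\Theta$ and the boundedness of $\tau$, one obtains the componentwise bound
\begin{align*}
    \|p_{\bmtheta}(\bmO, W^\tp + \bm{R}^\tp {\bmrho})\|^2 \lesssim 1 + \|s(\bmX)\|^2 + Y^2 + \|g(\bmX)\|^2 + Y^2 \kappa_{\min}^{-2} + \|g(\bmX)\|^2 \kappa_{\min}^{-2},
\end{align*}
whose expectation is finite by Condition \ref{cond:logistic_Z_est}(i) (which gives finite fourth, hence second, moments of $Y$, $s(\bmX)$, and $g(\bmX)$). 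This implies at once that $\E\{p_{\bmtheta}(\bmO, W^\tp+\bm{R}^\tp{\bmrho})p_{\bmtheta}^\top(\bmO, W^\tp+\bm{R}^\tp{\bmrho})\} < \infty$.

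For the invertibility of the Jacobian $\bm{J} \equiv \E\{\nabla_{({\bmtheta},{\bmrho})=({\bmtheta}_0,{\bmrho}_0)} p_{\bmtheta}(\bmO, W^\tp + \bm{R}^\tp{\bmrho})\}$, I will exploit the block sparsity pattern of $\bm{J}$. Order the parameters as $(\tau, \bm{\beta}, \kappa_\treat, \kappa_\control, {\bmrho})$ and the equations as $(p_1,\ldots,p_5)$. Then $\partial p_1/\partial \tau = \partial p_1/\partial \kappa_\treat = \partial p_1/\partial \kappa_\control = \partial p_1/\partial {\bmrho} = \bm{0}$; the $\tau$-column has a single nonzero entry $\partial p_4 / \partial \tau = -1$; and $\partial p_2/\partial \kappa_\treat = \partial p_3/\partial \kappa_\control = -1$ while the off-diagonal entries among $p_2, p_3$ and $\kappa_\treat, \kappa_\control$ vanish. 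Dominated convergence, justified by the envelopes developed in the proof of Lemma \ref{lemma:p_donsker}, permits exchanging expectation and differentiation, so these claims reduce to routine calculus. Expanding $\det(\bm{J})$ along the $\tau$-column and then exploiting the block-triangular structure in $\bm{\beta}$ gives
\begin{align*}
    \det(\bm{J}) = \pm \det(J_{1\bm{\beta}}) \cdot \det(J_{5{\bmrho}} + J_{5\kappa_\treat} J_{2{\bmrho}} + J_{5\kappa_\control} J_{3{\bmrho}}),
\end{align*}
where the first factor equals $\det\{-\E[s(\bmX)s(\bmX)^\top e^{\bm{\beta}_0^\top s(\bmX)}/(1+e^{\bm{\beta}_0^\top s(\bmX)})^2]\}$, nonzero by Condition \ref{cond:logistic_Z_est}(ii), and the second factor is the Schur complement with respect to the $-\bm{I}_2$ block coming from $(\kappa_\treat,\kappa_\control)$.

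The main observation — and what I expect to be the crux of the argument — is that this Schur complement is exactly $\nabla_{{\bmrho}=\bm{0}} f({\bmrho})$, where $f$ is the map analyzed in Lemma \ref{lemma:unqiue_root}. To see this, apply the implicit function theorem to $\E[p_2] = 0$ and $\E[p_3]=0$ around $({\bmtheta}_0, {\bmrho}_0)$: since $\partial \E[p_2]/\partial \kappa_\treat = -1$ and $\partial \E[p_3]/\partial \kappa_\control = -1$, the resulting functions $\kappa_\treat({\bmrho}), \kappa_\control({\bmrho})$ satisfy $\partial \kappa_\treat/\partial {\bmrho}|_0 = J_{2{\bmrho}}$ and $\partial \kappa_\control/\partial {\bmrho}|_0 = J_{3{\bmrho}}$, and substitution into $\E[p_5]$ reproduces $f({\bmrho})$ of Lemma \ref{lemma:unqiue_root}. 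Then the chain rule gives $\nabla_{{\bmrho}=\bm{0}} f({\bmrho}) = J_{5{\bmrho}} + J_{5\kappa_\treat} J_{2{\bmrho}} + J_{5\kappa_\control} J_{3{\bmrho}}$, which is shown to be negative definite (and hence nonsingular) in the proof of Lemma \ref{lemma:unqiue_root} under Condition \ref{cond:logistic_Z_est}(iii). Combining the two factors yields $\det(\bm{J}) \ne 0$, completing the proof.
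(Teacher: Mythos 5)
Your proposal is correct and follows essentially the same route as the paper's proof: both expand the determinant along the $\tau$-column, peel off the $\bm{\beta}$-block (nonsingular by Condition \ref{cond:logistic_Z_est}(ii)), and reduce invertibility to the Schur complement with respect to the $-\bm{I}_{2\times 2}$ block in $(\kappa_{\treat},\kappa_{\control})$, which equals $\nabla_{{\bmrho}=\bm{0}}f({\bmrho})=-\kappa_{\treat 0}^{-2}\bm{M}\prec 0$ by Lemma \ref{lemma:unqiue_root} under Condition \ref{cond:logistic_Z_est}(iii). The only difference is in bookkeeping: where the paper verifies this identity by computing every Jacobian block explicitly using $W^\tp+\bm{R}^\tp{\bmrho}_0+\bm{\beta}_0^{\top}s(\bmX)=\logit\{e_\tp(\bmX,\bmU)\}$, you obtain it by profiling out $\kappa_{\treat},\kappa_{\control}$ via the chain rule (the implicit function theorem is not even needed, since the defining equations are linear in the $\kappa$'s), a slightly cleaner presentation of the same computation.
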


\begin{proof}[\bf Proof of Lemma \ref{lemma:well_define}]
    By Condition \ref{cond:logistic_Z_est}(i), we can verify that 
    $\E\{p_{{\bmtheta}}(\bmO, W^\tp+\bm{R}^\tp{\bmrho})p^\top_{{\bmtheta}}(\bmO, W^\tp+\bm{R}^\tp{\bmrho})\}<\infty$.
    Below we prove that $
    \E\big\{\nabla_{({\bmtheta}, {\bmrho})=({\bmtheta}_0, {\bmrho}_0)}p_{{\bmtheta}}(\bmO, W^\tp+\bm{R}^\tp{\bmrho})\big\}$ is invertible. Let   
    $g = g(\bmX)$, $s = s(\bmX)$ and $e_\tp=e_\tp(\bmX,\bmU)$ for notational convenience.

    Recall that $\bm{R}^\tp=g^\top$ and $({\bmtheta},{\bmrho})=(\tau, \bm{\beta}, \kappa_{\treat}, \kappa_{\control}, {\bmrho})$. We can verify that 
    \begin{align*}
        & \quad \ \nabla_{({\bmtheta}, {\bmrho})}p_{{\bmtheta}}(\bmO, W^\tp+\bm{R}^\tp{\bmrho})
        \\
        & = 
        \begin{pmatrix}
            0 & \dfrac{- e^{\bm{\beta}^\top s } s s^\top}{(1+e^{\bm{\beta}^{\top}s})^2} & 0 & 0 & 0\\
            0 & * & -1 & 0 & -Z \frac{e^{W^\tp+\bm{R}^\tp{\bmrho}+\bm{\beta}^\top s}}{(1+e^{W^\tp+\bm{R}^\tp{\bmrho}+\bm{\beta}^\top s})^2}  g^\top \\
            0 & * & 0 & -1 & (1-Z) \frac{  e^{-W^\tp-\bm{R}^\tp{\bmrho}-\bm{\beta}^\top s}}{(1+e^{-W^\tp-\bm{R}^\tp{\bmrho}-\bm{\beta}^\top s})^2}  g^\top \\
            -1 & * & * & * & * \\
            \bm{0} & * & \bm{a}_1 & \bm{a}_2 & \bm{a}_3 
        \end{pmatrix}
    \end{align*}
    where 
    \begin{align*}
        \bm{a}_1 & =
        - \dfrac{Zg}{\{1+e^{W^\tp+\bm{R}^\tp{\bmrho}+\bm{\beta}^\top s}\}\kappa_{\treat}^2}, 
        \quad \quad \quad 
        \bm{a}_2 = \dfrac{(1-Z)g}{\{1+e^{-W^\tp-\bm{R}^\tp{\bmrho}-\bm{\beta}^\top s}\}\kappa_{\control}^2}
        \\
        \bm{a}_3 & = - \frac{Z e^{W^\tp+\bm{R}^\tp{\bmrho}+\bm{\beta}^\top s}}{\kappa_\treat(1+e^{W^\tp+\bm{R}^\tp{\bmrho}+\bm{\beta}^\top s})^2} g g^\top -  \frac{(1-Z) e^{-W^\tp-\bm{R}^\tp{\bmrho}-\bm{\beta}^\top s}}{\kappa_c(1+e^{-W^\tp-\bm{R}^\tp{\bmrho}-\bm{\beta}^\top s})^2} gg^\top. 
    \end{align*}
    Note that 
    $W^\tp+\bm{R}^\tp{\bmrho}_0+\bm{\beta}_0^\top s = \logit(e_\tp)$,  
    which, by some algebra, further implies that 
    \begin{align*}
    \dfrac{1}{1+e^{W^\tp+\bm{R}^\tp{\bmrho}+\bm{\beta}^\top s}} & = 1 - e_\tp(\bmX,\bmU), \quad \quad \quad 
        \dfrac{1}{\{1+e^{-W^\tp-\bm{R}^\tp{\bmrho}-\bm{\beta}^\top s}\}} = e_\tp(\bmX,\bmU),
    \\
        \frac{e^{W^\tp+\bm{R}^\tp{\bmrho}+\bm{\beta}^\top s}}{(1+e^{W^\tp+\bm{R}^\tp{\bmrho}+\bm{\beta}^\top s})^2}
        & =
        \frac{e^{-W^\tp-\bm{R}^\tp{\bmrho}-\bm{\beta}^\top s}}{(1+e^{-W^\tp-\bm{R}^\tp{\bmrho}-\bm{\beta}^\top s})^2}= 
        e_\tp(\bmX,\bmU) \{1 - e_\tp(\bmX,\bmU)\}.
    \end{align*}
    By these facts, we can verify that $\kappa_{\treat 0} = \kappa_{\control 0} = \E[e_\tp  \{1 - e_\tp \}]$, 
    and 
    \begin{align*}
        & \quad \ \E \big\{ \nabla_{({\bmtheta}, {\bmrho}) = ({\bmtheta}_0,{\bmrho}_0)}p_{{\bmtheta}}(\bmO, W^\tp+\bm{R}^\tp{\bmrho}) \big\}
        \\
        & =  
        \begin{pmatrix}
            0 & 
            \bm{B}
            & 0 & 0 & 0\\
            0 & * & -1 & 0 & \E\{ -e_\tp^2(1-e_\tp) g^\top \} \\
            0 & * & 0 & -1 & \E\{e_\tp(1-e_\tp)^2 g^\top \}   \\
            -1 & * & * & * & * \\
            \bm{0} & * & -\kappa_{\treat 0}^{-2}\E\{ e_\tp(1-e_\tp) g\} & \kappa_{\control 0}^{-2}\E\{ e_\tp(1-e_\tp) g\} & -\kappa_{\treat 0}^{-1} \E\{ e_\tp(1-e_\tp) gg^\top\}
        \end{pmatrix}, 
    \end{align*}
    where $\bm{B} = - \E\{  e^{\bm{\beta}_0^\top s}/(1+e^{\bm{\beta}_0^{\top}s})^2 \cdot ss^\top \}$. This then implies that 
    \begin{align*}
        \left|\text{det}\left( \E \big\{ \nabla_{({\bmtheta}, {\bmrho}) = ({\bmtheta}_0,{\bmrho}_0)}p_{{\bmtheta}}(\bmO, W^\tp+\bm{R}^\tp{\bmrho}) \big\} \right) \right|
        & = \left|
        \text{det}(\bm{B})
        \right| 
        \cdot 
        \left|
        \text{det}(\bm{A})
        \right|, 
    \end{align*}
    where 
    \begin{align*}
        \bm{A} & = 
        \begin{pmatrix}
             -1 & 0 & \E\{ -e_\tp^2(1-e_\tp) g^\top \} \\
             0 & -1 & \E\{e_\tp(1-e_\tp)^2 g^\top \}   \\
             -\kappa_{\treat 0}^{-2}\E\{ e_\tp(1-e_\tp) g\} & \kappa_{\control 0}^{-2}\E\{ e_\tp(1-e_\tp) g\} & -\kappa_{\treat 0}^{-1} \E\{ e_\tp(1-e_\tp) gg^\top\}
        \end{pmatrix} \equiv 
        \begin{pmatrix}
            -\bm{I}_{2\times 2} & \bm{A}_{12}\\
            \bm{A}_{21} & \bm{A}_{22}
        \end{pmatrix},
    \end{align*}
    and $\bm{I}_{2\times 2}$ is an $2\times 2$ diagonal matrix. 
    By the property of block matrix, we can derive that 
    \begin{align*}
        \text{det}(\bm{A})
        & =
        \text{det} ( -\bm{I}_{2\times 2}
        )
                \cdot
        \text{det} 
        \left( 
        \bm{A}_{22} - \bm{A}_{21} \cdot (-\bm{I}_{2\times 2}) \cdot \bm{A}_{12}
        \right)        
    = \text{det} 
        \left( 
        \bm{A}_{22} + \bm{A}_{21} \cdot \bm{A}_{12}
        \right) 
    \\
    & = 
    \text{det} 
        \left( 
        -\kappa_{\treat 0}^{-1} \E\{ e_\tp(1-e_\tp) gg^\top\} + \kappa_{\treat 0}^{-2}\E\{ e_\tp(1-e_\tp) g\} \cdot 
        \E\{ e_\tp(1-e_\tp) g^\top \}
        \right)
    \\
    & = \text{det} 
        \left( 
        -\kappa_{\treat 0}^{-2} \bm{M}
        \right),
    \end{align*}
    where 
    $
        \bm{M} =  \E \{ gg^\top e_\tp (1-e_\tp) \} \E\{ e_\tp (1-e_\tp)\}
        - \E\{g e_\tp (1-e_\tp) \} \E\{ g^\top e_\tp (1-e_\tp) \}
    $
    is defined as in Lemma \ref{lemma:unqiue_root}. 
    From Condition \ref{cond:logistic_Z_est} and the proof of Lemma  \ref{lemma:unqiue_root}, we can know that both $-B$ and $M$ are positive definite. 
    Therefore, 
    $\E \big\{ \nabla_{({\bmtheta}, {\bmrho}) = ({\bmtheta}_0,{\bmrho}_0)}p_{{\bmtheta}}(\bmO, W^\tp+\bm{R}^\tp{\bmrho}) \big\}$ must have a nonzero determinant and is thus invertible. 
    
    From the above,  Lemma \ref{lemma:well_define} holds. 
    \end{proof}

\begin{proof}[\bf Proof of Theorem \ref{thm:valid_bootstrap_R_g}]
    By Lemma \ref{lemma:asy_ci}, it suffices to verify the following four conditions:
    \begin{enumerate}
        \item[(i)]  For any sequence $\{({\bmtheta}_n, {\bmrho}_n)\}\in\bm{\Theta}, \Phi({\bmtheta}_n, {\bmrho}_n)\rightarrow 0$ implies $\|({\bmtheta}_n, {\bmrho}_n)-({\bmtheta}_0, {\bmrho}_0)\|\rightarrow 0$;
        \item[(ii)] The class $\big\{p_{{\bmtheta}}(\bmO, W^\tp+\bm{R}^\tp{\bmrho}): ({\bmtheta}, {\bmrho})\in{\bm{\Theta}}\big\}$ is $\mathbb{P}$-Glivenko--Cantelli;
        \item[(iii)] The class $\big\{p_{{\bmtheta}}(\bmO, W^\tp+\bm{R}^\tp{\bmrho}): ({\bmtheta}, {\bmrho})\in{\bm{\Theta}}\big\}$ is $\mathbb{P}$-Donsker and $\E[\|p_{{\bmtheta}_n}(\bmO, W^\tp+\bm{R}^\tp{\bmrho}_n)-p_{{\bmtheta}_0}(\bmO, W^\tp+\bm{R}^\tp{\bmrho}_0)\|^2]\rightarrow0$, whenever $\|({\bmtheta}_n, {\bmrho}_n)-({\bmtheta}_0, {\bmrho}_0)\|\rightarrow 0$;
        \item[(iv)] $\E\{p_{{\bmtheta}}(\bmO, W^\tp+\bm{R}^\tp{\bmrho})p^\top_{{\bmtheta}}(\bmO, W^\tp+\bm{R}^\tp{\bmrho})\}<\infty$ and $\dot{\Phi}_0\equiv\E\big\{\nabla_{({\bmtheta}, {\bmrho})=({\bmtheta}_0, {\bmrho}_0)}p_{{\bmtheta}}(\bmO, W^\tp+\bm{R}^\tp{\bmrho})\big\}$ is invertible.
    \end{enumerate}
    These four conditions are verified by applying Lemmas \ref{lemma:str_posi}, \ref{lemma:p_gli_can}, \ref{lemma:p_donsker}, and \ref{lemma:well_define}, respectively.
    Therefore, we derive Theorem \ref{thm:valid_bootstrap_R_g}. 
\end{proof}

\subsection{Proof of Theorem \ref{thm:ci_tau_pred} and Corollaries \ref{cor:ci_tau} and \ref{cor:test_U}}

\begin{proof}[\bf Proof of Theorem \ref{thm:ci_tau_pred}]
    
Let $\varepsilon = \min\{\log(\Lambda_1'/\Lambda_1), \log(\Lambda_0'/\Lambda_0)\}$, and 
let $\mathcal{W}_n^\varepsilon(\vec{\bm{\Lambda}}, \vec{\bm{\delta}}_{n})$ be the  $\varepsilon$-neighborhood of $\mathcal{W}_n(\vec{\bm{\Lambda}}, \vec{\bm{\delta}}_{n})$ defined as in Section \ref{sec:pred_set_W}.
We can verify that 
$
\mathcal{W}_n^\varepsilon(\vec{\bm{\Lambda}}, \vec{\bm{\delta}}_{n})  = \mathcal{W}_n(\vec{\bm{\Lambda}}\cdot \exp(\varepsilon), \vec{\bm{\delta}}_{n}) \subset \mathcal{W}_n(\vec{\bm{\Lambda}}', \vec{\bm{\delta}}_{n}). 
$
These imply that
$
    \UU_{\alpha}(\vec{\bm{\Lambda}}', \vec{\bm{\delta}}_{n}) 
    \ge \UU_{\alpha}(\vec{\bm{\Lambda}}\cdot \exp(\varepsilon), \vec{\bm{\delta}}_{n}).
$
From Theorem \ref{thm:bootstrap_over_identify_pred_set}, we then have 
\begin{align*}
    & \quad \ \limsup_{n\rightarrow \infty} \Pr\{ \tau_\ow > \UU_{\alpha}(\vec{\bm{\Lambda}}', \vec{\bm{\delta}}_{n}), \  \bmW_{1:n}^\tp \in \mathcal{W}_n(\vec{\bm{\Lambda}}, \vec{\bm{\delta}}_{n}) \}\\
    & \le \limsup_{n\rightarrow \infty} \Pr\{ \tau_\ow > \UU_{\alpha}(\vec{\bm{\Lambda}}\cdot \exp(\varepsilon), \vec{\bm{\delta}}_{n}), \  \bmW_{1:n}^\tp \in \mathcal{W}_n(\vec{\bm{\Lambda}}, \vec{\bm{\delta}}_{n}) \}\\
    & \le \alpha.
\end{align*}
By the same logic, we can then derive that 
\begin{align*}
    \limsup_{n\rightarrow \infty} \Pr\{ \tau_\ow < \LL_{\alpha}(\vec{\bm{\Lambda}}', \vec{\bm{\delta}}_{n}), \  \bmW_{1:n}^\tp \in \mathcal{W}_n(\vec{\bm{\Lambda}}, \vec{\bm{\delta}}_{n}) \} & \le \alpha.
\end{align*}
Consequently, we also have 
\begin{align*}
    \limsup_{n\rightarrow \infty} \Pr\{ \tau_\ow \notin [\LL_{\alpha}(\vec{\bm{\Lambda}}', \vec{\bm{\delta}}_{n}), \UU_{\alpha}(\vec{\bm{\Lambda}}', \vec{\bm{\delta}}_{n})], \  \bmW_{1:n}^\tp \in \mathcal{W}_n(\vec{\bm{\Lambda}}, \vec{\bm{\delta}}_{n}) \}  \le 2 \alpha.
\end{align*}
Therefore, we derive Theorem \ref{thm:ci_tau_pred}. 
\end{proof}

\begin{proof}[\bf Proof of Corollary \ref{cor:ci_tau}]
    From Theorem \ref{thm:bootstrap_over_identify_pred_set}, it suffices to show that $\mathcal{W}_n(\vec{\bm{\Lambda}}, \vec{\bm{\delta}}_{n})$ is an asymptotic $1-\zeta$ prediction set for $W_{1:n}^\tp$.
    Let $\bm{Z}_{1:n} = (Z_1, Z_2, \ldots, Z_n)$. We have 
    \begin{align*}
        &\quad \ \Pr\left\{\bmW_{1:n}^\tp \in \mathcal{W}_n(\vec{\bm{\Lambda}}, \vec{\bm{\delta}}_{n}) \mid \bm{Z}_{1:n}\right\}\\
        &= \Pr\left[\sum_{i: Z_i=1} \I \{ |W_i^\tp| \leq \log(\Lambda_1) \} \geq n_1(1-\delta_{n,1}), \sum_{i: Z_i=0} \I \{ |W_i^\tp| \leq \log(\Lambda_0) \} \geq n_0(1-\delta_{n,0}) \mid \bm{Z}_{1:n} \right]\\
        &=\Pr\left[\sum_{i: Z_i=1} \I \{ |W_i^\tp| \leq \log(\Lambda_1) \} \geq n_1(1-\delta_{n,1}) \mid \bm{Z}_{1:n}\right] \\
        & \quad \ \cdot \Pr\left[\sum_{i: Z_i=0} \I \{ |W_i^\tp| \leq \log(\Lambda_0) \} \geq n_0(1-\delta_{n,0}) \mid \bm{Z}_{1:n}\right], 
    \end{align*}
    where the first equality follows from the definition of $\mathcal{W}_n(\vec{\bm{\Lambda}}, \vec{\bm{\delta}}_{n})$ in \eqref{eq:Lambda_delta_w_pred_set}, 
    and 
    the second equality follows from the fact that $W_i^\tp$s are conditionally independent given $\bm{Z}_{1:n}$.
    From Assumption \ref{asmp:robust_sen_spec}, conditional on $\bm{Z}_{1:n},$ 
    $\sum_{i: Z_i=1} \I \{ |W_i^\tp| \leq \log(\Lambda_1) \}$ is stochastically greater than or equal to $\Bin(n_1, 1-\delta_1)$,  and $\sum_{i: Z_i=0} \I \{ |W_i^\tp| \leq \log(\Lambda_0) \}$ is stochastically greater than or equal to $\Bin(n_0, 1-\delta_0)$ 
    Therefore, we further have 
    \begin{align*}
        &\quad \ \Pr\left\{\bmW_{1:n}^\tp \in \mathcal{W}_n(\vec{\bm{\Lambda}}, \vec{\bm{\delta}}_{n}) \mid \bm{Z}_{1:n}\right\}\\
        &\ge \Pr\left\{\Bin(n_1, 1-\delta_1) \geq n_1(1-\delta_{n,1})\right\}\cdot \Pr\left\{\Bin(n_0, 1-\delta_0) \geq n_0(1-\delta_{n,0})\right\}\\
        & = \Pr\left\{\Bin(n_1, \delta_1) \leq n_1 \delta_{n,1} \right\}\cdot \Pr\left\{\Bin(n_0, \delta_0) \leq n_0 \delta_{n,0} \right\} \\
        &\ge 1-\zeta,
    \end{align*}
    where the last inequality holds because $n_z\delta_{n,z}$ is the $\sqrt{1-\zeta}$th quantile of $\Bin(n_z, \delta_z)$ for $z=0,1$.
    By the law of total expectation, this immediately implies that 
    $
    \Pr\{\bmW_{1:n}^\tp \in \mathcal{W}_n(\vec{\bm{\Lambda}}, \vec{\bm{\delta}}_{n}) \} \ge 1-\zeta. 
    $    
    Therefore, we can derive Corollary \ref{cor:ci_tau}.   
\end{proof}

\begin{proof}[\bf Proof of Corollary \ref{cor:test_U}]
    If $H_{\vec{\bm{q}},\vec{\bm{\Lambda}}}^{\text{U}}$ holds, then we have 
    $\OR^\treat_{\m[q_1]} \le \Lambda_1$ and  $\OR^\control_{\m[q_0]} \le \Lambda_0$. 
    By definition, this implies that 
    $W_{1:n}^\tp \in \mathcal{W}(\vec{\bm{\Lambda}}, \vec{\bm{1}}-\vec{\bm{q}})$. 
    Consequently, under the assumption that $\tau_{\ow} = 0$, we have 
    \begin{align*}
        & \quad \ \Pr\big\{
        H_{\vec{\bm{q}},\vec{\bm{\Lambda}}}^{\text{U}} \text{ holds and } 0\notin [\LL_{\alpha/2}(\xi \vec{\bm{\Lambda}}, \vec{\bm{1}}-\vec{\bm{q}}), \UU_{\alpha/2}(\xi \vec{\bm{\Lambda}}, \vec{\bm{1}}-\vec{\bm{q}})]
        \big\}\\
        & = \Pr\big\{
         \tau_{\ow} \notin [\LL_{\alpha/2}(\xi \vec{\bm{\Lambda}}, \vec{\bm{1}}-\vec{\bm{q}}), \UU_{\alpha/2}(\xi \vec{\bm{\Lambda}}, \vec{\bm{1}}-\vec{\bm{q}})], \ 
        \bmW_{1:n}^\tp \in \mathcal{W}(\vec{\bm{\Lambda}}, \vec{\bm{1}}-\vec{\bm{q}}) 
        \big\}. 
    \end{align*}
    Corollary \ref{cor:test_U} then follows immediately from Theorem \ref{thm:ci_tau_pred}. 
\end{proof}

\subsection{Proof of Theorem \ref{thm:U_quant_pred_int}}
\begin{proof}[\bf Proof of Theorem \ref{thm:U_quant_pred_int}]
Adopt the notation from Section \ref{sec:formu_z_est}, including the estimating equation $p_{{\bmtheta}}(\cdot)$ in \eqref{eq:est_equ_ow} and all the involved notation there. 
For any $\vec{\bm{q}}\in[0,1]^2$,
define 
\begin{align*}
    \hat{\tau}_{\bstrap, \vec{\bm{q}}}^{\max} = \ & \sup \tau 
    \quad 
    \textsc{and}
    \quad 
    \hat{\tau}_{\bstrap, \vec{\bm{q}}}^{\min}  = \inf \tau
    \\
    \text{subject to} \quad & 
    \sum_{i=1}^n k_{\bstrap i} p_{\bmtheta}(\bmO_{i}, W_i) =\bm{0}, 
    \nonumber
    \\
    & (W_1, W_2, \ldots, W_n) \in \mathcal{W}_n\big(\xi(\OR_{\m[q_1]}, \OR_{\m[q_0]}), \vec{\bm{1}}-\vec{\bm{q}}\big),
    \nonumber
\end{align*}
and
\begin{align}\label{eq:quantile_proof_med}
    \check{\tau}_{\bstrap, \vec{\bm{q}}}^{\max} = \ & \sup \tau 
    \quad 
    \textsc{and}
    \quad 
    \check{\tau}_{\bstrap, \vec{\bm{q}}}^{\min}  = \inf \tau
    \\
    \text{subject to} \quad & 
    \sum_{i=1}^n k_{\bstrap i} p_{\bmtheta}(\bmO_{i}, W_i+\bm{R}^\tp\hat{{\bmrho}}_{\bstrap}) =\bm{0}, 
    \nonumber
    \\
    & (W_1, W_2, \ldots, W_n) \in \mathcal{W}_n\big((\OR_{\m[q_1]}, \OR_{\m[q_0]}), \vec{\bm{1}}-\vec{\bm{q}}\big),
    \nonumber
\end{align}
where $\mathcal{W}_n(\cdot, \cdot)$ is defined as in \eqref{eq:Lambda_delta_w_pred_set},  
$(\hat{{\bmtheta}}_\bstrap, \hat{{\bmrho}}_\bstrap)\equiv (\hat{\tau}_\bstrap, \hat{\bm{\beta}}_\bstrap, \hat{\kappa}_{\treat\bstrap}, \hat{\kappa}_{\control\bstrap}, \hat{{\bmrho}}_\bstrap)$ denotes the root to the estimating equation
$
    \sum_{i=1}^n k_{\bstrap i} p_{\bmtheta}(\bmO_{i}, W_i^\tp+\bm{R}^\tp{\bmrho}) =0,
$
and 
$(k_{\bstrap 1}, k_{\bstrap 2}, \ldots, k_{\bstrap n})$ follows the distribution of $\text{Multinomial} (n; n^{-1}, n^{-1}, \ldots, n^{-1})$.
By definition, we can verify that the $\log(\xi)$-neighborhood of $\mathcal{W}_n\big((\OR_{\m[q_1]}, \OR_{\m[q_0]}), \vec{\bm{1}}-\vec{\bm{q}}\big)$ is $\mathcal{W}_n\big(\xi(\OR_{\m[q_1]}, \OR_{\m[q_0]}), \vec{\bm{1}}-\vec{\bm{q}}\big)$.

Let $\Omega=B^{-1}\sum_{b=1}^B\I\{\max_{1\le i\le n}\|\bm{R}_i^\tp\hat{{\bmrho}}_b\|>\log(\xi)\}$. 
By the same logic as Lemma \ref{lemma:quantile_compare_gen}, we know that for all $\vec{\bm{q}}\in[0,1]^2$,
\begin{align*}
        \big[Q_{\alpha/2+\Omega}(\check{\tau}_{\bstrap, \vec{\bm{q}}}^{\min}), Q_{1-\alpha/2-\Omega}(\check{\tau}_{\bstrap, \vec{\bm{q}}}^{\max})\big]\subset\big[Q_{\alpha/2}(\hat{\tau}_{\bstrap, \vec{\bm{q}}}^{\min}), Q_{1-\alpha/2}(\hat{\tau}_{\bstrap, \vec{\bm{q}}}^{\max})\big].
    \end{align*}
Additionally, since $(\hat{{\bmtheta}}_\bstrap, \hat{{\bmrho}}_\bstrap)\equiv (\hat{\tau}_\bstrap, \hat{\bm{\beta}}_\bstrap, \hat{\kappa}_{\treat\bstrap}, \hat{\kappa}_{\control\bstrap}, \hat{{\bmrho}}_\bstrap)$ and $W_{1:n}^\tp$ are in the feasible region of \eqref{eq:quantile_proof_med} 
we further have,  for all $\vec{\bm{q}}\in[0,1]^2$, 
\begin{align*}
    \big[Q_{\alpha/2+\Omega}(\hat{\tau}_\bstrap), Q_{1-\alpha/2-\Omega}(\hat{\tau}_\bstrap)\big]\subset\big[Q_{\alpha/2+\Omega}(\check{\tau}_{\bstrap, \vec{\bm{q}}}^{\min}), Q_{1-\alpha/2-\Omega}(\check{\tau}_{\bstrap, \vec{\bm{q}}}^{\max})\big]
    \subset\big[Q_{\alpha/2}(\hat{\tau}_{\bstrap, \vec{\bm{q}}}^{\min}), Q_{1-\alpha/2}(\hat{\tau}_{\bstrap, \vec{\bm{q}}}^{\max})\big].
\end{align*}
Thus, 
\begin{align}\label{eq:sub_inter}
    \big[Q_{\alpha/2+\Omega}(\hat{\tau}_\bstrap), Q_{1-\alpha/2-\Omega}(\hat{\tau}_\bstrap)\big]\subset
    \bigcap_{\vec{\bm{q}}\in[0,1]^2}
    \big[Q_{\alpha/2}(\hat{\tau}_{\bstrap, \vec{\bm{q}}}^{\min}), Q_{1-\alpha/2}(\hat{\tau}_{\bstrap, \vec{\bm{q}}}^{\max})\big]. 
\end{align}

Suppose that 
$\big(\OR_{\m[q_1]}^\treat, \OR_{\m[q_0]}^\control\big) \notin \mathcal{I}_{\vec{\bm{q}}, \xi}^\alpha$ for some  $\vec{\bm{q}}\in[0,1]^2$. 
By definition, 
for some  $\vec{\bm{q}}\in[0,1]^2$, 
we have 
$\xi \big(\OR_{\m[q_1]}^\treat, \OR_{\m[q_0]}^\control\big) \notin \mathcal{I}_{\vec{\bm{q}}}^\alpha$ 
and consequently 
\begin{align*}
    0 & \notin 
    \left[ \LL_{\alpha/2}(\xi \big(\OR_{\m[q_1]}^\treat, \OR_{\m[q_0]}^\control\big), \vec{\bm{1}}-\vec{\bm{q}}), \ \ 
    \UU_{\alpha/2}(\xi \big(\OR_{\m[q_1]}^\treat, \OR_{\m[q_0]}^\control\big), \vec{\bm{1}}-\vec{\bm{q}})
    \right]\\
    & = \big[Q_{\alpha/2}(\hat{\tau}_{\bstrap, \vec{\bm{q}}}^{\min}), Q_{1-\alpha/2}(\hat{\tau}_{\bstrap, \vec{\bm{q}}}^{\max})\big], 
\end{align*}
From \eqref{eq:sub_inter}, this implies that 
$0\notin\big[Q_{\alpha/2+\Omega}(\hat{\tau}_\bstrap), Q_{1-\alpha/2-\Omega}(\hat{\tau}_\bstrap)\big]$. 

From the discussion before, we then have, for any $\eta > 0$, 
\begin{align*}
    &\quad \ \Pr\left\{
    \big(\OR_{\m[q_1]}^\treat, \OR_{\m[q_0]}^\control\big) \notin \mathcal{I}_{\vec{\bm{q}}, \xi}^\alpha 
    \text{ for some } \vec{\bm{q}}\in[0,1]^2
    \right\}\\
    &\le
    \Pr\Big\{
    0\notin\big[Q_{\alpha/2+\Omega}(\hat{\tau}_\bstrap), Q_{1-\alpha/2-\Omega}(\hat{\tau}_\bstrap)\big]
    \Big\}\\
    &\le
    \Pr\Big\{
    0\notin\big[Q_{\alpha/2+\Omega}(\hat{\tau}_\bstrap), Q_{1-\alpha/2-\Omega}(\hat{\tau}_\bstrap)\big], \Omega \le \eta
    \Big\} 
    + 
    \Pr(\Omega > \eta)
    \\
    & 
    \le 
    \Pr\Big\{
    0\notin\big[Q_{\alpha/2+\eta}(\hat{\tau}_\bstrap), Q_{1-\alpha/2-\eta}(\hat{\tau}_\bstrap)\big]
    \Big\} 
    + 
    \Pr(\Omega > \eta). 
\end{align*}
By the same logic as \eqref{eq:Omega_o1} in the proof of Theorem \ref{thm:bootstrap_over_identify} and the validity of the standard bootstrap, letting $n\rightarrow \infty$, we have 
\begin{align*}
    \limsup_{n\rightarrow\infty} \Pr\left\{
    \big(\OR_{\m[q_1]}^\treat, \OR_{\m[q_0]}^\control\big) \notin \mathcal{I}_{\vec{\bm{q}}, \xi}^\alpha 
    \text{ for some } \vec{\bm{q}}\in[0,1]^2
    \right\} \le \alpha + 2\eta. 
\end{align*}
Because $\eta$ can be arbitrarily small, we then have 
\begin{align*}
    \limsup_{n\rightarrow\infty} \Pr\left\{
    \big(\OR_{\m[q_1]}^\treat, \OR_{\m[q_0]}^\control\big) \notin \mathcal{I}_{\vec{\bm{q}}, \xi}^\alpha 
    \text{ for some } \vec{\bm{q}}\in[0,1]^2
    \right\}  \le\alpha.
\end{align*}
This immediately implies that 
\begin{align*}
    &\quad \ \liminf_{n \rightarrow \infty} \Pr\Big\{ \big(\OR^\treat_{\m[q_1]}, \OR^\control_{\m[q_0]}\big) \in \mathcal{I}_{\vec{\bm{q}}, \xi}^\alpha \text{ for all } \vec{\bm{q}} \in [0,1]^2\Big\}\\
    & =1-\limsup_{n\rightarrow\infty} \Pr\left\{
    \big(\OR_{\m[q_1]}^\treat, \OR_{\m[q_0]}^\control\big) \notin \mathcal{I}_{\vec{\bm{q}}, \xi}^\alpha 
    \text{ for some } \vec{\bm{q}}\in[0,1]^2
    \right\}\\
    & \ge 1-\alpha.
\end{align*}
Therefore, Theorem \ref{thm:U_quant_pred_int} holds. 
\end{proof}

\subsection{Proof of Corollary \ref{cor:U_quant_pred_int_max_tr_co}}
\begin{proof}[\bf Proof of Corollary \ref{cor:U_quant_pred_int_max_tr_co}]
From Theorem \ref{thm:U_quant_pred_int}, it suffices to prove that if $(\OR^\treat_{\m[q_1]}, \OR^\control_{\m[q_0]}) \in \mathcal{I}_{\vec{\bm{q}}, \xi}^\alpha \text{ for all } \vec{\bm{q}} \in [0,1]^2$, then $\overline{\OR}_{\m [q]}\in \mathcal{J}_{q, \xi}^\alpha \text{ for all } q \in [0,1]$.

Now suppose that $(\OR^\treat_{\m[q_1]}, \OR^\control_{\m[q_0]}) \in \mathcal{I}_{\vec{\bm{q}}, \xi}^\alpha \text{ for all } \vec{\bm{q}} \in [0,1]^2$. 
By definition, for any $\vec{\bm{q}} \in [0,1]^2$, we have 
    \begin{align*}
        0 \in \Big[\LL_{\alpha/2}\big(\xi(\OR^\treat_{\m[q_1]}, \OR^\control_{\m[q_0]}), \vec{\bm{1}}-\vec{\bm{q}}\big), \UU_{\alpha/2}\big(\xi(\OR^\treat_{\m[q_1]}, \OR^\control_{\m[q_0]}), \vec{\bm{1}}-\vec{\bm{q}}\big)\Big].
    \end{align*}
Consequently, for all $q \in [0,1]$,      
    \begin{align*}
        0 \in \Big[\LL_{\alpha/2}\big(\xi(\OR^\treat_{\m[q]}, \OR^\control_{\m[q]}), (1-q)\vec{\bm{1}}\big), \UU_{\alpha/2}\big(\xi(\OR^\treat_{\m[q]}, \OR^\control_{\m[q]}), (1-q)\vec{\bm{1}}\big)\Big]. 
    \end{align*}
    By definition, we can know that $\LL_{\alpha/2}( (\Lambda_1, \Lambda_0), (1-q)\vec{\bm{1}})$ is nonincreasing in $\Lambda_1$ and $\Lambda_0$, and $\UU_{\alpha/2}((\Lambda_1, \Lambda_0), (1-q)\vec{\bm{1}})$ is nondecreasing in $\Lambda_1$ and $\Lambda_0$. 
    By the definition of $\overline{\OR}_{\m [q]}$, for all $q \in [0,1]$, we must also have 
    \begin{align*}
        0 \in \Big[\LL_{\alpha/2}\big(\xi\overline{\OR}_{\m [q]}\vec{\bm{1}}, (1-q)\vec{\bm{1}}\big), \UU_{\alpha/2}\big(\xi\overline{\OR}_{\m [q]}\vec{\bm{1}}, (1-q)\vec{\bm{1}}\big)\Big], 
    \end{align*}
    which, by definition, is equivalent to
    $\overline{\OR}_{\m [q]}\in \mathcal{J}_{q, \xi}^\alpha$. 
    Therefore, in this case, we must have $\overline{\OR}_{\m [q]}\in \mathcal{J}_{q, \xi}^\alpha \text{ for all } q \in [0,1]$.

    From the above,  Corollary \ref{cor:U_quant_pred_int_max_tr_co} holds. 
\end{proof}

\subsection{Proof of Theorem \ref{thm:ci_tau_pred_whole} and Corollary \ref{cor:ci_tau_whole}}

\begin{proof}[\bf Proof of Theorem \ref{thm:ci_tau_pred_whole}]
Theorem \ref{thm:ci_tau_pred_whole} follows from Theorem \ref{thm:bootstrap_over_identify_pred_set}, by the same logic as the proof of Theorem \ref{thm:ci_tau_pred}. 
We omit the detailed proof for conciseness. 
\end{proof}

\begin{proof}[\bf Proof of Corollary \ref{cor:ci_tau_whole}]
    Similar to the proof of Corollary \ref{cor:ci_tau}, we can show that 
    \begin{align*}
        \Pr\left\{\bmW_{1:n}^\tp \in \mathcal{W}_n(\Lambda, \delta_{n}\right\}
        &= \Pr\Big[\sum_{i=1}^n  \I \{ |W_i^\tp| \leq \log(\Lambda) \} \geq n(1-\delta_{n}) \Big]
        \\
        & \ge 
        \Pr\left\{\Bin(n, 1-\delta) \geq n (1-\delta_{n})\right\}
        = \Pr\left\{\Bin(n, \delta) \leq n \delta_{n} \right\}\\
        & \ge 1-\zeta,
    \end{align*}
    where the first inequality follows from Assumption \ref{asmp:robust_sen_spec_whole}, and the last inequality follows from the definition of  $n \delta_{n}$. 
    Corollary \ref{cor:ci_tau_whole} follows then from Theorem \ref{thm:ci_tau_pred_whole}. 
\end{proof}

\subsection{Proof of Theorems \ref{thm:ci_tau_pred_whole_sharper} and \ref{thm:bootstrap_over_identify_pred_set_sharp}}

\begin{proof}[\bf Proof of Theorem \ref{thm:ci_tau_pred_whole_sharper}]
    Theorem \ref{thm:ci_tau_pred_whole_sharper} follows from Theorem \ref{thm:bootstrap_over_identify_pred_set_sharp}, and its proof is omitted for conciseness. 
\end{proof}

\begin{proof}[\bf Proof of Theorem \ref{thm:bootstrap_over_identify_pred_set_sharp}]

 Similar to \eqref{eq:bootstrap_R_rho}, we consider the following optimization:
    \begin{align}\label{eq:bootstrap_R_rho_pred_set_sharp}
    \breve{\theta}_{H, \bstrap} ^{(j)}= \ & \sup H({\bmtheta}) 
    \\
    \text{subject to} \quad & 
    \sum_{i=1}^n k_{\bstrap i} p_{\bmtheta}(\bmO_{i}, \bmW_i + \bm{R}^\tp_i \hat{{\bmrho}}_{\bstrap} ) =\bm{0},
    \nonumber
    \\
    & (\bmW_1, \bmW_2, \ldots, \bmW_n) \in \mathcal{W}_{n,j}.
    \nonumber
\end{align}
    Let $\Omega=B^{-1}\sum_{b=1}^B\I\{\max_{1\le i\le n}\|\bm{R}_i^\tp\hat{{\bmrho}}_b\|>\varepsilon\}$ where $B\equiv n^n$ is the total number of possible bootstrap resamples. Then we have 
    $\Pr(\Omega>\eta)\rightarrow 0$ as $n\rightarrow \infty$ for any $\eta>0$ as shown in the proof of Theorem \ref{thm:bootstrap_over_identify}.
    In addition, by the same logic as Lemma \ref{lemma:quantile_compare_gen}, we have $Q_{1-\alpha+\Omega}(\hat{\theta}_{H, \bstrap}^{(j)})\ge Q_{1-\alpha}(\breve{\theta}_{H, \bstrap}^{(j)})$ for all $j\in\mathcal{J}_n$.

    Below we prove that, if $H({\bmtheta}_0) > \max_{j\in\mathcal{J}_n}\{Q_{1-\alpha+\eta} ( \hat{\theta}_{H, \bstrap}^{(j)} )\}, \Omega\le\eta$ and $(\bmW_1^\tp, \bmW_2^\tp, \ldots, \bmW_n^\tp) \in \mathcal{W}_n$, then we must have $H({\bmtheta}_0)>Q_{1-\alpha}\{H(\hat{{\bmtheta}}_\bstrap)\}$.
    Now suppose that $\Omega\le\eta, H({\bmtheta}_0) > \max_{j\in\mathcal{J}_n}\{Q_{1-\alpha+\eta} ( \hat{\theta}_{H, \bstrap}^{(j)} )\},$ and $(\bmW_1^\tp, \bmW_2^\tp, \ldots, \bmW_n^\tp) \in \mathcal{W}_n$. 
    There must exist $k\in\mathcal{J}_n$ such that 
    $(\bmW_1^\tp, \bmW_2^\tp, \ldots, \bmW_n^\tp) \in \mathcal{W}_{n,k}$.
    Consequently, 
    \begin{align*}
        H({\bmtheta}_0) > Q_{1-\alpha+\eta} ( \hat{\theta}_{H, \bstrap}^{(k)} )\ge Q_{1-\alpha+\Omega} ( \hat{\theta}_{H, \bstrap}^{(k)} )\ge Q_{1-\alpha}(\breve{\theta}_{H, \bstrap}^{(k)})
        \ge 
        Q_{1-\alpha}\{ H(\hat{{\bmtheta}}_\bstrap) \}, 
    \end{align*}
    where the first inequality follows from our assumption, the second inequality follows from the fact that $\Omega \le \eta$, 
    the third inequality follows from the discussion before, 
    and the last inequality follows from the fact that $(\hat{{\bmtheta}}_{\bstrap}, \bmW_1^\tp, \ldots, \bmW_n^\tp)$ is in the feasible region of \eqref{eq:bootstrap_R_rho_pred_set_sharp} with $j=k$.

    From the discussion before, for any $\alpha'\in (0,1)$ and any $\eta \in (0, \alpha')$, 
    \begin{align*}
        &\limsup_{n\rightarrow \infty} \Pr[ H({\bmtheta}_0) > \max_{j\in\mathcal{J}_n}\{Q_{1-\alpha'+\eta} ( \hat{\theta}_{H, \bstrap}^{(j)} )\}, \  (\bmW_1^\tp, \bmW_2^\tp, \ldots, \bmW_n^\tp) \in \mathcal{W}_n ]\\
        \le&\limsup_{n\rightarrow \infty} \Pr[ H({\bmtheta}_0) > \max_{j\in\mathcal{J}_n}\{Q_{1-\alpha'+\eta} ( \hat{\theta}_{H, \bstrap}^{(j)} )\}, \  (\bmW_1^\tp, \bmW_2^\tp, \ldots, \bmW_n^\tp) \in \mathcal{W}_n , \ \Omega\le\eta]\\
        &+\limsup_{n\rightarrow \infty} \Pr[ H({\bmtheta}_0) > \max_{j\in\mathcal{J}_n}\{Q_{1-\alpha'+\eta} ( \hat{\theta}_{H, \bstrap}^{(j)} )\}, \  (\bmW_1^\tp, \bmW_2^\tp, \ldots, \bmW_n^\tp) \in \mathcal{W}_n , \ \Omega>\eta]\\
        \le&
        \limsup_{n\rightarrow \infty} \Pr[ H({\bmtheta}_0) > Q_{1-\alpha'}\{H(\hat{{\bmtheta}}_\bstrap)\} ]+\limsup_{n\rightarrow \infty} \Pr(\Omega>\eta)\\
        \le&\alpha'.
    \end{align*}
    By letting $\alpha'=\alpha+\eta$, we then have, for any $\alpha \in (0,1)$,  
    \begin{align*}
        \limsup_{n\rightarrow \infty} \Pr[ H({\bmtheta}_0) > \max_{j\in\mathcal{J}_n}\{Q_{1-\alpha} ( \hat{\theta}_{H, \bstrap}^{(j)} )\}, \  (\bmW_1^\tp, \bmW_2^\tp, \ldots, \bmW_n^\tp) \in \mathcal{W}_n ]\le&\alpha+\eta.
    \end{align*}
    Because $\eta$ can be arbitrarily small, this further implies that 
    \begin{align*}
        \limsup_{n\rightarrow \infty} \Pr[ H({\bmtheta}_0) > \max_{j\in\mathcal{J}_n}\{Q_{1-\alpha} ( \hat{\theta}_{H, \bstrap}^{(j)} )\}, \  (\bmW_1^\tp, \bmW_2^\tp, \ldots, \bmW_n^\tp) \in \mathcal{W}_n ]\le&\alpha.
    \end{align*}

    Finally, we prove \eqref{eq:narrow_decomp}. 
    For any $j \in \mathcal{J}_n$, we must have 
    $
    \hat{\theta}_{H, \bstrap}^{(j)} \le 
    \max_{k\in\mathcal{J}_n}\hat{\theta}_{H, \bstrap}^{(k)}, 
    $
    which implies that  
    $
    Q_{1-\alpha} ( \hat{\theta}_{H, \bstrap}^{(j)} ) \le 
    Q_{1-\alpha} \{ \max_{k\in\mathcal{J}_n}\hat{\theta}_{H, \bstrap}^{(k)}\}.
    $
    Thus, $
    \max_{j \in \mathcal{J}_n} \{ Q_{1-\alpha} ( \hat{\theta}_{H, \bstrap}^{(j)} ) \} \le 
    Q_{1-\alpha} \{ \max_{k\in\mathcal{J}_n}\hat{\theta}_{H, \bstrap}^{(k)}\}, 
    $
    i.e., the inequality in \eqref{eq:narrow_decomp} holds. 
    Because $\mathcal{W}_n = \cup_{j \in \mathcal{J}_n} \mathcal{W}_{n,k}$, 
    by definition, 
    we can verify that $\mathcal{W}_n^\varepsilon = \cup_{j \in \mathcal{J}_n} \mathcal{W}_{n,k}^\varepsilon$ 
    and consequently 
    $
    \hat{\theta}_{H, \bstrap} = \max_{j \in \mathcal{J}_n} \hat{\theta}_{H, \bstrap}^{(j)}.
    $
    This then implies the equality in \eqref{eq:narrow_decomp}.

    From the above, Theorem \ref{thm:bootstrap_over_identify_pred_set_sharp} holds. 
\end{proof}

\subsection{Proof of Corollary \ref{cor:test_U_whole} and Theorem \ref{thm:U_quant_pred_int_whole}}\label{sec:proof_simul_whole}

\begin{proof}[\bf Proof of Corollary \ref{cor:test_U_whole}]
    By the same logic as the proof of Corollary \ref{cor:test_U}, 
    Corollary \ref{cor:test_U_whole} follows from Theorem \ref{thm:ci_tau_pred_whole}. 
    In addition, Corollary \ref{cor:test_U_whole} also holds when we replace $[\LL_{\alpha/2}(\xi \Lambda, 1-q), \UU_{\alpha/2}(\xi \Lambda, 1-q)]$ by $[\tilde{\LL}_{\alpha/2}(\xi \Lambda, 1-q), \tilde{\UU}_{\alpha/2}(\xi \Lambda, 1-q)]$ defined as in Theorem \ref{thm:ci_tau_pred_whole_sharper}, and its proof follows from Theorem \ref{thm:ci_tau_pred_whole_sharper}. 
    For conciseness, we omit the detailed proof here. 
\end{proof}

To prove Theorem \ref{thm:U_quant_pred_int_whole}, we need the following lemma. 

\begin{lemma}\label{lemma:quantile_groups}
    Let $a_1\le a_2\le \ldots \le a_{n_1}$ and $b_1\le b_2\le \ldots b_{n_0}$ be two sequences of real numbers. 
    Let $c_1\le c_2 \le \ldots \le c_n$, where $n=n_1+n_0$,  denote the sorted sequence obtained by combining the $a_i$s and $b_i$s. 
    Define $a_0 = b_0 = c_0 = -\infty$ for convenience. 
    Then for any $0\le k \le n$, we have
    \begin{align*}
        c_k = \min_{(i,j): i+j = k, 0\le i \le n_1, 0\le j \le n_0}  \max\{a_i, b_j\} . 
    \end{align*}
\end{lemma}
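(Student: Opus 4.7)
The plan is to prove the identity by showing the two inequalities separately, with the $\min$ side being an upper bound on $c_k$ via a direct counting argument, and being attained by an explicit choice of $(i,j)$ coming from the structure of the merge.

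For the inequality $c_k \le \min_{(i,j):\, i+j=k}\max\{a_i, b_j\}$, I would fix any admissible pair $(i,j)$ with $i+j = k$, $0 \le i \le n_1$, $0 \le j \le n_0$, and consider the collection $\{a_1,\ldots,a_i\} \cup \{b_1,\ldots,b_j\}$, which is a sub-multiset of size $k$ of the merged list. Since every element in this collection is at most $\max\{a_i, b_j\}$ (using the sortedness of the two sequences, and the convention $a_0 = b_0 = -\infty$ to handle the degenerate cases $i=0$ or $j=0$), the merged sequence has at least $k$ entries not exceeding $\max\{a_i,b_j\}$. Thus $c_k \le \max\{a_i,b_j\}$, and minimizing over $(i,j)$ gives the desired bound.

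For the reverse inequality, I would exhibit an admissible pair attaining $c_k$. Let $i_0$ be the number of indices $\ell \in \{1,\ldots,k\}$ for which $c_\ell$ originates from the $a$-sequence in the merge, and let $j_0 = k - i_0$ be the corresponding count from the $b$-sequence. Then by construction $\{a_1,\ldots,a_{i_0}\} \cup \{b_1,\ldots,b_{j_0}\} = \{c_1,\ldots,c_k\}$ as multisets, so $\max\{a_{i_0}, b_{j_0}\} = \max\{c_1,\ldots,c_k\} = c_k$ (again using $a_0 = b_0 = -\infty$ if $i_0=0$ or $j_0=0$). Since $(i_0,j_0)$ satisfies the constraints, this gives $\min_{(i,j):\, i+j=k}\max\{a_i,b_j\} \le c_k$.

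The argument is essentially bookkeeping and the only real subtlety is the convention $a_0 = b_0 = c_0 = -\infty$ needed to treat the boundary cases where all $k$ smallest elements come from a single sequence; the main "obstacle," such as it is, is to make the description of $(i_0,j_0)$ unambiguous when ties occur at the boundary between the top $k$ and the rest of the merged list, which can be handled by fixing any consistent tie-breaking rule in the merge (for instance, treating $a$-entries as preceding equal-valued $b$-entries), since the identity depends only on the values $a_{i_0}$ and $b_{j_0}$ and not on how ties are resolved.
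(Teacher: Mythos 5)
Your proof is correct and follows essentially the same route as the paper's: both inequalities are established by elementary counting over the sorted sequences, with the attainment direction exhibiting an explicit admissible pair whose maximum equals (or is bounded by) $c_k$. The only differences are presentational — you prove the upper bound directly rather than by contradiction, and you construct the attaining pair $(i_0,j_0)$ via merge-origin counts under a fixed tie-breaking convention, whereas the paper takes the largest index $s$ with $a_s=c_k$ and uses an indicator-count argument to show $b_{k-s}\le c_k$.
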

\begin{proof}[\bf Proof of Lemma \ref{lemma:quantile_groups}]
Lemma \ref{lemma:quantile_groups} holds obviously when $k=0$. Below we consider only the case where $1\le k \le n$. 

First, we prove that 
$c_k \le \min_{(i,j): i+j = k, 0\le i \le n_1, 0\le j \le n_0}  \max\{a_i, b_j\}$ by contradiction. 
Suppose that
$c_k > \min_{(i,j): i+j = k, 0\le i \le n_1, 0\le j \le n_0}  \max\{a_i, b_j\}$.
Then there must exist $(i,j)$ such that $i+j = k, 0\le i \le n_1, 0\le j \le n_0$, and $c_k > \max\{a_i, b_j\}$. 
This implies that 
\begin{align*}
    \sum_{l=1}^n \I(c_l < c_k) 
    & = 
    \sum_{l=1}^{n_1} \I(a_l < c_k) 
    +
    \sum_{l=1}^{n_0} \I(b_l < c_k) 
    \ge 
    \sum_{l=1}^{n_1} \I(a_l \le a_i) 
    +
    \sum_{l=1}^{n_0} \I(b_l \le b_j) \\
    & \ge i + j = k. 
\end{align*}
However, $\sum_{l=1}^n \I(c_l < c_k)$ is at most $k-1$ by definition, leading to a contradiction. 
Thus, we must have $c_k \le \min_{(i,j): i+j = k, 0\le i \le n_1, 0\le j \le n_0}  \max\{a_i, b_j\}$.

Second, we prove that $c_k \ge \min_{(i,j): i+j = k, 0\le i \le n_1, 0\le j \le n_0}  \max\{a_i, b_j\}$. 
Without loss of generality, we assume that $c_k$ belongs to  $\{a_i\}_{i=1}^{n_1}$. 
Let $s$ be the largest integer in $[1, \min\{ n_1, k\}]$ such that $c_k = a_s$.
If $s = k$, then we have 
$c_k \ge \max\{a_s, b_0\} \ge \min_{(i,j): i+j = k, 0\le i \le n_1, 0\le j \le n_0}  \max\{a_i, b_j\}$. 
Otherwise, 
$s < k$. 
Then we have either $s < \min\{ n_1, k\}$ and $s = n_1$. In both cases, we can verify that $\sum_{l=1}^{n_1} \I(a_l \le a_s) = s$. This implies that   
\begin{align*}
    k & \le \sum_{l=1}^n \I(c_l \le c_k)
    =
    \sum_{l=1}^{n_1} \I(a_l \le a_s) 
    +
    \sum_{l=1}^{n_0} \I(b_l \le c_k)
    = s + \sum_{l=1}^{n_0} \I(b_l \le c_k), 
\end{align*}
which further impies 
$
    n_0 \ge \sum_{l=1}^{n_0} \I(b_l \le c_k) \ge k-s. 
$
Thus, we must have $b_{k-s} \le c_k$ and $k-s\le n_0$. 
Consequently, 
$c_k \ge \max\{a_s, b_{k-s}\} \ge \min_{(i,j): i+j = k, 0\le i \le n_1, 0\le j \le n_0}  \max\{a_i, b_j\}$.

From the above, we can then derive Lemma \ref{lemma:quantile_groups}. 
\end{proof}

\begin{proof}[\bf Proof of Theorem \ref{thm:U_quant_pred_int_whole}]
Define $\tilde{\mathcal{I}}_{q, \xi}^\alpha = \xi^{-1} \tilde{\mathcal{I}}_{q}^\alpha = \{\xi^{-1}\Lambda: \Lambda \in \tilde{\mathcal{I}}_{q}^\alpha\}$ for $\xi \ge 1$, where 
\begin{align*}
        \tilde{\mathcal{I}}_{q}^\alpha \equiv \big\{ \Lambda \in [1, \infty]: 
        0 \in [\tilde{\LL}_{\alpha/2}(\Lambda, 1-q), \tilde{\UU}_{\alpha/2}(\Lambda, 1-q)] \big\}.
\end{align*}
From Theorem \ref{thm:ci_tau_pred_whole_sharper}, 
we can know that $\tilde{\mathcal{I}}_{q}^\alpha \subset \mathcal{I}_{q}^\alpha$. 
Thus, to prove Theorem \ref{thm:U_quant_pred_int_whole}, it suffices to prove that it holds with  $\mathcal{I}_{q, \xi}^\alpha$ replaced by 
$\tilde{\mathcal{I}}_{q, \xi}^\alpha$. 
Below we prove Theorem \ref{thm:U_quant_pred_int_whole} with  $\mathcal{I}_{q, \xi}^\alpha$ there replaced by 
$\tilde{\mathcal{I}}_{q, \xi}^\alpha$.

Now we suppose that $(\OR^\treat_{\m[q_1]}, \OR^\control_{\m[q_0]}) \in \mathcal{I}_{\vec{\bm{q}}, \xi}^\alpha \text{ for all } \vec{\bm{q}} \in [0,1]^2$. 
By definition, for any $\vec{\bm{q}} \in [0,1]^2$, we have 
    \begin{align}\label{eq:interval_all_q_vec}
        0 \in \Big[\LL_{\alpha/2}\big(\xi(\OR^\treat_{\m[q_1]}, \OR^\control_{\m[q_0]}), \vec{\bm{1}}-\vec{\bm{q}}\big), \UU_{\alpha/2}\big(\xi(\OR^\treat_{\m[q_1]}, \OR^\control_{\m[q_0]}), \vec{\bm{1}}-\vec{\bm{q}}\big)\Big].
    \end{align}
Consider any $\delta \in [0,1]$. 
From Lemma \ref{lemma:quantile_groups}, there exist $\vec{\bm{\delta}} = (\delta_1, \delta_0)$ such that $\OR_{\m[\delta]}=\max\{\OR_{\m[\delta_1]}^\treat, \OR_{\m[\delta_0]}^\control\}$, 
$n_1\delta_1 + n_0 \delta_0 = \lceil n \delta
 \rceil$, 
and $n_z\delta_z$ is an integer between 
$0$ and $\min\{ n_z, \lceil n \delta \rceil \}$ for $z=0,1$. 
We can then verify that 
$n_1 (1-\delta_1)$ is an integer between 
$\max\{0, \lfloor n (1- \delta) \rfloor - n_0\}$ and 
$\min\{ n_1, \lfloor n (1-\delta) \rfloor \}$. 
By definition, we know that, with $i = n_1 (1-\delta_1)$, 
\begin{align*}
     \tilde{\LL}_{\alpha/2}(\xi \OR_{\m[\delta]}, 1-\delta) 
     & \le 
      \LL_{\alpha/2}\left(\xi \OR_{\m[\delta]} \cdot \vec{\bm{1}}, \left( \frac{i}{n_1}, \frac{\lfloor n(1-\delta)\rfloor-i}{n_0} \right)\right)
    = 
     \LL_{\alpha/2}\left(\xi \OR_{\m[\delta]}\cdot\vec{\bm{1}}, \vec{\bm{1}} - \vec{\bm{\delta}} \right). 
\end{align*}
Because $\OR_{\m[\delta]} = \max\{\OR_{\m[\delta_1]}^\treat, \OR_{\m[\delta_0]}^\control\}$ is greater than or equal to both $\OR_{\m[\delta_1]}^\treat$ and $\OR_{\m[\delta_0]}^\control$, we further have 
\begin{align*}
     \tilde{\LL}_{\alpha/2}(\xi \OR_{\m[\delta]}, 1-\delta) 
     & \le  
     \LL_{\alpha/2}\left(\xi \OR_{\m[\delta]}\cdot\vec{\bm{1}}, \vec{\bm{1}} - \vec{\bm{\delta}} \right)
     \le 
      \LL_{\alpha/2}\left(\xi(\OR^\treat_{\m[\delta_1]}, \OR^\control_{\m[\delta_0]}), \vec{\bm{1}} - \vec{\bm{\delta}} \right).
\end{align*}
Analogously, we can show that 
\begin{align*}
    \tilde{\UU}_{\alpha/2}(\xi \cdot\OR_{\m[\delta]}, 1-\delta) \ge \UU_{\alpha/2}\big(\xi(\OR^\treat_{\m[\delta_1]}, \OR^\control_{\m[\delta_0]}), \vec{\bm{1}}-\vec{\bm{\delta}}\big).
\end{align*}
From \eqref{eq:interval_all_q_vec}, we then have  
\begin{align*}
    0\in \left[\tilde{\LL}_{\alpha/2}(\xi \cdot\OR_{\m[\delta]}, 1-\delta), \tilde{\UU}_{\alpha/2}(\xi \cdot\OR_{\m[\delta]}, 1-\delta)\right],
\end{align*}
which, by definition, is equivalent to $\OR_{\m[\delta]}\in\tilde{\mathcal{I}}_{\delta, \xi}^\alpha$.

From the above, we then have 
\begin{align*}
    \Pr\{ \OR_{\m[q]} \in \tilde{\mathcal{I}}_{q, \xi}^\alpha \text{ for all } q \in [0,1]\} \ge \Pr\Big\{ \big(\OR^\treat_{\m[q_1]}, \OR^\control_{\m[q_0]}\big) \in \mathcal{I}_{\vec{\bm{q}}, \xi}^\alpha \text{ for all } \vec{\bm{q}} \in [0,1]^2\Big\}.
\end{align*}
From Theorem \ref{thm:U_quant_pred_int}, we can then derive Theorem \ref{thm:U_quant_pred_int_whole} with  $\mathcal{I}_{q, \xi}^\alpha$ there replaced by 
$\tilde{\mathcal{I}}_{q, \xi}^\alpha$. 
From the discussion at the beginning of this proof, we can also derive Theorem \ref{thm:U_quant_pred_int_whole}. 
\end{proof}

\end{document}